\documentclass[11pt,reqno,a4paper]{article}
\usepackage{amsmath}
\usepackage{amsbsy}
\usepackage{amsthm}
\usepackage{amssymb}
\usepackage{amscd}
\usepackage{mathabx}
\usepackage{accents}
\usepackage{float}
\interfootnotelinepenalty=10000
\usepackage{url}
\usepackage{tikz}
\usetikzlibrary{matrix,arrows,decorations.pathmorphing}
 \usepackage{relsize}
\usepackage{xcolor, mathrsfs}

\usepackage{hyperref}
\hypersetup{
pdftitle={},%
pdfauthor={},%
pdfsubject={},%
pdfkeywords={},%
colorlinks=true,%
linkcolor={black},%
linktoc={},
linktocpage={false},%
pageanchor={},
citecolor={black},
}

\usepackage[english]{babel}
\usepackage[utf8]{inputenc}
\usepackage[margin=2.41cm]{geometry}

\usepackage{hyperref}
\interfootnotelinepenalty=10000
\usepackage{url}

\usepackage{cite}

\usepackage{titlesec}
\titleformat{\section}{\large\bfseries\filcenter}{\thesection}{1em}{}
\titleformat{\subsection}{\bfseries}{\thesubsection}{1em}{}

\input{xy}
\xyoption{all}


\newtheorem{theorem}{Theorem}[section]
\newtheorem{theorem-intro}{Theorem}[]

\newtheorem{corollary}[theorem]{Corollary}
\newtheorem{lemma}[theorem]{Lemma}
\newtheorem{proposition}[theorem]{Proposition}
\theoremstyle{remark}

\theoremstyle{definition}
\newtheorem{remark}[theorem]{Remark}
\newtheorem{remarks}[theorem]{Remarks}

\newtheorem{definition}[theorem]{Definition}

\newtheorem{examples}[theorem]{Examples}

\newtheorem{notation}[theorem]{Notation}

\newtheorem{conjecture}[theorem]{Conjecture}
\newtheorem{question}[theorem]{Question}

\numberwithin{equation}{section}
\allowdisplaybreaks[1]

\catcode`@=12

\renewcommand\thanks[1]{%
  \begingroup
  \renewcommand\thefootnote{}\footnote{#1}%
  \addtocounter{footnote}{-1}%
  \endgroup
}

\renewcommand{\tilde}{\widetilde}
\renewcommand{\epsilon}{{\varepsilon}}
\def\bra{{\langle}}
\def\ket{{\rangle}}

\usepackage{bbm}

\newcommand{\ie}{\textit{i.e. }}
\newcommand{\cf}{\textit{cf. }}

\newcommand{\II}{{\mathbb{I}}}
\newcommand{\KK}{{\mathbb{K}}}
\newcommand{\RR}{{\mathbb{R}}}
\newcommand{\CC}{{\mathbb{C}}}
\newcommand{\n}{{\mathtt{n}}}


\newcommand{\Dir}{{\mathsf{D}}}
\newcommand{\G}{\mathsf{G}}
\newcommand{\Id}{{\operatorname{Id}}}

\newcommand{\E}{\mathsf{E}}
\newcommand{\F}{\mathsf{F}}

\newcommand{\M}{\mathsf{M}}
\newcommand{\N}{\mathsf{N}}
\renewcommand{\P}{\mathsf{P}}

\newcommand{\R}{\mathsf{R}}
\renewcommand{\S}{\mathsf{S}}
\newcommand{\T}{\mathsf{T}}

\newcommand{\D}{\mathsf{Dom}}
\newcommand{\Sol}{\mathsf{Sol}}
\newcommand{\Ker}{\mathsf{Ker}}

\newcommand{\bM}{\partial\M}

\newcommand{\ff}{{\mathfrak{f}}}

\newcommand{\fh}{{\mathfrak{h}}}

\newcommand{\fn}{{\mathfrak{n}}}


\newcommand{\vol}{{\textnormal{vol}\,}}
\newcommand{\supp}{{\textnormal{supp\,}}}


\newcommand{\fiber}[2]{\langle  #1\,|\, #2  \rangle}



\usepackage{pict2e}

\makeatletter
\DeclareRobustCommand{\intprod}{%
  \mathbin{\mathpalette\int@prod{(0.1,0)(0.9,0)(0.9,0.8)}}%
}
\DeclareRobustCommand{\intprodr}{%
  \mathbin{\mathpalette\int@prod{(0.1,0.8)(0.1,0)(0.9,0)}}}

\newcommand{\int@prod}[2]{%
  \begingroup
  \sbox\z@{$\m@th#1+$}%
  \setlength\unitlength{\wd\z@}%
  \begin{picture}(1,1)
  \roundcap
  \polyline#2
  \end{picture}%
  \endgroup
}
\makeatother


\begin{document}
\begin{center}
\
\vspace{5mm}

{\Large\bf  PARACAUSAL DEFORMATIONS OF \\[3mm] LORENTZIAN METRICS AND M{\O}LLER ISOMORPHISMS
\\[4mm]IN ALGEBRAIC QUANTUM FIELD THEORY} 

\vspace{5mm}

{\bf by}

\vspace{5mm}
\noindent
{  \bf  Valter  Moretti$^1$, Simone Murro$^2$ and Daniele Volpe$^1$}\\[2mm]
\noindent  $^1$ {\it Dipartimento di Matematica, Universit\`a di Trento and INFN-TIFPA}\\
{\it Via Sommarive 14,} {\it I-38123 Povo, Italy}\\[1mm]
\noindent  $^2$ {\it Dipartimento di Matematica, Universit\`a di Genova and INFN}\\
{\it Via Dodecaneso 35,} {\it I-16146 Genova, Italy}\\[2mm]

Emails: \ {\tt  valter.moretti@unitn.it, murro@dima.unige.it, daniele.volpe@unitn.it}
\\[10mm]
\end{center}

\begin{abstract}
Given a pair of  normally hyperbolic operators over (possibly different) globally hyperbolic spacetimes on a given smooth manifold, 
the existence of a geometric isomorphism, called {\em M\o ller operator}, between the space of solutions is studied. This is achieved by exploiting a new equivalence relation in the space of globally hyperbolic metrics, called {\em paracausal relation}. In particular, it is shown that the M\o ller operator associated to  a pair of paracausally related metrics and normally hyperbolic operators also intertwines the respective  causal propagators of the normally hyperbolic operators and it preserves the natural symplectic forms on the space of  (smooth) initial data. Finally, the M\o ller map is lifted to a $*$-isomorphism between (generally off-shell) $CCR$-algebras.  It is shown that the Wave Front set of a Hadamard bidistribution (and of a Hadamard state in particular)  is preserved by the  pull-back action of this  $*$-isomorphism.
\end{abstract}

\paragraph*{Keywords:} paracausal deformation, convex interpolation, Cauchy problem, M\o ller operators, normally hyperbolic operators, algebraic quantum field theory, Hadamard states, globally hyperbolic  manifolds.
\paragraph*{MSC 2010: } Primary 53C50, 81T05;  Secondary  35L52, 58J45. 
\\[0.5mm]

\tableofcontents

\section{Introduction}
Recently a great deal of progress has been made in comparing the spaces of solutions of hyperbolic partial differential equations on (possibly different) Lorentzian manifolds as well as in the comparison of the associated quantum field theories. More precisely, given a pair $\N$ and $\N'$ of Green hyperbolic differential operators on (possibly different) globally hyperbolic spacetimes $(\M,g)$ and $(\M,g')$, a natural issue concerns  the existence of a linear isomorphism $\S : \Sol_\N \to \Sol_{\N'}$ between the linear spaces of the solutions of the equations $\N\psi=0$ and $\N'\psi'=0$.
Such an isomorphism, if it exists, is called a {\em M\o ller map}.
Since the said space of solutions is the first step in the construction of corresponding (algebraic) free quantum field theories, a natural related issue concerns the possibility to promote the M\o ller map $\R$ to a $*$-isomorphism between the associated abstract operator algebras $\mathcal{A}$ and $\mathcal{A}'$ constructed out of $\N$ and $\N'$ respectively on $(\M,g)$ and $(\M,g')$, in terms of corresponding generators given by {\em abstract field operators} $\Phi(ah)$ and $\Phi'(\fh')$ and the associated {\em causal propagators} $\G_\N$, $\G_{\N'}$.
Actually, {\em off-shell} linear QFT can be used to build up a perturbative approach to interacting QFT, a final problem would concern the possibility to extend the M\o ller isomorphism of algebras to an isomorphism of more physically interesting algebras, for instance including Wick powers or time-ordered powers.\\
These problems have been tackled in the past for special cases of metrics $g,g'$ and several types of Green hyperbolic field operators which rule the dynamics of bosonic  fields~\cite{DHP,Moller} or fermionic fields~\cite{DefArg1,MollerMIT}.  In the {\it loc. cit.}, the pairs of Lorentzian metrics $g,g'$  had to satisfy one of the following assumptions: (I) they shared a common foliation of smooth spacelike Cauchy surfaces; (II) they coincided outside a compact set.\medskip

 In this paper, we do not assume either of the above restrictions and instead, we consider a very wide family of pairs $g,g'$ of globally hyperbolic metrics on the same manifold $\M$. As a matter of fact, we consider a new type of relationship between 
globally hyperbolic metrics, which we call {\em paracausal relationship}. To the authors' knowledge this notion  represents a complete novelty on the subject. 
Though the effective definition of paracausal equivalence relation in the set of globally hyperbolic metrics on $\M$ (Definition \ref{def:paracausal def}) is different and more effective for the issues regarding M\o ller maps  raised above, a complete characterization of it can be stated as follows in terms of elementary Lorentzian geometry:

\begin{theorem-intro}[Theorem~\ref{thm:char paracausal}]
 The globally hyperbolic metric $g$ on $\M$ is paracausally related to the globally hyperbolic metric $g'$ on $\M$ if and only if there is a finite sequence $g_0:=g, g_1,\ldots, g_N:=g'$ of globally hyperbolic metrics on $\M$ such that, at each step $g_k,g_{k+1}$, the future open light cones of these metrics have non-empty intersection $V_x^{g_k^+} \cap V_x^{g_{k+1}+} \neq \emptyset$ at every point $x\in \M$.
\end{theorem-intro}

\noindent  The class of paracausally related metrics on a given manifold $\M$ is very large, though some elementary counterexamples of topological nature can be constructed.  A specific study on the properties of this equivalence relation is necessary and it will be done elsewhere.\medskip

Equipped with this notion,  the paper specializes the analytic setup as follows: $\N,\N' : \Gamma(\E) \to \Gamma(\E)$
are  (2nd order) {\em normally hyperbolic} operators  on a real or complex vector bundle $\E$ on $\M$, respectively associated to a pair of globally hyperbolic metrics $g,g'$ on $\M$.

The first important achievement of this work  is the proof of existence of (infinitely many) {\em M\o ller operators}, i.e., isomorphisms $\R: \Gamma(\E) \to \Gamma(\E)$, 
which restrict to M\o ller maps $\S$
between the space of solutions when $g$ and $g'$ are paracausally related.
The overall idea is inspired by the scattering theory in the special case of a pair of globally hyperbolic metrics $g_0,g_1$ over $\M$ such that the light cones of $g_0$ are included in the light cones of $g_1$ (this is the most elementary case of paracausal relation).  We start  with  two ``free theories'',  described  by the space of solutions of normally hyperbolic operators $\N_0$ and $\N_1$ in corresponding  spacetimes $(\M,g_0)$ and $(\M,g_1)$, respectively,  and we intend  to connect them through  an ``interaction spacetime'' $(\M,g_\chi)$ with a ``temporally localized'' interaction defined by interpolating  the two metrics by means of a smoothing function $\chi$. Here we need two M{\o}ller maps: $\Omega_+$ connecting $(\M,g_0)$ and $(\M, g_\chi)$ --
which reduces to the identity in the past when $\chi$ is switched off --
and a second M{\o}ller map connecting $(\M,g_\chi)$ to $(\M,g_1)$ -- which reduces to the identity in the future when $\chi$ constantly takes the value $1$. The ``$S$-matrix'' given by the composition 
$\S :=\Omega_-\Omega_+$
 will be the M\o ller map connecting $\N_0$ and $\N_1$. 

The above construction generalizes to the case of a pair of globally hyperbolic metrics $g,g'$ on $\M$ which are paracausally related and this fact is denoted by $g \simeq g'$.
A summary of the main results obtained is the following  where also a special notion of adjoint operator $\R^{\dagger_{gg'}}$ is used. It will be discussed in details in Section 4.

\begin{theorem-intro}[Theorems ~\ref{remchain}, \ref{thm:pres sympl}, and ~\ref{teoRRR}] 
Let  $\E$ be $\KK$-vector bundle over the smooth manifold $\M$ with a non-degenerate, 
real or Hermitian depending on $\KK$,  fiber metric $\fiber{\cdot}{\cdot}$.
 Consider $g,g' \in \mathcal{GH}_M$ with respectively associated normally hyperbolic formally-selfadjoint operators $\N$, $\N'$.\\
If the metrics are paracausally related $g\simeq g'$, then  it is  possible to define a (non-unique) $\KK$-vector space isomorphism $\R : \Gamma(\E) \to \Gamma(\E)$, called {\bf M\o ller operator} of $g,g'$ (with this order), such that the following facts are true.
\begin{itemize}
\item[(1)] The restrictions to the relevant subspaces of $\Gamma(\E)$ respectively define symplectic M\o ller maps $S^0$ (see Definition~\ref{def:sympl Moll}) which preserve the symplectic forms $\sigma_g^\N$, $\sigma_{g'}^{\N'}$ defined as in Equation~\eqref{def:sympl form}, namely 
$$\sigma^{\N'}_{g'}(\S^0\Psi,\S^0 \Phi) = \sigma^{\N}_{g}(\Psi,\Phi) \quad \mbox{for every $\Psi,\Phi \in \Ker^{g}_{sc}(\N)$.}$$
\item[(2)]   The causal propagators 
$\G_{\N'}$ and $\G_{\N}$, respectively
of $\N'$ and $\N$, satisfy $ \R \G_{\N}  \R^{\dagger_{gg'}} = \G_{\N'}$, where $\R^{\dagger_{gg'}}$ is the adjoint of the M\o ller operator (see Definition~\ref{defadjointdiff}).
\item[(3)]  By denoting $c'$ the smooth function such that $\vol_{g'} = c' \: \vol_{g}$, we have $ c' \N'\R  = \N\:.$
\item[(4)] It holds 
$\R^{\dagger_{gg'}} \N'|_{\Gamma_c(\E)}= \N|_{\Gamma_c(\E)}\: .$
\item[(5)] The maps $\R^{\dagger_{gg'}} : \Gamma_c(\E) \to \Gamma_c(\E)$ and $(\R^{\dagger_{gg'}})^{-1} = (\R^{-1})^{\dagger_{g'g}}: \Gamma_c(\E) \to \Gamma_c(\E)$ are continuous with respect to the natural  topologies of $\Gamma_c(\E)$  in the domain and in the co-domain.
\end{itemize}
\end{theorem-intro}

Theorem 2 permits us  to promote $\R$ to a  $*$-isomorphism of the algebras of field operators $\mathcal{A}$, $\mathcal{A}'$
respectively associated to the paracausally related metrics $g$ and $g'$ (and the associated $\N,\N'$) and
 generated by respective  field operators 
$\Phi(\ff)$ and $\Phi'(\ff')$ with $\ff,\ff'$ compactly supported smooth sections of $\E$. These field operators satisfy respective CCRs
$$[\Phi(\ff), \Phi(\fh)]= i \G_\N(\ff,\fh) \II\:, \quad [\Phi'(\ff'), \Phi'(\fh')]= i \G_{\N'}(\ff',\fh')\II'$$
and the said unital $*$-algebra isomorphism $\mathcal{R}: \mathcal{A}' \to \mathcal{A}$ is determined by the requirement (Proposition~\ref{algebraicMoller})
$$\mathcal{R}(\Phi'(\ff')) = \Phi(\R^{\dagger_{gg'}} \ff)\:.$$
The final important result regards the properties of $\mathcal{R}$ for the algebras of a pair of paracausally related metrics $g,g'$ when it acts on the states $\omega : \mathcal{A}\to \CC$, $\omega' : \mathcal{A}'\to \CC$ of the algebras in terms of pull-back.  
$$\omega' = \omega \circ \mathcal{R}\:.$$
As is known, the most relevant (quasifree)  states in algebraic QFT are {\em Hadamard states} characterized by a certain wavefront set of their two-point function. To this regard, we prove that the pull-back through $\mathcal{R}$ of a Hadamard state $\omega : \mathcal{A}\to \CC$ is a Hadamard state  of the off-shell algebra $\mathcal{A}'$, provided the metrics $g,g'$ be paracausally related. 
The result is extended to a generic bidistribution $\nu$ (corresponding to the two-function of $\omega$, dropping the remaining requirements included in the definition of state).
The proof of the theorem below  is both of geometrical and microlocal analytic nature (see also Theorem \ref{thm:main Had}).

\begin{theorem-intro}[Theorem  \ref{thm:main intro Had}]
Let $\E$ be an $\RR$-vector bundle on a  smooth manifold $\M$ equipped with a  non-degenerate, symmetric, fiberwise  metric  $\fiber{\cdot}{\cdot}$.
 Let  $g,g' \in \mathcal{GH}_\M$, consider the corresponding  formally-selfadjoint normally hyperbolic operators $\N,\N' : \Gamma(\E) \to \Gamma(\E)$ and refer to the associated CCR algebras $\mathcal{A}$ and $\mathcal{A}'$.\\
Let   $\nu   \in \Gamma_c'(\E \boxtimes\E)$ be of Hadamard type and satisfy
$$\nu(x,y) - \nu(y,x) = i\G_{\N}(x,y)\quad mod \quad C^\infty\:,$$
$\G_{\N}(x,y)$ being the distributional Kernel of $\G_{\N}$. \\
 Assuming  $g\simeq g'$,  let us define $$\nu' := \nu \circ \R^{\dagger_{gg'}} \otimes \R^{\dagger_{gg'}}\:,$$
for a M\o ller operator $\R: \Gamma(\E) \to \Gamma(\E)$ of $g,g'$.
Then the following facts are true.
\begin{itemize} 
\item[(i)]   $\nu$ and $\nu'$ are bisolutions mod $C^\infty$ of the field equations defined by $\N$ and $\N'$ respectively,
\item[(ii)] $\nu' \in  \Gamma_c'(\E \boxtimes\E)$,
\item[(iii)] $\nu'(x,y) - \nu'(y,x) = i\G_{\N'}(x,y)$ mod $C^\infty$,
\item[(iv)] $\nu'$ is of Hadamard type.
\end{itemize}
\end{theorem-intro}

As this crucial result concerns off-shell algebras, in principle, it could be exploited in perturbative constructions of interacting theories. Indeed the preservation of
the Hadamard singularity structure plays a crucial role in the development of the perturbative
theory~\cite{DHP}.
Another work will be devoted to this  investigation.\medskip

The work  is organized as follows. Section \ref{sec:Lorentz} contains a recap on the relevant notions of Lorentzian geometry we exploit throughout. In particular, in Section \ref{subsec:convex} we introduce some (apparently new) results about convex interpolations of globally hyperbolic metrics which are preparatory to Section \ref{subsec:paracausal} where we present the definition of paracausal relation and we give some basic results about this equivalence relation.
Section \ref{sec:norm hyp} is completely devoted to recalling some notions and fundamental results about Green hyperbolic operators, normal hyperbolic results and their interplay with convex combinations of Lorentzian metrics. 
Section~\ref{sec:Moller} is the core of the paper. In the Subsections~\ref{subsec:moller 1} and~\ref{subsec:moller 2} M\o ller maps are introduced under the additional assumption that the metrics $g_0,g_1\in\mathcal{GH}_\M$ satisfy $g_0 \preceq g_1$. The latter is removed in Subsection~\ref{subsec:Moller general}, where the analysis is extended to encompass paracausally deformed metrics $g\simeq g'$. In Subsection~\ref{subsec:moller sympl form}, it is shown that the M\o ller map preserves the natural symplectic form on the space of initial data and finally, in Subsection~\ref{subsec:moller oper} the M\o ller operator for paracausally deformed metrics is introduced and analyzed in detail. Section~\ref{sec:Moller AQFT} is devoted to the study of free quantum field theories on globally hyperbolic spacetimes. In particular, in Subsection~\ref{subsec:algebraic moller} we lift the M\o ller operator to a $*$-isomorphism of algebras of observables and in~Subsection~\ref{subsec:Moller Hadamard} we show that the pull-back of any quasifree state along the M\o ller $*$-isomorphism preserves the Hadamard condition. Finally, we conclude our paper with Section~\ref{sec:concl}, where open issues and future prospects are presented.

\subsection*{General notation and conventions}
\begin{itemize}
\item[-] $A \subset B$ permits the case $A=B$, otherwise we write $A \subsetneq B$.
\item[-] The symbol $\KK$ denotes any  element of  $\{\RR,\CC\}$.
\item[-] Tensor fields and sections of $\KK$-vector bundles  on $\M$ are always supposed to be smooth.
\item[-] $(\M,g)$ denotes a $(n+1)$-dimensional spacetime (\cf Definition~\ref{def:spacetime})
and we adopt
the convention that $g$ has the signature $(-,+\dots,+)$.
\item[-] $\sharp:\Gamma(\T^*\M)\to\Gamma(\T\M)$ and its inverse  $\flat:\Gamma(\T\M)\to\Gamma(\T^*\M)$ denote the standard (fiberwise) {\bf musical isomorphisms} (\cf Section~\ref{sec:partial ordering}) referred to a given metric $g$ on $\M$.
\item[-]  $\mathcal{M}_\M$, $\mathcal{T}_\M\subset \mathcal{M}_\M$ and $\mathcal{GH}_\M \subset \mathcal{T}_\M$ denote  respectively the sets of  smooth Lorentzian metrics, {\bf time-oriented} Lorentzian metric and  {\bf globally hyperbolic} metrics on $\M$;

\item[-]$g  \preceq g'$ denotes that $g,g' \in \mathcal{M}_\M$ and the open light cone $V^g_p$ of $g$ is a subset of the open lightcone $V_p^{g'}$ of $g'$ at every point $p\in \M$;
\item[-]$g  \simeq g'$ denotes that $g$ and $g'$ are {\bf paracausally related} (\cf Definition~\ref{def:paracausal def}).

\end{itemize} 

\subsection*{Acknowledgments}
We are grateful to Nicol\`o Drago, Nicolas Ginoux,  and Miguel S\'anchez for helpful discussions related to the topic of this paper. We are grateful to the referees for useful comments on the
manuscript.
This work  was produced within the activities of the INdAM-GNFM.

\subsection*{Funding}
S.M was supported by the DFG research grant MU 4559/1-1 ``Hadamard States in Linearized Quantum Gravity''.
 V.M. and D.V. acknowledge the support of the INFN-TIFPA project ``Bell''.

\section{Convex interpolation and  paracausal deformations of Lorentzian metrics}\label{sec:Lorentz}

The aim of this section is twofold. On the one hand we shall investigate the properties of {\em convex interpolation of Lorentzian metrics}, on the other hand we introduce the notion of  {\em paracausal deformations of globally hyperbolic metrics}. 
As we shall see, these mathematical tools 
rely on a certain preordering relation in the set of Lorentzian metrics on a given manifold, they are quite interesting on their own right and they will be exploited in the second part of this work to construct M\o ller operators and M\o ller $*$-isomorphisms of algebras of quantum fields.

\subsection{Preliminaries on Lorentzian geometry}\label{sec:preliminaries}

The aim of this section is to recall some basic results of Lorentzian geometry which we will need later on. For a more detailed introduction to Lorentzian geometry we refer to~\cite{Ba-lect,Bee,Oneill}.

\subsubsection{Lorentzian manifolds and cones}

Let $\M$ be a smooth connected paracompact Hausdorff manifold and assume that $\M$ is noncompact or its Euler characteristic vanishes. Under these assumptions, $\M$ admits a Lorentzian metric and we denote the space of Lorentzian metrics on $\M$ by  $\mathcal M_\M$ (see {\it e.g.}~\cite{Bee}).
Once that a Lorentzian metric  $g$ is assigned to a smooth manifold $\M\ni p$, we can classify the vectors $v_p \in \T_p\M$ into three different types:\begin{itemize} \item {\bf spacelike} \ie $g(v_p,v_p) > 0 $ or $v_p=0$ , \item {\bf timelike} \ie $g(v_p,v_p) <0$, \item  {\bf lightlike} (also called {\bf null}) \ie $g(v_p,v_p) =0$ {\em and $v_p \neq 0$}.  
\end{itemize}

\begin{remark}
Notice that, with our definition, the tangent vector $0$ is spacelike.
\end{remark}

 As usual, we denote as {\bf causal vectors} any timelike or lightlike vector.
Piecewise smooth curves are classified analogously according to the nature of their tangent vectors. 

 Keeping in mind this classification, the open {\bf lightcone} of $(\M,g)$ at $p\in \M$ is the set $$V^g_p:= \{v_p \in \T_p\M \:|\: g(v_p,v_p) <0\}\:.$$
It is not difficult to see that it is an open convex cone made of two disjoint open convex halves defining the two connected components of $V^g_p$.

The notion of {\em time orientation}  is  defined as in \cite{Ba-lect}: A smooth Lorentzian manifold  $(\M,g)$ is  said to be {\bf time-orientable} if there is a continuous  timelike vector field $X$ on $\M$.

If $(\M,g)$ is time orientable  and  a preferred continuous timelike vector field $X$ has been chosen  as above, the {\bf future lightcone}   $V^{g+}_p \subset V^{g}_p$ at $p\in \M$  is the connected component of  $V^{g}_p$  containing $X_p$. The other connected component $V^{g-}_p$ is the {\bf past lightcone} at $p$.  
 $V^{g+}_p$ and  $V^{g-}_p$ respectively includes  the  {\bf future-directed} and {\bf past-directed} timelike  vectors at $p$. The terminology extends to the causal  (lightlike) vectors which belong to the closures of the said halves.
A classification of (piecewise smooth) causal curves into past-directed and future-directed curves (see \cite{Ba-lect}) arises according to their tangent vectors.  

If $(\M,g)$ is time orientable, 
the continuous choice of one of the two halves of $V_p^g$ for all $p\in \M$ through a continuous timelike vector field as above  defines a {\bf time orientation} of $(\M,g)$. 
$(\M,g)$ with this choice of preferred halves of cones  is said to be {\bf time oriented}. If $(\M,g)$ is connected and time orientable, then it admits exactly two time orientations.

\begin{notation}
In the following, we denote with  $\mathcal{M}_\M$, the set of smooth Lorentzian metrics on the smooth manifold $\M$ and with $\mathcal{T}_\M$  the class of time-oriented Lorentzian metrics on $\M$.
\end{notation}

We have an elementary fact whose proof is immediate if working in a $g$-orthonormal basis.
\begin{proposition}\label{propzero}
Assume that $g \in \mathcal{T}_\M$, $p\in \M$, and $Y_p, Z_p \in V^{g}_p$. 
Then
\begin{itemize}
\item[(i)] $Y_p \in V^{g\mp}_p$ and  $Z_p \in V^{g\pm}_p$ if and only if $g(Y_p,Z_p)>0$,
\item[(ii)] $Y_p, Z_p \in V^{g\pm}_p$ if and only if $g(Y_p,Z_p)<0$.
\end{itemize}
\end{proposition}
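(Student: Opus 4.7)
The plan is to reduce to a pointwise computation in a well-chosen basis and invoke the reverse Cauchy–Schwarz phenomenon for timelike vectors. Since the statement is entirely about the tangent space $\T_p\M$, the time-orientation of the whole manifold plays no role beyond labelling one of the two halves of $V^g_p$ as the future one; we may pick any basis of $\T_p\M$ that is compatible with this labelling.

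Concretely, I would fix a $g$-orthonormal basis $(e_0,e_1,\ldots,e_n)$ of $\T_p\M$ with $g(e_0,e_0)=-1$ and $e_0\in V^{g+}_p$, and $g(e_i,e_i)=1$ for $i\geq 1$. Writing $Y_p=y^0 e_0+\sum_{i=1}^n y^i e_i$ and $Z_p=z^0 e_0+\sum_{i=1}^n z^i e_i$, the timelike condition reads
\[
(y^0)^2 > \sum_{i=1}^n (y^i)^2,\qquad (z^0)^2 > \sum_{i=1}^n (z^i)^2,
\]
so in particular $y^0,z^0\neq 0$. Moreover, by continuity of $e_0$ together with the local constancy of the connected component of $V^g_p$ containing a given timelike vector, $Y_p\in V^{g\pm}_p$ if and only if $\pm y^0>0$, and analogously for $Z_p$.

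Now I would compute
\[
g(Y_p,Z_p) = -y^0 z^0 + \sum_{i=1}^n y^i z^i
\]
and bound the spatial part via the ordinary Cauchy–Schwarz inequality:
\[
\Bigl|\sum_{i=1}^n y^i z^i\Bigr| \leq \sqrt{\textstyle\sum_i (y^i)^2}\,\sqrt{\textstyle\sum_i (z^i)^2} < |y^0|\,|z^0|,
\]
where the strict inequality uses the timelike conditions above. This forces $g(Y_p,Z_p)$ to have the same sign as $-y^0 z^0$. Hence $g(Y_p,Z_p)<0$ exactly when $y^0 z^0>0$, i.e.\ when $Y_p,Z_p$ lie in the same half, proving (ii); and $g(Y_p,Z_p)>0$ exactly when $y^0 z^0<0$, i.e.\ when they lie in opposite halves, proving (i). The equivalences go both ways because the three open regions determined by the sign of $g(Y_p,Z_p)$ (negative, zero, positive) are disjoint and exhaust all possible configurations for $Y_p,Z_p\in V^g_p$.

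There is really no serious obstacle here; the only mild point to check carefully is that the two components of $V^g_p$ are characterized precisely by the sign of the $e_0$-coefficient, which is immediate once one notes that within $V^g_p$ the function $v\mapsto y^0(v)$ never vanishes and is continuous, so its sign is constant on each connected component, and the two components are distinguished by containing $\pm e_0$.
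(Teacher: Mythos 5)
Your proof is correct and takes exactly the approach the paper indicates: the paper states only that the proposition is ``immediate if working in a $g$-orthonormal basis'' and omits the details, while you supply them by decomposing $Y_p,Z_p$ in such a basis and using the strict Cauchy--Schwarz bound on the spatial parts to show that $g(Y_p,Z_p)$ has the sign of $-y^0z^0$.
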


\medskip If $g \in  \mathcal{M}_\M$, the associated standard (fiberwise) {\bf musical isomorphism} $\sharp:\Gamma(\T^*\M)\to \Gamma(\T\M)$ is pointwise  defined by  $$g(\sharp\omega_p,v_p)=\omega_p(v_p)\quad \mbox{for every   $v\in\Gamma(\T\M)$ and $\omega \in \Gamma( \T^*\M)$ and $p\in \M$,}$$ and we denote the (fiberwise) {\bf  inverse  musical isomorphism}  by 
$\flat:\Gamma(\T\M)\to\Gamma(\T^*\M)$. 
The notation
$g^\sharp \in \Gamma(\T\M\otimes \T\M)$ indicates  the Lorentzian metric induced on $1$-forms from $\sharp$ as
$$g^\sharp (\omega_{1p},\omega_{2p})= g(\sharp{\omega_1}_p,\sharp{\omega_2}_p)\quad \mbox{for every   $\omega_1,\omega_2 \in\Gamma(\T^*\M)$ and $p\in \M$.}$$
Once that a Lorentzian metric is introduced on $1$-forms, we can distinguish three different type of co-vectors:  $\omega_p \in \T_p^*\M$ is {\bf spacelike}, {\bf timelike}, {\bf null} and {\bf causal} if, respectively, $\sharp \omega_p \in \T_p\M$
is spacelike, timelike or null.    With the definition, we can define the open {\bf  lightcone of $1$-forms} at $p\in \M$  analogously to the case of vectors
$$V^{g^\sharp}_p:= \{\omega_p \in \T^*_p\M \:|\: g^\sharp(\omega_p,\omega_p) <0\}\:.$$
Analogously, if $g\in \mathcal{T}_\M$, the {\bf future} and {\bf past} {\bf  lightcones of $1$-forms} at $p\in \M$ are defined as 
$$V^{g^\sharp\pm}_p:= \{\omega_p \in \T^*_p\M \:|\: \sharp\omega_p \in V^{g\pm}_p \}\:.$$

Let us finally recall that, embedded codimension-$1$ submanifold $\Sigma \subset \M$ of a Lorentzian manifold $(\M,g)$, also called {\bf hypersurfaces}, are classified according to their normal covector $n$: They are {\bf spacelike}, {\bf timelike}, {\bf null} if respectively $n$ is  timelike, spacelike, null everywhere in $\Sigma$.
Notice that an embedded $n-1$ submanifold $\Sigma \subset \M$ is spacelike if and only if its tangent vectors are spacelike in $(\M,g)$. The restriction of $g$ to the tangent vectors to a spacelike hypersurface $\Sigma$ defines a Riemannian metric on it.

\subsubsection{Spacetimes and causality}

\begin{definition}\label{def:spacetime}
A  {\bf spacetime} is a
$(n + 1)$-dimensional   ($n\geq 1$), connected,
	time-oriented, smooth Lorentzian manifold $(\M,g)$
\end{definition}

\begin{remark} Sometimes it is also assumed that $\M$ is orientable and  oriented, but we do not adopt this hypothesis here. However, when we write that {\em $(\M,g)$ is a spacetime} we also mean that a {\em time-orientation} of $(\M,g)$ as Lorentzian manifold has been chosen. In this case, with a little misuse of language, we speak of the {\em time-orientation of the metric} $g$. \end{remark}

 Let now $A \subset \M$ for a spacetime $(\M,g)$. The {\bf causal sets} 
  $J_\pm(A)$ and the {\bf chronological sets} $I_\pm(A)$ are defined according to \cite{Ba-lect}: $J_\pm(A)$ is made of the points  
of $A$ itself  and all $p\in M$  such that there is a smooth future-directed/past-directed  causal  curve 
$\gamma : [a,b] \to \M$
with   $\gamma(a) \in A$ and  $\gamma(b)=p$.
{\em Notice that $J_\pm(A) \supset A$ by definition}, while
$I_\pm(A)$ is made of the points  $p\in M$ such that there is a smooth future-directed/past-directed  timelike curve 
$\gamma : [a,b] \to \M$
with   $\gamma(a) \in A$ and  $\gamma(b)=p$.
As usual we define $J(A) := J_+(A) \cup J_-(A)$.

Let us recall that, on a spacetime $(\M,g)$, a smooth causal curve $\gamma : I \to \M$ with  $I \subset \mathbb{R}$  open interval is said to be {\bf future  inextendible}   \cite{Oneill} if there is no  {\em continuous} curve $\gamma': J \to \M$, defined on  an open interval  $J \subset \mathbb{R}$,  such that $\sup J > \sup I$ and $\gamma'|_I =\gamma$.   A {\bf past inextendible} causal curve is defined analogously. A causal curve is said to be {\bf inextendible} if it is both past and future inextendible.

We eventually  define the {\bf future Cauchy development}  $D_+(A)$  of $A$ to be the set  of points $p\in  \M$
such that  every  past inextendible future-directed smooth causal curve  passing through $p$ meets $A$ in the past.
Similarly, the  {\bf past Cauchy development}  $D_-(A)$ is the set  of points $p\in  \M$
such that  every  future inextendible future-directed  smooth causal curve  passing through $p$ meets $A$ in the future.

\medskip

On a generic Lorentzian manifold, the Cauchy problem for a differential operator is in general
ill-posed: This can be a consequence of the presence of closed timelike curves or the presence of
naked singularities. Therefore, it is convenient to restrict ourselves to the class of {\em globally hyperbolic spacetimes}.

\begin{definition}
	\label{def:globally hyperbolic}
	A  {\bf globally hyperbolic spacetime}  is a spacetime  $(\M,g)$ such that
	\begin{itemize}
		\item[(i)] there are no closed causal curves;
		\item[(ii)] for all points $p,q\in\M$, $J_+(p)\cap J_-(q)$ is compact.
	\end{itemize}
\end{definition}

\begin{notation}
If $\M$ is a smooth connected $(n+1)$-manifold,  $\mathcal{GH}_\M\subset \mathcal{T}_\M$ denotes the class of Lorentzian metrics $g$ such that $(\M,g)$ is globally hyperbolic for a time-orientation. Any  $g \in \mathcal{GH}_\M$ is called {\bf globally hyperbolic} metric on $\M$.
\end{notation}

In his seminal paper \cite{Geroch}, Geroch established the equivalence for a Lorentzian manifold being globally hyperbolic and the existence of a \emph{Cauchy hypersurface}.

\begin{definition}
A subset $\Sigma\subset\M$ of a spacetime $(\M,g)$ is called {\bf Cauchy hypersurface} if it  intersects exactly once any inextendible future-directed smooth timelike curve.  
\end{definition}

In particular, a Cauchy hypersurface is {\bf achronal}: it intersects at most once every future-directed smooth timelike curve.

\begin{theorem}[\protect{\cite[Theorem 11]{Geroch}}]\label{thm:Geroch} A spacetime $(\M,g)$ is globally hyperbolic if and only if it contains a Cauchy hypersurface.
\end{theorem}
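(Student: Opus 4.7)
The plan is to handle the two implications separately, with the nontrivial content living in the direction ``globally hyperbolic $\Rightarrow$ existence of a Cauchy hypersurface''. I will follow Geroch's original volume-function argument, adapted to the conventions fixed earlier (in particular the definitions of $J_\pm$, inextendibility, and achronality).

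\textbf{Easy direction.} Assume $\Sigma\subset\M$ is a Cauchy hypersurface. First I would rule out closed causal curves: a closed causal curve can be traversed infinitely many times, producing an inextendible causal curve that meets $\Sigma$ infinitely often, contradicting the ``exactly once'' property. Next, to show $J_+(p)\cap J_-(q)$ is compact, I would argue by contradiction, starting with a sequence $(r_n)$ in $J_+(p)\cap J_-(q)$ with no convergent subsequence. Choosing for each $n$ a future-directed causal curve from $p$ through $r_n$ to $q$ and applying the standard limit-curve lemma for causal curves, one extracts an inextendible causal ``limit curve'' issuing from $p$ and reaching $q$ in a nontrivial way; its intersections with $\Sigma$, compared with those of the approximating curves, produce a contradiction with the achronality/uniqueness property of $\Sigma$. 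Along the way one needs strong causality, which follows from achronality of $\Sigma$ via a short argument using convex normal neighbourhoods.

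\textbf{Hard direction.} Assume $(\M,g)$ is globally hyperbolic. The plan is Geroch's construction of a continuous time function whose level sets are Cauchy hypersurfaces. First I would fix a smooth volume form on $\M$ with finite total volume $\mu(\M)<\infty$ (possible on any paracompact manifold). Define
\[
t^-(p):=\mu(I_-(p)),\qquad t^+(p):=\mu(I_+(p)),\qquad t(p):=\log\!\frac{t^-(p)}{t^+(p)}.
\]
The key intermediate steps are:
\begin{itemize}
\item[(a)] $t^\pm$ are continuous, which reduces to showing that $\partial I_\pm(p)$ has zero volume; this in turn follows from the fact that $\partial I_\pm(p)$ is a topological null hypersurface, hence $n$-dimensional in an $(n{+}1)$-manifold.
\item[(b)] $t^\pm$, and therefore $t$, are strictly monotonic along future-directed causal curves; strict monotonicity uses absence of closed causal curves together with the compactness condition (ii) in Definition~\ref{def:globally hyperbolic} to guarantee that moving along a causal curve actually enlarges $I_\pm$ on a set of positive $\mu$-measure.
\item[(c)] $t(p)\to\pm\infty$ along any inextendible future-directed causal curve as one approaches its past/future end; equivalently, $t^-\to 0$ in the past and $t^+\to 0$ in the future. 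This is where global hyperbolicity is crucially used: an inextendible causal curve that stayed inside $J_-(q)\cap J_+(p)$ forever in one direction would be trapped in a compact set, and limit-curve arguments would produce a closed causal curve, contradiction.
\end{itemize}
Granted (a)--(c), I would then show $\Sigma:=t^{-1}(0)$ is the sought Cauchy hypersurface: it is a topological hypersurface because $t$ is continuous and strictly increasing along timelike curves (so it is locally achronal and separates $\M$); achronality is global by monotonicity of $t$; and every inextendible future-directed timelike curve meets $\Sigma$ exactly once, by monotonicity together with the limits in (c) and the intermediate value theorem.

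\textbf{Main obstacle.} The delicate point is (a)--(b)--(c): showing that $t^\pm$ are continuous (not merely measurable), strictly monotonic, and have the correct limiting behaviour along inextendible causal curves. The heart of the matter is the continuity, which forces one to analyse the measure-theoretic regularity of the boundaries $\partial I_\pm(p)$ and to exploit that in a globally hyperbolic spacetime these boundaries behave like embedded achronal Lipschitz hypersurfaces. Once these analytic-topological properties of the volume function are in hand, the rest of the proof is organizational. A conceptually cleaner but technically heavier alternative would be to invoke the Bernal--Sánchez refinement producing a smooth spacelike Cauchy hypersurface, but for the topological statement as formulated here Geroch's original route is the most economical.
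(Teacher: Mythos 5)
The paper does not prove this statement at all; it is quoted directly from Geroch's original article (Theorem 11 there), so there is no in-text argument to compare against. That said, your sketch is a faithful outline of Geroch's classical volume-function proof and the structure is correct. A few points worth tightening. For your step (b), strict monotonicity of $t^\pm$ along future-directed timelike curves needs only chronology and openness of $I_\pm$: if $p\ll r\ll q$, the nonempty open set $I_+(p)\cap I_-(q)$ sits inside $I_-(q)\setminus I_-(p)$, since any of its points lying in $I_-(p)$ would close up a timelike curve through $p$; the compactness hypothesis (ii) of the paper's definition of global hyperbolicity is not used there. For your step (a), the zero-measure claim for $\partial I_\pm(p)$ is best justified by the fact that an achronal boundary is a locally Lipschitz hypersurface (being ``$n$-dimensional'' is not by itself enough for measure zero), and continuity of $t^\pm$ additionally requires inner/outer continuity of the set-valued maps $p\mapsto I_\pm(p)$ — this, rather than the measure-zero boundary alone, is where global hyperbolicity is genuinely used. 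Finally, in the easy direction note that the paper's Definition~\ref{def:globally hyperbolic} forbids closed \emph{causal} curves while its definition of Cauchy hypersurface only speaks of inextendible \emph{timelike} curves, so a closed null curve needs a separate short argument (typically one first derives strong causality from achronality of $\Sigma$). None of these is a structural gap; they are refinements of an otherwise correct plan.
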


It turns out that      Cauchy hypersurfaces of $(\M,g)$ are closed co-dimension $1$ topological submanifolds of $\M$  homeomorphic one to each other.
As a byproduct of Geroch's theorem, it follows that
 the  globally hyperbolic manifold $(\M,g)$ admits a continuous foliation in Cauchy hypersurfaces $\Sigma$, namely
$\M$ is homeomorphic to $\RR \times \Sigma$. 
 The proof of these facts was carried out by finding a {\em Cauchy time function}, i.e., a continuous function $t:\M\rightarrow\mathbb{R}$ which is strictly increasing on any future-directed timelike curve and such that its level sets $t^{-1}(t_0)$, $t_0\in \mathbb{R}$, are Cauchy hypersurfaces homeomorphic to $\Sigma$.
 Geroch's splitting appears at a topological level, and the possibility to smooth them
remained an open folk questions for many years. Only recently, in~\cite{BeSa} Bernal and S\'anchez ``smoothened'' the result of Geroch by introducing the notion of \emph{Cauchy temporal function}.

\begin{theorem}[\protect{\cite[Theorems 1.1 and 1.2]{BeSa}},  \protect{\cite[Theorem 1.2]{BeSa2}}, ]\label{thm: Sanchez}
For every  globally hyperbolic spacetime  $(\M,g)$  there is an isometry $\psi: \M \to \RR \times \Sigma$, where   the latter spacetime  is equipped with the smooth Lorentzian metric 
\begin{equation}  - \beta^2 d \tau\otimes d\tau \oplus h_\tau \:,  \label{GHmetric} \end{equation}
	and the time-orientation induced from $(M,g)$ through $\psi$.
Above $\tau$ is the canonical projection  $$\RR\times \Sigma \ni (t,p) \mapsto t\in  \RR$$ and the following facts are valid:
\begin{itemize}
\item[(i)]  $\nabla \tau := \sharp d\tau$ is  past-directed timelike, 

\item[(ii)]	 $\beta : 	\RR \times \Sigma \to (0,+\infty)$ (called {\bf lapse function}) is a smooth function,

\item[(iii)] $h_t$(called {\bf spatial metric})  is a smooth Riemannian metric on each leaf $\{t\} \times \Sigma$, $t\in \RR$,

\item[(iv)]  every embedded co-dimension-$1$ submanifold  $\{t_0\} \times \Sigma = \tau^{-1}(t_0)$  is a  spacelike (smooth) Cauchy hypersurface.

\end{itemize}
Finally, if $S \subset \M$ is a spacelike Cauchy hypersurface of $(\M,g)$, then we can define an isometry  $\psi : \M \to \RR \times S$, and $\tau$, $\beta$, $h$ as above in order  that  $S = \psi^{-1}(\{0\} \times S)$.
\end{theorem}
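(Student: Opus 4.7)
The theorem is the smoothening of Geroch's splitting due to Bernal and Sánchez, and my plan is to split the proof into two stages: first produce a \emph{smooth} Cauchy temporal function $\tau$ on $(\M,g)$, then use its gradient flow to pull the metric back to the warped form $-\beta^2 d\tau \otimes d\tau \oplus h_\tau$ on $\RR \times \Sigma$.

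The first stage is the delicate step. I would start from Geroch's continuous Cauchy time function $t$, whose level sets are already topological Cauchy hypersurfaces. Around each point $p\in \M$, standard convex-neighborhood arguments yield a smooth local time function with past-directed timelike gradient that is moreover \emph{steep}, in the sense that its gradient dominates $-dt$ in a quantitative way. Covering $\M$ by a locally finite refinement of the strips $t^{-1}([n-1,n+1])$ and taking a partition of unity subordinate to it, one sums the local steep functions with carefully tuned nonnegative weights to obtain a global smooth function $\tau$. Since the set of past-directed timelike covectors at each point is an open convex cone, the steepness bound ensures the weighted sum stays inside this cone, so $\nabla \tau$ remains past-directed timelike everywhere. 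Steepness also forces $\tau$ to diverge to $\pm\infty$ along any inextendible causal curve, so each level set $\tau^{-1}(t_0)$ is crossed exactly once by such curves, proving that these levels are smooth spacelike Cauchy hypersurfaces.

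The second stage is essentially computation. Define the timelike vector field $T := -\nabla \tau/g(\nabla \tau,\nabla \tau)$, normalized so that $T(\tau)=1$. Because its integral curves cross every Cauchy level of $\tau$, the flow $\phi_s$ is complete, and the map $\psi^{-1}\colon \RR \times \Sigma_0 \to \M$ sending $(s,p)$ to $\phi_s(p)$, with $\Sigma_0 := \tau^{-1}(0)$, is a diffeomorphism. Pulling back $g$ through $\psi^{-1}$ and exploiting the $g$-orthogonal splitting $\T_q \M = \RR\, T_q \oplus \T_q \Sigma_{\tau(q)}$, the cross terms vanish and one obtains $\psi^* g = -\beta^2 d\tau \otimes d\tau \oplus h_\tau$ with $\beta^2 = -1/g(\nabla\tau,\nabla\tau)$ and $h_\tau$ the induced Riemannian metric on the leaves; properties (i)--(iv) are then immediate.

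The final assertion, that a prescribed smooth spacelike Cauchy hypersurface $S$ is realized as $\psi^{-1}(\{0\}\times S)$, requires refining the first stage: I would build $\tau$ with $\tau|_S \equiv 0$ by constructing two smooth steep temporal functions on $J_+(S)$ and $J_-(S)$ respectively, each vanishing on $S$, and gluing them through a convex combination that keeps the gradient timelike across $S$; this is the content of Theorem 1.2 in \cite{BeSa2}. The principal obstacle in the whole proof is the preservation of the strict timelike condition on $\nabla \tau$ under globalization: a naive partition-of-unity average of past-directed timelike $1$-forms can degenerate in the transition regions, so one must quantify how deeply each local $1$-form sits inside the future cone to guarantee that the weighted sum remains inside it, and one must simultaneously arrange enough growth along causal curves to retain the Cauchy property of the levels.
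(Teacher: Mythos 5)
The paper does not provide its own proof of this theorem---it cites it directly from Bernal--S\'anchez \cite{BeSa}, \cite{BeSa2}, so there is no internal argument to compare against. Your sketch is a faithful high-level outline of their strategy: smoothing Geroch's continuous time function into a Cauchy temporal function via locally steep time functions and a partition of unity, then flowing the normalized gradient to obtain the orthogonal splitting. One sign to fix: since $\nabla\tau=\sharp d\tau$ is past-directed and $g(\nabla\tau,\nabla\tau)<0$, the normalization $T(\tau)=1$ forces $T=\nabla\tau/g(\nabla\tau,\nabla\tau)$, with no extra minus sign; as you have written it, $T(\tau)=-1$ and $T$ is past-directed, which conflicts with your stated normalization. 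Beyond that, you correctly isolate the genuine technical content of the first stage---quantifying how deeply each local $1$-form sits inside the past cone so the partition-of-unity average stays strictly timelike, and forcing $\tau\to\pm\infty$ along inextendible causal curves so the levels remain Cauchy---but you leave this to the cited references, which is consistent with what the paper itself does.
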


The characterization given by Bernal and S\'anchez permits us to give some relevant definitions.

\begin{definition}\label{defTT} Given a spacetime $(\M,g)$,  a smooth surjective  function  $t : \M \to \RR$ 
with $dt$ past-directed timelike
is
\begin{itemize}
\item[(a)] a {\bf Cauchy temporal function}  if 
\begin{itemize}
\item[(i)] $(\M,g)$ is isometric, through some isometry $\psi: \M \to \RR \times \Sigma$, to  a spacetime $(\RR \times \Sigma, h)$ with the time-orientation induced from $(\M,g)$, 
\item[(ii)] $t= \tau \circ \psi$  (where $\tau :  \RR \times \Sigma \ni (t,p) \mapsto t \in \RR$),  
\item[(iii)] $h$  has the form  (\ref{GHmetric}) as in Theorem \ref{thm: Sanchez} satisfying (i)-(iv);
\end{itemize}
\item[(b)] a {\bf smooth Cauchy time function}    if 
\begin{itemize}
\item[(i)]  $(\M,g)$ is isometric, through some isometry $\psi: \M \to \RR \times \Sigma$, to a spacetime  $(\RR \times \Sigma, h)$ with the time-orientation induced from $(\M,g)$,  
\item[(ii)]  $t= \tau \circ \psi$ (where $\tau :  \RR \times \Sigma \ni (t,p) \mapsto t \in \RR$),  
\item[(iii)]   every $\Sigma_{t_0} := t^{-1}(t_0) = \psi^{-1}(\{t_0\} \times \Sigma) $ is a spacelike Cauchy hypersurface of $(\M,g)$ for $t_0\in \RR$.
\end{itemize}
\end{itemize}
\end{definition} 

\begin{remarks}
\begin{itemize}

\noindent  \item[(1)]   An intrinsic way to write (\ref{GHmetric}) for a Cauchy temporal function $t$  without making use to the splitting diffeomorphism $\psi$ is, for $p\in \Sigma_s= t^{-1}(p)$
$$g_{p}(X,Y) = \frac{dt \otimes d t(X,Y)}{g^\sharp(dt,dt)}  + h_s(\pi_{t,g}X , \pi_{t,g} Y)\:, \quad X,Y\in \T_p\M= L(\sharp_g dt)\oplus \Sigma_s$$
where
$$\T_p\M \ni  X \mapsto  \pi_{t,g} X:= X - \frac{   \langle dt, X\rangle   \:\sharp_g dt}{g^\sharp (dt,dt)}\in \T_p\Sigma_s$$
defines the  orthogonal projector onto $\T_p\Sigma_s$ associated to $t$ and $g$, using $\sharp_g dt$ as normal (contravariant) vector to $\Sigma_s$.

 \item[(2)] If an either smooth time or  temporal Cauchy function $t$ exists for $(\M,g)$, the level sets $\Sigma_{t_0}:= t^{-1}(t_0)$ are smooth spacelike  Cauchy  surfaces diffeomorphic to each other and  $(\M,g)$  is  globally hyperbolic.   Theorem \ref{thm: Sanchez} proves that temporal Cauchy functions -- thus also smooth time Cauchy functions -- exist for every globally hyperbolic spacetime. Furthermore, every smooth spacelike Cauchy hypersurface can be embedded in the foliation induced by a suitable temporal Cauchy function.  
\item[(3)]  A Cauchy temporal function is always a Cauchy time function, but even a smooth
time function may not be a temporal one. 

\item[(4)] A Cauchy hypersurface
may meet a causal curve  in more than a point (say, a segment), but
this is not the case  for the spacelike Cauchy hypersurfaces since they are 
{\bf acausal}:   they intersect {\em at most once} every future-directed smooth causal curve,  as easily arises from Theorem \ref{thm: Sanchez}.
\end{itemize}
\end{remarks}

 We shall now give
some notable examples of globally hyperbolic spacetimes to acquaint the reader with some concrete cases.
\begin{examples}\label{ex:globhyp} We shall list a few globally hyperbolic spacetimes which appear commonly in general relativity and quantum field
theory over curved backgrounds. As one can infer per direct inspection, they all fulfill
Theorem~\ref{thm: Sanchez}:
\begin{itemize}
	\item the prototype example is Minkowski spacetime which isometric to    $\mathbb{R}^{n+1}$ 
with Cartesian coordinates $(t,x^1,\ldots, x^n)$ and equipped with the Minkowski 
 metric 
	$$-dt\otimes dt +\sum_{i=1}^{n}{dx^i}\otimes dx^i\:;$$
	\item de Sitter spacetime, that is the maximally symmetric solution of Einstein's equations with a
positive cosmological constant $\Lambda$. As a manifold it is topologically $\mathbb R \times\mathbb S^3$ and the metric reads:
$$g = -dt^2 + \frac{ 3}{\Lambda}\cosh^2 \left(\sqrt{\frac{\Lambda}{3} t}\right) [ d\chi^2 + \sin^2 \chi (d\theta^2 + \sin^2 \theta d\varphi^2)] $$
where $t \in \RR$ while $(\chi,\theta,\varphi)$ are the standard coordinates on $\mathbb{S}^3$;
\item the Friedmann-Robertson-Walker spacetime, i.e., an isotropic and homogeneous manifold which is
topologically $\RR\times \Sigma$ and
$$g = -dt^2 + a(t) \left[\frac{dr^2}{1-kr^2}  + r^2 (d\theta^2 + \sin^2 \theta d\varphi^2)\right] $$
where $k$ can be either 0 or $\pm 1$ and function $a(t)$ is smooth and positive valued;
	\item The external Schwarzschild spacetime, i.e., a stationary spherically symmetric solution of vacuum Einstein's equations which is topologically $\mathbb{R}^2\times \mathbb{S}^2$ with metric
$$g = - \left(1- \frac{2M}{r}\right) dt^2 + \left(1- \frac{2M}{r}\right)^{-1} dr^2  + r^2 (d\theta^2 + \sin^2 \theta d\varphi^2) \,.$$
Here $M>0$ is interpreted as the mass of the spherically symmetric source (a blackhole, a star,...) and
the domain of definition of the coordinates is $t \in \RR$, $r \in (2M, +\infty)$ and $(\theta, \varphi) \in \mathbb{S}^2$;
\item Finally, given any $n$-dimensional complete Riemannian manifold $(\Sigma,h)$, an open interval $I\subseteq\mathbb{R}$ and a smooth  function $f:I\rightarrow (0,+\infty)$, the Lorentzian \textit{warped product} defined topologically by $I\times\Sigma$ with metric $g=-dt^2+f(t)h$ is a globally hyperbolic spacetime.
\end{itemize}
\end{examples}

\subsection{Convex interpolation  of Lorentzian metrics}\label{subsec:convex}
Due to motivations that will be clear later in the paper  and related to the construction of M\o ller operators, we are now interested in the structure of the set $\mathcal{M}_\M$ of Lorentzian metrics on a given manifold $\M$. In particular, we are interested in the following problem:
\begin{center} \em Are there some natural operations which can be used to produce (globally hyperbolic) Lorentzian metrics starting from (globally hyperbolic) Lorentzian metrics?\end{center}
Given two globally hyperbolic metrics $g,g'$, a linear  combination of them is in general not a Lorentzian metric and, when it is, it fails to be globally hyperbolic in general. However, as shown in~\cite[Appendix B]{Romeo}, if $g$ and $g'$ coincide outside a compact set, then there exists a sequence of 5 globally hyperbolic metrics, such that for each neighbouring pair all pointwise convex combinations are
globally hyperbolic metrics.  Therefore, this section aims to provide sufficient conditions for  some kind of linear combination of globally hyperbolic metrics to be a globally hyperbolic Lorentzian metric. We shall see that convex combinations are an interesting case of study under suitable conditions.  We point out the recent work \cite{Sanchez22} where, in addition to several  related issues,  the convex structure of the space of globally hyperbolic metrics on a given manifold is addressed with a number of results.

\subsubsection{A preorder relation of Lorentzian metrics}
\label{sec:partial ordering}

\begin{definition}\label{def:part ord} 
Let $g,g' \in \mathcal{M}_\M$ and denote
$$g  \preceq g'\quad \mbox{ \quad  iff \quad $V^g_p \subset V^{g'}_p$  for all $p\in \M$.}$$
We say that  $g,g' \in \mathcal{M}_\M$ are {\bf $\preceq$-comparable} if either $g \preceq g'$ or $g'\preceq g$ (see e.g. Figure~\ref{fig:comp}).
\end{definition}
\begin{figure}[h!]
\center
\includegraphics[scale=0.34]{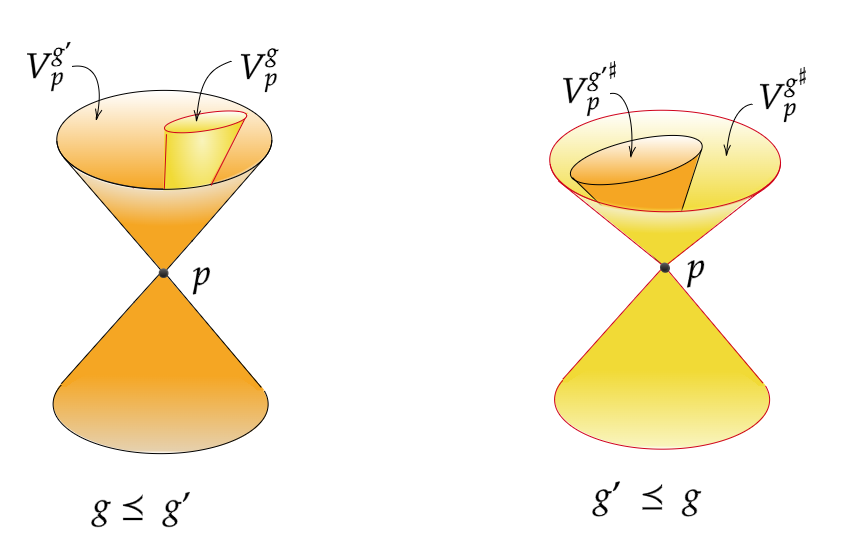}
\caption{Lorentzian metrics $\preceq$-comparable}\label{fig:comp}
\end{figure}
\begin{remarks}
\noindent
\begin{itemize}
\item[(1)] Let us remark that the definition above can be generalized by considering the so-called {\em causal diffeomorphisms}, namely a time-orientation preserving diffeomorphism $\varphi:\M\to\N$ such that the open light cone $V_p^g$ of $g$ is included in the open light cone $V_p^{\varphi g'}$ of $\varphi^* g'$ for every $p\in\M$. For further details and properties we refer to~\cite{causal-diff1,causal-diff2,causal-diff3}.
\item[(2)]
The preorder relation introduce in Definition~\ref{def:part ord} has a corresponding for the associated metrics in the cotangent space: If $g,g' \in \mathcal{M}_\M$,   $$g^\sharp  \preceq g'^\sharp\quad \mbox{ \quad  iff \quad $V^{g^\sharp}_p \subset V^{g'^\sharp}_p$  for all $p\in \M$.}$$
\end{itemize}
\end{remarks}

We observe that if $g\preceq g'$ for $g,g' \in \mathcal{T}_\M$ and the two metrics share the same time-orientation  -- i.e.,  there is a continuous vector field on $\M$ which is timelike for both metrics and defines the same time-orientation for both of them --  then 
$V^{g+}_p \subset V^{g'+}_p$ 
and $V^{g-}_p \subset V^{g'-}_p$ 
for every $p \in \M$. Similar inclusions hold when considering the closures of the considered half cones.
As a consequence, we have both inclusions 
 with obvious notations
\begin{equation*}
I_\pm^{g}(A) \subset I_\pm^{g'}(A) \:, \quad J_\pm^{g}(A)  \subset J_\pm^{g'}(A) \quad \mbox{for every $A\subset \M$.}\label{eq:inclusJ}
\end{equation*}

The  relation  $\preceq$ in $ \mathcal{M}_\M$  has several consequences whose most elementary ones are established in the following proposition.  We remind the reader that a closed set $A\subset \M$, with $(\M,g)$ time-oriented, is {\bf past compact} if $J_-(p) \cap A$ is compact or empty for every $p\in \M$. The definition of {\bf future compact} is analogous, just replacing $J_-$ for $J_+$.
\medskip

Then next lemma is just a routine computation, so we leave the proof to the reader.
\begin{lemma}\label{prop:metrics0} Let $\M$ be a smooth $(n+1)$-dimensional manifold and $g,g'\in \mathcal{M}_\M$. The following facts are valid for the preordering relation $\preceq$ in $\mathcal{M}_\M$. 
\begin{itemize}
\item[(1)] For $p\in \M$ and
$v\in \T_p\M$, if
$g\preceq g'$ then  
\begin{itemize}
\item[(i)]  $g(v,v)=0$ implies  $g'(v,v)\leq 0$.
\item[(ii)] $g'(v,v)> 0$  implies $g(v,v)> 0$.
\item[(iii)] $g'(v,v)= 0$  implies $g(v,v)\geq  0$.
\end{itemize}
\item[(2)]  If $g \preceq g'$ with $g \in \mathcal{T}_\M$  and $g' \in \mathcal{GH}_\M$, then $(\M,g)$ is globally hyperbolic as well  when, {\it e.g.}, equipped with the same orientation and time-orientation of $(\M,g')$ and 
\begin{itemize}
\item[(i)] a spacelike Cauchy hypersurface for $(\M,g')$ is also a spacelike Cauchy hypersurface for $(\M,g)$;
\item[(ii)] a smooth Cauchy time function for $(\M,g')$ is also a smooth Cauchy time function for $(\M,g)$;
\item[(iii)]  a closed set $A \subset \M$ is past/future compact in $(\M,g)$ if it is respectively past/future compact in $(\M,g')$.
\end{itemize}
\item[(3)]  $g\preceq g'$ if and only if $g'^\sharp \preceq g^\sharp$. 

\item[(4)] If $g,g' \in \mathcal{T}_\M$, $g\preceq g'$ and $p\in \M$, then 
$V^{g+}_p \subset V^{g'+}_p$  if and only if $V^{g'^\sharp+}_p \subset V^{g^\sharp+}_p$ (see e.g. Figure~\ref{fig:Inclusion}).
\end{itemize}
\end{lemma}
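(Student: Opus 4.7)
I would handle the four items in order, using part~(1) as the engine for parts~(2) and (3). For (1) itself, the key fact is that $V^g_p$ is an open convex cone. I plan to start with (i) by a continuity argument: if $g(v,v)=0$, then for any $u$ in the same half-cone as $v$, the point $v+\epsilon u$ lies in $V^g_p\subset V^{g'}_p$ for every $\epsilon>0$, so letting $\epsilon\to 0^+$ yields $g'(v,v)\le 0$. Part (iii) is a one-line contradiction: $g(v,v)<0$ would place $v\in V^g_p\subset V^{g'}_p$ and force $g'(v,v)<0$. Part (ii) is then the combined contrapositive of the inclusion with (i) and (iii).

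For (2), the strategy is to transport Cauchy data. A spacelike Cauchy hypersurface $\Sigma$ of $(\M,g')$, guaranteed by Theorem~\ref{thm:Geroch}, has $g'$-spacelike tangent vectors, which by (1)(ii) are also $g$-spacelike, so $\Sigma$ is spacelike for $g$. Since inextendibility of a causal curve is a purely topological notion and $V^g_p\subset V^{g'}_p$ promotes every $g$-timelike curve to a $g'$-timelike one, any inextendible $g$-timelike curve is an inextendible $g'$-timelike curve and hence meets $\Sigma$ exactly once; this gives (i), and Theorem~\ref{thm:Geroch} then yields the global hyperbolicity of $(\M,g)$. Part (ii) follows at once because the level sets of a smooth Cauchy time function are spacelike Cauchy hypersurfaces. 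For (iii), once $(\M,g)$ is globally hyperbolic, $J_-^g(p)$ is closed, and the inclusion $J_-^g(p)\cap A\subset J_-^{g'}(p)\cap A$ exhibits the left-hand side as a closed subset of a compact set.

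The heart of the proof is (3), which I would prove via the hyperplane characterisation: a nonzero $\omega\in\T^*_p\M$ belongs to $V^{g^\sharp}_p$ if and only if $\ker\omega$ is a $g$-spacelike hyperplane of $\T_p\M$. This is a direct consequence of the $g$-orthogonal decomposition $\T_p\M=\mathbb{R}\sharp_g\omega\oplus\ker\omega$ together with $g(\sharp_g\omega,\sharp_g\omega)=g^\sharp(\omega,\omega)$. Granted this, if $g\preceq g'$ and $\ker\omega$ is $g'$-spacelike, then each nonzero $v\in\ker\omega$ satisfies $g'(v,v)>0$, hence $g(v,v)>0$ by (1)(ii); so $\ker\omega$ is $g$-spacelike, proving $V^{g'^\sharp}_p\subset V^{g^\sharp}_p$. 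The converse is obtained by applying the forward direction to the pair $(g^\sharp,g'^\sharp)$ on $\T^*_p\M$, using the canonical identifications $(g^\sharp)^\sharp=g$ and $(g'^\sharp)^\sharp=g'$.

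Finally, (4) refines (3) by tracking time orientations. Given $\omega\in V^{g'^\sharp+}_p$, set $Y:=\sharp_g\omega$; by (3), $Y$ is $g$-timelike, so lies either in $V^{g+}_p$ or $V^{g-}_p$. For any $v\in V^{g+}_p\subset V^{g'+}_p$ I would compute
$$g(Y,v)=\omega(v)=g'(\sharp_{g'}\omega,v)<0,$$
the inequality following from Proposition~\ref{propzero}(ii) applied to the two $g'$-future-timelike vectors $\sharp_{g'}\omega$ and $v$. A second application of Proposition~\ref{propzero}(ii), now with respect to $g$, forces $Y\in V^{g+}_p$, i.e.\ $\omega\in V^{g^\sharp+}_p$; the converse implication is symmetric, with the roles of $g$ and $g'$ exchanged in the dual picture. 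The main bookkeeping obstacle throughout~(3) and~(4) is the interplay between $\sharp_g$ and $\sharp_{g'}$, which is why I would isolate the hyperplane characterisation as a separate lemma up front.
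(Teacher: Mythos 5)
Your proof is correct in all four parts. The paper itself does not give a proof of this lemma: it is explicitly labeled a ``routine computation'' and left to the reader, so there is no published argument to compare against; your route through the hyperplane characterization of timelike covectors for (3), and the two applications of Proposition~\ref{propzero} for (4), is a natural and efficient one. The only point worth tightening is (2)(ii): Definition~\ref{defTT}(b) also requires $dt$ to be past-directed $g$-timelike, which your ``follows at once'' argument does not explicitly address. This does hold, either by appealing to (3)--(4) applied to $dt$, or directly by observing that each $\ker dt_p = \T_p\Sigma_{t(p)}$ is $g$-spacelike by your part (2)(i), so $dt_p$ is $g$-timelike by your hyperplane characterization, and that $t$ strictly increases along every $g$-future-directed timelike curve (as such a curve is $g'$-future-directed timelike), which via Proposition~\ref{propzero}(i) forces $\sharp_g dt$ to be past-directed.
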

\begin{figure}[h!]
\vspace{-8mm}
\centering
\includegraphics[scale=0.34]{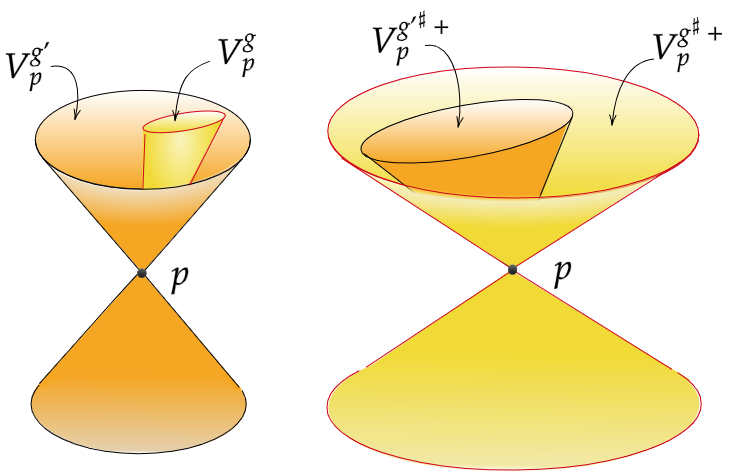}
\caption{Inclusion-of-cones relations}\label{fig:Inclusion}
\end{figure}

Using the lemma above, we can immediately conclude the following.
\begin{proposition}\label{remchichi0chi1} If $g\in \mathcal{M}_\M$  and $\mu : \M \to (0,+\infty)$ is smooth,  then
 \begin{itemize} 
\item[(a)] $\mu  g$ and $\mu^{-1}g$ 
  are Lorentzian,
\item[(b)]   $\mu g \preceq g \preceq \mu g$,
 \item[(c)]  $\mu^{-1}g \preceq g \preceq \mu^{-1}g$.
\item[(d)] $\mu g$ and $\mu^{-1}g$ are  globally hyperbolic if $g$ is and the spacelike Cauchy hypersurfaces of $g$ are also 
spacelike Cauchy hypersurfaces for  $\mu g$ and $\mu^{-1}g$.
\end{itemize}
\end{proposition}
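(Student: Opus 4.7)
\medskip

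\noindent\textbf{Plan of proof.}
The overall idea is that a pointwise positive conformal rescaling leaves the light cones of a Lorentzian metric pointwise invariant, so all four assertions reduce to a direct application of the preceding Lemma~\ref{prop:metrics0}. I expect no serious obstacle; the only subtle point is checking that the time orientation is intrinsic to the conformal class, which is what ties (d) to parts (b)--(c).

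\emph{Step 1 (signature, part (a)).} I will simply note that, for each $p\in \M$, the bilinear form $(\mu g)_p = \mu(p)\, g_p$ differs from $g_p$ by the strictly positive scalar $\mu(p)>0$. Multiplication by a positive scalar does not change the signature of a symmetric bilinear form, so $\mu g$ is a smooth Lorentzian metric on $\M$. The same argument applied to the positive smooth function $\mu^{-1}$ gives the claim for $\mu^{-1}g$.

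\emph{Step 2 (light cones coincide, parts (b) and (c)).} The crucial pointwise identity is
\[
(\mu g)(v,v) = \mu(p)\, g(v,v) \qquad \text{for all } v\in \T_p\M,
\]
and analogously for $\mu^{-1}g$. Since $\mu(p)>0$, the sign of $(\mu g)(v,v)$ equals the sign of $g(v,v)$; hence
\[
V^{\mu g}_p \;=\; V^{g}_p \;=\; V^{\mu^{-1}g}_p \qquad \text{for every } p\in \M.
\]
Definition~\ref{def:part ord} now yields both of the double inclusions $\mu g \preceq g \preceq \mu g$ and $\mu^{-1}g \preceq g \preceq \mu^{-1}g$ at once, proving (b) and (c).

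\emph{Step 3 (global hyperbolicity, part (d)).} Since the open light cones of $g$, $\mu g$ and $\mu^{-1}g$ coincide at each point, any continuous timelike vector field $X$ for $g$ is also a continuous timelike vector field for $\mu g$ and $\mu^{-1}g$ (and lies in the same connected half-cone). Thus the time-orientation of $(\M,g)$ naturally induces one on $(\M,\mu g)$ and $(\M,\mu^{-1}g)$, and with these choices the three spacetimes share the same future cones at every point. We have $\mu g \preceq g$ and $g \preceq \mu g$ (and similarly for $\mu^{-1}g$), both pairs being time-orientation-compatible, so Lemma~\ref{prop:metrics0}(2)(i) applies: global hyperbolicity of $g$ forces global hyperbolicity of $\mu g$ and $\mu^{-1}g$, and any spacelike Cauchy hypersurface of $(\M,g)$ is a spacelike Cauchy hypersurface of $(\M,\mu g)$ and $(\M,\mu^{-1}g)$ as well. (Spacelikeness of a hypersurface is a cone-theoretic condition and is therefore preserved, which is why no genuine hypothesis beyond $\mu>0$ and smoothness is needed.)
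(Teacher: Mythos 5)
Your proof is correct and takes precisely the approach the paper intends: a pointwise positive conformal rescaling leaves the light cones unchanged, so (a)--(c) are immediate, and (d) follows from Lemma~\ref{prop:metrics0}(2) applied to the pair $\mu g \preceq g$ (resp.\ $\mu^{-1}g \preceq g$), with $\mu g$ time-orientable because the cones coincide. The paper in fact gives no written proof, stating only that the proposition follows ``immediately'' from the preceding lemma; your write-up supplies exactly the details that were left implicit.
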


\subsubsection{Properties of convex combinations of Lorentzian metrics}
A more interesting set of properties arises when focusing on smooth {\em convex combinations} of Lorentzian metrics.  This is the first main result of this section.

\begin{theorem}\label{thm:metrics1} Let $\M$ be a smooth $(n+1)$-dimensional manifold, $g,g'\in \mathcal{M}_\M$, and 
consider  a  smooth function
 $\chi : \M \to [0,1]$.
  If $g\preceq g'$, the following facts are valid
\begin{itemize}
\item[(1)] $(1-\chi) g +  \chi g' $ is a metric of Lorentzian type;

\item[(2)]  $g \preceq (1-\chi) g +  \chi g'   \preceq g'$;
\item[(3)]   if $g_{\chi}^\sharp :=  (1-\chi)  g^\sharp + \chi g'^\sharp$, then $g_{\chi}^\sharp:= (g_{\chi})^\sharp$ for a (unique)  metric $g_{\chi}$ of Lorentzian type;

\item[(4)] 
$g\preceq g_{\chi} \preceq g'$;
\item[(5)]  If $g'$ is globally hyperbolic and $g$ time-orientable, then $(1-\chi) g +  \chi g' $  and $g_{\chi}$ are  globally hyperbolic.

\end{itemize} 
\end{theorem}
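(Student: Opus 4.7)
My plan is to reduce everything to a pointwise linear-algebra statement: at each $p\in\M$, the convex combination $h_t := (1-t)g_p + t g'_p$ (with $t := \chi(p) \in [0,1]$) is a convex combination of two Lorentzian symmetric bilinear forms satisfying the cone inclusion $V^g_p \subset V^{g'}_p$. If I can show that such a convex combination is always Lorentzian, with its light cone sandwiched between those of the endpoints, then (1) and (2) follow at once, and (3)--(5) will follow by standard transport arguments.

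For the cone sandwich, the computation is elementary. If $v \in V^g_p$ then $g(v,v)<0$ and, since $V^g_p\subset V^{g'}_p$, also $g'(v,v)<0$, hence $h_t(v,v)<0$, giving $V^g_p\subset V^{h_t}_p$. Conversely, if $v \notin V^{g'}_p$ then $g'(v,v)\geq 0$, so by the contrapositive of the cone inclusion also $g(v,v)\geq 0$, whence $h_t(v,v)\geq 0$; this yields $V^{h_t}_p \subset V^{g'}_p$, hence (2). The main obstacle, which I regard as the heart of the proof, is showing that $h_t$ is non-degenerate (so that, by continuity of the signature from $h_0=g$, it remains of Lorentzian type). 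I would argue by contradiction: if $h_{t_0}(w,\cdot)=0$ for some $w\neq 0$, pick any $v\in V^{h_{t_0}}_p$ (non-empty, since it contains $V^g_p$); then
$$
h_{t_0}(v+sw,\,v+sw) = h_{t_0}(v,v) + 2s\,h_{t_0}(v,w) + s^2 h_{t_0}(w,w) = h_{t_0}(v,v) < 0
$$
for every $s \in \RR$, so the whole affine line $\{v+sw : s\in\RR\}$ lies inside $V^{h_{t_0}}_p$ and hence inside $V^{g'}_p$. But the open cone of a Lorentzian form contains no affine line (the quadratic $s\mapsto g'(v+sw,v+sw)$ cannot be negative for all $s$, by the reverse Cauchy--Schwarz inequality for $g'$-timelike vectors), giving the desired contradiction. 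This establishes (1).

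For (3)--(4), I would transport the same argument to the cotangent bundle: Lemma~\ref{prop:metrics0}(3) gives the reversed inclusion $g'^\sharp \preceq g^\sharp$, so the identical reasoning applied to $h^*_t := (1-t)g^\sharp + t g'^\sharp$ shows that $g_\chi^\sharp$ is a non-degenerate Lorentzian inverse-metric on $\T^*\M$, and hence by pointwise inversion corresponds to a unique Lorentzian tensor $g_\chi$ on $\T\M$. The cotangent sandwich $V^{g'^\sharp}_p \subset V^{g_\chi^\sharp}_p \subset V^{g^\sharp}_p$ inverts via Lemma~\ref{prop:metrics0}(3)--(4) to $V^g_p\subset V^{g_\chi}_p\subset V^{g'}_p$, giving (4). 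Finally, for (5), any $g$-timelike vector field $X$ (existing by time-orientability of $g$) is also timelike for $(1-\chi)g+\chi g'$ and for $g_\chi$ in view of (2) and (4), so both metrics are time-orientable with a time-orientation compatible with $g'$. Since both are $\preceq g'$ with $g'$ globally hyperbolic, Lemma~\ref{prop:metrics0}(2) delivers global hyperbolicity, completing the proof.
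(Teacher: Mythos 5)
Your argument is correct, and for part (1) — which you rightly identify as the heart of the theorem — it takes a genuinely different route from the paper's. The paper works directly: it picks a $1$-dimensional subspace $L$ on which the smaller metric is negative definite and an $n$-dimensional subspace $H$ on which the larger one is positive definite, shows the convex combination keeps those signs, deduces $\T_p\M = L \oplus H$, and then reads off the signature from the determinant of the Gram matrix in an adapted basis (the parenthetical claim there that ``$n$ eigenvalues are $+1$'' is actually not quite accurate as stated, though the determinant-plus-Sylvester argument goes through). You instead reduce the whole signature question to non-degeneracy, which you prove by contradiction: a nonzero radical vector $w$ of $h_{t_0}$ together with an $h_{t_0}$-timelike $v$ would produce an entire affine line $v + \RR w$ inside the open cone $V^{g'}_p$, which is forbidden by the reverse Cauchy–Schwarz inequality, and then continuity of the signature along the segment $t \mapsto h_t$ from the Lorentzian $h_0 = g$ finishes. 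Your route is arguably cleaner — it avoids the explicit matrix and sidesteps the paper's small eigenvalue misstatement — at the modest cost of needing (2) (which you prove first and which is a pure sign argument not depending on $h_t$ being Lorentzian) to supply the nonempty open cone, and needing the standard ``no affine line in a Lorentzian cone'' lemma. You should make sure, when writing this out, to dispose of the corner case $g'(w,w) = 0$ in the reverse-CS step: there $g'(v,w)=0$ is impossible because $w$ would then be a nonzero null vector in the $g'$-orthogonal complement of the timelike $v$, which is spacelike — so the line either escapes to $+\infty$ (if $g'(w,w)>0$ or $g'(v,w)\neq 0$) or crosses zero (if $g'(w,w)<0$, by strict reverse CS since $w\not\parallel v$). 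Parts (2)–(5) match the paper essentially verbatim.
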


\begin{proof}  
(1) It is sufficient to prove the thesis point by point.
Let $q,q'$ be quadratic forms in a real $n+1$ dimensional linear space  $V$ of signature $(-, +, \ldots, +)$ such that $q'(x)\le 0$ implies $q(x)\le 0$. We prove that the strict convex combination $q''= cq + (1-c) q'$ for $c\in (0,1)$ has signature $(-, +, \ldots, +)$. Indeed, there is a   $1$-dimensional subspace  $L$ on which $q'(x)<0$ if $x\neq 0$. So $q(x)\le 0$ on $L$ and hence $q''(x)< 0$ on $L$ for $x\neq 0$. There is also  a $n$-dimensional subspace $H$  on which $q(x)>0$ if $x\neq 0$. Then $q'(x)>0$ on $H$ for $x\neq 0$ and hence $q''(x)>0$ on $H$ if $x\neq 0$.   By construction,  $L \cap  H= \{0\}$ necessarily, so that $V= L\oplus H$. 
The bilinear form $Q'': V \times V \to \mathbb{R}$ associated to  $q''$, in a basis of $V$ made of 
$0\neq e_0 \in L$ and $\{e_k\}_{k=1,\ldots,n} \in H$ with $Q''(e_k,e_h)= \delta_{kh}$, is represented by the $(n+1)\times (n+1)$ matrix $\begin{bmatrix}
q''(e_0) & c^t  \\
c & I
\end{bmatrix}$. Since the determinant is $q''(e_0) - c^tc <0$ and  $n$ eigenvalues are $+1$, its  signature is $(-,+,\ldots, +)$.

(2) Suppose that $g(v,v) <0$, then  $g'(v,v) <0$ because $g\preceq g'$ and thus $(1-\chi) g(v,v) + \chi g'(v,v)<0$ because $\chi, 1-\chi \geq 0$. We have obtained that $g \preceq (1-\chi) g + \chi g'$. Let us pass to the remaining inequality. If 
$(1-\chi) g(v,v) +\chi  g'(v,v)<0$ then $g'(v,v)<0$ or $g(v,v)<0$, in this second case also  $g'(v,v)<0$  because $g\preceq g'$. In both cases 
$g'(v,v) <0$. Summing up, $(1-\chi) g + \chi  g' \preceq g'$, concluding the proof of (2).\\
(3) $g^\sharp$ and $g'^\sharp$ are Lorentzian metrics on $\T^*\M$ and $g'^\sharp \preceq g^\sharp$ due to Lemma ~\ref{prop:metrics0}, we can recast the same argument used to establish (1) with trivial re-arrangements, obtaining that $g_\chi^\sharp$ is Lorentzian and $g'^\sharp \preceq g^\sharp_\chi =(1-\chi') g'^\sharp +  \chi' g^\sharp  \preceq g^\sharp$ with
$\chi' := 1-\chi$.
Notice that $g_\chi (v,v) := g_\chi^\sharp(\flat v, \flat v)$ defines a Lorentzian metric as well, since it has the same signature of $h$ by construction, and 
$g^\sharp = h$ trivially (and it is the unique metric with this property since $\flat$ is an isomorphism).\\
(4) It immediately arises from Lemma~\ref{prop:metrics0} by using $g'^\sharp \preceq g^\sharp_\chi =(1-\chi') g'^\sharp +  \chi' g^\sharp  \preceq g^\sharp$ with
$\chi' := 1-\chi$.\\
(5) A smooth timelike vector field of $(\M,g)$ is also timelike  for $(1-\chi) g+ \chi g'$ and $g_\chi$ for (2) and (4) respectively.  Hence these metrics are time-orientable and the thesis follows from Lemma~\ref{prop:metrics0} point (2).
\end{proof}

\subsection{Paracausal deformation of Lorentzian metrics}\label{subsec:paracausal}

The aim of this section is to provide a new definition that shall encode  the idea to deform a Lorentzian metric equipped with a time-orientation  into another Lorentzian metric with a corresponding time-orientation, taking advantage of  a procedure consisting of a finite number of steps. At each step, the light cones of the final  metric $g_k$ is related to the initial  one $g_{k-1}$ through an inclusion relation, either $g_{k-1} \preceq g_k$ or $g_k \preceq g_{k-1}$ preserving the time-orientation at each step, i.e., the future cone of $g_k$, respectively, includes or is included in the future cone of $g_{k-1}$.

\subsubsection{Paracausal relation}

\begin{definition} \label{def:paracausal def} Consider a pair of globally hyperbolic spacetimes  on the same manifold $\M$ with 
corresponding metrics $g,g' \in \mathcal{GH}_\M$ and corresponding time-orientations.
We say that  $g$ is  {\bf paracausally related} to $g$ -- and we denote it by $g \simeq g'$ -- or equivalently  $g'$ is a {\bf paracausal deformation} of $g$,  if there is a  finite sequence 
$g_0= g, g_1,\ldots, g_N = g'\in \mathcal{GH}_\M$ with corresponding time-orientations, 
such that for all $p\in \M$ either $$V^{g_k +}_p \subset V^{g_{k+1} +}_p  \qquad \text{ or } \qquad V^{g_{k+1} +}_p \subset V^{g_{k} +}_p$$ where the choice may depend on $k=0,\ldots, N-1$.
\begin{figure}[h!]
\center
\includegraphics[scale=0.34]{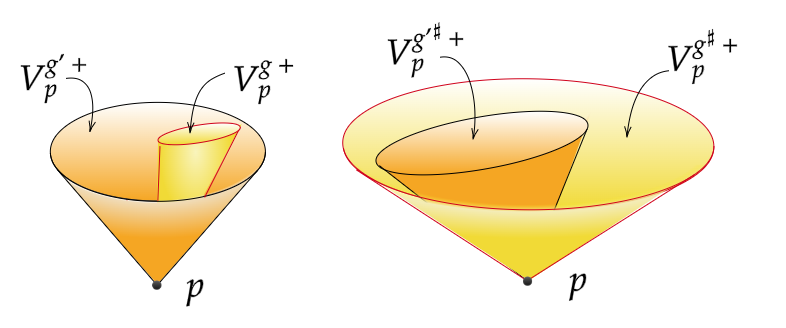}
\end{figure}
\end{definition}

\begin{remarks}
\begin{itemize}
\noindent \item[(1)]  Let us remark that our notion of paracausally deformation implies in particular that $g_k$ and $g_{k+1}$ are always  $\preceq$-comparable.
\item[(2)] Evidently, to be paracausally related is an {\em equivalence relation}  in $\mathcal{GH}_\M$.
\item[(3)] We stress that  paracausal deformations explicitly consider the  time-orientations of the used sequences of globally hyperbolic spacetimes. So, even if we say that ``metrics are paracausally related'',  the relation actually involves {\em the metrics equipped with corresponding time-orientations}.
\item[(3)] We shall show below a characterization of the paracausal relationship which seems more natural from a geometric and physical viewpoint. However, the definition above {\em as it stands} is much more directly suitable for the applications to 
M\o ller operators we shall introduce in the second part of this work. 
\end{itemize}
\end{remarks}

\begin{examples}\label{examples}
\begin{itemize}
\noindent \item[(1)] There two elementary  cases of paracausally related (globally hyperbolic) metrics $g_0$, $g_1$ on $\M$ which are not directly $\preceq$-comparable:
\begin{enumerate}
\item There is a globally hyperbolic metric $g$ on $\M$ such that, simultaneously  $g \preceq g_0$ and  $g \preceq g_1$ and
the future  lightcones are correspondingly included.
\item There is a globally hyperbolic metric $g$ on $\M$ such that, simultaneously  $g_0 \preceq g$ and  $g_1\preceq g$
and
the future  lightcones are correspondingly included.
\end{enumerate}
In both cases, 
 the existence of sequence $g_0,g,g_1$ proves that $g_0 \simeq g_1$.
\item[(2)] Let us give an elementary concrete example of paracausally related metrics. Consider the following smooth manifold $\RR^n$ endowed with the Minkowski metrics
$$\eta_0=-dt\otimes dt+ \sum_{i=1}^n dx^i \otimes dx^i \qquad \eta_1 = -d\tau \otimes d\tau  + \sum_{i=1}^n dy^i\otimes dy^i$$
where $(t,x_1,\dots,x_n)$ and $(\tau, y_1,\dots,y_n)$ are two different systems of Cartesian  coordinates on $\RR^{n+1}$. Here $t$ and $\tau$ are  Cauchy temporal functions associated to the respective Lorentzian metric and defining the time-orientation of the two metrics: $dt$ and $d\tau$ are assumed  to be past directed for the respective metric.
More precisely, we assume that the two coordinate systems are related by means of a physically non-trivial  permutation that  interchanges space and time, as in Figure~\ref{fig:Mink}, $\tau = x_1$, $y_1= t$, and $y_k=x_k$ for $k>1$.
\begin{figure}[h!]
\vspace{-12mm}
\centering
\includegraphics[scale=0.34]{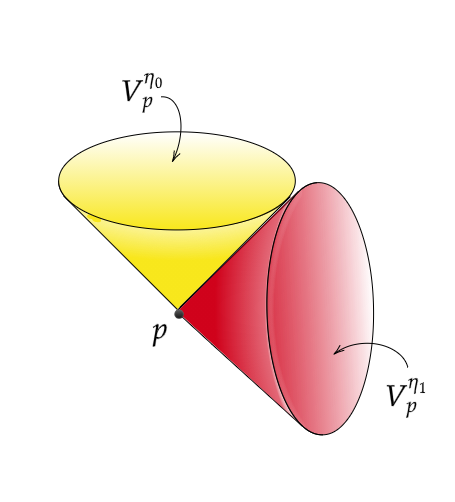}
\caption{Future  light cones of different Minkowski metrics on $\RR^{n+1}$.}\label{fig:Mink}
\end{figure}
 It is not difficult to see that even if $\eta_0\neq \eta_1$ evidently, we have  $\eta_0\simeq  \eta_1$: they  are paracausally related by the sequence of metrics $\eta_0,\overline g_1,\overline{g}_2,\eta_1$ whose future cones are given as in Figure~\ref{fig:paraMink}.
 \begin{figure}[h!]
 \vspace{-8mm}
\centering
\includegraphics[scale=0.34]{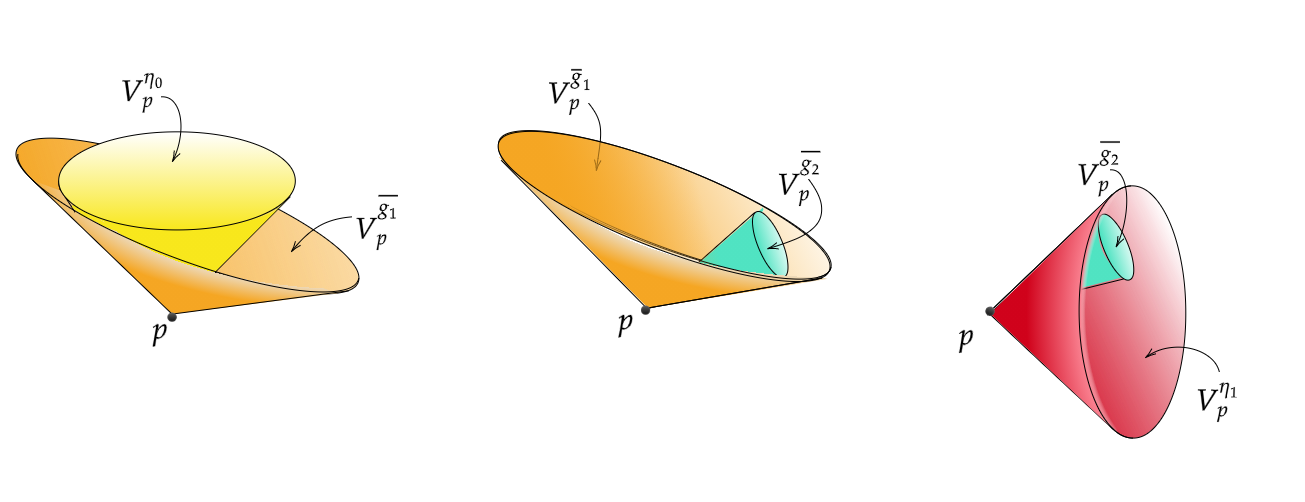}
\caption{Auxiliary future light cones to prove  $\eta_0 \simeq \eta_1$}\label{fig:paraMink}
\end{figure}
It is evident that by  further implementing the procedure, it is possible to reverse the time-orientation of  $(M,\eta_0)$ through a sequence of paracausal deformations leaving the final metric identical to the initial one.
\item[(3)] We pass to present a case where a pair of globally hyperbolic metrics are {\em not} paracausally related.
Consider the $2D$ Minkowski cylinder $\M$ obtained by identifying $x$ and $x+L$ in $\RR^2$ with coordinates $x,y$.
The first globally hyperbolic spacetime is $(\M, \eta_1)$ where $\eta_1 = -dy\otimes dy+ dx\otimes dx$, taking the identification into account, and with time-orientation defined by assuming that $\partial_y$ is future-directed. The second  globally hyperbolic spacetime is $(\M, \eta_2)$ where again $\eta_2 = -dy\otimes dy + dx\otimes dx$, taking the identification into account, but  with the opposite time-orientation, i.e.,  defined by $-\partial_y$.  See also Figure~\ref{fig:2-cil}.
\begin{figure}[h!]
\centering
\includegraphics[scale=0.3]{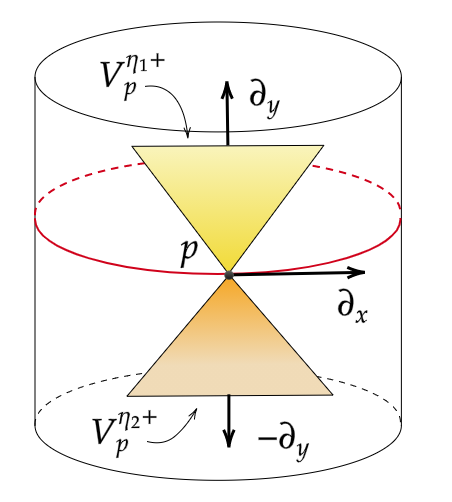}
\caption{2-D Minkowski cylinder.}\label{fig:2-cil}
\end{figure}

These two metrics are not paracausally related. Any attempt to use the procedure as in the previous example to rotate the former into the latter faces the insurmountable obstruction that one of the auxiliary metrics would have Cauchy hypersurfaces given by the $x$-constant lines. This Lorentzian manifold is not globally hyperbolic because it admits closed temporal curves as in Figure~\ref{fig:not_paradef}.
 \begin{figure}[h!]
\centering
\includegraphics[scale=0.3]{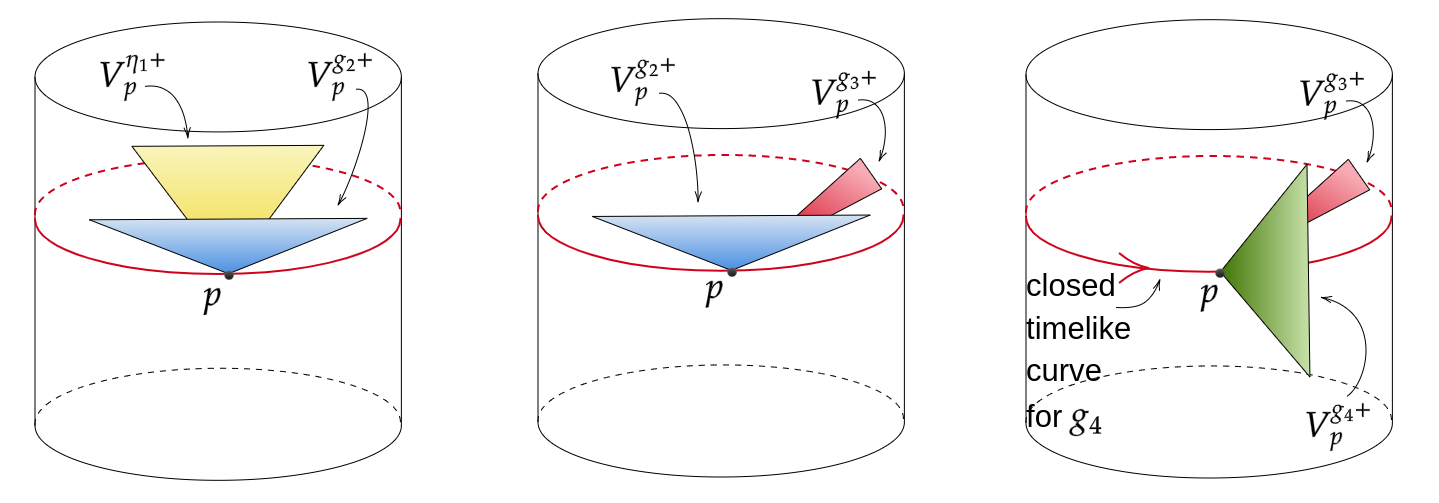}
\caption{Sequence of metrics where $g_4$ is not globally hyperbolic}\label{fig:not_paradef}
\end{figure}

Notice that this obstruction does not take place without the identification $x\equiv x+L$. 
\end{itemize}
\end{examples}

\subsubsection{Characterization of paracausal deformation in terms of future cones}

There is a  natural situation where two globally hyperbolic metrics $g$ and $g'$ on $\M$ are paracausally related.   
The generalization of the following result leads to a natural characterization of the paracausal relationship.

\begin{proposition}\label{teoPC} Let  $(\M,g)$ and $(\M,g')$ be globally hyperbolic spacetimes on the same manifold $\M$. 
If $V_x^{g+} \cap V_x^{g'+} \neq \emptyset $ for every $x\in \M$, then 
the metrics  $g$ and $g'$
are paracausally related.
\end{proposition}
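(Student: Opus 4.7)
The plan is to exhibit a single intermediate globally hyperbolic metric $\hat g$ on $\M$ whose future light cones lie inside $V^{g+}_p \cap V^{g'+}_p$ at every point. Once this is done, the three-term sequence $g_0:=g,\ g_1:=\hat g,\ g_2:=g'$, endowed with compatible time-orientations, witnesses $g\simeq g'$ via Definition~\ref{def:paracausal def}, since at each step one has $V^{g_1+}_p\subset V^{g_0+}_p$ and $V^{g_1+}_p\subset V^{g_2+}_p$. Global hyperbolicity of $\hat g$ will follow from Lemma~\ref{prop:metrics0}(2), because $\hat g$ will be $\preceq g$ (or $\preceq g'$) and time-orientable.

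The first step is to produce a smooth vector field $X$ on $\M$ with $X_p\in V^{g+}_p\cap V^{g'+}_p$ everywhere. The fibres of this intersection are non-empty (by hypothesis), open, and convex cones, hence a local section exists near every point and can be shrunk to a neighbourhood on which it remains pointwise in both open future cones; a partition of unity then assembles a global $X$, using the fact that a positive linear combination of vectors in an open convex cone still lies in that cone. The second step is to explicitly build $\hat g$. Set
$$\hat g := g + \lambda\, X^\flat \otimes X^\flat, \qquad X^\flat := g(X,\cdot),$$
for a smooth $\lambda:\M\to(0,\infty)$ to be chosen. A direct computation in a $g$-orthonormal frame with $e_0=X/\sqrt{-g(X,X)}$ shows that $\hat g$ is Lorentzian of signature $(-,+,\dots,+)$ exactly when $\epsilon:=1+\lambda\,g(X,X)\in(0,1)$, and that for $Y=aX+W$ with $W\perp_g X$ one has $\hat g(Y,Y)<0$ iff $g(W,W)<\epsilon\,a^2$. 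Thus the $\hat g$-timelike cone is a cone around $X$ that can be made arbitrarily narrow by taking $\epsilon$ small.

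The third step is to choose $\epsilon$ small enough pointwise so that this narrow cone actually sits inside $V^{g'+}_p$. Since $X_p$ lies in the open set $V^{g'+}_p$, there is a pointwise threshold $\epsilon_0(p)>0$ below which the cone $\{aX_p+W : g(W,W)<\epsilon\,a^2\}$ is contained in $V^{g'+}_p$; continuity of $g$, $g'$, and $X$ makes $\epsilon_0$ lower semicontinuous, so a partition of unity yields a smooth $\epsilon:\M\to(0,1)$ with $\epsilon(p)<\epsilon_0(p)$, and we set $\lambda:=(1-\epsilon)/(-g(X,X))$. By construction $V^{\hat g}_p\subset V^g_p\cap V^{g'}_p$, and orienting $\hat g$ so that $X$ is future-directed makes the three time-orientations compatible, giving $V^{\hat g +}_p\subset V^{g+}_p$ and $V^{\hat g +}_p\subset V^{g'+}_p$. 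Lemma~\ref{prop:metrics0}(2) then promotes $\hat g$ to a globally hyperbolic metric, completing the sequence $g,\hat g,g'$ and proving $g\simeq g'$. The main obstacle is genuinely the globality: producing $X$ and $\lambda$ smoothly over the whole of $\M$, which is precisely what the convexity/openness of $V^{g+}_p\cap V^{g'+}_p$ and the lower semicontinuity of $\epsilon_0$ permit via partition of unity.
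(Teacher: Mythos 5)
Your proposal follows essentially the same strategy as the paper: construct a smooth vector field $X$ with $X_p\in V^{g+}_p\cap V^{g'+}_p$ via convexity and a partition of unity, then build an intermediate metric whose future cones are narrow cones around $X$. Your $\hat g = g + \lambda\,X^\flat\otimes X^\flat$ is exactly the paper's family $g^a$ in formula~(\ref{Newmet}) up to the reparametrization $a=\epsilon$ and $\lambda=(1-a)/(-g(X,X))$, so the constructions agree. (A small slip: in the frame you chose, the timelike condition for $Y=aX+W$ is $g(W,W)<\epsilon\,a^2(-g(X,X))$, not $g(W,W)<\epsilon a^2$, unless you normalize $X$ to be $g$-unit; this does not affect the logic.)

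The genuine gap is in the third step. You assert that the pointwise threshold $\epsilon_0(p)$ is lower semicontinuous, citing ``continuity of $g$, $g'$, and $X$'', and then invoke a partition of unity. This is precisely the crux of the whole proposition, and it is the one thing the paper does \emph{not} take for granted. In fact the paper's own footnote to the technical Lemma inside Proposition~\ref{teoPC} explicitly remarks that the alternative strategy of proving continuity (or semicontinuity) of $p\mapsto\sup\{a:{V_p^{g^a}}^+\subset{V_p^{g'}}^+\}$ is possible but ``not technically easy'', and the paper instead proves, by a compactness-and-sequences argument on the unit sphere bundle over a relatively compact chart domain, that a single constant $a_U\in(0,1)$ works uniformly on a neighborhood $U$ of each point. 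Your semicontinuity claim can likely be made rigorous along exactly those lines (for $\epsilon''<\epsilon'<\epsilon_0(p_0)$, the closure of the $\epsilon''$-cone minus the origin is inside the open $\epsilon'$-cone, hence inside $V^{g'+}_{p_0}$; intersect with the Euclidean unit sphere to get a compact-in-open configuration stable under small perturbations of the coefficients of $g$, $g'$, $X$). But as written, the lower semicontinuity is stated rather than proved, so the step that actually carries the weight of the argument is missing. Everything before and after it — the construction of $X$, the explicit form and signature of $\hat g$, the use of Lemma~\ref{prop:metrics0}(2) to get global hyperbolicity, and the three-term paracausal chain $g,\hat g,g'$ — matches the paper and is correct.
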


\begin{proof} To prove the assertion it is sufficient to prove the existence of a Lorentzian metric $h \in \mathcal{T}_\M$ such that $h \preceq g$ and $h \preceq g'$. In this case, $h$ would be globally hyperbolic according to (2) in Lemma \ref{prop:metrics0} and  the same argument as in (1) Examples \ref{examples} would prove the thesis.

Let us start by proving that  a smooth  vector field $X$ on $\M$ exists such that $X_p \in V_p^{g+} \cap V_p^{g'+}$ for all $p\in \M$. Let us define the smooth functions  $$G: \T\M \ni (p,v) \mapsto  g_p(v,v)\in \RR\:, \quad G_Y: \T\M\ni (p,v) \mapsto g_p(v,Y)\in \RR\:,$$ where $Y$ is a smooth timelike future oriented vector field for $g$.
By construction   (with obvious notation)  $\cup_{p\in \M} V_p^{g+} = G^{-1}(-\infty,0) \cap G_Y^{-1}(-\infty,0) \subset \T\M$ is an open set. 
With the same argument,
we have that also $\cup_{p\in \M} V_p^{g'+} \subset \T\M$ is open. Finally, $\cup_{p\in \M} V_p^{g+}  \cap \cup_{p\in \M} V_p^{g'+} = \cup_{p\in \M} V_p^{g+}  \cap V_p^{g'+}$ is therefore open, non-empty by hypothesis,  and projects onto the whole $\M$ by construction.

As a consequence,
 given a local trivialization patch $\T U$ around $p\in U$, where $(U, \psi)$ is a local chart on $\M$ (with $\dim(\M)=n+1$),  the set $\left(\cup_{p\in \M} V_p^{g+}  \cap V_p^{g'+}\right) \cap \T U$
is  diffeomorphic to an open  subset $A\subset V \times \RR^{n+1}$ with  $V := \psi(U)\subset \RR^{n+1}$ and 
$\pi_1(A)= V$ ($\pi_1: \RR^{n+1} \times \RR^{n+1} \to \RR^{n+1}$ being the standard projection onto the former factor). 
 Working in coordinates, it is then trivially possible to pick out   a smooth local section $X^{(U)}$ of $\T U$ such that $X^{(U)}_q \in V_q^{g+} \cap V_q^{g'+}$ if $q\in U$. To conclude, consider 
  a partition of the unity $\{\chi_i\}_{i\in I}$ of $\M$ subordinated to a locally finite covering $\{U_i\}_{i\in I}$ of domains of local charts of $\M$ and let  $X_p^{(U_i)} \in V_p^{g+} \cap V_p^{g'+}$  be constructed as above when  $p\in U_i$ for every $i \in I$.  The smooth vector field constructed as a  locally finite convex linear combination $X:= \sum_{i\in I} \chi_i X^{(U_i)}$ satisfies $X_p \in V_p^{g+} \cap V_p^{g'+}$ for every $p\in \M$
because  the cones $V_q^{g+}$,   $V_q^{g'+}$  are convex sets in a vector space and thus their intersection is also convex. $X$ is the vector field we were  searching  for.
 
As the second step we construct a Lorentzian metric $h$,  whose future cones  $V_p^{h_+}$  satisfy  $X_p \in V_p^{h+} \subset V_p^{g+} \cap V_p^{g'+}$ for every $p\in \M$.
Notice that it means $h \in \mathcal{T}_\M$ since $X$ is future directed for $h$ (and also for the two metrics $g$ and $g'$) and thus it defines a time-orientation for $(\M,h)$. Since $h\preceq g, g'$, this would conclude our proof. 

Let us construct $h$ taking advantage of the vector field $X$. Consider $p\in M$ and define a $g$-pseudo orthonormal basis $e_0,\ldots, e_n$ where $e_0 = \frac{X_p}{\sqrt{-g(X_p,X_p)}}$ and the remaining vectors are spacelike. If $v,v'\in \T_p\M$,
$$g(v,v') = -g(e_0,v)g(e_0,v') + \sum_{k=1}^n g(e_k,v)g(e_k,v')\:.$$ If $a\in (0,1)$, the new Lorentzian scalar product in $\T_p\M\ni v,v'$
\begin{equation} g^a(v,v') :=  -ag(e_0,v)g(e_0,v') + \sum_{k=1}^n g(e_k,v)g(e_k,v') = g(v,v') + (a-1) \frac{g(X_p,v)g(X_p,v')}{g(X_p,X_p)} \label{Newmet}\end{equation}
trivially satisfies (the closure being  taken in $\T_p\M \setminus\{0\}$)
$$X_p \in V_p^{g^a+} \subsetneq  \overline{V_p^{g^a+}} \subsetneq V_p^{g+} \quad \mbox{for $a\in (0,1)$.}$$
The strong inclusions are due to the fact that the lightlike boundary of $V_p^{g^a+}$ is made of timelike vectors of $g$ as it arises from the definition of $g^a$.  
Now note that  $\partial V_p^{g^a+} $ becomes more  and more concentrated  around the set $\{\lambda X_p \:|\: \lambda >0 \}$ as $a$ approaches $0$ from above.  (In particular,  the limit and degenerate case $g_p^{a=0}(v,v)=0$ implies $v$ is parallel to  $X_p$.)
Since $X_p \in V_p^{g'+}$ which is also an open convex cone as $ V_p^{g^a+}$,  there must exist $a_p\in (0, 1)$ such that  $V_p^{g^{a_p}+}\subset V_p^{g'+}$.  This property is locally uniform in $a$ as established in the following technical lemma:
\medskip

\textbf{Lemma.}\footnote{As noticed by the referee, a different strategy for proving this lemma would be showing that the function $M \ni p \mapsto a(p) = \sup\{a \in (0, 1): {V_p^{g^a}}^+\subset {V_p^{g'}}^+ \}$ is continuous. In that case, one canalternatively  define $a_U:= \min_{p \in \overline{U} a(p)}$. However the proof of continuity is not technically  easy.} {\em Within the hypotheses of the proposition, if $x \in \M$, there is a coordinate patch with domain  $V \ni x$,  an open set $U\ni x$
with compact closure $\overline{U} \subset V$, and  a constant $a_U\in (0,1)$ such that $V_p^{g^{a_U}+}\subset V_p^{g'+}$ for every $p\in U$.} \medskip

\noindent \textit{Proof.}  If $x\in \M$, there is a coordinate patch with domain $V \ni x$ and coordinates $V \ni p \mapsto \varphi(p) = (x^0(p),\ldots, x^n(p)) \in \RR^{n+1}$ such that $U \ni x$ for some open subset $U \subset V$ such that $\overline{U}$ is compact. We will  henceforth deal with $U$ and the coordinates $(x^0,\ldots, x^n)$ restricted to  thereon. We will also take advantage of the  compact set  $K := \varphi(\overline{U}) \subset \RR^{n+1}$ and identify $\T\overline{U}$ with $K \times \RR^{n+1}$ using the coordinates.  Finally, we will equip both $K$ and $\RR^{n+1}$ (representing $\T_p \M$ at each $p\in \overline{U}\equiv K$) with the standard Euclidean metric of $\RR^{n+1}$ whose norm will be denoted by $||\cdot ||$. 

Let us  start the proof by  proving that the family of cones $V_p^{g'+}$ of $g'$ has a minimal  width $m>0$ when  $p$ ranges in $K$.
We henceforth view the above future-directed timelike vector field $X$ and $g'$ as geometric objects on $K$ using the coordinate system. In particular, 
if $p \in K$, let us indicate by $v_p \in \RR^{n+1}$ the unique  future-directed timelike  vector parallel to $X_p$ (now viewed as a vector in $\RR^{n+1}$) such that $||v_p||=1$.
Consider the set made of future-directed elements of $T\M$
$$C := \{(p,u)  \in K \times \mathbb{S}^{n}  \:|\:  g_p'(u,v_p) \leq 0 \:, g_p'(u,u) =0\}$$
(above $\mathbb{S}^n := \{z\in \RR^{n+1}\:|\: ||z||= 1\}$)
and the continuous function 
$$W : C \ni (p,u) \mapsto ||u - v_p|| \geq 0\:,$$
which computes the width of $\partial{V_p^{g'+}}$  (that is of $V_p^{g'+}$ itself)  around $X_p$ along  the direction $u$ by  using the Euclidean distance induced by $||\cdot||$.
Observe that $C$ is  compact since it is the  intersection of preimages of a pair of  closed sets along two corresponding  continuous maps and it is included in a compact set.  
Since this map is continuous and $C$ is compact,  there exists $$m := \min_C W>0\:.$$ In particular,  $m>0$, otherwise $u = v_q$ for some $(q,u) \in C$ and this is not possible since it would imply $g'(v_q,v_q)= g'(u,u)=0$, but $v_q$ is timelike ($g'_q(v_q,v_q) <0$) since  it does not vanish ($||v_q||=1$) and it is proportional to the timelike vector  $X_q$.

An analogous width-cone  function can be defined for the cones of $g^a$ (including the degenerated case $a=0$) on a set $C'$ which also embodies  the dependence on $a$:
$$C' := \{(a, p,u)  \in [0,1/2]  \times K \times \mathbb{S}^{n}  \:|\: g^a_p(u,v_p) \leq 0 \:, g^a_p(u,u) =0\}\:.$$
We also define the continuous function
$$W' : C' \ni (a, p,u) \mapsto ||u- v_p|| \geq 0\:.$$
Observe that $C'$ is again compact since it is the  intersection of preimages of two closed sets along a pair of corresponding  continuous maps of $(a,p,u)$ and $C'$ is included in a compact set.

We want to prove that there exists  $a^{m} \in [0,1/2]$ such that $W'(a^m,p,u) < m$ for all $(p,u) \in C$. If this were not the case, then for every 
$a_n := 1/n$ there would be  a pair $(p_n,u_n) \in C$ such that $W'(a_n,p_n,u_n) \geq  m$. Since $C'$ is a compact metric space, we could  extract a subsequence
of triples $(a_{n_k},p_{n_k}, u_{n_k}) \to (0, p_\infty, u_\infty) \in [0,1/2] \times C$ for $k\to +\infty$
and some $(p_\infty, u_\infty)\in C$.  By continuity $0=  g_{p_n}^{a_n}(u_n,u_n) \to g^0_{p_\infty} (u_\infty,u_\infty)$
where $||u_\infty||=1$. From (\ref{Newmet}), $g^0_{p_\infty} (u_\infty,u_\infty)=0$ would entail  that $u_\infty$ is parallel to $v_{p_\infty}$ and thus $W'(0, p_\infty, u_{p_\infty})= ||u_\infty- v_{p_\infty}|| =0$. That is in contradiction with the requirement  $W'(a_n,p_n,u_n) \geq  m>0$ for every $n=1,2,\ldots$ in view of the continuity of $W'$.

We have therefore  established that there exists  $a^{m} \in [0,1/2]$ such that $W'(a^m,p,u) < m$ for all $(p,u) \in C$.
From the definition of $W$ and $W'$, we have also obtained that  
$V_p^{g^{a^m+}} \subset V_p^{g'+}$ for all $p\in K$.  It is enough to conclude that 
$V_p^{g^{a_U}+} \subset V_p^{g'+}$ for all $p \in U$ as wanted simply by taking $a_U:=a^m$. This concludes our claim.
\hfill$\Box$\medskip

Let us go on with the main proof. For every $U$ as in the previous lemma,  define  the  constant  function $a(p) = a_U$  for $p \in U$. Since this  can be done in a neighborhood of every point $p\in \M$, using a partition of the unity $\{\chi_i\}_{i\in I}$ subordinated to a  locally finite covering of charts $\{U_i\}_{i\in I}$, we can construct the metric $h$, where now every $a_i := a_{U_i} : U_i  \to (0,1)$ is a constant in $U$ and   thus it is a smooth function therein.
$$h_p(v,v')  =  \sum_i \chi_i(p)   g_p^{a_i(p)}(v,v') = \sum_i \chi_i(p)  \left( g_p(v,v') + (a_i(p)-1) \frac{g_p(X_p,v)g_p(X_p,v')}{g_p(X_p,X_p)}\right) $$ $$=
g_p(v,v') + \left(\sum_i \chi_i (p)a_i(p)-1\right) \frac{g_p(X_p,v)g_p(X_p,v')}{g_p(X_p,X_p)} $$
Since $\sum_i \chi_i(p) a_i(p) \in (0,1)$, this metric  is still  Lorentzian and  of the form (\ref{Newmet}) point by point,  where now
$a(p) = \sum_i \chi_i(p) a_i(p)$.
By construction $X_p \in V_p^{h+} \subset V_p^{g+}$ for every $p\in \M$, just because it happens point by point with the above choice of $a(p)$. In particular, we can endow $h$ with the  time-orientation induced  by $X$ as it happens for
$g$, $g'$ and all local metrics $g^{a_i}$.
 Finally, $V_p^{h+} \subset V_p^{g'+}$ because, if $h_p(v,v)<0$, at least one of the values $g^{a_{i_0}(p)}(v,v)$ appearing in $ \sum_i \chi_i(p)   g_p^{a_i(p)}(v,v) $ must be negative and thus,  if  $v$ is  future-directed,
$v \in V_+^{g^{a_i(p)}+} \subset V_p^{g'+}$. The proof is over because $h$ satisfies all requirements
$X_p \in V_p^{h+} \subset V_p^{g+} \cap V_p^{g'+}$ for every $p\in \M$.
\end{proof}

As an immediate  byproduct, it is easy to see that for any globally hyperbolic metric $g$, there exists a paracausal deformation $g'$ of $g$ which is ultrastatic.
\begin{corollary}\label{cor:ultr para}
Let $(\M,g)$ be a globally hyperbolic spacetime. Then there exists a paracausal deformation $g'$ of $g$ such that $(\M,g')$ is an ultrastatic spacetime.
\end{corollary}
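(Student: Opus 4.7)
The plan is to exhibit an ultrastatic metric $g'$ on $\M$ whose future cone meets the future cone of $g$ at every point, and then invoke Proposition~\ref{teoPC}. The natural candidate is obtained by ``freezing'' the temporal and spatial parts of the Bernal--S\'anchez splitting of $(\M,g)$.

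First, apply Theorem~\ref{thm: Sanchez} to obtain an isometry $\psi:\M\to \RR\times\Sigma$ with $g=-\beta^2\, d\tau\otimes d\tau\oplus h_\tau$, where $\tau$ is a Cauchy temporal function and $\partial_\tau$ is future-directed timelike for $g$. Next, fix on $\Sigma$ a \emph{complete} smooth Riemannian metric $h$; the existence of such an $h$ is a classical result (Nomizu--Ozeki), so one can even take $h$ to be a suitable conformal rescaling of $h_0$. Define
\begin{equation*}
g' := \psi^*\bigl(-d\tau\otimes d\tau + h\bigr),
\end{equation*}
which is ultrastatic on $\RR\times\Sigma$. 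By the last bullet of Examples~\ref{ex:globhyp} (Lorentzian warped product with $I=\RR$, $f\equiv 1$ and $(\Sigma,h)$ complete), $(\M,g')$ is globally hyperbolic; equip it with the time-orientation for which $\partial_\tau$ is future-directed (the same choice as for $g$).

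Now observe that at every $p\in\M$ the vector $\partial_\tau|_p$ is future-directed timelike both for $g$ (since $g(\partial_\tau,\partial_\tau)=-\beta^2<0$) and for $g'$ (since $g'(\partial_\tau,\partial_\tau)=-1$). Therefore
\begin{equation*}
\partial_\tau|_p \in V_p^{g+}\cap V_p^{g'+} \qquad \text{for all } p\in\M,
\end{equation*}
so the hypothesis of Proposition~\ref{teoPC} is satisfied and $g\simeq g'$.

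The only non-trivial ingredient is the existence of a complete Riemannian metric on $\Sigma$, which is the main obstacle only in appearance: Nomizu--Ozeki's theorem guarantees it on every smooth paracompact manifold, hence in particular on the Cauchy hypersurface $\Sigma$. Everything else reduces to identifying $\partial_\tau$ as a common future-directed timelike vector field and applying the characterization already established in Proposition~\ref{teoPC}.
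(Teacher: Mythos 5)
Your proof is correct and follows essentially the same route as the paper: use the Bernal--S\'anchez splitting to get a Cauchy temporal function $\tau$, pick a complete Riemannian metric $h$ on $\Sigma$, set $g' = \psi^*(-d\tau\otimes d\tau + h)$, observe that $\partial_\tau$ lies in both future cones, and invoke Proposition~\ref{teoPC}. You are even slightly more careful than the paper's own proof, which drops the lapse factor $\beta^2$ from the splitting, whereas you retain it and explicitly check $g(\partial_\tau,\partial_\tau)=-\beta^2<0$.
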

\begin{proof}
Let $t$ be a Cauchy temporal function for the globally hyperbolic spacetime $(\M,g)$ 
so that $\M$ is isometric to $\RR \times \Sigma$ with metric $-dt^2 + h_t$. We indicate by $\partial_t$ the tangent vector to the submanifold $\RR$.
Let $h$ be a complete Riemannian metric on $\Sigma$. Then the ultrastatic metric $g':= -dt^2 + h$ is globally hyperbolic
\cite{Sanchez97}
 and the vector $\partial_t$ is contained in the intersection of $V_p^{g +}$ and $V_p^{g' +}$ for any $p\in\M$.  Proposition  \ref{teoPC} ends the proof.
\end{proof}

The  result established in Proposition~\ref{teoPC} leads to a crucial  characterization of paracausally related metrics, which represent the second main result of this section.

\begin{theorem}\label{thm:char paracausal}
Let $\M$ be a smooth manifold. Two metrics $g,g' \in \mathcal{GH}_\M$  are paracausally related if  and only if there exists a finite  sequence of globally hyperbolic  metrics $g_1=g, g_2\ldots, g_n = g' $ on $\M$ such that all pairs of consecutive metrics $g_k, g_{k+1}$ satisfy $V_x^{g_k+} \cap V_x^{g_{k+1}+} \neq \emptyset $ for every $x\in \M$.
\end{theorem}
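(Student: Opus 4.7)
The plan is to reduce the statement to Proposition~\ref{teoPC} plus the transitivity of the equivalence relation $\simeq$, which together already package all the substantive geometry. The two implications are asymmetric: one direction is essentially tautological from Definition~\ref{def:paracausal def}, while the other direction is a chained application of Proposition~\ref{teoPC}.

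For the ``only if'' direction I would argue directly from the definition. If $g\simeq g'$, Definition~\ref{def:paracausal def} furnishes a finite sequence $g_0=g, g_1,\ldots,g_N=g'$ of globally hyperbolic metrics such that, at each step, one of the two inclusions $V_p^{g_k+}\subset V_p^{g_{k+1}+}$ or $V_p^{g_{k+1}+}\subset V_p^{g_k+}$ holds at every $p\in\M$. Since both future cones are nonempty open convex sets (each $g_k$ is time-oriented), the smaller cone is nonempty and the intersection $V_p^{g_k+}\cap V_p^{g_{k+1}+}$ equals that smaller cone, hence is nonempty for every $p\in\M$. This is exactly the sequence required by the statement.

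For the ``if'' direction, assume a sequence $g_1=g, g_2,\ldots,g_n=g'$ of globally hyperbolic metrics such that $V_x^{g_k+}\cap V_x^{g_{k+1}+}\neq\emptyset$ for every $x\in\M$ and every consecutive pair. Each consecutive pair then satisfies the hypothesis of Proposition~\ref{teoPC}, so $g_k\simeq g_{k+1}$ for $k=1,\ldots,n-1$. Using the transitivity of $\simeq$ (pointed out as Remark (2) after Definition~\ref{def:paracausal def}), I would chain these relations to obtain $g=g_1\simeq g_2\simeq\cdots\simeq g_n=g'$, which proves $g\simeq g'$.

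The only delicate point worth double-checking is that the time-orientations line up coherently so that Definition~\ref{def:paracausal def} really applies. In the ``only if'' direction this is built into the definition. In the ``if'' direction, the intermediate metric $h$ produced in the proof of Proposition~\ref{teoPC} is time-oriented by the vector field $X$ constructed to lie in $V_p^{g_k+}\cap V_p^{g_{k+1}+}$ at every $p$, so $h$ admits a time-orientation simultaneously compatible with those of $g_k$ and $g_{k+1}$; thus the two-step chain $g_k,h,g_{k+1}$ produced at each stage carries consistent time-orientations, and concatenating these chains yields a global chain satisfying Definition~\ref{def:paracausal def}. Since the hard geometric work is already done inside Proposition~\ref{teoPC}, I expect no further obstacle: the present theorem is best presented as a short corollary.
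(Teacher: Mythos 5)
Your proof is correct and follows exactly the same route as the paper's: the ``only if'' direction is immediate from Definition~\ref{def:paracausal def} since one future cone contains the other, and the ``if'' direction chains Proposition~\ref{teoPC} with the transitivity of $\simeq$. Your extra remark on the consistency of time-orientations is a sensible sanity check but adds nothing beyond what Proposition~\ref{teoPC} already guarantees.
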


\begin{proof} If $g,g'$ are paracausally related,  then a sequence of metrics as in Definition \ref{def:paracausal def}  trivially satisfies the condition in the thesis. If that condition is {\em vice versa} satisfied, then the metrics of each pair $g_k, g_{k+1}$ of the sequence are paracausally related  in view of Proposition  \ref{teoPC}. Since paracausal relation is transitive, $g$ and $g'$ are paracausally related. 
\end{proof}

\subsubsection{Paracausal deformation and Cauchy  temporal functions}

We now study the interplay of  the notion of Cauchy temporal function and the one of paracausal deformation.

A first result regards metrics that share a common foliation of  Cauchy surfaces.
We need a  preliminary technical result.
\begin{lemma}\label{lem:alpha glob hyp}
Let $(\M,g)$ be a globally hyperbolic spacetime, $t: \M \to \RR$ a Cauchy temporal function according to Definition \ref{defTT} and $\psi:\M\to\RR\times\Sigma$ a diffeomorphism mapping isometrically $(\M,g)$ to $(\RR\times\Sigma,-\beta^2d\tau\otimes d\tau\oplus h_{\tau})$. Finally let  $(\M,g_{\alpha})$ be a time oriented spacetime with time orientation such that $dt$ is past directed.\\
 If $\psi$ maps $(\M,g_\alpha)$ isometrically to $(\mathbb{R}\times\Sigma,-
d\tau \otimes d\tau\oplus \alpha^{2}(\tau)\beta^{-2}h_\tau)$ with $\alpha\in C^\infty(\mathbb{R},(0,\infty))$, then $(\M,g_{\alpha})$ is globally hyperbolic.
\end{lemma}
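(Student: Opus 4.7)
The goal is to show that $(\M,g_\alpha)$ is globally hyperbolic. Since $\psi$ is a time-orientation-preserving diffeomorphism and an isometry, this is equivalent to showing that
$$\tilde g:=-d\tau\otimes d\tau+\alpha^2(\tau)\beta^{-2}h_\tau$$
on $\RR\times\Sigma$ is globally hyperbolic (with the time orientation making $d\tau$ past-directed). By Theorem~\ref{thm: Sanchez} together with Remark~(2) after Definition~\ref{defTT}, it is enough to verify that $\tau$ is a Cauchy temporal function for $\tilde g$. Conditions (i)--(iii) of Definition~\ref{defTT} are immediate: $\tilde g^\sharp(d\tau,d\tau)=-1<0$ gives a past-directed timelike $d\tau$, the lapse $\beta_{\tilde g}=1$ is smooth positive, and the spatial family $\alpha^2(\tau)\beta^{-2}h_\tau$ is smooth Riemannian on each leaf. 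What remains is to show that every slice $\Sigma_c:=\{\tau=c\}\times\Sigma$ is a spacelike Cauchy hypersurface; spacelikeness follows from the block form, so the substantive task is to establish the Cauchy property.

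My plan is to exploit the conformal invariance of global hyperbolicity. Multiplying $\tilde g$ by the smooth positive function $\Omega^2:=\alpha^{-2}(\tau)\beta^{2}$ produces
$$g^\sharp:=\Omega^2\,\tilde g=-\alpha^{-2}(\tau)\beta^2\,d\tau\otimes d\tau+h_\tau,$$
which has exactly the same Bernal--S\'anchez form as $g$, with the spatial family $h_\tau$ unchanged and the lapse modified from $\beta$ to $\hat\beta:=\alpha^{-1}\beta$. Because conformally related Lorentzian metrics share null cones, causal curves, and hence Cauchy hypersurfaces, it suffices to prove that $g^\sharp$ is globally hyperbolic with the $\Sigma_c$ as Cauchy hypersurfaces. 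In the favourable case $\alpha(\tau)\geq 1$ everywhere, one has $\hat\beta\leq\beta$ pointwise and hence $V^{g^\sharp+}_p\subset V^{g+}_p$ for all $p\in\M$, i.e.\ $g^\sharp\preceq g$; Lemma~\ref{prop:metrics0}(2) then immediately yields the desired conclusion for $g^\sharp$, and hence for $\tilde g$.

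For general $\alpha$ the argument is more delicate. I would introduce the time reparametrization $s(\tau):=\int_0^\tau\alpha^{-1}(u)\,du$, a smooth strictly increasing diffeomorphism of $\RR$ onto an open interval $J\subseteq\RR$; under $(\tau,x)\mapsto(s(\tau),x)$ the factor $\alpha^{-1}$ in the lapse of $g^\sharp$ is absorbed into $ds$, placing the metric in a form admitting a more direct comparison with $g$. The main obstacle will be the regime in which $\alpha$ oscillates across the value $1$, where the pointwise inclusion $g^\sharp\preceq g$ fails: there one must argue directly that a $\tilde g$-inextendible future-directed timelike curve $\gamma$, parametrized by $\tau$ as $\gamma(\tau)=(\tau,\sigma(\tau))$ with $|\sigma'(\tau)|_{h_\tau}<\beta(\tau,\sigma(\tau))/\alpha(\tau)$, cannot have bounded $\tau$-range. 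The plan is to combine inextendibility (which forces $\gamma$ to exit every compact subset of $\M$) with the boundedness of $\beta/\alpha$ on compact $\tau$-intervals along $\sigma$ and the Cauchy property of $\Sigma_c$ for the original globally hyperbolic metric $g$, so as to derive a contradiction with the assumed escape of $\sigma$ in $\Sigma$.
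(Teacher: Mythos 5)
Your setup — passing to $(\RR\times\Sigma,\tilde g)$, reducing to the Cauchy property of the slices, and conformally rescaling by $\Omega^2=\alpha^{-2}\beta^2$ to obtain a metric with lapse $\alpha^{-1}\beta$ and the same spatial family $h_\tau$ — is sound, and the special case $\alpha\geq1$ is indeed closed by $\preceq$-comparison with $g$ together with Lemma~\ref{prop:metrics0}(2). But the lemma must hold for \emph{every} smooth $\alpha>0$, and that is exactly where the work lies: wherever $\alpha<1$, the cones of your rescaled metric are strictly wider than those of $g$, the relation $\preceq$ points the wrong way, and a curve that is timelike for the rescaled metric need not be timelike for $g$, so the Cauchy property of the slices for $g$ cannot be invoked directly. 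Your plan for this regime is only a sketch, and its pivotal estimate is circular: you invoke ``boundedness of $\beta/\alpha$ on compact $\tau$-intervals along $\sigma$'', but a compact $\tau$-interval only bounds $1/\alpha$ uniformly, while $\beta(\tau,\sigma(\tau))$ is uncontrolled — $\beta$ may be unbounded in the spatial directions — and bounding it already presupposes that $\sigma$ stays in a compact subset of $\Sigma$, which is precisely the escape you are trying to rule out. The reparametrization $s(\tau)=\int_0^\tau\alpha^{-1}(u)\,du$ does not change this: the pushed-forward metric $-\beta^2(\tau(s),x)\,ds\otimes ds+h_{\tau(s)}$ has the same algebraic shape as $g$, but its lapse and spatial family are reparametrized in time, so its global hyperbolicity is not inherited from that of $g$ and is precisely what is in question.

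The missing idea, which the paper's proof supplies, is a localization \emph{before} any comparison or inextendibility argument is run: one first reduces to proving that $\Sigma=\tau^{-1}(0)$ is a Cauchy hypersurface for $g_\alpha$ restricted to a finite strip $(-a,a)\times\Sigma$, using that an inextendible $g_\alpha$-timelike curve, once parametrized by $\tau$, restricts to an inextendible curve of any such strip that it enters. On the strip, $\alpha$ has a uniform positive lower bound $\alpha_0$, so $g_\alpha\preceq g_{\alpha_0}$ there (with $g_{\alpha_0}$ the constant-$\alpha_0$ variant), and by Lemma~\ref{prop:metrics0}(2) one is reduced to a single constant $\alpha_0$, which the paper then treats by rescaling the time component of the curve and appealing to the Cauchy property of $\Sigma$ for $g$. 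Without passing first to a finite strip, the lower bound on $\alpha$ is not uniform and your inextendibility argument cannot even get started.
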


\begin{proof} 
 We will henceforth omit to write the isometry $\psi$ and consider, without loss of generality, $\M= \RR \times \Sigma$, $t=\tau$, $g=-\beta^2dt\otimes dt\oplus h_{t}$ and $g_{\alpha}=-
	dt \otimes dt\oplus \alpha^{2}(t)\beta^{-2}h_t$.\\
On account of~Theorem~\ref{thm:Geroch}, it is enough  to prove that $\Sigma$,  viewed as the  $t=0$ slice of the temporal function $t$, is a spacelike Cauchy hypersurface for 
$g_\alpha$. Evidently $\Sigma$ is a spacelike hypersurface  for $g_\alpha$ so that it suffices to prove that
it meets exactly once every inextendible future directed $g_\alpha$-timelike  curve $\gamma : I\ni s  \mapsto \gamma(s) \in \M$.\\
Since $\frac{dt}{ds}= g_\alpha(\partial_t, \dot{\gamma}) <0$ by hypothesis, that  $\gamma$ can be re-parametrized by $t$ itself as $\gamma' : J \ni t \mapsto \gamma'(t) \in M$ for some open interval $J\subset \RR$. There must exist a finite  $a>0$ such that 
$(-a,a) \cap  J \neq \emptyset$. Since $\gamma'|_{(-a,a) \cap  J }$ is inextendible in the spacetime $(-a,a) \times \Sigma$ (otherwise it would not be inextendible  in the whole spacetime), to conclude it is sufficient to prove that $(-a,a) \times \Sigma$  equipped with the metric $g_\alpha$ and the time-orientation 
induced by $dt$ admits $\Sigma$ as a Cauchy surface. Indeed, in that case,  $\gamma'$ must meet $\Sigma$ exactly once in $(-a,a) \times \Sigma$  and thus $\Sigma$ would be a Cauchy hypersurface for $(\RR \times \Sigma, g_\alpha)$.
Moreover, notice that it cannot meet $\Sigma = t^{-1}(0)$ again outside   $(-a,a) \times \Sigma$ because $\gamma'$ is parametrized by $t$.
Global hyperbolicity of $((-a,a) \times \Sigma, g_\alpha)$ can be proved as follows.
If $a>0$, there exists a positive constant $\alpha_0$ such 
that $\alpha(t)\geq\alpha_0>0$ for all $t\in[-a,a]$. We therefore have $g_\alpha\preceq 
g_{\alpha_0}$ on $(a,b)\times\Sigma$.
In particular,  with the time-orientation declared in the hypothesis, every future-directed causal tangent vector for $g_\alpha$ is  a future-directed causal vector for 
$g_{\alpha_0}$.
Therefore, according to (2) in Lemma \ref{prop:metrics0},  it suffices to show that $g_{\alpha_0}$ is globally hyperbolic on 
$(-a,a)\times\Sigma$ and that $\Sigma$ is a Cauchy hypersurface for $g_{\alpha_0}$. To this end, consider an inextendible  future-directed timelike curve $\gamma=(\gamma^0,\hat{\gamma})$ in 
$((-a,a)\times\Sigma,g_{\alpha_0})$. The curve 
$\tilde{\gamma}:=(\alpha_0^{-1}\gamma_0,\hat{\gamma})$ is future directed  timelike w.r.t. $g$ and still inextendible , therefore it meets $\Sigma = t^{-1}(0)$ exactly once, but $\tilde{\gamma}$ and $\gamma$ intersect in $t=0$. Thus $\gamma$ intersects $\Sigma$ once.
This shows $g_{\alpha_0}$ and therefore $g_\alpha$ to be globally hyperbolic on 
$(-a,a)\times\Sigma$. 
\end{proof}

We can now state and prove a first result concerning Cauchy surfaces and the paracausal relation\footnote{The following proof is actually extracted by a result due to M. S\'anchez  who, with Theorem 3.4 of  \cite{Sanchez22},  improved a similar statement in a previous version of this work where we also assumed that the Cauchy surfaces were compact. We are grateful to M. S\'anchez for providing this improved version of our result.}. 

\begin{proposition}\label{prop:paracausal same Cauchy temporal} 
Let  $(\M,g)$ and $(\M,g')$ be  globally hyperbolic spacetimes on $\M$ which share a  Cauchy temporal function $t: \M \to \RR$ according to  Definition \ref{defTT}. Then $g\simeq g'$. 
\end{proposition}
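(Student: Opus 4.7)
The plan is to construct a short chain of globally hyperbolic metrics linking $g$ and $g'$ and to apply Proposition~\ref{teoPC} at each link. By Theorem~\ref{thm: Sanchez} we may identify $(\M,g)$ isometrically with $(\RR\times\Sigma,-\beta^2\, d\tau\otimes d\tau\oplus h_\tau)$ via some diffeomorphism $\psi$, and $(\M,g')$ with $(\RR\times\Sigma',-(\beta')^2\, d\tau\otimes d\tau\oplus h'_\tau)$ via some $\psi'$, with both realizing $t=\tau\circ\psi=\tau\circ\psi'$.

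First I apply Lemma~\ref{lem:alpha glob hyp} separately to each metric: for any $\alpha,\alpha'\in C^\infty(\RR,(0,+\infty))$ to be fixed later, the metrics
$$g_\alpha:=\psi^*(-d\tau\otimes d\tau\oplus\alpha^2\beta^{-2}h_\tau),\qquad g'_{\alpha'}:=(\psi')^*(-d\tau\otimes d\tau\oplus(\alpha')^2(\beta')^{-2}h'_\tau)$$
are globally hyperbolic. Let $Z$ be the vector field on $\M$ corresponding via $\psi$ to $\partial_\tau$. One computes $g(Z,Z)=-\beta^2<0$ and $g_\alpha(Z,Z)=-1<0$, with $Z$ future-directed for both, so $Z|_p\in V^{g+}_p\cap V^{g_\alpha+}_p$ at every $p$; Proposition~\ref{teoPC} yields $g\simeq g_\alpha$. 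The analogous reasoning with $Z'$ corresponding via $\psi'$ to $\partial_\tau$ gives $g'\simeq g'_{\alpha'}$.

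The heart of the argument is to pick $\alpha,\alpha'$ so that $g_\alpha\simeq g'_{\alpha'}$ also follows from Proposition~\ref{teoPC}. Set $D:=Z-Z'$; since $dt(Z)=dt(Z')=1$, the field $D$ satisfies $dt(D)=0$ and is therefore tangent to every Cauchy slice $\Sigma_{t_0}:=t^{-1}(t_0)$. Because the $\psi'$-splitting makes $Z'$ both $g'_{\alpha'}$-orthogonal to these slices and of $g'_{\alpha'}$-norm $-1$, one has
$$g'_{\alpha'}(Z,Z)=g'_{\alpha'}(Z'+D,Z'+D)=-1+\alpha'(t(p))^2\beta'(p)^{-2}h'_t(D_p,D_p).$$
Whenever this expression is strictly negative at every $p$, $Z$ is $g'_{\alpha'}$-timelike, and since $dt(Z)=1>0$ it is automatically future-directed; hence $Z|_p\in V^{g_\alpha+}_p\cap V^{g'_{\alpha'}+}_p$ and Proposition~\ref{teoPC} delivers $g_\alpha\simeq g'_{\alpha'}$. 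Transitivity of $\simeq$ closes the chain $g\simeq g_\alpha\simeq g'_{\alpha'}\simeq g'$.

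The main obstacle is to secure the pointwise bound $\alpha'(t(p))^2\beta'(p)^{-2}h'_t(D_p,D_p)<1$ throughout $\M$ using a function $\alpha'$ depending on $t$ alone, as Lemma~\ref{lem:alpha glob hyp} demands. When each Cauchy slice $\Sigma_{t_0}$ is compact, one simply takes $\alpha'(t)$ strictly below the smooth positive function $t\mapsto\inf_{p\in\Sigma_t}\beta'(p)/\sqrt{h'_t(D_p,D_p)}$; this is the compact-slice setting of the earlier version of the paper. In the general, possibly non-compact situation the infimum may vanish, and the refinement afforded by Theorem~3.4 of \cite{Sanchez22} (referenced in the footnote attached to the statement) is the technical input required to produce a globally admissible deformation with the needed cone-intersection property.
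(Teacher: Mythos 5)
Your chain $g\simeq g_\alpha$, $g'\simeq g'_{\alpha'}$, $g_\alpha\simeq g'_{\alpha'}$ is a genuinely different strategy from the paper's (which deforms $g$ and $g'$ toward a common \emph{ultrastatic bridge} via convex combinations, rather than deforming $g'$ to accept the $g$-normal as a timelike vector). Your local computation showing $g'_{\alpha'}(Z,Z)=-1+\alpha'(t)^2(\beta')^{-2}h'_t(D,D)$ is correct, and the reasoning via Proposition~\ref{teoPC} for each link is fine. However, you have not actually proved the statement: you explicitly acknowledge that the required pointwise bound $\alpha'(t)^2(\beta')^{-2}h'_t(D_p,D_p)<1$ may be impossible to achieve with a function of $t$ alone when the Cauchy slices are non-compact, and you resolve this only by citing Theorem~3.4 of \cite{Sanchez22} without stating what it says or how it would supply an admissible $\alpha'$. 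That leaves the general (non-compact) case, which is the whole content of the proposition as stated, unproved.

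The paper sidesteps the obstruction you ran into rather than overcoming it directly. After conformally normalizing to $\hat g=-dt^2+\beta_0^{-2}h_t$ and $\hat g'=-dt^2+\beta_1^{-2}h'_t$, it picks \emph{any} complete Riemannian $h$ on $\Sigma$ and forms $g_\lambda=-dt^2+\lambda h+(1-\lambda)\beta_0^{-2}h_t$ for $\lambda\in(0,1)$. Because the spatial part of $g_\lambda$ dominates both $\lambda h$ and $(1-\lambda)\beta_0^{-2}h_t$, one gets $g_\lambda\preceq -dt^2+\lambda h$ (which is globally hyperbolic, hence so is $g_\lambda$) and $g_\lambda\preceq -dt^2+(1-\lambda)\beta_0^{-2}h_t\succeq\hat g$; transitivity then links $\hat g$ to the ultrastatic $-dt^2+\lambda h$, and symmetrically for $\hat g'$. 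No pointwise infimum over non-compact slices is ever needed. If you want to keep your $Z$-versus-$Z'$ approach as a self-contained argument, you would need to actually produce an $\alpha'(t)$ that works globally — for instance by proving an exhaustion/gluing argument that bounds $h'_t(D,D)/\beta'^2$ after a further conformal rescaling, which is nontrivial — or you should adopt the paper's ultrastatic-bridge trick, which is both shorter and compactness-free.
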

\begin{proof} As before,  we will  henceforth omit to write the isometries identifying the various spacetimes. However we may have two different isometries from $\M$ to $\RR \times \Sigma$ for $g$ and $g'$.
 Proposition \ref{remchichi0chi1} yields  $g \preceq \hat g \preceq g$,  $g' \preceq \hat g' \preceq g'$ if
$$ \hat g:=\beta_0^{-2} g = -dt\otimes dt + \beta_0^{-2} h_{t}  \qquad \text{and} \qquad  \hat g':=\beta_1^{-2} g' = -dt\otimes dt + \beta^{-2}_1 h'_{t}\,,$$
where $\beta^2_0,\beta^2_1$ are the lapse function we choose in accordance with Theorem~\ref{thm: Sanchez}.  
The metrics $\hat{g}$ and $\hat{g}'$ are globally hyperbolic for Lemma \ref{lem:alpha glob hyp} (with $\alpha=1$). The proof ends if proving that $\hat g$ and $\hat g'$ are paracausally related.  Referring to the splitting of $\M$ as $\RR \times \Sigma$ induced by  the Cauchy temporal function $t$,  define the globally hyperbolic metric $-dt\otimes dt  + h$, where $h$ is a complete Riemannian metric on $\Sigma$ (see, e.g., \cite{Sanchez97}). For every $\lambda \in (0,1)$, direct inspection proves that, 
$$g_\lambda :=\lambda (-dt\otimes dt+ h) + (1-\lambda)\hat g  = -dt\otimes dt + \lambda h + (1-\lambda)  \beta_0^{-2}  h_t \preceq -dt\otimes dt + \lambda h$$
and
$$g_\lambda = \lambda (-dt\otimes dt + h) + (1-\lambda) \hat g  = -dt\otimes dt + \lambda h + (1-\lambda)  \beta_0^{-2}  h_t \preceq -dt\otimes dt + (1-\lambda)  \beta_0^{-2}  h_t\:.$$
Since $\lambda h$ is complete,  from the former line we conclude that the  metric $g_\lambda$ is globally hyperbolic due to (2) Lemma \ref{prop:metrics0} and that it is paracausally related to $dt\otimes dt + \lambda h$.
From the latter, since $-dt\otimes dt + (1-\lambda)  \beta_0^{-2}  h_t$ is globally hyperbolic in view of Lemma \ref{lem:alpha glob hyp}, we have that this metric and $g_\lambda$
are paracausally related. Since $(1-\lambda)\in (0,1)$, the cones of  $-dt\otimes dt + (1-\lambda)  \beta_0^{-2}  h_t$ include  the cones of 
$-dt\otimes dt +  \beta_0^{-2}  h_t = \hat g$ so that these metrics are  paracausally related as well.
Transitivity implies that  $-dt\otimes dt + \lambda h$ and $\hat g$ are paracausally related.
The same argument proves that $-dt\otimes dt + \lambda h$ and $\hat g'$ are paracausally related so that 
$\hat g \simeq \hat g'$ and 
the thesis holds.
\end{proof}

Now we prove another non trivial result about paracausally related metrics for Cauchy compact spacetimes.

\begin{proposition}\label{thm:paracausal}
Let  $(\M,g)$ and $(\M,g')$ be  spacetimes such  that $g,g'\in \mathcal{GH}_\M$.
Suppose that $g$ admits a Cauchy temporal function $t: \M \to \RR$ whose spacelike  Cauchy hypersurfaces are compact and are also $g'$-spacelike, then $g\simeq g'$ up to a change of the temporal orientation of $g'$.
\end{proposition}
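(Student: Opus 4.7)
The plan is to reduce the statement to Proposition~\ref{prop:paracausal same Cauchy temporal} by showing that, after possibly reversing the time-orientation of $g'$, the Cauchy temporal function $t$ of $g$ is also a Cauchy temporal function for $g'$ in the sense of Definition~\ref{defTT}(a). The hypothesis that each level set $\Sigma_{t_0}:= t^{-1}(t_0)$ is $g'$-spacelike forces ${g'}^\sharp(dt,dt) < 0$ everywhere on $\M$, so $dt$ is $g'$-timelike; up to replacing the time-orientation of $g'$ by its opposite---the freedom explicitly allowed by the statement---I may assume $\sharp_{g'}dt$ is past-directed for $g'$, which by Proposition~\ref{propzero} makes $t$ strictly increasing along every $g'$-future-directed causal curve.

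The key geometric input is then to upgrade each $\Sigma_{t_0}$ to a Cauchy hypersurface of $(\M,g')$. Since $(\M,g')$ is globally hyperbolic (in particular chronological) and each $\Sigma_{t_0}$ is a compact, connected, embedded $g'$-spacelike hypersurface without boundary, the classical result of Budic--Isenberg--Lindblom--Yasskin guarantees that it is a Cauchy hypersurface for $(\M,g')$. This is the main obstacle: the ``compact spacelike implies Cauchy'' statement is not proved elsewhere in the paper and must be invoked as a standard fact of Lorentzian geometry.

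With each $\Sigma_{t_0}$ known to be $g'$-Cauchy, I would produce the isometry required by Definition~\ref{defTT}(a) using the smooth nowhere-zero $g'$-future-directed timelike vector field $W := \sharp_{g'}dt / {g'}^\sharp(dt,dt)$, which satisfies $\langle dt, W\rangle = 1$. Its maximal integral curves are inextendible future-directed timelike curves each meeting every $\Sigma_{t_0}$ exactly once, so the map sending the point reached at parameter $t_0$ on the integral curve of $W$ through $q\in\Sigma_0$ to $(t_0,q)\in\RR\times\Sigma_0$ is a diffeomorphism $\psi':\M\to\RR\times\Sigma_0$ with $t=\tau\circ\psi'$, where $\tau$ is the canonical projection of Definition~\ref{defTT}. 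Since the pushforward of the $\RR$-coordinate vector field is $W$, which is $g'$-orthogonal to each $\Sigma_{t_0}$ with $g'(W,W)=1/{g'}^\sharp(dt,dt)<0$, the metric $g'$ is forced to take the Bernal--S\'anchez form $-(\beta')^2\,d\tau\otimes d\tau\oplus h'_\tau$ required by Definition~\ref{defTT}(a). Hence $t$ is a common Cauchy temporal function for $g$ and for the re-oriented $g'$, and Proposition~\ref{prop:paracausal same Cauchy temporal} directly yields $g\simeq g'$. Completeness of the $t$-parametrization in this last step is ensured precisely by the Cauchy property established in the second step.
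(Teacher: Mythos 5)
Your proof is correct, and it takes a genuinely different route from the paper's. The paper works directly with the paracausal chain: it passes from $g$ to an ultrastatic $\tilde g = -dt^2+h$, introduces $g_\alpha=-dt^2+\alpha(t)h$, uses the compact time-strips to bound the quotient $\lVert W\rVert_h^2/\lvert Z\rvert^2$ and to tune $\alpha$ so the cones of $g_\alpha$ meet those of $g'$, and then applies Proposition~\ref{teoPC}. You instead reduce the statement to Proposition~\ref{prop:paracausal same Cauchy temporal} by promoting $t$ to a Cauchy temporal function for (a possibly re-oriented) $g'$: the $g'$-spacelike condition makes $dt$ everywhere $g'$-timelike, the compactness hypothesis feeds into the classical result that compact connected boundaryless spacelike hypersurfaces of a globally hyperbolic spacetime are Cauchy, and the flow of $W=\sharp_{g'}dt/g'^\sharp(dt,dt)$ then yields the required Bernal--S\'anchez splitting for $g'$. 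This is conceptually cleaner and shorter, and it makes explicit that the hypotheses of Proposition~\ref{thm:paracausal} are really a disguised form of the common-Cauchy-temporal-function condition; the price is reliance on an external theorem that the paper avoids invoking. Two small points are worth tightening in a final write-up. First, be careful with the attribution: the Budic--Isenberg--Lindblom--Yasskin theorem is stated for compact achronal edgeless subsets; the version that dispenses with achronality for compact spacelike hypersurfaces (in a Cauchy-compact, globally hyperbolic spacetime) is usually credited to Galloway (1985). To apply it here one should also note that $(\M,g')$ is Cauchy-compact: this follows because any $g'$-Cauchy surface $S'$ is homotopy equivalent to $\M$, hence to the compact $\Sigma_0$, and a connected $n$-manifold homotopy equivalent to a compact boundaryless $n$-manifold is itself compact (top $\mathbb{Z}/2$-homology). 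Second, the claim that the integral curves of $W$ are \emph{inextendible} causal curves deserves a sentence: since $W$ is nowhere vanishing, a maximal integral curve has no endpoint, hence is inextendible; then the Cauchy property of the $\Sigma_{t_0}$ (or, alternatively, compactness of $t^{-1}([a,b])$) forces the $t$-parametrization to cover all of $\RR$, as you use when building $\psi'$.
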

\begin{proof}
First of all, by defining the $g$-normal $n_g=\frac{\sharp_g dt}{\sqrt{-g^{\sharp}(dt,dt)}}$, any vector field $X$ can be written as $X=X_n n_g+\pi_g(X)$, where $X_n=g(n_g,X)$ and $\pi_g(X)=Id-g(n_g,X)n_g$ projects on the Cauchy surface.\\
The metric tensor $g=- \frac{dt^2}{g^{\sharp}(dt,dt)}+h(\pi(\cdot),\pi(\cdot))$ under the action of the diffeomorphism $\psi_g$ gets recast in the orthogonal form $g_{ort}=-\beta dt^2+h_t$. This metric is obviously, by \ref{remchichi0chi1}, paracausally related to the conformal metric $g_c=- dt^2 +\frac{1}{\beta_t}h_t$, which is, in turn, paracausally related to the globally hyperbolic metric $\tilde{g}=-dt^2+h$ with $h$ a complete Riemannian metric on the slice and if we choose coherently the time orientation. The last statement is a consequence of \ref{teoPC} which is proved exactly as $\ref{cor:ultr para}$.\\
Then we want look at the metric $g'$ after the action of the isometric diffeomorphism $\psi_g$ and define the $\tilde{g}'=\psi_g^{*}g'$.
 The proof ends if we are able to find a globally hyperbolic metric $g''$ such that $\tilde{g}\simeq g''\simeq \tilde{g}'$.\\
If we choose a function $\alpha\in C^{\infty}(\RR,(0,\infty))$, then, by lemma \ref{lem:alpha glob hyp} the metric tensor $g_{\alpha}=-dt^2+\alpha(t)h$ is globally hyperbolic and, by \ref{teoPC}, paracausally related to $\tilde{g}$. We want to tune the fuction $\alpha$ in order to have that the cones of $g_{\alpha}$ intersect the cones of $\tilde{g}'$.\\
First we define pointwise $n'$ the smooth vector field $g'$-normal to the Cauchy hypersurfaces of $g$ and decompose it with respect to the splitting of the tangent space induced by the metric $g_{\alpha}$ through its normal $n_{\alpha}$. We get $n'=Zn_{\alpha}+W$ where $Z=g_{\alpha}(n_{\alpha},n')$ and $W=\pi_{g_{\alpha}}(n')$.\\
Since the Cauchy hypersurfaces of $g$ and $g_{\alpha}$ are spacelike also for $g'$, we have that $Z\neq0$. The cones of the two metrics intersct if $\alpha$ is such that $n'$ is $g_{\alpha}$-timelike i.e. iff
$$||n'||_{g_{\alpha}}=-|Z|^2+\alpha(t)||W||^2_{h}<0 \iff \frac{1}{\alpha(t)}>\frac{||W||^2_{h}}{|Z|^2}.$$
The manifold $\RR\times\Sigma$ can be covered by the time-strips $\mathcal{TS}_n=\{[-n,n]\times\Sigma\}_{n\in\mathbb{N}}$ which are obviously compact since $\Sigma$ is compact by hypotesis.\\
 This means that for all $n\in\mathbb{N}$ the smooth function $f:\M\to\mathbb{R^+}$ defined by $f:=\frac{||W||^2_{h}}{|Z|^2}$ attains a maximum $M_n$ and a minimum $m_n$ when restricted to the strip $\mathcal{TS}_n$. So we construct the required function $\frac{1}{\alpha(t)}:\mathbb{R}\to\mathbb{R}^+$ such that
 $$\frac{1}{\alpha(t)}=M_n +1\quad t\in[-n-1,-n)\cup (n,n+1].$$
 This function is only piecewise constant.
 The maximum has been increased by one to avoid the possibility that this function gets null: it could happen if the normal $n'$ and $n_{\alpha}$ get aligned in the first time-strip and then depart.\\
 The last thing to do is to smoothen the function $\alpha(t)$, something which can of course be done by standard gluing arguments.\\
 Now that we know that the cones of $g_{\alpha}$ and $\tilde{g}'$ intersect, if the temporal orientation of $g'$ is such that $V^+_{g_{\alpha}}\cap V^+_{\tilde{g}'}\neq \emptyset$ we define $g'':=g_{\alpha}\simeq \tilde{g}'$ and the proof is concluded.\\
  If $V^+_{g_{\alpha}}\cap V^+_{\tilde{g}'}= \emptyset$ the metric $g''$, and therefore the metric $\tilde{g}$, is paracausally related to $\tilde{g}'$ with opposite time orientation.
 \end{proof}

\section{Normally hyperbolic operators and their properties}\label{sec:norm hyp}
One of the main goals  of this paper is to realize a geometric map to compare the space of solutions of {\em normally hyperbolic} operators defined on possibly different globally hyperbolic manifolds. 
Before starting to introduce  our theory, we remind some general definitions and we fix the notation that will be used from now on.
 Let $\E$ be a vector bundle (always on $\KK$  and of finite rank in this paper) over a spacetime $(\M,g)$, whose generic fiber (a  $\KK$ vector space isomorphic to a canonical fiber) is  denoted by $\E_p$ for $p\in\M$. 
$\Gamma(\E)$ is the $\KK$-space of smooth sections  $\E$. 
$\Gamma(\E)$ has a number of useful subspaces we list below.
\begin{itemize}
\item[(i)]  $\Gamma_c(\E) \subset \Gamma(\E)$
is the subspace of compactly supported smooth sections.
\item[(ii)]  $\Gamma_{pc}(\E)$ and $\Gamma_{fc}(\E)$ denote the subspaces of $\Gamma(\E)$ whose elements have  respectively past compact support and future compact support.
 
\item[(iii)] If $(\M,g)$ is globally hyperbolic,  $\Gamma_{sc}(\E) \subset \Gamma(\E)$ is the 
subspace  of {\bf spatially compact} sections: the smooth sections whose support intersects every spacelike  Cauchy hypersurface in a compact set.
\end{itemize}
These spaces are equipped with natural topologies as discussed in  \cite{Ba-lect}.
In case there are several metrics on a common spacetime $\M$ basis of $\E$, the used metric $g$ will be indicated as well, for instance $ \Gamma^g_{pc}(\E)$, if  the  nature of the space of sections depends on the chosen metric (this is not the case for $\Gamma_c(\E)$).

   As usual, $\E  \boxtimes \E'$ denotes the {\bf external  tensor product} of the two $\KK$-vector bundles over $\M$. This $\KK$-vector  bundle  has basis  $\M\times\M\ni (p,q)$
 and fibers given by the pointwise tensor product $\E_p\otimes \E'_q$ of the  fibers of the two bundles.
Referring to $\Gamma(\E  \boxtimes \E')$, if $\ff\in \E$ and $\ff'\in \E'$, then $\ff\otimes \ff' \in \Gamma(\E  \boxtimes \E')$ denotes the section defined by 
$(\ff\otimes \ff')(p,q):= \ff(p) \otimes \ff'(q)$ where the tensor product on the right-hand side is the one of the fibers and  $(p,q) \in \M\times \M$.

\subsection{Normally hyperbolic  operators}

 As for Section~\ref{sec:preliminaries}, if $g\in \mathcal{M}_\M$ we defined $g^\sharp$ as the induced metric on the cotangent bundle. If $(\M,g)$ is globally hyperbolic, by fixing a Cauchy temporal function $t:\M\to\RR$ such that $g=-\beta^ 2dt\otimes dt + h_t$, we have 
$$g^\sharp= -\beta^{-2} \partial_t \otimes\partial_t+ h^\sharp_t\,.$$

\begin{definition}\label{def:Norm} A linear second order differential operator $\N:\Gamma(\E)\to\Gamma(\E)$ is  {\bf normally hyperbolic} if its principal symbol $\sigma_\N$ satisfies
	\begin{equation*}
 \sigma_\N(\xi)=-g^\sharp(\xi,\xi)\,\Id_\E 
	\end{equation*}
	for all $\xi\in\T^*\M$, where $\Id_\E$ is the identity automorphism of $\E$.
\end{definition}
Referring to a foliation of $(\M,g)$ as in Definition \ref{defTT},
in local coordinates $(t,x)$ on $\M$ adapted to the foliation so that  $x=(x_1,\dots,x_n)$ are local coordinates on $\Sigma_t$, and using a  local trivialization of $\E$, any normally hyperbolic operator $\N$ in a point $p\in\M$ reads as
$$\N= \frac{1}{\beta^2} \partial_t^2 - \sum_{i,j=1}^n {h^\sharp_t}_{ij} \partial_{x_i} \partial_{x_j} + A_0(t,x) \partial_t +\sum_{j=1} A_j(t,x)\partial_{x_j} + B(t,x)$$
where $A_0,A_j$ and $B$  are linear maps  $\E_{(t,x)} \to \E_{(t,x)}$ depending smoothly on $(t,x)$.
\begin{examples}
In the class of normally hyperbolic operators we can find many operators of interest in quantum field theory:
\begin{itemize}
\item Fix $\E$ be the trivial real bundle, i.e. $\E=\M\times \RR$, so that the space of smooth sections of $\E$ can be identified with the ring of smooth functions on $\M$.  The Klein-Gordon operator  $\N=\Box + m^2$ is normally hyperbolic, where $\Box$ is the d'Alembert operator and $m$ is a mass-term.
\item Let now $\E=\Lambda^k \T^*\M$ be the bundle of $k$-forms and  $d$ (resp $\delta$) the exterior derivative (resp. the codifferential). The operator $\N:=d\delta + \delta d +m^2$ is normally hyperbolic and it is used to describe the dynamics of Proca fields, for further details we refer to \cite[Example 2.17]{CQF1}.
\item Let $\S\M$ be a spinor bundle over a globally hyperbolic spin manifold $\M_g$ and let $\nabla$ be a spin connection. By denoting with $\gamma:\T\M\to \text{End}(\S\M)$ the Clifford multiplication, the classical Dirac operator reads as $\Dir=\gamma\circ \nabla:\Gamma(\S\M)\to\Gamma(\S\M)$, see \cite{DiracAPS,DiracMIT,MollerMIT} for further details. By Lichnerowicz-Weitzenb\"ock formula we get the spinorial wave operator
$\N = \Dir^2 = \nabla^\dagger\nabla + \frac{1}{4}\text{Scal}_g\,,$
where Scal$_g$ is the scalar curvature.
\end{itemize}
\end{examples}

It is well-known that, once the Cauchy data are suitably assigned, the Cauchy problem for $\N$ turns out to be well-posed, see {\it e.g.}~\cite{Ba-lect,Ginoux-Murro-20}.
\begin{theorem}\label{thm:main nh}
Let $\E$ be a  vector bundle (of finite rank) over  a globally hyperbolic manifold $(\M,g)$, let $\N$ be a normally hyperbolic operator with a $\N$-compatible connection $\nabla$ (see (\ref{concomp}) below) and $\Sigma_0$ a (smooth) spacelike Cauchy hypersurface of $(\M,g)$.
	Then the Cauchy problem for  $\N$  is well-posed, \ie for any $\ff\in\Gamma_c(\E)$, $\fh_1,\fh_2\in\Gamma_c(\E|_{\Sigma_0})$  there exists a unique solution $\Psi\in\Gamma_{sc}(\E)$  to the initial value problem
	\begin{equation*} 
	\begin{cases}{}
	{\N}\Psi=\ff   \\
	\Psi|_{\Sigma_0} = \fh_1   \\
	(\nabla_{\fn} \Psi)|_{\Sigma_0}=\fh_2 \\
	\end{cases} 
	\end{equation*}
 being $\fn$ the future directed timelike unit normal field along $\Sigma_0$, and it
	 depends continuously on the data $(\ff,\fh_1,\fh_2)$ w.r.to the standard topologies of smooth sections  and satisfies \begin{equation}\label{finprop}\mbox{supp}(\Psi) \subset  J(\mbox{supp}(\ff))
\cup J(\mbox{supp}(\fh_1)) \cup J(\mbox{supp}(\fh_2))\:.\end{equation}
\end{theorem}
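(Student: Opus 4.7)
The plan is to combine the classical energy-method for Lorentzian hyperbolic PDEs with the Bernal--S\'anchez foliation provided by Theorem~\ref{thm: Sanchez}. Uniqueness, continuous dependence, and the support property~\eqref{finprop} will all be extracted from a single energy identity, whereas existence will be built by solving on thin slabs and patching along the Cauchy foliation.

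First I would fix a Cauchy temporal function $t:\M\to\RR$ adapted to $\Sigma_0=t^{-1}(0)$, choose a fiberwise positive-definite Hermitian structure on $\E$, and exploit the $\N$-compatible connection $\nabla$ together with the symbol identity $\sigma_\N(\xi)=-g^\sharp(\xi,\xi)\Id_\E$ to construct a stress-energy-like $2$-tensor $T(\Psi)\in\Gamma(\T^*\M\otimes\T^*\M)$, quadratic in $(\Psi,\nabla\Psi)$, whose divergence satisfies
$$\nabla^\mu T_{\mu\nu}(\Psi)=\mathrm{Re}\,\bracket{\nabla_\nu\Psi}{\N\Psi}+R_\nu(\Psi,\nabla\Psi),$$
with $R_\nu$ of strictly lower order. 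Integrating $\mathrm{div}\,T(\Psi)\,\partial_t$ over a lens-shaped region bounded below by a compact piece $K\subset\Sigma_0$, above by its evolution $K_s\subset\Sigma_s$, and laterally by a piece of $\partial J_+(K)$, and closing the inequality with Gronwall gives the fundamental a priori estimate
$$\E_s(\Psi;K_s)\le C\Bigl(\E_0(\Psi;K)+\int_0^s\mathcal{F}_\tau(\ff)\,d\tau\Bigr).$$

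This single estimate yields: (i) uniqueness, specializing to $\ff=0$, $\fh_1=\fh_2=0$; (ii) the support property~\eqref{finprop}, since any point outside $J(\supp\ff)\cup J(\supp\fh_1)\cup J(\supp\fh_2)$ can be enclosed in a lens meeting $\Sigma_0$ only where the data vanish; and (iii) continuous dependence in the smooth topology of $\Gamma_{sc}(\E)$, by applying the same estimate to covariant derivatives $\nabla^{(k)}\Psi$, which satisfy analogous $\N$-type equations with inhomogeneities controlled by commutators of $\nabla$ with $\N$.

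For existence I would solve locally first. On a small slab $(-\epsilon,\epsilon)\times V$ with $V\Subset\Sigma_0$ in a bundle trivialization, the normally hyperbolic condition reduces $\N$ to a strictly hyperbolic second-order system to which Leray's theorem (or the reduction to a first-order symmetric hyperbolic system in $(\Psi,\nabla_{\fn}\Psi,\nabla_{\mathrm{tg}}\Psi)$) applies, producing a unique smooth local solution. The finite propagation encoded in step~(ii) lets me glue overlapping local solutions unambiguously into a section on a whole neighborhood of $\Sigma_0$, and iterating along the foliation $\{\Sigma_s\}_{s\in\RR}$ extends $\Psi$ to all of $\M$; spatial compactness of $\supp\Psi$ follows from~\eqref{finprop} combined with compactness of $J(K)\cap\Sigma_s$ for compact $K$, which is part of global hyperbolicity.

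The main obstacle is the bundle-invariant derivation of the energy identity: the $\N$-compatible connection must be used carefully so that the Hermitian structure interacts cleanly with $\N$ and the residual term $R_\nu$ is of genuinely strictly lower order — otherwise Gronwall does not close. Once the identity is in place, everything else is a standard packaging of local hyperbolic theory with domain-of-dependence gluing along the Bernal--S\'anchez foliation.
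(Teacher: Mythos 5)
The paper does not give its own proof of Theorem~\ref{thm:main nh}: it is stated as a well-known result and the reader is referred to \cite{Ba-lect,Ginoux-Murro-20}. Your sketch reproduces, in substance, the standard energy-method argument underlying those references: a single a priori estimate over lens-shaped domains delivers uniqueness, finite propagation speed, and continuous dependence, while existence follows from local solvability plus gluing along the Bernal--S\'anchez foliation. Your remark about the interaction between the $\N$-compatible connection and the auxiliary fiber metric is well placed --- $\nabla$ need not be metric for $\langle\cdot|\cdot\rangle$, so $\nabla\langle\cdot|\cdot\rangle$ feeds into the residual $R_\nu$, but only at zeroth order in $(\Psi,\nabla\Psi)$, which is exactly why Gronwall still closes. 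Two points deserve more care than the sketch gives them. First, the local-existence step in \cite{Ba-lect,wave} is carried out via Riesz distributions and the Hadamard-parametrix construction of local fundamental solutions, whereas the first-order symmetric-hyperbolic (Friedrichs) reduction you mention is closer to the route of \cite{Ginoux-Murro-20}; both are legitimate, but you should commit to one and spell it out, since they are not interchangeable in their technical details. Second, ``iterating along the foliation'' compresses a genuine open-and-closed argument on the set of $s\in\RR$ for which the solution exists on $t^{-1}((-s,s))$, needed to rule out finite-time breakdown; the finite-speed estimate and compactness of $J(K)\cap\Sigma_s$ make this work, but it is a step, not an observation. Neither is a conceptual gap: the architecture you describe is correct and is the one implemented in the cited references.
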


As a consequence of the well-posedness of the Cauchy problem of normally hyperbolic operators
with  ``finite propagation of the solutions'' stated in (\ref{finprop}),
one may establish  the existence of Green operators. In order to recall this result, we need first a preparatory definition.

\begin{definition}\label{def:GreenHyp} 
A  linear differential operator $\P:\Gamma(\E)\to\Gamma(\E)$ is called  {\bf Green hyperbolic} 
if  
\begin{itemize}
\item[(1)] there exist linear  maps, dubbed {\bf advanced Green operator}
	$\G^+\colon \Gamma_{pc}(\E)  \to  \Gamma(\E)$  and {\bf retarded Green operator} $\G^-\colon  \Gamma_{fc}(\E)  \to  \Gamma(\E)$,  satisfying
	\begin{itemize}
		\item[(i.a)] $\G^+ \circ \P\, \ff  = \P \circ \G^+ \ff=\ff$ for all $ \ff \in  \Gamma_{pc}(\E)$ ,
		\item[(ii.a)]  $\supp(\G^+ \ff ) \subset J^+ (\supp \ff )$ for all $\ff \in  {\Gamma}_{pc} (\E)$;
		\item[(i.b)]  $\G^- \circ \P\, \ff  = \P\circ \G^- \ff=\ff$ for all $ \ff \in  \Gamma_{fc}(\E)$,
		\item[(ii.b)]  $\supp(\G^- \ff ) \subset J^- (\supp \ff )$ for all $\ff \in  {\Gamma}_{fc} (\E)$;
	\end{itemize}
\item[(2)] the {\bf formally dual operator} $\P^*$
 admits advanced and retarded Green operators as well.

\end{itemize}
\end{definition}
For sake of completeness, let us recall that the formally dual operator $\P^*:\Gamma(\E^*)\to\Gamma(\E^*)$  is  the unique linear differential operator
acting on the smooth sections of the dual bundle $\E^*$ satisfying 
$$\int_\M \langle\ff',\P \ff\rangle \:  \mbox{vol}_g = \int_\M \langle{\P^*} \ff', \ff\rangle\:  \mbox{vol}_g \quad $$
for every $\ff\in \Gamma_c(\E)$ and $\ff'\in \Gamma_c(\E^*)$
(which is equivalent to saying $\ff\in \Gamma(\E)$ and $\ff'\in \Gamma(\E^*)$
 such that $\mbox{supp}(\ff) \cap \mbox{supp}(\ff)'$ is compact),
$ \mbox{vol}_g$ being the volume form induced by $g$ on $\M$.
\begin{remarks}
\begin{itemize}
\noindent \item[(1)] The Green operators we define below are the extensions  to $\Gamma_{pc/fc}(\E)$ of the analogs defined in \cite{Ba} and indicated by $\overline{G}_\pm$ therein.
\item[(2)] It is possible to prove that the Green operators are unique for a Green hyperbolic  operator  (cf. \cite[Corollary 3.12]{Ba}).
Furthermore as a consequence of  \cite[Lemma 3.21]{Ba}, it arises that if $\ff' \in \Gamma_c(\E^*)$ and  $\ff \in \Gamma_{pc}(\E)$ or 
$\ff \in \Gamma_{fc}(\E)$ respectively,
\begin{equation}\label{eqGstar}
\int_\M \langle\G_{P^*}^{-}\ff', \ff\rangle \: \mbox{vol}_g = \int_\M \langle\ff', \G_{P}^+\ff\rangle \: \mbox{vol}_g \:,\quad
\int_\M \langle\G_{P^*}^{+}\ff',\ff\rangle \: \mbox{vol}_g = \int_\M \langle\ff', \G_{P}^- \ff\rangle \: \mbox{vol}_g \:,
 \end{equation}
where 
$\G_{P}^{\pm}$ indicate the Green operators of $\P$ and $\G_{P^*}^{\pm}$ indicate the Green operators of $\P^*$.
\end{itemize}

\end{remarks}

\begin{proposition}\label{propprodgreen}  If $\P$ is a Green hyperbolic operator on a vector bundle $\E$ over the globally hyperbolic spacetime $(\M,g)$ and $\rho : \M \to (0,+\infty)$ is smooth, then $\rho \P$ is Green hyperbolic as well and $\G^\pm_{\rho \P} = \G_\P^\pm \rho^{-1}$.
\end{proposition}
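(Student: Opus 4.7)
The natural candidate for the Green operators of $\rho\P$ is obtained by absorbing the conformal factor into the source: set $\G^{\pm}_{\rho\P} := \G^{\pm}_\P \circ M_{\rho^{-1}}$, where $M_{\rho^{-1}} : \Gamma(\E) \to \Gamma(\E)$ denotes pointwise multiplication by the smooth, strictly positive function $1/\rho$. The first step of the plan is simply to verify the four defining properties (i.a)--(ii.b) of Definition~\ref{def:GreenHyp} for these candidates. Since $\rho$ vanishes nowhere, $M_\rho$ and $M_{\rho^{-1}}$ are mutually inverse bijections of $\Gamma(\E)$ that preserve supports; hence they also preserve the subspaces $\Gamma_{pc}(\E)$ and $\Gamma_{fc}(\E)$. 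The identities $\G^{\pm}_\P M_{\rho^{-1}} \circ (\rho\P) = \G^{\pm}_\P \circ \P = \Id$ and $(\rho\P) \circ \G^{\pm}_\P M_{\rho^{-1}} = M_\rho \circ \Id \circ M_{\rho^{-1}} = \Id$ on the appropriate support-type subspaces are one-line checks, and the support condition reduces to
$$\supp\bigl(\G^{\pm}_\P(\rho^{-1}\ff)\bigr) \subset J^{\pm}\bigl(\supp(\rho^{-1}\ff)\bigr) = J^{\pm}(\supp \ff),$$
where the last equality uses $\rho>0$ everywhere.

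The second step is to check condition (2) of Definition~\ref{def:GreenHyp}, namely that the formal dual $(\rho\P)^*$ also admits advanced and retarded Green operators. A direct integration shows $(\rho\P)^* = \P^* \circ M_\rho$, because
$$\int_\M \langle \ff', \rho\P\ff\rangle \,\vol_g = \int_\M \langle \rho\ff', \P\ff\rangle \,\vol_g = \int_\M \langle \P^*(\rho\ff'),\ff\rangle \,\vol_g$$
for every $\ff\in\Gamma_c(\E)$ and $\ff'\in\Gamma_c(\E^*)$. By exactly the same computation as above, performed on the dual bundle, the operators $M_{\rho^{-1}} \circ \G^{\pm}_{\P^*}$ are advanced/retarded Green operators for $\P^* \circ M_\rho$; this uses only the fact that $\P^*$ is itself Green hyperbolic (which is part of the assumption that $\P$ is Green hyperbolic) and again the support-preservation of multiplication by $\rho^{\pm 1}$.

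There is essentially no obstacle: the whole proposition is a bookkeeping statement about how multiplication by a smooth positive function interacts with Green operators, and the only substantive input is the support-preservation built into the hypothesis $\rho>0$ everywhere. One small point worth writing down carefully in the final write-up is that one does \emph{not} obtain $\G^{\pm}_{\rho\P}$ by multiplying $\G^{\pm}_\P$ on the left by $\rho^{-1}$: the factor must act on the source before $\G^{\pm}_\P$ is applied, as reflected in the stated formula $\G^{\pm}_{\rho\P} = \G^{\pm}_\P \rho^{-1}$.
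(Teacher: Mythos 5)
Your proposal is correct and follows exactly the same route as the paper's (very terse) proof: exhibit $\G^{\pm}_\P \rho^{-1}$ as Green operators for $\rho\P$ and $\rho^{-1}\G^{\pm}_{\P^*}$ as Green operators for $(\rho\P)^* = \P^*\rho$, then verify the defining identities and support inclusions using that pointwise multiplication by $\rho^{\pm1}$ preserves supports. Your write-up simply supplies the bookkeeping details the paper leaves implicit, including the useful remark distinguishing pre-composition by $\rho^{-1}$ from left multiplication.
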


\begin{proof} The thesis immediately follows form the fact that $\G_\P^\pm \rho^{-1}$ and $\rho^{-1} \G_{\P^*}^\pm$ 
satisfy the properties of the Green operators for $\rho\P$ and $(\rho \P)^* = \P^* \rho$ respectively.
\end{proof}

\begin{proposition}[\protect{\cite[Corollary 3.4.3]{Ba-lect}}]\label{prop:Green}
A normally hyperbolic operator $\N$ on a  vector bundle $\E$ (of finite rank) on a globally hyperbolic manifold $(\M,g)$ is Green hyperbolic.
\end{proposition}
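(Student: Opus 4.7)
The plan is to derive this as a direct consequence of the well-posedness of the Cauchy problem stated in Theorem \ref{thm:main nh}, together with an exhaustion argument exploiting finite propagation speed. I would proceed in three main steps.

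\textbf{Step 1: Green operators on compactly supported sections.} Fix $\ff \in \Gamma_c(\E)$. Since $\supp(\ff)$ is compact, global hyperbolicity of $(\M,g)$ (via Theorem \ref{thm: Sanchez}) allows us to choose a smooth spacelike Cauchy hypersurface $\Sigma_0$ lying entirely in the past of $\supp(\ff)$, i.e., with $\supp(\ff) \subset I^+(\Sigma_0)$. Choosing an $\N$-compatible connection, Theorem \ref{thm:main nh} furnishes a unique $\Psi \in \Gamma_{sc}(\E)$ solving $\N\Psi = \ff$ with zero Cauchy data on $\Sigma_0$. Setting $\G^+_0\ff := \Psi$ and using the finite-propagation-speed bound \eqref{finprop} one gets $\supp(\G^+_0\ff) \subset J^+(\supp(\ff))$. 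The analogous construction with $\Sigma_0$ in the future of $\supp(\ff)$ produces $\G^-_0 : \Gamma_c(\E) \to \Gamma(\E)$ with support in $J^-(\supp(\ff))$. Standard uniqueness arguments based on the Cauchy problem show that $\G^\pm_0$ do not depend on the chosen Cauchy surface.

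\textbf{Step 2: Extension to past/future compactly supported sections.} Given $\ff \in \Gamma_{pc}(\E)$, I would use a locally finite partition of unity $\{\chi_i\}_{i\in I}$ on $\M$ such that each $\chi_i\ff$ has compact support, and then tentatively set $\G^+\ff := \sum_i \G^+_0(\chi_i\ff)$. Around any given $p\in\M$, the bound $\supp(\G^+_0(\chi_i\ff)) \subset J^+(\supp(\chi_i\ff))$ together with past compactness of $\supp(\ff)$ (so that $J^-(p)\cap\supp(\ff)$ is compact) guarantees that only finitely many summands contribute to $(\G^+\ff)(p)$: the sum is therefore locally finite, defines a smooth section, and satisfies both $\N\G^+\ff = \ff$ and $\supp(\G^+\ff) \subset J^+(\supp(\ff))$. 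One also verifies $\G^+ \N\ff = \ff$ for $\ff \in \Gamma_{pc}(\E)$ by appealing to uniqueness of the Cauchy problem applied to the difference $\G^+\N\ff - \ff$, which satisfies a homogeneous equation with zero data on a Cauchy surface sufficiently far in the past. The retarded operator $\G^-$ is constructed symmetrically.

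\textbf{Step 3: The formal dual.} The principal symbol of the formal dual $\N^*$ coincides with that of $\N$ modulo a sign convention on the fiberwise pairing; more precisely, since the principal symbol of $\N$ is $-g^\sharp(\xi,\xi)\Id_\E$, the dual operator acting on $\Gamma(\E^*)$ has principal symbol $-g^\sharp(\xi,\xi)\Id_{\E^*}$, i.e., $\N^*$ is itself normally hyperbolic on the dual bundle $\E^*$ over the same globally hyperbolic $(\M,g)$. Theorem \ref{thm:main nh} therefore applies to $\N^*$ as well, and Steps 1--2 produce advanced and retarded Green operators for $\N^*$. This verifies condition (2) of Definition \ref{def:GreenHyp} and concludes the proof.

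The step I expect to require most care is Step 2, namely showing that the locally finite sum is actually well defined and smooth, and that the resulting operator is independent of the chosen partition of unity; the delicate point is to exploit past compactness of $\supp(\ff)$ in conjunction with the support bound from Step 1 in order to ensure local finiteness, and to use uniqueness of solutions of the Cauchy problem to verify both identities $\G^+\N\ff = \ff$ and $\N\G^+\ff = \ff$ on all of $\Gamma_{pc}(\E)$.
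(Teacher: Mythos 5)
The paper does not supply a proof of this statement --- it is cited directly to \cite[Corollary 3.4.3]{Ba-lect} --- so your reconstruction can only be measured against the standard argument of that reference. You do follow its route (construct $\G^\pm$ on $\Gamma_c(\E)$ from the Cauchy problem, extend to $\Gamma_{pc/fc}(\E)$ by local finiteness, check that $\N^*$ is normally hyperbolic on $\E^*$), and the overall architecture is correct.

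There is one genuine gap in Step 1. You claim that the bound \eqref{finprop} directly yields $\supp(\G^+_0\ff)\subset J^+(\supp\ff)$. With vanishing Cauchy data, \eqref{finprop} only gives $\supp\Psi\subset J(\supp\ff) = J^+(\supp\ff)\cup J^-(\supp\ff)$, which is strictly weaker. Upgrading to $J^+$ requires a separate domain-of-dependence/local-uniqueness argument: $\Psi\equiv 0$ on $I^-(\Sigma_0)$ by uniqueness with zero data and zero source, and for $q\notin J^+(\supp\ff)$ one restricts to a Cauchy development of a suitable compact piece of a Cauchy hypersurface where both the Cauchy data and the source vanish, so $\Psi(q)=0$. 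This is precisely the sharper support statement that B\"ar proves alongside well-posedness, and it is not literally contained in \eqref{finprop} as reproduced in Theorem \ref{thm:main nh}. In Step 2, the phrase ``a Cauchy surface sufficiently far in the past'' leans on the nontrivial fact that a closed past-compact set is contained in $J^+(\Sigma)$ for some Cauchy hypersurface $\Sigma$ (\cite[Lemma 1.2.61]{Ba-lect}, which the paper itself invokes elsewhere), and uniqueness must then be applied pointwise via compact Cauchy developments rather than globally, since $\G^+\N\ff-\ff$ need not be spatially compact. You flag Step 2 as the delicate one, and the outline you give there is the right one; once these details and the Step 1 repair are supplied, the argument matches the one in the cited reference.
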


Given a Green hyperbolic operator with Green operators $\G^\pm$, a relevant operator constructed out of $\G^\pm$ is the so-called {\bf causal propagator},
\begin{equation*}
\G := \G^+|_{\Gamma_c(\E)} -  \G^-|_{\Gamma_c(\E)}   : \Gamma_c(\E) \to \Gamma(\E)\:.
\end{equation*}
It satisfies remarkable properties we are going to discuss  (see {\it e.g.}~\cite[Theorem 3.6.21]{Ba-lect}).

\begin{theorem} Let  $\G$ be the causal propagator of a Green hyperbolic differential operator $\P: \Gamma(\E) \to \Gamma(\E)$  on the vector bundle $\E$ over a globally hyperbolic spacetime $(\M,g)$. The  following sequence is exact
$$\{0\} \to \Gamma_c(\E) \stackrel{\P}{\rightarrow} \Gamma_c(\E) \stackrel{\G}{\rightarrow}  \Gamma_{sc}(\E)  \stackrel{\P}{\rightarrow}
 \Gamma_{sc}(\E)   \to \{0\}\:.$$
\end{theorem}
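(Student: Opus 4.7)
The proof amounts to verifying exactness at each of the four positions, and two of these are essentially immediate from the defining properties of the Green operators while the other two rely on a geometric cutoff argument built from a Cauchy temporal function. The plan is to treat the four exactness statements in order, recycling the same cutoff in the last two.

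For injectivity at the first $\Gamma_c(\E)$, if $\P\ff = 0$ with $\ff \in \Gamma_c(\E)\subset \Gamma_{pc}(\E)$, then $\ff = \G^+\P\ff = 0$ by Definition \ref{def:GreenHyp}(i.a). For exactness at the second $\Gamma_c(\E)$, the inclusion $\mathrm{Im}(\P) \subset \ker(\G)$ is immediate from $\G^\pm \P = \mathrm{id}$ on $\Gamma_c(\E)$. Conversely, if $\G\ff = 0$ then $\psi := \G^+\ff = \G^-\ff$ satisfies $\supp\psi \subset J^+(\supp\ff) \cap J^-(\supp\ff)$, which is compact by global hyperbolicity (Definition \ref{def:globally hyperbolic}(ii) applied to compact sources). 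Thus $\psi \in \Gamma_c(\E)$ and $\P\psi = \P\G^+\ff = \ff$, so $\ff \in \mathrm{Im}(\P)$.

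For exactness at the third position $\Gamma_{sc}(\E)$, one first checks $\G(\Gamma_c(\E))\subset \Gamma_{sc}(\E)$ using $\supp \G\ff \subset J(\supp\ff)$ with $\supp\ff$ compact, and then $\P\circ \G = 0$ is immediate. For the converse, pick a Cauchy temporal function $t$ on $(\M,g)$ (Theorem \ref{thm: Sanchez}), two of its level sets $\Sigma_- = t^{-1}(t_-)$ and $\Sigma_+ = t^{-1}(t_+)$ with $t_- < t_+$, and a smooth function $\chi$ with $\chi = 0$ on $J^-(\Sigma_-)$ and $\chi = 1$ on $J^+(\Sigma_+)$. Given $\psi \in \Gamma_{sc}(\E)$ with $\P\psi = 0$, set
$$\ff := \P(\chi\psi) = [\P,\chi]\psi.$$
The commutator $[\P,\chi]$ is a first order differential operator supported in the slab $J^+(\Sigma_-)\cap J^-(\Sigma_+)$, and the intersection of this slab with $\supp\psi$ is compact since $\psi$ is spatially compact; hence $\ff \in \Gamma_c(\E)$. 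Moreover $\chi\psi \in \Gamma_{pc}(\E)$ and $(\chi-1)\psi \in \Gamma_{fc}(\E)$, so applying $\G^+$ and $\G^-$ respectively to $\ff = \P(\chi\psi) = \P((\chi - 1)\psi)$ gives $\G^+\ff = \chi\psi$ and $\G^-\ff = (\chi - 1)\psi$, whence $\G\ff = \psi$.

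Finally, for surjectivity of $\P:\Gamma_{sc}(\E)\to\Gamma_{sc}(\E)$, use the same cutoff to split any $\ff \in \Gamma_{sc}(\E)$ as $\ff = \chi\ff + (1-\chi)\ff$ with $\chi\ff \in \Gamma_{pc}(\E)$ and $(1-\chi)\ff \in \Gamma_{fc}(\E)$, then set
$$\psi := \G^+(\chi\ff) + \G^-((1-\chi)\ff),$$
so that $\P\psi = \chi\ff + (1-\chi)\ff = \ff$. The main technical point, and the step I expect to require the most care, is verifying $\psi \in \Gamma_{sc}(\E)$: one has $\supp\psi \subset J^+(\supp\ff\cap J^+(\Sigma_-)) \cup J^-(\supp\ff\cap J^-(\Sigma_+))$, and for any Cauchy surface $\Sigma$ of $(\M,g)$ one must see that this set meets $\Sigma$ in a compact set. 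This reduces to the standard fact that for $\ff \in \Gamma_{sc}(\E)$ the closed set $\supp\ff$ has compact intersection with any causal slab between two Cauchy hypersurfaces, combined with the global hyperbolicity property that $J^\pm(K)\cap \Sigma$ is compact whenever $K$ is compact.
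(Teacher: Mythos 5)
Your proof is correct and follows the same overall structure as the paper's: immediate inclusions from $\G^\pm\P = \Id$, a cutoff $\chi$ subordinate to a Cauchy temporal function for the third and fourth positions, and the same splitting $\ff = \chi\ff + (1-\chi)\ff$ for surjectivity. There is one place where your argument is actually slightly better matched to the stated generality: for injectivity of $\P$ on $\Gamma_c(\E)$, the paper appeals to well-posedness of the Cauchy problem (Theorem~\ref{thm:main nh}), which is established only for \emph{normally} hyperbolic operators, whereas the theorem as stated concerns general \emph{Green} hyperbolic operators; your observation that $\ff = \G^+\P\ff = 0$ uses only the Green operator identity and therefore covers the general case without detour. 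The remaining three exactness checks coincide with the paper's, down to the computation $\G\ff = \G^+\P(\chi\psi) - \G^-\P(\chi\psi) = \chi\psi + (1-\chi)\psi = \psi$ and the compactness of $J^+(\supp\ff)\cap J^-(\supp\ff)$ in the second step.
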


\begin{proof}
Injectivity of  $\Gamma_c(\E) \stackrel{\P}{\rightarrow} \Gamma_c(\E)$ easily arises from the well-posedness of the Cauchy problem stated in Theorem \ref{thm:main nh}.  Let us pass to the other parts of the sequence.
First of all notice that $\G^\pm(\Gamma_c(\E)) \subset \Gamma_{sc}(\E)$ since $\mbox{supp}(\G^\pm(\ff))  \subset J_\pm(\mbox{supp}(\ff))$ and the first  assertion then follows form known facts of globally hyperbolic spacetimes.
Let us prove that $ \Gamma_c(\E) \stackrel{\G}{\rightarrow}  \Gamma_{sc}(\E) $ is surjective when the image is restricted to the kernel of 
$ \Gamma_{sc}(\E)  \stackrel{\P}{\rightarrow}
 \Gamma_{sc}(\E)$.
Suppose that $\P \Psi =0$ for $\Psi \in \Gamma_{sc}(\E)$.
 If $t$ is a smooth Cauchy time function of $(\M,g)$ and $\chi : \M \to [0,1]$ is smooth, vanishes for $t< t_0$ and is constantly $1$ for $t>t_1$, then
$$\ff_\Psi := \P (\chi \Psi)  \in \Gamma_c(\E)$$
is such that $\Psi = \G \ff_\Psi$. Notice that $\mbox{supp}(\ff_\Psi)$ is included between the Cauchy hypersufaces $t^{-1}(t_0)$ and $t^{-1}(t_1)$. Indeed, $$\G \ff_\Psi = \G^+ \P(\chi \Psi) -  \G^- \P(\chi \Psi)  = \G^+ \P(\chi \Psi) +  \G^- \P((1-\chi) \Psi) = \chi \Psi + (1-\chi) \Psi = \Psi\:.$$
It is obvious that that $\ff_\Psi$ can be  changed by adding a section of  the form  $\P\fh$  with $\fh \in \Gamma_c(\E)$  preserving the property $\G \ff_\Psi = \Psi$. This exhaust the kernel  of
$ \Gamma_c(\E) \stackrel{\G}{\rightarrow}  \Gamma_{sc}(\E)$
as asserted in the thesis. Indeed, if $\G\ff=0$, then $\G^+\ff = \G^-\ff$. From the properties of the supports of $\G^\pm\ff$, we conclude that $\G^\pm \ff = \fh_\pm\in \Gamma_c(\E) \subset \Gamma_{pc}(\E) \cap \Gamma_{fc}(\E)$. Hence $\ff = \P\G^\pm \ff  = \P \fh_\pm$.
To conclude, we prove that $  \Gamma_{sc}(\E)  \stackrel{\P}{\rightarrow}
 \Gamma_{sc}(\E) $ is surjective.  If $\ff \in  \Gamma_{sc}(\E)$, with $\chi$ as above, $$\ff = \chi \ff + (1-\chi) \ff = \P \G^+ (\chi \ff)
+ \P\G^-((1-\chi) \ff) = \P [\G^+ (\chi \ff) + \G^-((1-\chi)\ff)]$$
and $\G^+ (\chi \ff) + \G^-((1-\chi)\ff) \in \Gamma_{sc}(\E)$.
\end{proof}

\subsection{Formally selfadjoint normally hyperbolic operators and their symplectic form}
Let $\E$ be a $\KK$-vector bundle on a globally hyperbolic  spacetime $(\M,g)$.
As shown in~\cite[Lemma 1.5.5]{wave}, for any normally hyperbolic operator $\N: \Gamma(\E) \to \Gamma(\E)$ there exists a unique covariant derivative $\nabla$ on $\E$ such that \begin{equation}\label{concomp}\N=-\mbox{tr}_g(\nabla\nabla)+ c\end{equation} for some some zero-order differential operator  $c: \Gamma(\E) \to \Gamma(\E)$. 
In the formula above the left $\nabla$ is actually the connection induced on $\T^*\M \otimes \E$ by the {\em Levi-Civita connection} associated to $g$  and the original connection $\nabla$ (the one appearing as the right $\nabla$) given on $\E$
Adopting the terminology of~\cite{Ba-lect}, we shall refer to $\nabla$ as the {\bf $\N$-compatible connection} on $\E$.

We stress that, if we  suppose that   $\E$ is equipped  with a smooth assignment of a Hermitian fiber metric
$$\fiber{\cdot}{\cdot}_p :\E_p\times\E_p\to \KK\,.$$
then  the above $\nabla$ is  $g$-metric but not necessarily metric with respect to $\fiber{\cdot}{\cdot}$.

The physical relevance of the fiber metric is that it permits to equip $\Ker_{sc}(\N)$ with a symplectic form with important physical properties in the formulation of QFT in curved spacetime.  This symplectic form can be derived using the Green identity for a normally hyperbolic operator $\N$ and its formal adjoint. For sake of completeness let us remind the definition of formal adjoint.

\begin{definition}\label{defadjointdiff}  The {\bf formal adjoint} of a differential operator $\P:\Gamma(\E) \to \Gamma(\E)$  is the unique differential operator $\P^\dagger : \Gamma(\E) \to \Gamma(\E)$ satisfying
$$\int_\M \fiber{\ff'}{\P \ff} \:  \mbox{vol}_g = \int_\M \fiber{\P^\dagger \ff'}{\ff}\:  \mbox{vol}_g \quad $$
for every $\ff, \ff'\in \Gamma_c(\E)$ (which is equivalent to saying $\ff, \ff'\in \Gamma(\E)$ such that $\mbox{supp}(\ff) \cap \mbox{supp}(\ff)'$ is compact). If $\P = \P^\dagger$ then $\N$ is said to be (formally) {\bf selfadjoint}.
\end{definition}

\begin{remark} If $\P : \Gamma(\E) \to \Gamma(\E)$ is normally hyperbolic on the bundle $\E$ over $(\M,g)$, equipped with a non-degenerate, Hermitian fiber metric $\fiber{\cdot}{\cdot}$,
$\P$ is  Green hyperbolic as said above. In this case  $P^\dagger$ has the same  principal symbol as $\P$ and thus it is Green hyperbolic as well.
Taking advantage of the natural (antilinear if $\KK=\mathbb{C}$) isomorphism $\Gamma(\E) \to \Gamma(\E^*)$ induced by 
$\fiber{\cdot}{\cdot}$ and (\ref{eqGstar}), it is not difficult to prove  that, if $\ff' \in \Gamma_c(\E)$ and  $\ff \in \Gamma_{pc}(\E)$ or 
$\ff \in \Gamma_{fc}(\E)$ respectively,
\begin{equation}\label{eqGstar2}
\int_\M \fiber{\G_{P^\dagger}^{-}\ff'}{ \ff} \: \mbox{vol}_g = \int_\M \fiber{\ff'}{ \G_P^+\ff} \: \mbox{vol}_g \:,\quad
\int_\M \fiber{\G_{P^\dagger}^{+}\ff'}{\ff} \: \mbox{vol}_g = \int_\M \fiber{\ff'}{\G_P^- \ff} \: \mbox{vol}_g \:.
 \end{equation}
where 
 $\G_{P}^{\pm}$ indicate the Green operators of $\P$ and 
$\G_{P^\dagger}^{\pm}$ indicate the Green operators of $\P^\dagger$. 
\end{remark}

Let us pass to introduce a Green-like identity where we explicitly exploit the $\N$-compatible connection $\nabla$.

\begin{lemma}[Green identity]\label{lem:green id}
Let $\E$ be an non-degenerate, Hermitian  $\KK$ vector bundle over a $(n+1)$-dimensional spacetime $(\M,g)$ and denote the fiber metric $\fiber{\cdot}{\cdot}$. Moreover, let $\N: \Gamma(\E) \to \Gamma(\E)$ be a normally hyperbolic  operator with $\N$-compatible connection $\nabla$.
Let $\M_0\subset \M$ be a 
submanifold with continuous piecewise smooth boundary.
Then for every $\Phi,\Psi \in \Gamma_c(\E))$
\begin{equation}\label{eq:Green Id}
\int_{\M_0} \left( \fiber{\Psi}{\N \Phi}-  \fiber{\N \Psi}{\Phi}\right) \vol_g  =
\int_{\bM_0}\Xi^\N_{\partial \M_0} (\Psi,\Phi) \,,
\end{equation}
where $\Xi_{\partial \M_0}^\N$ is the $n$-form in $\partial \M_0$
\begin{equation*}
\Xi^\N_{\partial \M_0} (\Psi,\Phi)  :=\imath^*_{\partial \M_0} \left[\sharp\Big(\fiber{\Psi}{\nabla\Phi}-\fiber{\nabla\Psi}{\Phi}\Big)  \intprod \vol_g\right]
\end{equation*}
 $\imath_{\partial \M_0} : \partial \M_0 \to \M$ being the inclusion embedding.
If the  normal vectors to $\partial\M_0$ are either spacelike or timelike (up to zero-measure sets), then 
\begin{equation}\Xi^\N_{\partial \M_0} (\Psi,\Phi)  = \Big(\fiber{\Psi}{\nabla_\n\Phi}-\fiber{\nabla_\n\Psi}{\Phi}\Big) \vol_{\partial \M_0}
\label{formXi}\end{equation}
where $\n$ is the outward unit normal vector to $\bM_0$ and  $\vol_{\partial \M_0} = \n\intprod \vol_g$ is the volume form of $\partial \M_0$ induced by $g$.
\end{lemma}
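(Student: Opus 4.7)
My plan is a direct application of Stokes' theorem to a suitably constructed current, in the spirit of the classical Green second identity. Introduce the vector field
\begin{equation*}
W \;:=\; \sharp\!\left(\fiber{\Psi}{\nabla\Phi} - \fiber{\nabla\Psi}{\Phi}\right) \in \Gamma(\T\M),
\end{equation*}
so that, by definition, $\Xi^\N_{\partial\M_0}(\Psi,\Phi) = \imath^*_{\partial\M_0}(W \intprod \vol_g)$. The lemma then reduces to the pointwise identity
\begin{equation*}
\mathrm{div}_g(W) \;=\; \fiber{\Psi}{\N\Phi} - \fiber{\N\Psi}{\Phi}
\end{equation*}
(up to a global sign fixed by conventions), together with Cartan's formula $\mathrm{div}_g(W)\vol_g = d(W\intprod \vol_g)$ and Stokes' theorem on $\M_0$. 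The compact support of $\Psi$ and $\Phi$ eliminates any convergence issue, and the corners of the piecewise smooth $\partial\M_0$ are irrelevant since they have measure zero.

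To establish the pointwise divergence identity I would expand $\nabla^{\mathrm{LC}}_\mu W^\mu$ in local coordinates using the Leibniz rule for the $\N$-compatible connection $\nabla$, coupled to the Levi-Civita connection on $\T^*\M$. Because $\nabla$ is metric with respect to $\fiber{\cdot}{\cdot}$, the expansion yields four terms; the cross-terms $g^{\mu\nu}\fiber{\nabla_\mu\Psi}{\nabla_\nu\Phi}$ and $g^{\mu\nu}\fiber{\nabla_\nu\Psi}{\nabla_\mu\Phi}$ cancel by symmetry of $g^{\mu\nu}$, leaving exactly $\fiber{\Psi}{\mathrm{tr}_g(\nabla\nabla)\Phi} - \fiber{\mathrm{tr}_g(\nabla\nabla)\Psi}{\Phi}$. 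Substituting $\mathrm{tr}_g(\nabla\nabla) = c - \N$ and invoking the Hermiticity of the zeroth-order part $c$---a consequence of the formal selfadjointness of $\N$, which is assumed in the context where this identity is applied---completes the computation.

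For the explicit formula \eqref{formXi} when the outward unit normal $\n$ is nowhere null on $\partial\M_0$, I would decompose $W = \varepsilon\, g(W,\n)\,\n + W_\parallel$ with $\varepsilon := g(\n,\n) \in \{\pm 1\}$ and $W_\parallel$ tangent to the boundary. The piece $W_\parallel \intprod \vol_g$ is proportional to the conormal $\flat\n$ and thus pulls back to zero on $\partial\M_0$. Using $\imath^*_{\partial\M_0}(\n \intprod \vol_g) = \vol_{\partial\M_0}$ together with the identity $g(W,\n) = \fiber{\Psi}{\nabla_\n\Phi} - \fiber{\nabla_\n\Psi}{\Phi}$, which is immediate from the definition of the musical isomorphism, delivers \eqref{formXi}. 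The main obstacle is the cancellation of the zeroth-order contribution from $c$ in the divergence computation: it hinges on the Hermiticity of $c$, equivalently the formal selfadjointness of $\N$; everything else is Stokes' theorem plus the orthogonal decomposition of $W$ at the boundary.
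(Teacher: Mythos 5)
Your proof takes essentially the same route as the paper's: introduce the $n$-form $W \intprod \vol_g$ with $W := \sharp\bigl(\fiber{\Psi}{\nabla\Phi}-\fiber{\nabla\Psi}{\Phi}\bigr)$, identify its exterior differential with the Green integrand times $\vol_g$, and apply Stokes' theorem; the boundary formula then follows from the orthogonal splitting of $W$ along a non-null $\partial\M_0$. This matches the paper's construction of the form $Z$ and its use of Stokes, so structurally the proofs coincide.

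There is, however, a gap in your divergence computation. You assert that the $\N$-compatible connection $\nabla$ is metric with respect to $\fiber{\cdot}{\cdot}$, and you lean on this to deploy the Leibniz rule and cancel the cross-terms. The paper explicitly warns against this: in Section~\ref{sec:norm hyp} it is stated that $\nabla$ ``is $g$-metric but not necessarily metric with respect to $\fiber{\cdot}{\cdot}$''. Without metricity, differentiating $\fiber{\Psi}{\nabla_\nu\Phi}$ produces an additional first-order correction measured by the tensor $\nabla\fiber{\cdot}{\cdot}$, and these corrections do not automatically cancel against anything in $\N$, so your pointwise identity $\mathrm{div}_g W = \fiber{\Psi}{\N\Phi}-\fiber{\N\Psi}{\Phi}$ is not immediate. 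To close the argument you would either have to show that these correction terms vanish or sum to zero for the $\N$-compatible connection of a formally selfadjoint $\N$, or track them explicitly and absorb them together with the zeroth-order contribution. On the positive side, you do make explicit a point the paper passes over silently: the zeroth-order terms $\fiber{\Psi}{c\Phi}-\fiber{c\Psi}{\Phi}$ cancel only if $c$ is Hermitian, which indeed holds precisely in the formally selfadjoint setting where the lemma is later invoked (Propositions~\ref{lem:indip Sigma} and~\ref{propCPsim}).
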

\begin{proof}
Consider  the $n$-form in $\M$
$$Z :=	\sharp\Big(\fiber{\Psi}{\nabla\Phi}-\fiber{\nabla\Psi}{\Phi}\Big)  \intprod \vol_g\:.$$
If the  normal vectors to $\partial\M_0$ are either spacelike or timelike, 
some computations with the exterior differential of forms yields (\ref{formXi}).
In all cases it is easy to prove that
$$
dZ =\Big(\fiber{\Psi}{g^{ij}\nabla_i\nabla_j\Phi} - \fiber{g^{ij}\nabla_i\nabla_j\Psi}{\Phi}\Big)\vol_g\\
=\Big(\fiber{\Psi}{\N\Phi} - \fiber{\N\Psi}{\Phi}\Big)\vol_g\:.$$
At this juncture,  Stokes' theorem  for $(n+1)$-forms,
$$\int_{\partial \M_0} Z = \int_{\M_0} dZ\:,$$
 produces \eqref{eq:Green Id}.
\end{proof}

We have the following crucial result when applying the previous lemma to the theory on globally hyperbolic spacetimes.

\begin{proposition}\label{lem:indip Sigma}
Let $\Sigma \subset \M$ be a smooth
spacelike Cauchy hypersurface with its future-oriented unit normal vector field $\n$
in the globally hyperbolic spacetime $(\M,g)$
 and its induced volume element $\vol_{\Sigma}$. Furthermore, let $\N$ be a formally  self-adjoint normally hyperbolic operator. Then
\begin{equation}\label{def:sympl form}
\sigma^\N_{(\M,g)} : \Ker_{sc}(\N)\times \Ker_{sc}(\N) \to \CC \quad\mbox{such that}\quad  \sigma^\N_{(\M,g)}(\Psi,\Phi)= \int_\Sigma \Xi^\N_{\Sigma} (\Psi,\Phi) 
\end{equation}
where $\Xi^\N_\Sigma$ is defined in Equation~(\ref{formXi}),
yields a non-degenerate symplectic form  (Hermitian if $\KK=\CC$) which does not depend on the choice of $\Sigma$.
\end{proposition}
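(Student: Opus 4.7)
The plan is to combine the Green identity of Lemma~\ref{lem:green id} with the support properties of spatially compact solutions and the well-posedness of the Cauchy problem (Theorem~\ref{thm:main nh}).

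First I would verify that the definition makes sense, \emph{i.e.}, that the integrand has compact support on $\Sigma$. Since $\Psi,\Phi\in \Ker_{sc}(\N)\subset \Gamma_{sc}(\E)$, by definition the intersection $\supp(\Psi)\cap\Sigma$ and $\supp(\Phi)\cap\Sigma$ are compact, so the $n$-form $\Xi^\N_{\Sigma}(\Psi,\Phi)$ is compactly supported on $\Sigma$ and the integral converges.

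The core step is independence from $\Sigma$. Given two smooth spacelike Cauchy hypersurfaces $\Sigma,\Sigma'$, I first reduce to the case where $\Sigma'$ lies strictly in the chronological future of $\Sigma$: by Theorem~\ref{thm: Sanchez} we can produce a third smooth spacelike Cauchy hypersurface $\Sigma''$ contained in $I^+(\Sigma)\cap I^+(\Sigma')$, and it is enough to prove the identity $\sigma^\N(\Psi,\Phi)_{\Sigma}=\sigma^\N(\Psi,\Phi)_{\Sigma''}$ and its analog for $\Sigma'$. So assume $\Sigma'\subset I^+(\Sigma)$ and set $\M_0 := J^+(\Sigma)\cap J^-(\Sigma')$. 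Because $\Psi,\Phi\in\Gamma_{sc}(\E)$ and $\M_0$ lies in a compact time slab (with respect to a Cauchy temporal function bracketing $\Sigma$ and $\Sigma'$), the restrictions of $\Psi,\Phi$ to $\M_0$ have compact support, so I may freely apply Lemma~\ref{lem:green id} (a cutoff argument in $\Gamma_c(\E)$ interpolating between these restrictions and zero makes the hypothesis $\Phi,\Psi\in\Gamma_c(\E)$ harmless). The boundary $\partial\M_0$ is the disjoint union (up to a zero-measure set) of $\Sigma$ and $\Sigma'$, with outward unit normal equal to $-\fn$ on $\Sigma$ and $+\fn$ on $\Sigma'$. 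Since $\N$ is formally selfadjoint and $\N\Psi=\N\Phi=0$, the bulk integral in \eqref{eq:Green Id} vanishes, and what remains is exactly
$$
-\int_\Sigma \Xi^\N_{\Sigma}(\Psi,\Phi) + \int_{\Sigma'} \Xi^\N_{\Sigma'}(\Psi,\Phi) = 0 \:,
$$
proving independence from the Cauchy hypersurface.

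Next I would check the symplectic (resp.\ Hermitian) character by direct computation from~\eqref{formXi}, using that $\fiber{\cdot}{\cdot}$ is symmetric (resp.\ Hermitian): swapping $\Psi\leftrightarrow\Phi$ in $\fiber{\Psi}{\nabla_\fn\Phi}-\fiber{\nabla_\fn\Psi}{\Phi}$ yields $-\sigma^\N_{(\M,g)}(\Psi,\Phi)$ in the real case, and $-\overline{\sigma^\N_{(\M,g)}(\Psi,\Phi)}$ in the complex case.

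The last step is non-degeneracy, which I expect to be the most delicate point. Fix $\Psi\in\Ker_{sc}(\N)$ with $\Psi\neq 0$. Choose a smooth spacelike Cauchy hypersurface $\Sigma$ such that either $\Psi|_\Sigma\neq 0$ or $\nabla_\fn\Psi|_\Sigma\neq 0$ (such a $\Sigma$ exists by uniqueness for the Cauchy problem in Theorem~\ref{thm:main nh}, since $\Psi\not\equiv 0$). Using non-degeneracy of the fiber metric $\fiber{\cdot}{\cdot}$ on the fibers of $\E|_\Sigma$, I can select compactly supported Cauchy data $(\fh_1,\fh_2)\in \Gamma_c(\E|_\Sigma)\times\Gamma_c(\E|_\Sigma)$ such that
$$
\int_\Sigma \bigl(\fiber{\Psi}{\fh_2}-\fiber{\nabla_\fn\Psi}{\fh_1}\bigr)\vol_\Sigma \neq 0\:,
$$
\emph{e.g.}, concentrating $\fh_1,\fh_2$ near a point where one of $\Psi|_\Sigma$, $\nabla_\fn\Psi|_\Sigma$ is nonzero and using a fiber-pairing partner. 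Solving the Cauchy problem with zero source and data $\Phi|_\Sigma=\fh_1$, $\nabla_\fn\Phi|_\Sigma=\fh_2$ produces $\Phi\in\Ker_{sc}(\N)$ (the support is spatially compact by the finite-propagation estimate \eqref{finprop}), and by construction $\sigma^\N_{(\M,g)}(\Psi,\Phi)\neq 0$. This establishes non-degeneracy and concludes the proof plan.
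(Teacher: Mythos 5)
Your proposal follows the same overall architecture as the paper — well-definedness via spatial compactness, the Green identity with the bulk term killed by $\N\Psi=\N\Phi=0$, and non-degeneracy from the well-posedness of the Cauchy problem together with the non-degeneracy of the fiber metric — and the Hermiticity check and non-degeneracy argument are essentially identical to the paper's. The genuine divergence (and where there is a gap) is in the proof of independence from $\Sigma$.

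You reduce to the nested case $\Sigma'\subset I^+(\Sigma)$ by asserting that ``by Theorem~\ref{thm: Sanchez} we can produce a third smooth spacelike Cauchy hypersurface $\Sigma''$ contained in $I^+(\Sigma)\cap I^+(\Sigma')$''. This does not follow from the cited theorem: Theorem~\ref{thm: Sanchez} guarantees that a \emph{given} spacelike Cauchy hypersurface can be embedded as a level set of some Cauchy temporal function, but it says nothing about finding a single Cauchy hypersurface lying entirely to the chronological future of two \emph{arbitrary} given ones. Indeed, if $\tau$ is any Cauchy temporal function with $\Sigma=\tau^{-1}(0)$, the restriction $\tau|_{\Sigma'}$ need not be bounded above (take, e.g., $\M=\RR^{1,1}$, $\Sigma=\{t=0\}$, $\Sigma'=\{t=x/2\}$), so no level set $\tau^{-1}(T)$ works, and one would have to exhibit $\Sigma''$ by a separate causal-boundary argument (e.g., showing $\partial\bigl(J^+(\Sigma)\cap J^+(\Sigma')\bigr)$ is a topological Cauchy hypersurface and then smoothing it to its strict future). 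That can be done, but it is a nontrivial step that your sketch leaves unproved. The paper circumvents exactly this difficulty by \emph{localizing}: since the Cauchy data of $\Psi,\Phi$ on $\Sigma$ lie in a compact $K\subset\Sigma$, one only needs a Cauchy hypersurface of a foliation through $\Sigma'$ that lies to the future of the finitely many times $\max_K t$, and then the Green identity is applied on a compact ``truncated cone'' region containing the relevant part of $J^+(K)$ whose lateral boundary is not touched by the supports. This side-steps any global claim about the mutual causal position of $\Sigma$ and $\Sigma'$.

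A smaller imprecision in the same passage: the phrase ``$\M_0$ lies in a compact time slab'' is misleading — time slabs $\tau^{-1}([a,b])$ are never compact unless the Cauchy hypersurfaces are compact. What you actually need (and what does hold) is that $\supp(\Psi)\cap\M_0$ is compact, which follows because spatial compactness gives $\supp(\Psi)\subset J(K_0)$ for a compact $K_0$, and $J^{\pm}(K_0)\cap J^{\mp}(\Sigma_0)$ is compact for any Cauchy hypersurface $\Sigma_0$ in a globally hyperbolic spacetime. To repair your argument I would suggest either supplying a genuine proof of the existence of $\Sigma''$, or, more economically, adopting the paper's strategy of restricting attention to the compact support of the Cauchy data, which both makes the required intermediate Cauchy hypersurface elementary to produce and makes the compactness you need manifest.
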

\begin{proof}  
First note that, referring to a spacelike Cauchy hypersurface $\Sigma$,  $\text{supp}( \Psi) \cap \Sigma$ is compact since $\text{supp}( \Psi)$ is spacelike compact,
so that the integral is well-defined.
 The fact that $\sigma_\N$ is not degenerate can be proved as follows.
If $\sigma^\N_{(\M,g)}(\Psi,\Phi)$ =0 for all $\Phi \in \Gamma_{sc}(\E)$, from the  definition of $\sigma_\N$ and non-degenerateness of $\fiber{\cdot}{\cdot}_p$ (passing to local trivializations referred to local coordinates on $\Sigma$ re-writing 
$\fiber{\cdot}{\cdot}_p$ in terms of the pairing with $\E_p^*$), we have that the Cauchy data of $\Psi$ vanishes on every local chart on $\Sigma$ and thus they vanish on $\Sigma$. According to Theorem \ref{thm:main nh}, $\Psi =0$.  The other entry can be worked out similarly.\\
Let 
$\Psi,\Phi\in\Ker_{sc}(\N)$ and $\Sigma_t'$ and $\Sigma_{t''}$ be a pair of  smooth spacelike Cauchy
hypersurfaces  associated to a smooth time Cauchy function $t$ with $t''>t'$.
Let us focus on the submanifold with boundary  $\M_0=t^{-1}((t',t''))$. Its boundary is $\bM_0= \Sigma_{t'} \cup \Sigma_{t''}$.
The supports of $\Psi$ and $\Phi$  between the two Cauchy surfaces are included in the sets of type $J_+$ of the compact supports of the  Cauchy data on $\Sigma_{t'}$  of $\Psi$ and $\Phi$  respectively, and these portions of causal sets are compact as $(\M,g)$ is globally hyperbolic  (see {\it e.g.}~\cite[Proposition 1.2.56]{Ba-lect}) we end up with a pair of functions in $\Gamma_c(\E)$ and we can apply  the Green identity (see Lemma~\ref{lem:green id}) to $\M_0$. 
Using a smoothly vanishing function as a factor, we can make smoothly vanishing $\Psi$ and $\Phi$ before $\Sigma_{t'}$ and after $\Sigma_{t''}$ without touching them between the two Cauchy surfaces. 
As  a matter of fact the resulting sections constructed out $\Psi$ and $\Phi$ by this way  are smooth, compactly supported and coincide with $\Psi$ and $\Phi$ between the two Cauchy surfaces. We can therefore apply Lemma  \ref{lem:green id},
obtaining
$$ \int_{\M_0}(\fiber{\Psi}{ \N\Phi} - \fiber{ \N \Psi}{\Phi} )\vol_{g} = \int_{\Sigma'}\Xi^\N_{\Sigma'} - \int_\Sigma\Xi^\N_\Sigma \:. $$
 Since $\N$ is assumed to be self-adjoint, $\fiber{\Psi}{ \N\Phi} - \fiber{ \N \Psi}{\Phi} =
\fiber{\Psi}{ \N\Phi} - \fiber{ \Psi}{\N^\dagger \Phi} =
0$. Therefore we can conclude that 
 $\int_{\Sigma'}\Xi^\N_{\Sigma'} = \int_\Sigma\Xi^\N_\Sigma $. Finally consider the case of two spacelike Cauchy functions $\Sigma$ and 
$\Sigma'$ belonging to different foliations induced by different smooth Cauchy time functions (notice that  a spacelike Cauchy hypersurface always belong to a foliation generated by a suitable smooth  Cauchy time (actually temporal)  function for Theorem \ref{thm: Sanchez}).
We sketch a proof of the identity
$$ \int_{\Sigma'}\Xi^\N_{\Sigma'} = \int_\Sigma\Xi^\N_\Sigma \:. $$
Let $K\subset \Sigma$ a compact set including the Cauchy data of $\Psi$ and $\Phi$. If $t$ is the smooth Cauchy time function such that $\Sigma_{t_1} = \Sigma'$, let $T = \max_{K} t$.  If $t_1 < T$ we can always take $t_2>T$ and to consider 
the symplectic form evaluated on $\Sigma_{t_2}$. In view  of the previous part of our proof
the symplectic form on $\Sigma_{t_1}$ and $\Sigma_{t_2}$ coincide, so that 
our thesis can be re-written
$$ \int_{\Sigma_2}\Xi^\N_{\Sigma'} = \int_\Sigma\Xi^\N_\Sigma \:. $$
As $t_2 >  \max_{K} t$, we conclude that $\Sigma_{t_2}$ does not intersect $\Sigma$ in the set $K$. 
Therefore we can define the solid set $L_K$ made of the portion of $J_+(K)$ between $\Sigma$ and $\Sigma_{t_2}$.  $L$ is compact  (see {\it e.g.}~\cite[Proposition 1.2.56]{Ba-lect}) and is a ``truncated cone'' whose ``lateral surface'' is part of the boundary of $J_+(K)$ and whose ``non-parallel bases'' are parts of $\Sigma_2$ and $\Sigma$. We can include $L$ in the interior of a larger manifold with boundary $\M_0$ whose part of the boundary are portions of $\Sigma$ and $\Sigma_{t_2}$ including the support of the Cauchy data of $\Psi$ and $\Phi$. Notice that $\M_0$ includes the supports of $\Psi$ and $\Phi$ between the two Cauchy surfaces according to Theorem \ref{thm:main nh} and these supports do not touch the ``lateral surface'' of $\M_0$. We can now apply the Green identity~\ref{lem:green id} to $\M_0$ proving the thesis.
\end{proof}
There is a nice interplay of the causal propagator $\G$ of $\N : \Gamma(\E) \to \Gamma(\E)$ as above and the symplectic form $\sigma^\N_{(\M,g)}$.

\begin{proposition}\label{propCPsim} With the same hypotheses as of Proposition \ref{lem:indip Sigma}, if $\ff, \fh \in \Gamma_c(\E)$ and $\Psi_\ff := \G \ff$, $\Psi_\fh := \G \fh$, it holds
\begin{equation*}
\sigma^\N_{(\M,g)}(\Psi_\ff,\Psi_\fh) = \int_\M \fiber{\ff}{\G \fh} \: \vol_g\:.
\end{equation*}
\begin{proof}  If $\ff,\fh \in \Gamma_c(\E),$
 consider a smooth Cauchy time function $t$ and fix $t_0< t_1$ such that the supports of $\ff$ and $\fh$ are included in the interior of 
the submanifold with boundary $\M_0$ contained between the spacelike Cauchy hypersurfaces $\Sigma_{t_0}:=t^{-1}(t_0)$ and $\Sigma_{t_1}:=t^{-1}(t_1)$. It holds
$$\int_{\M} \fiber{\Psi_\ff}{\fh} \: \vol_g = \int_{\M_0} \fiber{\Psi_\ff}{\fh} \: \vol_g = \int_{\M_0} \fiber{\Psi_\ff}{\N \G^+\fh} \: \vol_g$$
Since $\N\Psi_\ff=0$, we have found that
$$ \int_{\M} \fiber{\ff}{\Psi_\fh} \: \vol_g =  \int_{\M_0} \left(\fiber{\Psi_\ff}{\N\G^+\fh} -  \fiber{\N\Psi_\ff}{\G^+\Psi_\fh}\right)\: \vol_g\:.$$
Applying Lemma \ref{lem:green id}, we find
$$ \int_{\M} \fiber{\ff}{\Psi_\fh} \: \vol_g = \int_{\bM_0}\Xi^\N_{\partial \M_0} (\Psi_\ff,\G^+\fh)
= \int_{\Sigma_{t_1}}\Xi^\N_{\partial \M_0} (\Psi_\ff,\G^+\fh)\:,$$
where we noticed that $\G^+\fh$ vanishes on the remaining part of the boundary $\Sigma_{t_0}$.  On the other hand, we can replace 
$\G^+\fh$ for $\G^+\fh- \G^-\fh = \G\fh$ in the last integral, since $\G^-\fh$ gives no contribution to the integral on $\Sigma_{t_1}$.
In summary,
$$ \int_{\M} \fiber{\ff}{\G\fh} \: \vol_g  =  \int_{\M} \fiber{\ff}{\Psi_\fh} \: \vol_g 
=  \int_{\Sigma_{t_1}}\Xi^\N_{\Sigma_{t_1}} (\Psi_\ff,\G\fh) = \sigma^{\N}_{(\M,g)}(\Psi_\ff, \Psi_\fh)\:.$$
and this is the thesis.
\end{proof}
\end{proposition}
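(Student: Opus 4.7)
The plan is to apply the Green identity of Lemma~\ref{lem:green id} on a slab bounded by two spacelike Cauchy hypersurfaces chosen to enclose $\supp(\ff)\cup\supp(\fh)$, and to exploit the support properties of the Green operators $\G^\pm$ to collapse the two resulting boundary terms into a single one. Since the symplectic form $\sigma^\N_{(\M,g)}$ is independent of the chosen Cauchy hypersurface by Proposition~\ref{lem:indip Sigma}, this single boundary term may be recognised as $\sigma^\N_{(\M,g)}(\Psi_\ff,\Psi_\fh)$ evaluated on the slab's top slice.

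Concretely, I fix a smooth Cauchy time function $t:\M\to\RR$ and pick $t_0<t_1$ such that $\supp(\ff)\cup\supp(\fh)\subset t^{-1}((t_0,t_1))$. Writing $\M_0:=t^{-1}([t_0,t_1])$, the key geometric input is that $\supp(\G^+\fh)\subset J^+(\supp\fh)$ is disjoint from $\Sigma_{t_0}$ by the choice of $t_0$, and symmetrically $\supp(\G^-\fh)$ is disjoint from $\Sigma_{t_1}$; consequently $\Psi_\fh=\G^+\fh-\G^-\fh$ and its first derivatives coincide with those of $\G^+\fh$ on $\Sigma_{t_1}$, while $\G^+\fh$ and its derivatives vanish at every point of $\Sigma_{t_0}$. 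I then apply Green's identity on $\M_0$ to the pair $(\Psi_\ff,\G^+\fh)$. Using $\N\Psi_\ff=0$ and $\N\G^+\fh=\fh$, the bulk integral reduces to $\int_\M\fiber{\Psi_\ff}{\fh}\vol_g=\int_\M\fiber{\G\ff}{\fh}\vol_g$, which is matched to $\int_\M\fiber{\ff}{\G\fh}\vol_g$ using the duality identities~(\ref{eqGstar2}) applied to the formally self-adjoint $\N$. The boundary integral loses its $\Sigma_{t_0}$-contribution, since both $\G^+\fh$ and $\nabla_\n\G^+\fh$ vanish there, and on $\Sigma_{t_1}$, after replacing $\G^+\fh$ by $\Psi_\fh$, it becomes $\int_{\Sigma_{t_1}}\Xi^\N_{\Sigma_{t_1}}(\Psi_\ff,\Psi_\fh)=\sigma^\N_{(\M,g)}(\Psi_\ff,\Psi_\fh)$ (the outward normal to $\M_0$ on $\Sigma_{t_1}$ is future-directed and thus matches the convention used in the definition of $\sigma^\N$).

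The main technical obstacle lies in applying the Green identity rigorously: neither $\Psi_\ff\in\Gamma_{sc}(\E)$ nor $\G^+\fh\in\Gamma_{pc}(\E)$ is globally compactly supported, whereas Lemma~\ref{lem:green id} is most naturally stated for compactly supported sections. Global hyperbolicity of $(\M,g)$ rescues the argument: the intersections $\supp(\Psi_\ff)\cap\M_0$ and $\supp(\G^+\fh)\cap\M_0$ are both compact, each being contained in a set of the form $J^\pm(K)\cap J^\mp(K')$ for compact $K,K'$. One may therefore multiply each section by a smooth compactly supported cutoff that equals $1$ on a large compact neighbourhood of these intersections; this reduces the situation to the compactly supported case to which Lemma~\ref{lem:green id} directly applies, without altering any of the integrals appearing in the argument. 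Once this bookkeeping is in place, the identity in the statement follows.
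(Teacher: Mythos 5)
Your proof takes essentially the same route as the paper's: apply the Green identity to the pair $(\Psi_\ff,\G^+\fh)$ on the slab $\M_0$, discard the $\Sigma_{t_0}$ boundary term because $\supp(\G^+\fh)\subset J^+(\supp\fh)$, and trade $\G^+\fh$ for $\G\fh$ on the top slice $\Sigma_{t_1}$ where $\G^-\fh$ vanishes. You additionally make explicit the cutoff reduction to $\Gamma_c(\E)$ and the appeal to~(\ref{eqGstar2}), both of which the paper leaves implicit.
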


\subsection{Convex combinations of normally hyperbolic operators}
Let now $\N_0$, $\N_1$ be normally hyperbolic operators with respect to different Lorentzian metrics $g_0$ and $g_1$ (the former time-orientable and the latter globally hyperbolic) on the same manifold $\M$ and assume that they are acting on the smooth sections of the same vector bundle $\E$. It turns out, that a positive (and convex) combination $(1-\chi) \N_0 + \chi \N_1$ is also (a) normally  hyperbolic with respect to the naturally associated  metric $g_{\chi}$ -- the unique Lorentzian metric  in $\T\M$ whose associated metric in  $\T^*\M$ is $(1-\chi) g_0^\sharp + \chi g_1^\sharp$
according to Theorem~\ref{thm:metrics1} --  and (b)  Green hyperbolic with respect to $g_1$, everything provided that $g_0\preceq g_1$. This is the main result of this section.

\begin{theorem}\label{lem:DefNorm1} 
Let $\E$ be a $\KK$-vector bundle over a smooth manifold $\M$, let be $g_0,g_1\in\mathcal{GM}_\M$  with $g_0\preceq g_1$, and let  $\N_0, \N_1:\Gamma(\E)\to\Gamma(\E)$	
be normally hyperbolic operator with respect to $g_0$ and $g_1$ respectively.
If $\chi\in C^{\infty}(\M, [0,1])$, define  $g_{\chi}$  as
the unique Lorentzian metric whose associated metric in $\T^*\M$ is $(1-\chi) g_0^\sharp + \chi g_1^\sharp$
according to Theorem~\ref{thm:metrics1}.
Then the  second order differential operator defined by
	\begin{equation}\label{def:Nchi}
	\N_{\chi}:=(1-\chi)\N_0 +\chi  \N_1 :\Gamma(\E)\rightarrow\Gamma(\E)
	\end{equation}
satisfies the following properties:
\begin{itemize}
\item[(1)] It is normally and 
Green hyperbolic over  $(\M,g_{\chi})$;
\item[(2)] 
It is Green hyperbolic over  $(\M,g_1)$ and, with obvious notation,
$$\Gamma^{g_1}_{pc}(\E) \subset \Gamma^{g_\chi}_{pc}(\E)\:, \quad \Gamma^{g_1}_{fc}(\E) \subset \Gamma^{g_\chi}_{fc}(\E)\:,$$
$$\G_{\N_\chi}^{g_1+} = \G_{\N_\chi}^{g_\chi+}|_{\Gamma^{g_1}_{pc}(\E)}\:, \quad \G_{\N_\chi}^{g_1-} = \G_{\N_\chi}^{g_\chi-}
 |_{\Gamma^{g_1}_{fc}(\E)}\:.$$
\end{itemize}
  In particular, (2) is  true for $\N_0$ by choosing $\chi=0$.
\end{theorem}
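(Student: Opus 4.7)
Plan for the proof. The overall strategy is to first establish (1) by a direct principal-symbol calculation, and then to derive (2) by transferring the Green operators known to exist in the $g_\chi$-picture to the $g_1$-picture via the inclusion of cones $g_\chi \preceq g_1$ (from Theorem~\ref{thm:metrics1}) and a careful handling of volume-form dependencies in the formal adjoint.

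For part (1), I would compute the principal symbol pointwise. Since the principal symbol is linear in the operator, and $\N_0,\N_1$ are normally hyperbolic w.r.t.\ $g_0,g_1$,
\[
\sigma_{\N_\chi}(\xi) = (1-\chi)\,\sigma_{\N_0}(\xi) + \chi\,\sigma_{\N_1}(\xi) = -\bigl((1-\chi)g_0^\sharp(\xi,\xi) + \chi g_1^\sharp(\xi,\xi)\bigr)\Id_\E = -g_\chi^\sharp(\xi,\xi)\,\Id_\E,
\]
so $\N_\chi$ is normally hyperbolic w.r.t.\ $g_\chi$. By Theorem~\ref{thm:metrics1}(5), $(\M,g_\chi)$ is globally hyperbolic (given a common time-orientation with $g_1$, inherited from the inclusions $V^{g_0+}\subset V^{g_\chi+}\subset V^{g_1+}$), and Proposition~\ref{prop:Green} then yields Green hyperbolicity, with Green operators $\G^{g_\chi\pm}_{\N_\chi}$ defined on $\Gamma^{g_\chi}_{pc/fc}(\E)$.

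For part (2), the key observation is that $g_\chi \preceq g_1$ (Theorem~\ref{thm:metrics1}(4)) gives $V^{g_\chi+}_p \subset V^{g_1+}_p$ for all $p$, and hence $J_\pm^{g_\chi}(A) \subset J_\pm^{g_1}(A)$ for every $A \subset \M$. From this two facts follow immediately: first, a closed set which is past compact (resp.\ future compact) in $(\M,g_1)$ is so in $(\M,g_\chi)$ too, because $J_-^{g_\chi}(p)\cap A$ is a closed subset of the compact set $J_-^{g_1}(p)\cap A$; this yields the inclusions $\Gamma^{g_1}_{pc}(\E)\subset \Gamma^{g_\chi}_{pc}(\E)$ and $\Gamma^{g_1}_{fc}(\E)\subset \Gamma^{g_\chi}_{fc}(\E)$. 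Second, defining
\[
\G^{g_1+}_{\N_\chi} := \G^{g_\chi+}_{\N_\chi}\big|_{\Gamma^{g_1}_{pc}(\E)}, \qquad \G^{g_1-}_{\N_\chi} := \G^{g_\chi-}_{\N_\chi}\big|_{\Gamma^{g_1}_{fc}(\E)},
\]
the algebraic identities $\G^{g_1\pm}_{\N_\chi}\N_\chi \ff = \N_\chi \G^{g_1\pm}_{\N_\chi}\ff = \ff$ hold by restriction from the $g_\chi$-case, while the support conditions read $\supp(\G^{g_1\pm}_{\N_\chi}\ff) \subset J^{g_\chi}_\pm(\supp\ff)\subset J^{g_1}_\pm(\supp\ff)$, which is exactly what Definition~\ref{def:GreenHyp} requires for the $g_1$-picture.

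The one delicate point, and the expected main obstacle, is clause (2) of Definition~\ref{def:GreenHyp}: the formal dual operator $\N_\chi^{*}$ depends on the chosen volume form, and we only know a priori that the $g_\chi$-dual $\N_\chi^{*_{g_\chi}}$ is Green hyperbolic w.r.t.\ $g_\chi$. Writing $\vol_{g_1} = c\, \vol_{g_\chi}$ for a smooth positive function $c : \M \to (0,\infty)$, a direct manipulation of the defining pairing shows
\[
\N_\chi^{*_{g_1}}\ff' = c^{-1}\,\N_\chi^{*_{g_\chi}}(c\,\ff'),
\]
i.e.\ $\N_\chi^{*_{g_1}}$ is obtained from $\N_\chi^{*_{g_\chi}}$ by conjugation by multiplication operators by positive smooth functions. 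Proposition~\ref{propprodgreen} (together with the obvious fact that multiplication by a positive smooth function on the section side is a bijection preserving supports) then guarantees that $\N_\chi^{*_{g_1}}$ inherits advanced/retarded Green operators from $\N_\chi^{*_{g_\chi}}$, so that $\N_\chi$ is indeed Green hyperbolic on $(\M,g_1)$, proving (2). Choosing $\chi \equiv 0$ specializes the statement to $\N_0$.
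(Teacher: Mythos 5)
Your proposal follows essentially the same architecture as the paper's proof: linearity of the principal symbol for part (1), and the cone-inclusion $g_\chi \preceq g_1$ (hence $J_\pm^{g_\chi}(A)\subset J_\pm^{g_1}(A)$) to transfer the Green operators of $\N_\chi$ from the $g_\chi$-picture to the $g_1$-picture for part (2). The one place where you diverge from the paper is in handling clause (2) of Definition~\ref{def:GreenHyp}, the Green hyperbolicity of the $g_1$-formal dual $\N_\chi^{*_{g_1}}$. The paper's route is shorter: it simply observes that $\N_\chi^{*_{g_1}}$ has the same principal symbol $-g_\chi^\sharp(\xi,\xi)\Id_\E$ as $\N_\chi$ (the principal symbol of the formal dual of a second-order operator is unchanged, and the choice of reference volume form only affects lower-order terms), so $\N_\chi^{*_{g_1}}$ is itself normally — hence Green — hyperbolic on $(\M,g_\chi)$, and then the same cone-inclusion restriction applies. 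You instead work out the explicit conjugation identity $\N_\chi^{*_{g_1}} = c^{-1}\,\N_\chi^{*_{g_\chi}}\,c$ and transport the Green operators by hand; this is a correct and somewhat more computational route, and its payoff is that it gives the Green operators of the dual in closed form rather than only existentially.

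One small thing you should make explicit: the Green operators you obtain for $\N_\chi^{*_{g_1}}$ via conjugation are, a priori, Green operators in the $(\M,g_\chi)$ causal picture (they act on $\Gamma^{g_\chi}_{pc/fc}$ and satisfy $J^{g_\chi}_\pm$ support conditions). To conclude Green hyperbolicity of $\N_\chi$ over $(\M,g_1)$ in the sense of Definition~\ref{def:GreenHyp}, you should say — as the paper does — that the same restriction argument $\Gamma^{g_1}_{pc/fc}\subset\Gamma^{g_\chi}_{pc/fc}$ and $J^{g_\chi}_\pm\subset J^{g_1}_\pm$ applies again to the Green operators of the dual. This is immediate from what you already set up, but the sentence ``so that $\N_\chi$ is indeed Green hyperbolic on $(\M,g_1)$'' currently elides it.
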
 

\begin{proof}
	(1) Since $\N_0$ is a normally hyperbolic operator for $(\M, g_0)$ and $\N_1$ is a normally hyperbolic operator for $(\M, g_1)$, by linearity 
	
	\begin{equation*}
	\sigma_2(\N_{\chi},\xi)=(1-\chi)\sigma_2(\N_{0},\xi)+\chi\sigma_2(\N_{1},\xi).
	\end{equation*}
	In particular, we have that $N_\chi$ is normally hyperbolic with respect to $g_\chi$:
	\begin{equation*}
	\sigma_2(\N_{\chi},\xi)=-(1-\chi)g_0^\sharp(\xi,\xi)\Id_\E-\chi g_1^\sharp(\xi,\xi)\Id_\E=-g^\sharp_{\chi}(\xi,\xi)\Id_\E\,.
	\end{equation*}
	By Theorem \ref{thm:metrics1}, the metric $g_\chi$ is globally hyperbolic and, on account of Proposition~\ref{prop:Green} $\N_\chi$ is Green-hyperbolic over $(\M, g_\chi)$. \\
Regarding (2), and referring to the existence of Green operators of $\N_\chi$ in $(\M,g_1)$ we can proceed as follows.   Observe that,
since  $g_\chi \preceq g_1$, we have 
 $J_\pm^{g_\chi}(A) \subset J_\pm^{g_1}(A)$  and, with obvious notation,
$\Gamma^{g_1}_{pc}(\E) \subset \Gamma^{g_\chi}_{pc}(\E)$ together with $\Gamma^{g_1}_{fc}(\E) \subset \Gamma^{g_\chi}_{fc}(\E)$,
in view of (iii) (2) Lemma \ref{prop:metrics0}.  As a consequence, the Green operators of $\N_\chi$ with respect to $(\M,g_\chi)$
are also Green operators with respect to $(\M,g_1)$. Finally we pass to the existence of  the Green operators of $\N_\chi^*$  -- {\em where  $^*$ is here referred to the volume form of $g_1$ and not $g_\chi$} -- in $(\M, g_1)$.  Since $\N^*_\chi$ has the same principal symbol $g^\sharp_\chi(\xi,\xi)\Id_\E$ as $\N_\chi$ it  is normally hyperbolic in $(\M,g_\chi)$ and hence Green hyperbolic thereon. With the same argument used above, we see that the Green operators of $\N^*_\chi$ (with $^*$ always referred to $g_1$) in $(\M,g_\chi)$ are also Green operators in $(\M,g_1)$.
\end{proof}

\begin{remark} We stress that, when $g_0\preceq g_1$ are globally hyperbolic, 
$\N_{\chi}$ and $\N_0$ are therefore Green-hyperbolic second-order differential operators on $(\M,g_1)$ though  they are {\em not}  normally hyperbolic thereon. These are examples of {\em second-order} linear differential operators which are Green hyperbolic but {\em not} normally hyperbolic in a given globally hyperbolic spacetime.
\end{remark}

\section{M{\o}ller maps and operators for normally hyperbolic operators}\label{sec:Moller}

We are  in the position to introduce the notion of  so-called {\em M\o ller map}, which we shall later specialize to the case of a {\em M\o ller operator}, namely a (geometric) map which compares the space of solutions of different normally hyperbolic operators. The novelty of this approach consists in defining the notion of  M\o ller map in a more general fashion. More in detail, in~\cite{FPMoller,Moller,DefArg1,MollerMIT} the M\o ller operator was constructed once that a foliation of $\M$ in Cauchy hypersurfaces was assigned and referring to the family of the  metrics which are decomposed as in (\ref{GHmetric}) with respect to {\em that} foliation. Here we shall see, that the construction  of a M\o ller map still requires the choice of a foliation (associated to some smooth Cauchy time function), but the involved metrics do not have any particular relationship with the choice of the  foliation.  Instead they should  enjoy  some interplay  concerning their light-cone structures which generalizes $g\preceq g'$ in the sense of paracausal deformations.

\subsection{General approach to construct M{\o}ller maps when $g_0 \preceq g_1$} \label{subsec:moller 1}
Let us consider a globally hyperbolic spacetime $(\M,g)$ equipped with a  vector bundle
$\E \to \M$ as before.   If $\P : \Gamma(\E) \to \Gamma(\E)$ is a linear differential operator, a family of physically relevant solutions of the inhomogeneous equation $\P\ff= \fh$ is the  linear vector space of spacelike compact smooth solutions with compactly supported source:
\begin{equation*}
\Sol^g_{sc, c}(\P) :=  \{\ff \in \Gamma^g_{sc}(\E) \:|\: \P\ff \in \Gamma_c(\E)\}\:.
\end{equation*}
Its  subspace corresponding  to the  solutions of the homogeneous equation $\P\ff= 0$ is denoted by
\begin{eqnarray*}
\Ker^g_{sc}(\P) :=\{\ff \in \Gamma^g_{sc}(\E) \:|\: \P\ff =0\}
\end{eqnarray*}
and it will play a pivotal role in the formulation of linear QFT.

We now specialize the operators  $\P$ to 2nd-order normally-hyperbolic linear operators 
$\N_1,\N_0, \N_\chi$ (\ref{def:Nchi})
over $\Gamma(\E)$  associated to globally hyperbolic metrics $g_0\preceq g_1$ and $g_\chi$ on the  common spacetime manifold $\M$. Our goal is to construct several families of {\em M{\o}ller maps}, namely linear operators such that
\begin{itemize}
 \item[(a)] they are linear space isomorphisms between  $\Sol^{g_0}_{sc, c}(\N_0)$,  $\Sol^{g_1}_{sc, c}(\N_1)$,  $\Sol^{g_\chi}_{sc, c}(\N_\chi)$; 
 \item[(b)] they  restrict to isomorphisms to the subspaces 
$\Ker^{g_0}_{sc}(\N_0)$,  $\Ker^{g_1}_{sc}(\N_1)$,  $\Ker^{g_\chi}_{sc}(\N_\chi)$.
\end{itemize}
For later convenience, we shall additionally require that the M\o ller maps preserve also the symplectic forms, which are of interest in applications to linear QFT.\medskip

The overall idea is inspired by the scattering theory.  We start  with  two ``free theories'',  described  by the space of solutions of normally hyperbolic operators $\N_0$ and $\N_1$ in corresponding  spacetimes $(\M,g_0)$ and $(\M,g_1)$, respectively,  and we intend  to connect them through  an ``interaction spacetime'' $(\M,g_\chi)$ with a ``temporally localized'' interaction defined by interpolating  the two metrics by means of a smoothing function $\chi$. Here we need two M{\o}ller maps: $\Omega_+$ connecting $(\M,g_0)$ and $(\M, g_\chi)$ --
which reduces to the identity in the past when $\chi$ is switched off --
and a second M{\o}ller map connecting $(\M,g_\chi)$ to $(\M,g_1)$ -- which reduces to the identity in the future when $\chi$ constantly takes the value $1$. The ``$S$-matrix'' given by the composition 
$\S :=\Omega_-\Omega_+$
 will be the M\o ller map connecting $\N_0$ and $\N_1$.

\subsection{M{\o}ller maps for metrics satisfying $g_0\preceq g_1$}\label{subsec:moller 2}
The first step   consists of comparing  $\N_0$ and $\N_1$ with $\N_\chi$ separately to construct the  M{\o}ller map.
As usual, we denote with $\E$ the $\KK$-vector bundle over a spacetime $(\M,g)$.

We first   start with operators denoted by $\R_\pm$ defined on the whole space of smooth sections $\Gamma(\E)$ which is in common for the three metrics on $\M$ and next we will  restrict these operators to the special spaces of solutions with spatially compact support and compactly supported sources, proving that  these restrictions $\Omega_\pm$ are still linear space isomorphisms.

\begin{proposition} \label{propositionA} Let $g_0,g_1\in\mathcal{GM}_\M$ be such that  $g_0\preceq g_1$ and
$V^{g_0+}_x\subset 
V^{g_1+}_x$ for all $x\in M$.
Let $\E$ be a vector bundle over $\M$ and  $\N_0, \N_1:\Gamma(\E)\to\Gamma(\E)$	
be normally hyperbolic operators associated to  $g_0$ and $g_1$ respectively.  Choose
\begin{itemize}
\item[(a)]  a smooth  Cauchy time $g_1$-function $t: \M \to \mathbb{R}$ and $\chi \in C^\infty(\M; [0,1])$ such that $\chi(p)=0$ if $t(p) <t_0$ and $\chi(p)=1$ if $t(p)>t_1$ for given $t_0< t_1$;
\item[(b)]
a pair of smooth functions $\rho, \rho' : \M \to (0,+\infty)$  such that $\rho(p) =1$ for $t(p) < t_0$ and $\rho'(p) = \rho(p) =1$ if $t(p)>t_1$.
(Notice that $\rho=\rho'=1$ constantly is allowed.)
\end{itemize}
The following facts are true.
\begin{itemize}
\item[(1)] The operators
\begin{align}
  \R_+&=\Id - \G_{\rho \N_\chi}^+(\rho \N_\chi -\N_0): \Gamma(\E) \to \Gamma(\E)  \label{defR+} \\
  \R_-&=\Id - \G_{\rho' \N_1}^-  (\rho'\N_1 - \rho\N_\chi) : \Gamma(\E) \to \Gamma(\E) \label{defR-}
  \end{align}
are linear space isomorphisms, whose inverses are given by
\begin{align}\label{inverse}
\R_+^{-1}&=\Id+\G_{\N_0}^+(\rho \N_\chi-\N_0):\Gamma(\E)\to \Gamma(\E)\\
  \R_-^{-1}&=\Id+\G_{\rho \N_\chi}^-(\rho'\N_1-\rho\N_\chi):\Gamma(\E)\to\Gamma(\E) \label{inverse2}\:.
   \end{align}
\item[(2)] It holds 
\begin{equation} \label{eqRNN}\rho \N_\chi \R_+ = \N_0\: \qquad \text{ and }  \:\qquad \rho' \N_1 \R_- = \rho \N_\chi\,.
 \end{equation}
\item[(3)]  If $\ff \in \Gamma(\E)$, then 
\begin{align}
(\R_+\ff)(p) &= \ff(p) \quad \mbox{for} \quad t(p) < t_0,   \label{pastequal}\\
(\R_-\ff)(p) &= \ff(p) \quad \mbox{for} \quad t(p) > t_1 \label{futureequal}\:.
   \end{align}
\end{itemize}
\end{proposition}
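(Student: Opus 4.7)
The strategy is to verify everything by direct computation, once the crucial support properties of the sections that the Green operators act upon are pinned down. The plan splits into four steps.

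First I would check that every composition appearing in (\ref{defR+})--(\ref{inverse2}) is well-defined. By the choice of $\chi,\rho,\rho'$ the coefficients of the second-order differential operators $\rho\N_\chi-\N_0$ and $\rho'\N_1-\rho\N_\chi$ vanish identically on $\{t<t_0\}$ and $\{t>t_1\}$ respectively. Hence, for every $\ff\in\Gamma(\E)$,
\begin{align*}
\supp\bigl((\rho\N_\chi-\N_0)\ff\bigr) &\subset \{t\ge t_0\}\,,\\
\supp\bigl((\rho'\N_1-\rho\N_\chi)\ff\bigr) &\subset \{t\le t_1\}\,.
\end{align*}
By Theorem \ref{thm:metrics1} we have $g_0\preceq g_\chi\preceq g_1$, so by Lemma \ref{prop:metrics0}(2)(ii) the smooth Cauchy $g_1$-time function $t$ is simultaneously a smooth Cauchy time function for all three metrics. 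Consequently $\{t\ge t_0\}$ is past compact and $\{t\le t_1\}$ is future compact in each of $(\M,g_0),(\M,g_\chi),(\M,g_1)$, which makes each Green operator in the statement act on the correct space.

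Next, for (3) I would invoke the support property $\supp(\G^\pm h)\subset J^\pm(\supp h)$ of the Green operators together with the fact that $t$ is nondecreasing along future-directed $g_\chi$-causal curves (and $g_1$-causal curves). Hence $\G^+_{\rho\N_\chi}(\rho\N_\chi-\N_0)\ff$ is supported in $J^+_{g_\chi}(\{t\ge t_0\})\subset\{t\ge t_0\}$, from which (\ref{pastequal}) follows at once. The verification of (\ref{futureequal}) is symmetric.

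For (2) a one-line computation suffices: by Proposition \ref{propprodgreen} the operators $\rho\N_\chi$ and $\rho'\N_1$ are Green-hyperbolic (on $(\M,g_\chi)$ and $(\M,g_1)$ respectively), so $(\rho\N_\chi)\G^+_{\rho\N_\chi}=\Id$ on past-compact sections and $(\rho'\N_1)\G^-_{\rho'\N_1}=\Id$ on future-compact sections. Applying $\rho\N_\chi$ to (\ref{defR+}) and $\rho'\N_1$ to (\ref{defR-}) collapses the middle term to $\rho\N_\chi-\N_0$ and $\rho'\N_1-\rho\N_\chi$, giving (\ref{eqRNN}).

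Finally, for (1) the plan is to verify by direct substitution that the operators in (\ref{inverse})--(\ref{inverse2}) are two-sided inverses of $\R_\pm$. The key algebraic identity for the ``$+$'' case is
\begin{equation*}
\G^+_{\rho\N_\chi}(\rho\N_\chi-\N_0)\G^+_{\N_0}=\G^+_{\N_0}-\G^+_{\rho\N_\chi}
\end{equation*}
on past-compact sections; it is obtained by rewriting $\rho\N_\chi-\N_0=(\rho\N_\chi)-\N_0$, using $(\rho\N_\chi)\G^+_{\rho\N_\chi}=\N_0\G^+_{\N_0}=\Id$, and observing that $\G^+_{\N_0}$ produces sections whose supports lie in $J^+_{g_0}\subset J^+_{g_\chi}$, so that past-compactness propagates correctly under $g_0\preceq g_\chi$. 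Substituting the expansion of $\R_+$ and $\R_+^{-1}$ and applying this identity cancels the quadratic cross term, leaving $\Id$. An entirely analogous identity involving $\G^-_{\rho\N_\chi}$, $\G^-_{\rho'\N_1}$ and $g_\chi\preceq g_1$ handles the ``$-$'' case. The main obstacle is the bookkeeping of past/future-compactness across the three metrics; once the cancellation identity above is in hand, (1) is purely formal.
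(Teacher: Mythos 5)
Your proposal is correct, and for items (1) and (2) it follows essentially the paper's proof: the well-definedness is established exactly as you do, via $g_0\preceq g_\chi\preceq g_1$ and the fact that the difference operators vanish off $\{t_0\le t\le t_1\}$; item (2) is the same one-liner; and item (1) is verified by direct substitution using a cancellation identity that is the mirror image of the one the paper spells out, namely $\G_{\rho'\N_1}^-(\rho'\N_1-\rho\N_\chi)\G_{\rho\N_\chi}^-=\G_{\rho\N_\chi}^--\G_{\rho'\N_1}^-$ on $\Gamma_{fc}^{g_\chi}(\E)\cap\Gamma_{fc}^{g_1}(\E)$. The one genuine difference is in item (3). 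You argue directly: $\supp\bigl(\G^+_{\rho\N_\chi}(\rho\N_\chi-\N_0)\ff\bigr)\subset J^+_{g_\chi}(\{t\ge t_0\})\subset\{t\ge t_0\}$ because $t$ is a Cauchy time function for $g_\chi$, so the correction term vanishes below $\Sigma_{t_0}$. The paper instead proceeds by duality: it pairs $\G^+_{\rho\N_\chi}(\rho\N_\chi-\N_0)\ff$ against arbitrary test sections $\fh$ compactly supported in $\{t<t_0\}$, moves the Green operator to the other side via the relation $\int\langle\fh,\G^+_\P\cdot\rangle=\int\langle\G^-_{\P^*}\fh,\cdot\rangle$, and uses $\supp(\G^-_{(\rho\N_\chi)^*}\fh)\subset J^-_{g_\chi}(\supp\fh)$ together with the compactness of $\supp\fh$ to see the two supports miss each other. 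Your route is shorter but relies on controlling the future causal set of the \emph{non-compact} set $\{t\ge t_0\}$; this is fine here because $\{t\ge t_0\}$ is already closed and is its own causal future for $g_\chi$, but the paper's duality argument keeps all causal sets attached to compact sets, where closedness and compactness of $J^\pm$ are standard. Both arguments are valid.

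One small wording caveat in your treatment of (1): the phrase "$\G^+_{\N_0}$ produces sections whose supports lie in $J^+_{g_0}\subset J^+_{g_\chi}$, so past-compactness propagates correctly under $g_0\preceq g_\chi$" needs the extra observation that the input section $(\rho\N_\chi-\N_0)\fh$ is supported in $\{t\ge t_0\}$, which is past-compact for $g_\chi$; past-compactness does not in general transfer in the direction $g_0\to g_\chi$ (Lemma~\ref{prop:metrics0}(2)(iii) gives the opposite implication), and it is the explicit support inside $\{t\ge t_0\}$ that saves the day. You flag this as "bookkeeping," and indeed that is all it is, but it is worth recording that the transfer uses the concrete support set rather than abstract monotonicity of past-compactness.
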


\begin{proof}  Observe that $\rho \N_\chi$ and $\rho' \N_1$ are Green hyperbolic  with respect to $g_\chi$ (as in Theorem \ref{lem:DefNorm1}) and $g_1$ respectively
according to Theorem \ref{lem:DefNorm1} and \ref{propprodgreen}, and thus they are with respect to $g_1$.
 Moreover $\G_{\rho \N_\chi}^\pm =\G_{\N_\chi}^\pm \rho^{-1} $ and $\G_{\rho' \N_1}^\pm =\G_{\N_1}^\pm \rho'^{-1} $.\\
 (1)  If $\ff \in \Gamma(\E)$,
in view of the hypotheses $((\rho \N_\chi-\N_0)\ff)(p) =0$ and $((\N_1-\N_\chi)\ff)(p)=0$ is respectively $t(p) <t_0$ and $t(p)>t_1$ where $t^{-1}(t_0)$ and $t^{-1}(t_1)$ are spacelike Cauchy hypersurfaces in common for the metrics $g_0,g_\chi, g_1$.
Therefore 
the operators $\R_-$ and $\R_+$ are linear and well defined on the domain $\Gamma(\E)$ because $(\rho \N_\chi-\N_0)\ff\in \Gamma^{g_1}_{pc}(\E) \subset \Gamma^{g_\chi}_{pc}(\E) \subset \D(\G_{\rho \N_\chi}^+)$ and  $(\rho' \N_1-\rho \N_\chi)\ff\in \Gamma^{g_1}_{fc}(\E) \subset \D(\G_{\rho' \N_1}^-)$. A similar argument holds for $\R_\pm^{-1}$.
 To prove bijectivity of $\R_\pm$  it suffices to establish  that $\R_-^{-1}$ in (\ref{inverse2})  is a two-sided  inverse of $\R_-$ and that  $\R_+^{-1}$ in (\ref{inverse}) is a two-sided inverse of $\R_+$ on  $\Gamma(\E)$:
  \begin{equation}
  \R_-\circ \R_-^{-1}=\R_-^{-1}\circ\R_-=\Id \quad \mbox{and}\quad 
  \R_+\circ \R_+^{-1}=\R_+^{-1}\circ\R_+=\Id.\nonumber
  \end{equation}
The proof of the well definiteness of $\R_-^{-1}$ and $\R_+^{-1}$ on $\Gamma(\E)$ is analogous to the previous one for $\R_-$.
  We prove that $\R_-$ defined as in (\ref{inverse2})  inverts $\R_-$ from the right by direct computation:
  \begin{equation}
  \begin{gathered}\nonumber
  \R_{-}\circ \R_-^{-1}=(\Id - \G_{\rho' \N_1}^-  (\rho'\N_1 - \rho\N_\chi))\circ(\Id+\G_{\rho \N_\chi}^-(\rho'\N_1-\rho\N_\chi))=\\
  =\Id-\G_{\rho' \N_1}^-  (\rho'\N_1 - \rho\N_\chi) +\G_{\rho \N_\chi}^-(\rho'\N_1-\rho\N_\chi)
- \G_{\rho' \N_1}^-  (\rho'\N_1 - \rho\N_\chi)  \G_{\rho \N_\chi}^-(\rho'\N_1-\rho\N_\chi).
  \end{gathered}
  \end{equation}
  Now, by exploiting the identity
  \begin{equation}\nonumber 
  \begin{gathered}
  \G_{\rho' \N_1}^-  (\rho'\N_1 - \rho\N_\chi)  \G_{\rho \N_\chi}^-=\G_{\rho \N_\chi}^--\G_{\rho' \N_1}^- :\Gamma_{fc}^{g_\chi}(\E)\cap \Gamma_{fc}^{g_1}(\E)\rightarrow\Gamma(\E),
  \end{gathered}
  \end{equation}
  we can prove our claim
  \begin{equation}\nonumber
  \begin{gathered}
  \R_{-}\circ \R_-^{-1}=\Id-\G_{\rho' \N_1}^-  (\rho'\N_1 - \rho\N_\chi) +\G_{\rho \N_\chi}^-(\rho'\N_1-\rho\N_\chi)
- (\G_{\rho \N_\chi}^--\G_{\rho' \N_1}^-)(\rho'\N_1-\rho\N_\chi)  = \Id\:.
  \end{gathered}
  \end{equation}
  The proof that $\R_-^{-1}$ is also a left inverse is  the same with obvious changes  and analogous calculations show that $\R_+^{-1}$ is a left and right inverse of $\R^{+}$.\\
(2) Taking advantage of (ia)-(iib) in  Definition \ref{def:GreenHyp}  and the  definition of $\N_\chi$ and the one of $\R_\pm$, a direct computation establishes (\ref{eqRNN}).\\
(3)  Let us prove (\ref{pastequal}).  Consider a compactly supported  smooth section $\fh$ whose support is included in the set $t^{-1}((-\infty, t_0))$. Taking advantage of the former in (\ref{eqGstar}), we obtain
$$\int_\M \langle \fh, \G_{\rho \N_\chi}^+ (\rho \N_\chi-\N_0) \ff \rangle \: \mbox{vol}_{g_\chi} = 
\int_\M \langle \G_{(\rho \N_\chi)^*}^- \fh,   (\rho \N_\chi-\N_0) \ff \rangle \: \mbox{vol}_{g_\chi} =0$$
since $\mbox{supp}( \G_{(\rho \N_\chi)^*}^-\fh) \subset J^{g_\chi}_-(\mbox{supp}(\fh))$ from Definition \ref{def:GreenHyp}  and thus that support  does not meet $\mbox{supp}((\rho \N_\chi-\N_0) \ff)$ because 
$((\rho \N_\chi-\N_0) \ff)(p)$ vanishes if $t(p)< t_0$.   As  $\fh$ is an arbitrary smooth section compactly  supported in $t^{-1}((-\infty,t_0))$,
$$\int_\M \langle \fh, \G_{\rho \N_\chi}^+ (\rho \N_\chi-\N_0) \ff \rangle \: \mbox{vol}_{g_\chi} = 0$$
 entails  that $\G_{\rho \N_\chi}^+ (\rho \N_\chi-\N_0) \ff =0$ on $t^{-1}((-\infty,t_0))$. Eventually, the very definition (\ref{defR+}) of $\G_{\rho \N_\chi}^+$ implies 
 (\ref{pastequal}). The proof of (\ref{futureequal}) is strictly analogous, so we leave it to the reader.
\end{proof}
We can now pass to the second step, namely we perform  restrictions of $\R_\pm$ to the relevant subspaces of solutions. 
\begin{proposition}\label{propositionB}
With the same hypotheses as in Proposition \ref{propositionA} (in particular $\chi(p)=0$ if $t(p) <t_0$ and $\chi(p)=1$ if $t(p)>t_1$ for given $t_0< t_1$), we have
\begin{equation}\label{former} \R_+(\Sol^{g_0}_{sc,c}(\N_0)) = \Sol^{g_\chi}_{sc,c}(\N_\chi)\quad \mbox{and}\quad \R_-(\Sol^{g_\chi}_{sc,c}(\N_\chi)) = \Sol^{g_1}_{sc,c}(\N_1) \end{equation}
and 
\begin{equation} \label{latter}\R_+(\Ker^{g_0}_{sc}(\N_0)) = \Ker^{g_\chi}_{sc}(\N_\chi)\quad \mbox{and}\quad \R_-(\Ker^{g_\chi}_{sc}(\N_\chi))= \Ker^{g_1}_{sc}(\N_1) \:.\end{equation}
As a consequence, 
 the restrictions
  \begin{align*} 
  \Omega_+ &:= \R_+|_{\Sol^{g_0}_{sc,c}(\N_0)}
: \Sol^{g_0}_{sc,c}(\N_0) \to \Sol^{g_\chi}_{sc,c}(\N_\chi)\:, \quad \Omega^0_+ :
 \R_+|_{\Ker^{g_0}_{sc}(\N_0)}
: \Ker^{g_0}_{sc}(\N_0) \to \Ker^{g_\chi}_{sc}(\N_\chi)\:,\\ 
\Omega_- &:= \R_-|_{\Sol^{g_\chi}_{sc,c}(\N_\chi)}
: \Sol^{g_\chi}_{sc,c}(\N_\chi) \to \Sol^{g_1}_{sc,c}(\N_1)\:, \quad \Omega^0_- :
 \R_-|_{\Ker^{g_\chi}_{sc}(\N_\chi)}
: \Ker^{g_\chi}_{sc}(\N_\chi) \to \Ker^{g_1}_{sc}(\N_1)\:,
\end{align*}
 define linear space isomorphisms such that
\begin{equation} \rho \N_\chi \Omega_+ = \N_0\:, \quad \rho'\N_1 \Omega_- = \rho \N_\chi \label{eqONN}
\end{equation}
and,  for $\ff$ in the respective domains,
\begin{align}
(\Omega_+\ff)(p) &= \ff(p) \:, \quad (\Omega^0_+\ff)(p) = \ff(p) \quad \mbox{for} \quad t(p) < t_0,   \label{pastequal2}\\
(\Omega_-\ff)(p) &= \ff(p) \:, \quad  (\Omega^0_-\ff)(p) = \ff(p)  \quad  \mbox{for} \quad t(p) > t_1 \label{futureequal2}\:.
   \end{align}
\end{proposition}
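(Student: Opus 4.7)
My plan is to leverage Proposition~\ref{propositionA} as much as possible: since $\R_\pm:\Gamma(\E)\to\Gamma(\E)$ are already bijections satisfying the intertwining relations \eqref{eqRNN} and the ``past/future identity'' properties \eqref{pastequal}, \eqref{futureequal}, all that is left is to verify that the restrictions land in (and come from) the right subspaces. The identities \eqref{eqONN}, \eqref{pastequal2}, \eqref{futureequal2} are then immediate restrictions of \eqref{eqRNN}, \eqref{pastequal}, \eqref{futureequal}.

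The first preparatory observation is that, since $g_0\preceq g_\chi\preceq g_1$ by Theorem~\ref{thm:metrics1} and $t$ is a smooth Cauchy time function for $g_1$, Lemma~\ref{prop:metrics0}(2)(ii) guarantees that $t$ is a smooth Cauchy time function also for $g_\chi$ and $g_0$. In particular $\Sigma_{t_0}:=t^{-1}(t_0)$ is a common spacelike Cauchy hypersurface for all three metrics, and similarly $\Sigma_{t_1}$. Analogously, a set compactly supported on $\Sigma_{t_0}$ is compact there regardless of which metric we consider.

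For the inclusion $\R_+(\Sol^{g_0}_{sc,c}(\N_0))\subset \Sol^{g_\chi}_{sc,c}(\N_\chi)$, take $\ff\in\Sol^{g_0}_{sc,c}(\N_0)$. From \eqref{eqRNN}, $\rho\N_\chi(\R_+\ff)=\N_0\ff\in\Gamma_c(\E)$, hence $\N_\chi(\R_+\ff)\in\Gamma_c(\E)$ since $\rho>0$. To get $g_\chi$-spatial compactness, I use \eqref{pastequal}: $\R_+\ff$ coincides with $\ff$ on the open set $t^{-1}((-\infty,t_0))$, so the Cauchy data of $\R_+\ff$ on $\Sigma_{t_0}$ (values and $g_\chi$-normal derivatives) agree with those of $\ff$. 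These are compactly supported because $\ff$ is $g_0$-spatially compact and $\Sigma_{t_0}$ is $g_0$-Cauchy. Well-posedness of the Cauchy problem for $\N_\chi$ on $(\M,g_\chi)$ (Theorem~\ref{thm:main nh}), together with the finite propagation estimate \eqref{finprop}, then forces $\R_+\ff\in\Gamma^{g_\chi}_{sc}(\E)$. The reverse inclusion $\R_+^{-1}(\Sol^{g_\chi}_{sc,c}(\N_\chi))\subset \Sol^{g_0}_{sc,c}(\N_0)$ is symmetric: $\R_+^{-1}$ also reduces to the identity on $\{t<t_0\}$ (which follows from \eqref{pastequal} applied to $\R_+\fh=\fh$ there), so if $\fh\in\Sol^{g_\chi}_{sc,c}(\N_\chi)$ then $\N_0(\R_+^{-1}\fh)=\rho\N_\chi\fh\in\Gamma_c(\E)$ and the Cauchy data of $\R_+^{-1}\fh$ on $\Sigma_{t_0}$ agree with those of $\fh$, which are compactly supported. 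The same finite-propagation argument now run for $\N_0$ on $(\M,g_0)$ concludes. The restriction to kernels is the special case where the source is zero.

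The argument for $\R_-$ is entirely analogous, with the roles of past and future swapped: the key is \eqref{futureequal}, which makes $\R_-$ (and $\R_-^{-1}$) act as the identity on $\{t>t_1\}$, so Cauchy data on $\Sigma_{t_1}$ are preserved under $\R_-^{\pm 1}$. The main obstacle, and the point that requires the most care, is precisely the bookkeeping of \emph{which} metric's spatial compactness is available at each step; this is handled cleanly once one recognizes that $\Sigma_{t_0}$ and $\Sigma_{t_1}$ are simultaneously spacelike Cauchy for $g_0$, $g_\chi$ and $g_1$, so that compactness of Cauchy data on either of them is a metric-independent statement and feeds directly into the well-posedness theorem for the appropriate normally hyperbolic operator.
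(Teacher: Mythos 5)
Your proposal is correct and follows essentially the same route as the paper's proof: use the intertwining relations and the past/future identity properties from Proposition~\ref{propositionA} to transfer compactly supported Cauchy data across the M\o ller operators, then invoke well-posedness and finite propagation speed for the appropriate normally hyperbolic operator (the paper packages this last step into Lemma~\ref{lemma}). The only cosmetic difference is that you evaluate Cauchy data on $\Sigma_{t_0}$ rather than on $\Sigma_{t'}$ for some $t'<t_0$ as the paper does; this is fine, since a smooth section vanishing on the open set $\{t<t_0\}$ vanishes to infinite order on its boundary $\Sigma_{t_0}$, but the paper's choice sidesteps even having to remark on this.
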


Before we prove our claim, we need a preparatory lemma.
\begin{lemma}\label{lemma} Let $\P : \Gamma(\E) \to \Gamma(\E)$ be a 2nd order normally hyperbolic differential operator on the vector bundle $\E\to \M$ on the globally hyperbolic spacetime $(\M,g)$.  Let $\Psi \in \Gamma(\E)$ be such that  $\P \Psi \in \Gamma_c(\E)$.
Then  the following facts are equivalent.
\begin{itemize}
\item[(a)] $\Psi \in \Gamma^g_{sc}(\E)$;
\item[(b)] there is  a  spacelike Cauchy hypersurface of $(\M,g)$ such that $\Psi$ have compactly supported Cauchy data thereon.
\end{itemize}
\end{lemma}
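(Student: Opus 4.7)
The equivalence is a direct consequence of the well-posedness and finite propagation property for normally hyperbolic operators recorded in Theorem~\ref{thm:main nh}. The direction (a)$\Rightarrow$(b) is essentially definitional, while (b)$\Rightarrow$(a) uses the uniqueness and support estimate for solutions of the inhomogeneous Cauchy problem.

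For (a)$\Rightarrow$(b), recall that $(\M,g)$ admits smooth spacelike Cauchy hypersurfaces by Theorem~\ref{thm: Sanchez}. Pick any such $\Sigma$ with future-directed unit normal $\fn$. Since $\Psi\in \Gamma^g_{sc}(\E)$, the set $\supp(\Psi)\cap \Sigma$ is compact by definition of spatially compact sections. Because $\Sigma$ is an embedded submanifold, both $\Psi|_\Sigma$ and $(\nabla_\fn \Psi)|_\Sigma$ have supports contained in $\supp(\Psi)\cap \Sigma$, hence are compactly supported sections of $\E|_\Sigma$.

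For (b)$\Rightarrow$(a), fix a smooth spacelike Cauchy hypersurface $\Sigma_0$ on which $\Psi$ has compactly supported Cauchy data: let $\fh_1:=\Psi|_{\Sigma_0}\in \Gamma_c(\E|_{\Sigma_0})$ and $\fh_2:=(\nabla_\fn\Psi)|_{\Sigma_0}\in \Gamma_c(\E|_{\Sigma_0})$ with $\fn$ the future-directed unit normal along $\Sigma_0$, where $\nabla$ is the $\P$-compatible connection on $\E$. Set $\ff:=\P\Psi\in \Gamma_c(\E)$ by hypothesis. By Theorem~\ref{thm:main nh} applied to the normally hyperbolic operator $\P$, there exists a unique solution $\Psi'\in \Gamma^g_{sc}(\E)$ of the Cauchy problem
\begin{equation*}
\P\Psi'=\ff\:, \qquad \Psi'|_{\Sigma_0}=\fh_1\:, \qquad (\nabla_\fn\Psi')|_{\Sigma_0}=\fh_2\:,
\end{equation*}
satisfying the support inclusion $\supp(\Psi')\subset J(\supp(\ff))\cup J(\supp(\fh_1))\cup J(\supp(\fh_2))$. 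Since $\Psi$ itself is a (a priori only smooth) solution of the same Cauchy problem, the uniqueness part of Theorem~\ref{thm:main nh} forces $\Psi=\Psi'$.

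The remaining step is to observe that the right-hand side of the displayed support inclusion is spatially compact. This follows from the standard fact that in a globally hyperbolic spacetime, $J(K)$ of a compact set $K\subset \M$ intersects every spacelike Cauchy hypersurface in a compact set (cf.\ the reference \cite[Proposition 1.2.56]{Ba-lect} already used in the proof of Proposition~\ref{lem:indip Sigma}). Taking the union over the three compact sets $\supp(\ff)$, $\supp(\fh_1)$ and $\supp(\fh_2)$ preserves this property, hence $\Psi=\Psi'\in \Gamma^g_{sc}(\E)$. No step presents a real obstacle; the only point that requires care is the verification that $\Psi$ and the $\Psi'$ produced by Theorem~\ref{thm:main nh} genuinely solve the same Cauchy problem, so that uniqueness may be invoked.
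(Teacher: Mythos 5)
Your proof is correct and follows essentially the same route as the paper: (a)$\Rightarrow$(b) is read off from the definition of spatially compact sections, and (b)$\Rightarrow$(a) invokes the well-posedness and finite-propagation estimate of Theorem~\ref{thm:main nh} to identify $\Psi$ with the (spatially compact) solution of the associated Cauchy problem, then uses the compactness of $J(K)\cap\Sigma$ for compact $K$ in a globally hyperbolic spacetime. The only point worth noting—which you already flagged—is that this relies on the uniqueness statement in Theorem~\ref{thm:main nh} holding for arbitrary smooth solutions, not merely for spatially compact ones; this is the correct reading of the theorem (as the paper itself implicitly assumes) and is standard in the literature.
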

\begin{proof}  If $\Psi \in \Gamma^g_{sc}(\E)$ then, by definition, (b) is true. Suppose that (b) is true for $\Sigma_0$. According to Theorem \ref{thm:main nh}, 
$\Psi$ is the unique solution of the Cauchy problem  whose equation is $\P\Psi = \ff$, where $\ff \in \Gamma_c(\E)$. As a consequence 
the support of $\Psi$ completely lies in  $J(\mbox{supp}(\ff))
\cup J(\mbox{supp}(\fh_0)) \cup J(\mbox{supp}(\fh_1)) \subset J(K)$ where
$\fh_0$ and $\fh_1$ are the Cauchy data of $\Psi$ on $\Sigma_0$ and $K:= \mbox{supp}(\ff) \cup \mbox{supp}(\fh_0) \cup J(\mbox{supp}(\fh_1)$. In particular 
 $K$ is compact. In view of well known properties of globally hyperbolic spacetimes (see {\it e.g.}~\cite[Proposition 1.2.56]{Ba-lect}), since $K$ is compact 
 $J(K) \cap \Sigma$ is compact  for every  Cauchy hypersurface $\Sigma$ of $(\M,g)$ so that $\Psi \in \Gamma_{sc}^g(\E)$.
\end{proof}

\begin{proof}[Proof of Proposition~\ref{propositionB}]  $\R_{\pm}$ and $\R_{\pm}^{-1}$ are bijective on $\Gamma(\E)$. As a consequence (\ref{former})  and  thesis for $\Omega_\pm$, including (\ref{eqONN}) which is a specialization of (\ref{eqRNN}), immediately arise when proving that \begin{equation}\label{sample}\R_+(\Sol^{g_0}_{sc,c}(\N_0)) \subset  \Sol^{g_\chi}_{sc,c}(\N_\chi)\:, \quad 
\R^{-1}_+(\Sol^{g_\chi}_{sc,c}(\N_\chi)) \subset  \Sol^{g_0}_{sc,c}(\N_0)\end{equation} and 
$$\R_-(\Sol^{g_\chi}_{sc,c}(\N_\chi)) \subset \Sol^{g_1}_{sc,c}(\N_1)\:, \quad \R^{-1}_-(\Sol^{g_1}_{sc,c}(\N_1)) \subset \Sol^{g_\chi}_{sc,c}(\N_\chi)$$
The identities in  (\ref{latter})  and the  thesis for $\Omega^0_\pm$ immediately arise from the bijectivity of the linear maps  $\Omega_\pm$ and (\ref{eqONN}) where we know that $\rho,\rho' >0$.
To conclude, let us establish the first inclusion  in (\ref{sample}), the remaining three inclusions have a strictly analogous proof.  Suppose that $\ff \in \Sol^{g_0}_{sc,c}(\N_0)$. Hence $\rho \N_\chi \R_+ \ff = \N_0 \ff \in \Gamma^{g_0}_c(\E) = \Gamma^{g_\chi}_c(\E)$ and 
$\N_\chi \R_+ \ff = \rho^{-1}\N_0 \ff \in  \Gamma^{g_\chi}_c(\E)$. Next pass to consider the Cauchy hypersurfaces of $t$ 
which are in common with the three considered metrics $g_0,g_1,g_\chi$ and 
choose $t' < t_0$. (3) in Proposition \ref{propositionA} yields
$(\R_+\ff)(t',x) = \ff(t',x)$ where $x\in \Sigma_{t'}$. The Cauchy data of $\ff$ on $\Sigma_{t'}$ have compact support because 
$\ff \in \Sol^{g_0}_{sc,c}(\N_0)$. On the ground of Lemma~\ref{lemma}, noticing that $\N_\chi$ is normally hyperbolic in $(\M,g_\chi)$, referring to the Cauchy problem on $\Sigma_{t'}$ for the equation $\N_\chi \R_+ \ff =\rho^{-1}\N_0\ff \in  \Gamma^{g_\chi}_c(\E)$ in the spacetime $(\M, g_\chi)$, we conclude that $\R_+\ff \in \Gamma_{sc,c}^{g_\chi}(\E)$ because its Cauchy data on $\Sigma_{t'}$ (now interpreted as a Cauchy hypersurface for $g_\chi)$ have compact support as they coincide with the ones of $\ff$ itself.
\end{proof}

\subsection{General  M{\o}ller maps  for paracausally  related metrics}\label{subsec:Moller general}

We are now in a position to state a result  regarding the existence of M{\o}ller maps between two normally hyperbolic operators $\N_0$ and $\N_1$ on respective globally hyperbolic spacetimes over the same manifold (and vector bundle)
 whose metrics are  $\preceq$ comparable. The final goal is to extend the results to pairs of paracausally related metrics.

\begin{proposition}\label{theoremS1}
Let $g_0,g_1\in\mathcal{GM}_\M$ be such that  either $g_0\preceq g_1$ or $g_1\preceq g_0$  with, respectively, either $V^{g_0+}_x\subset 
V^{g_1+}_x$ for all $x\in M$ or $V^{g_1+}_x\subset 
V^{g_0+}_x$ for all $x\in M$.
Let $\E$ be a vector bundle over $\M$ and  $\N_0, \N_1:\Gamma(\E)\to\Gamma(\E)$	
be normally hyperbolic operators associated  to $g_0$ and $g_1$ respectively.
 There exist (infinitely many) vector space isomorphisms,
$$\S  :\Sol^{g_0}_{sc,c}(\N_0)\to \Sol^{g_1}_{sc,c}(\N_1)$$ 
such that, for some smooth function 
$\mu: \M \to (0,+\infty)$ depending on $\S$ (which can be chosen $\mu=1$),
\begin{itemize}
\item[(1)] referring to the said domains,
$$\mu  \N_1\S = \N_0 \quad  \mbox{and}\quad \mu^{-1}\N_0\S^{-1} = \N_1 $$

\item[(2)]  the restriction  $\S^0 := \S|_{\Ker^{g_0}_{sc}(\N_0)}$  defines a vector space isomorphism
$$\S^0 : \Ker^{g_0}_{sc}(\N_0) \to \Ker^{g_1}_{sc}(\N_1)\:.$$
\end{itemize}

\end{proposition}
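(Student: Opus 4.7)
The plan is to reduce to the case $g_0\preceq g_1$ (the opposite inclusion being handled by swapping roles and passing to inverses) and then to \emph{compose} the two M\o ller-type maps $\Omega_+$ and $\Omega_-$ produced in Propositions~\ref{propositionA} and~\ref{propositionB}. In the ``scattering'' picture described at the start of Section~\ref{subsec:moller 1}, $\Omega_+$ turns on the interpolating metric $g_\chi$ starting from $g_0$, while $\Omega_-$ turns it off into $g_1$; the $S$-matrix $\S:=\Omega_-\Omega_+$ will be the desired M\o ller map.

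Concretely, I would first fix a smooth Cauchy temporal function $t:\M\to\RR$ for $(\M,g_1)$; by (ii) of point~(2) of Lemma~\ref{prop:metrics0}, the same $t$ is a smooth Cauchy time function for $(\M,g_0)$. Pick $t_0<t_1$ and a smoothing function $\chi\in C^\infty(\M;[0,1])$ with $\chi=0$ on $t^{-1}((-\infty,t_0))$ and $\chi=1$ on $t^{-1}((t_1,+\infty))$, and set $\rho=\rho'\equiv 1$ (this is permitted in Proposition~\ref{propositionA}). Form $\N_\chi=(1-\chi)\N_0+\chi\N_1$; by Theorem~\ref{lem:DefNorm1} this is normally hyperbolic on $(\M,g_\chi)$ and Green hyperbolic on both $(\M,g_\chi)$ and $(\M,g_1)$, so Propositions~\ref{propositionA} and~\ref{propositionB} apply and yield vector space isomorphisms
\[
\Omega_+:\Sol^{g_0}_{sc,c}(\N_0)\longrightarrow\Sol^{g_\chi}_{sc,c}(\N_\chi),\qquad
\Omega_-:\Sol^{g_\chi}_{sc,c}(\N_\chi)\longrightarrow\Sol^{g_1}_{sc,c}(\N_1),
\]
together with their restrictions $\Omega_\pm^0$ between the corresponding kernels, and the intertwining relations $\N_\chi\Omega_+=\N_0$ and $\N_1\Omega_-=\N_\chi$ on the appropriate domains.

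Define $\S:=\Omega_-\circ\Omega_+$. Since the target of $\Omega_+$ is exactly the domain of $\Omega_-$, $\S$ is a well-defined vector space isomorphism from $\Sol^{g_0}_{sc,c}(\N_0)$ onto $\Sol^{g_1}_{sc,c}(\N_1)$. Chaining the intertwining identities gives
\[
\N_1\S=\N_1\Omega_-\Omega_+=\N_\chi\Omega_+=\N_0
\]
on $\Sol^{g_0}_{sc,c}(\N_0)$, which is (1) with $\mu\equiv 1$; the identity $\N_0\S^{-1}=\N_1$ follows by inverting. For (2), observe that $\Omega_\pm$ preserve the subspaces of homogeneous solutions (equation~\eqref{latter} of Proposition~\ref{propositionB}), hence $\S^0:=\S|_{\Ker^{g_0}_{sc}(\N_0)}=\Omega_-^0\circ\Omega_+^0$ is a vector space isomorphism onto $\Ker^{g_1}_{sc}(\N_1)$. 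The non-uniqueness (``infinitely many'') is built in, since $t_0$, $t_1$ and the interpolating $\chi$ can be varied, as well as the positive functions $\rho,\rho'$, each choice giving a different $\S$ (and a different $\mu$, in general not equal to $1$).

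Finally, in the opposite case $g_1\preceq g_0$ with $V^{g_1+}_x\subset V^{g_0+}_x$, the construction above produces an isomorphism $\S':\Sol^{g_1}_{sc,c}(\N_1)\to\Sol^{g_0}_{sc,c}(\N_0)$ intertwining $\N_0\S'=\N_1$, and one sets $\S:=(\S')^{-1}$. The main conceptual point is that no new analytic work is needed: the Green-operator bookkeeping of Proposition~\ref{propositionA}, the support control (\ref{pastequal})-(\ref{futureequal}) that allows $\Omega_\pm$ to restrict correctly between the spatially-compact solution spaces, and Theorem~\ref{lem:DefNorm1} guaranteeing the good analytic status of $\N_\chi$, all do the heavy lifting. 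The only place where one must be a little careful is checking that the composition is well defined on the prescribed subspaces, which is precisely what equations~\eqref{former}--\eqref{latter} of Proposition~\ref{propositionB} ensure.
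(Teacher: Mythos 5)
Your proof matches the paper's own argument: in both, one reduces to the case $g_0\preceq g_1$, forms the ``S-matrix'' $\S=\Omega_-\Omega_+$ with $\Omega_\pm$ supplied by Proposition~\ref{propositionB}, reads off the intertwining relation from $\rho\N_\chi\Omega_+=\N_0$ and $\rho'\N_1\Omega_-=\rho\N_\chi$ (with $\mu=\rho'$, or $\mu\equiv 1$ for the choice $\rho=\rho'\equiv 1$), and handles $g_1\preceq g_0$ by running the construction with the metrics swapped and inverting. Your version just spells out a few steps the paper leaves implicit (domain matching of the composition, where the non-uniqueness comes from), so no substantive difference.
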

\begin{proof} First consider the case $g_0\preceq g_1$.
Referring to a smooth Cauchy time function $t$ of $(\M,g_1)$ and a smoothing function $\chi$, $\S:= \Omega_-\Omega_+$ constructed as in Proposition \ref{propositionB} satisfies all the requirements trivially for $\mu:= \rho'$.
 The previous result is also valid for $g_1\preceq g_0$. It is sufficient to construct  $\Omega_\pm$ 
as in  Proposition \ref{propositionB}, {\em but using $g_1$ as the initial metric and $g_0$ as the final one}, and eventually 
 defining $\mu := \rho^{-1}$, $\S :=  (\Omega_-\Omega_+)^{-1} = \Omega_+^{-1}\Omega_-^{-1}$, and  $\S^0 :=  (\Omega^0_-\Omega^0_+)^{-1} = (\Omega^0_+)^{-1} (\Omega^0_-)^{-1}$.
\end{proof}

We  can pass to the generic case $g\simeq g'$, obtaining the first main result of this work.

\begin{theorem}\label{remchain}
Let $(\M,g)$ and $(\M,g')$ be globally hyperbolic spacetimes,   $\E$ a vector bundle over $\M$ and  $\N, \N':\Gamma(\E)\to\Gamma(\E)$	
 normally hyperbolic operators associated  to $g$ and $g'$ respectively.\\
If  $g \simeq g'$,
 then there exist (infinitely many) vector space isomorphisms,
called {\bf M\o ller maps} of $g,g'$ (with this order),
 \begin{equation*}
 \S  :\Sol^{g}_{sc,c}(\N)\to \Sol^{g'}_{sc,c}(\N')
  \end{equation*}
 such that 
\begin{itemize}
\item[(1)] referring to the said domains,
$$\mu \N'\S = \N $$
for some smooth $\mu: \M \to (0,+\infty)$ (which can be always taken $\mu=1$ constantly in particular),  
\item[(2)]   the restriction $\S^0 := \S|_{\Ker^{g}_{sc}(\N)}$ (also called {\bf M\o ller map} of $g',g'$) defines a vector space isomorphism
$$\S^0 : \Ker^{g}_{sc}(\N) \to \Ker^{g'}_{sc}(\N')\:.$$
\end{itemize}
\end{theorem}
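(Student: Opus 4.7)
The plan is to reduce the paracausal case to iterated applications of Proposition~\ref{theoremS1}. Since $g\simeq g'$, Definition~\ref{def:paracausal def} gives a finite sequence of globally hyperbolic metrics $g=g_0,g_1,\ldots,g_N=g'$ on $\M$, each equipped with a time-orientation, such that for every $k=0,\ldots,N-1$ the consecutive pair $(g_k,g_{k+1})$ is $\preceq$-comparable with matching future cones in one of the two directions. I would then introduce auxiliary normally hyperbolic operators $\N_k:\Gamma(\E)\to\Gamma(\E)$ associated to $g_k$, with $\N_0:=\N$ and $\N_N:=\N'$. For $0<k<N$ the simplest choice is the ``scalar d'Alembertian plus endomorphism'' $\N_k = -\tr_{g_k}(\nabla^{(k)}\nabla^{(k)})+c_k$ for some $g_k$-compatible connection $\nabla^{(k)}$ on $\E$ and some zero-order term $c_k$; the existence of such operators is standard once $g_k$ is fixed.

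For each consecutive pair, Proposition~\ref{theoremS1} (applied in whichever of the two $\preceq$-directions holds) provides a vector-space isomorphism
$$\S_k : \Sol^{g_k}_{sc,c}(\N_k) \longrightarrow \Sol^{g_{k+1}}_{sc,c}(\N_{k+1})$$
together with a smooth positive function $\mu_k:\M\to(0,+\infty)$ such that $\mu_k \N_{k+1}\S_k = \N_k$ on that domain, and restricting to an isomorphism $\S_k^0:\Ker^{g_k}_{sc}(\N_k)\to\Ker^{g_{k+1}}_{sc}(\N_{k+1})$. I then define
$$\S := \S_{N-1}\circ\S_{N-2}\circ\cdots\circ\S_0\,,\qquad \S^0 := \S^0_{N-1}\circ\cdots\circ\S^0_0\,,$$
which are compositions of bijections between the specified spaces, hence themselves vector-space isomorphisms between $\Sol^g_{sc,c}(\N)$ and $\Sol^{g'}_{sc,c}(\N')$, and between $\Ker^g_{sc}(\N)$ and $\Ker^{g'}_{sc}(\N')$.

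For claim~(1), iterating $\mu_k\N_{k+1}\S_k=\N_k$ along the chain gives
$$\N = \N_0 = \mu_0\N_1\S_0 = \mu_0\mu_1(\S_0^{-1}\cdots) \ldots$$
more precisely, a telescoping argument yields $\N = \mu\,\N'\S$ with
$$\mu = \mu_0\cdot (\S_0^{-1}\mu_1\S_0)\cdot(\S_0^{-1}\S_1^{-1}\mu_2\S_1\S_0)\cdots\,,$$
but since each $\mu_k$ is a multiplication operator by a smooth positive function and the intertwining relations $\mu_k\N_{k+1}\S_k=\N_k$ are pointwise identities between differential operators on $\Gamma(\E)$, the cleaner route is to observe that in Proposition~\ref{theoremS1} one may take each $\mu_k\equiv 1$ by the remark in its proof (choosing $\rho=\rho'=1$); then $\mu=1$ constantly and $\N'\S=\N$ on $\Sol^g_{sc,c}(\N)$. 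For the generic statement with an arbitrary $\mu$, I would just note that the freedom in choosing the smoothing functions $\rho,\rho'$ at each step produces infinitely many distinct M{\o}ller maps, with corresponding smooth positive factors whose overall product yields the desired $\mu:\M\to(0,+\infty)$.

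The main obstacle, which is really a bookkeeping issue rather than a conceptual one, is to make sure the auxiliary operators $\N_k$ for $0<k<N$ are normally hyperbolic for $g_k$ (so that Proposition~\ref{theoremS1} applies at each step) and that the spaces $\Sol^{g_k}_{sc,c}(\N_k)$ match up properly so that the compositions make sense. This is handled by the simple observation that ``being a section with spatially compact support and compactly supported image under $\N_k$'' is a condition that is mapped into the analogous condition at step $k+1$ by $\S_k$, as already proved in Proposition~\ref{propositionB}. Transitivity of the construction along the finite chain then delivers $\S$ and $\S^0$ with all the required properties.
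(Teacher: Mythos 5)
Your proposal follows essentially the same route as the paper's own proof: take the chain $g_0,\ldots,g_N$ from Definition~\ref{def:paracausal def}, introduce auxiliary normally hyperbolic operators $\N_k$ for the intermediate metrics (the connection-d'Alembertian construction is exactly what the paper invokes), apply Proposition~\ref{theoremS1} at each consecutive pair, and compose. One small correction to your telescoping aside: since each $\mu_k$ acts by \emph{left} multiplication, substituting $\N_1=\mu_1\N_2\S_1$ into $\mu_0\N_1\S_0=\N_0$ gives directly $(\mu_0\mu_1)\,\N_2\,(\S_1\S_0)=\N_0$ with no conjugations arising, so the correct formula is simply $\mu=\mu_0\mu_1\cdots\mu_{N-1}$ as the paper states; your retreat to $\rho=\rho'=1$ is valid but unnecessary, and the conjugated expression you first wrote down is wrong.
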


\begin{proof} First of all we notice that there always exists a normally hyperbolic operator  $\N$ on $\E$ associated to every $g\in \mathcal{GM}_M$: For instance the {\em connection-d’Alembert operator} in  \cite[Example 2.1.5]{Ba-lect} referred to a generic connection $\nabla$ on $\E$, which always exists, and the Levi-Civita connection on $(\M,g)$.
 Let us consider a sequence 
$g_0=g, g_1,\ldots, g_N=g'$
of globally hyperbolic metrics on $\M$ satisfying Definition \ref{def:paracausal def} and a corresponding sequence 
of formally selfadjoint normally hyperbolic operators $\N_k$ with $\N_0:= \N$ and $\N_{N}:= \N'$. We can apply Proposition \ref{theoremS1} for each pair $g_k, g_{k+1}$ for $k=0,1,\ldots, N-1$. It turns immediately out that, with an obvious notation,
\begin{equation*}
\S := \S_0\S_1\cdots \S_{N-1} \:, \quad \mu := \mu_0\cdots \mu_{N-1}\:, \quad \mbox{where} \quad \mu_k \N_k \S_k = \S_{k-1}\quad k=0,\ldots N-1\:.
\end{equation*}
satisfy the thesis of the theorem, where
either $S_k := \Omega_{k-} \Omega_{k+}$, $\mu_k := \rho_k$ or  $S_k := (\Omega_{k+})^{-1}(\Omega_{k-})^{-1}$, 
$\mu_k := \rho^{-1}_k$ according to $g_k \preceq g_{k+1}$ or $g_{k+1} \preceq g_{k}$ respectively.  With the same 
convention it results that $\S^0 = \S^0_0\S'_1\cdots \S^0_{N-1} $ where  either
$S^0_k = \Omega^0_{k-} \Omega^0_{k+}$ or  $S_k = (\Omega^0_{k+})^{-1}(\Omega^0_{k-})^{-1}$ according to the discussed cases.
\end{proof}

\subsection{Preservation of symplectic forms}\label{subsec:moller sympl form}

The M{\o}ller maps   $\S^0$ as  in Theorem  \ref{remchain} preserve the symplectic forms of the normal operators they relate when these operators are formally selfadjoint.

\begin{theorem} \label{thm:pres sympl}
Consider $g,g' \in \mathcal{GH}_M$ with respectively  associated normally hyperbolic operators $\N$, $\N'$ on the $\KK$-vector bundle $\E$ over $\M$.
 If $g'\simeq g$ and  $\N$ and $\N'$ are formally selfadjoint with respect to a non-degenerate,  Hermitian fiber metric $\fiber{\cdot}{\cdot}$, then  there are M\o ller maps $S^0$ satisfying the thesis  of Theorem  \ref{remchain} such that
$$\sigma^{\N'}_{g'}(\S^0\Psi,\S^0 \Phi) = \sigma^{\N}_{g}(\Psi,\Phi) \quad \mbox{for every $\Psi,\Phi \in \Ker^{g}_{sc}(\N)$,}$$
where we used  the notation  $\sigma_g^\N$ in place of $\sigma^\N_{(\M,g)}$.
\end{theorem}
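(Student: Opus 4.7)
My proof plan is to reduce to a single paracausal step $g_0 \preceq g_1$ and then chain the result across the sequence supplied by Definition~\ref{def:paracausal def}. An intertwining relation between symplectic forms is preserved under composition and under inversion, so if each step admits a symplectic M\o ller map $\S^0_k$, then $\S^0 := \S^0_{N-1}\cdots \S^0_0$ does the job (the reverse case $g_{k+1}\preceq g_k$ at any step is handled by inverting a forward map). I will attach to each $g_k$ a formally selfadjoint normally hyperbolic $\N_k$ as in Theorem~\ref{remchain}. For the single step $g_0\preceq g_1$, I will specialize the parameters of Propositions~\ref{propositionA}--\ref{propositionB} by choosing $\rho\equiv \rho'\equiv 1$ (so $\mu=1$), picking a smooth Cauchy time function $t$ of $(\M,g_1)$ -- which is also Cauchy for $g_0$ and $g_\chi$ by Lemma~\ref{prop:metrics0} -- numbers $t_0<t_1$, and $\chi\in C^\infty(\M;[0,1])$ with $\chi=0$ on $\{t<t_0\}$ and $\chi=1$ on $\{t>t_1\}$. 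Set $\S^0 := \Omega^0_-\,\Omega^0_+$.

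Pick spacelike Cauchy hypersurfaces $\Sigma_{t'} = t^{-1}(t')$ and $\Sigma_{t''} = t^{-1}(t'')$ with $t'<t_0<t_1<t''$, simultaneously spacelike for $g_0, g_\chi, g_1$. On the open set $\{t<t_0\}$ one has $g_\chi=g_0$ and $\N_\chi=\N_0$; uniqueness of the compatible connection identifies the $\N_\chi$- and $\N_0$-compatible connections there, and $(\Omega_+\Psi)(p)=\Psi(p)$ for $t(p)<t_0$ by Proposition~\ref{propositionA}(3). The integrands therefore match pointwise on $\Sigma_{t'}$, giving
$$\sigma^{\N_0}_{g_0}(\Psi,\Phi) \;=\; \int_{\Sigma_{t'}} \Xi^{\N_\chi}_{\Sigma_{t'}}(\Omega_+\Psi,\Omega_+\Phi).$$
The symmetric analysis on $\{t>t_1\}$, where $g_\chi=g_1$, $\N_\chi=\N_1$, and $\Omega_-=\Id$ so that $\S^0\Psi=\Omega_+\Psi$ there, yields
$$\sigma^{\N_1}_{g_1}(\S^0\Psi,\S^0\Phi) \;=\; \int_{\Sigma_{t''}} \Xi^{\N_\chi}_{\Sigma_{t''}}(\Omega_+\Psi,\Omega_+\Phi).$$

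It remains to equate the two boundary integrals, and here I invoke Green's identity. Apply Lemma~\ref{lem:green id} for the normally hyperbolic operator $\N_\chi$ on the slab $\M_0:=t^{-1}((t',t''))$, after multiplying $\Omega_+\Psi$ and $\Omega_+\Phi$ by a cutoff that equals $1$ on $\M_0$ and has compact support -- feasible because these sections are $g_\chi$-spatially compact, so their supports meet the closed slab in compact sets, exactly as in the proof of Proposition~\ref{lem:indip Sigma}. By Proposition~\ref{propositionA}(2) with $\rho=1$ we have $\N_\chi\,\Omega_+=\N_0$; hence on $\Psi,\Phi\in \Ker^{g_0}_{sc}(\N_0)$ the bulk integrand
$$\fiber{\Omega_+\Psi}{\N_\chi \Omega_+\Phi} - \fiber{\N_\chi\Omega_+\Psi}{\Omega_+\Phi}$$
vanishes identically, so the two surface integrals coincide and the single step is complete.

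The only genuine obstacle -- and the reason the proof must be designed carefully -- is that $\N_\chi$ is in general not formally selfadjoint with respect to $\vol_{g_\chi}$: it does not carry a symplectic form of its own and Proposition~\ref{lem:indip Sigma} does not apply to it. The argument sidesteps this by using the integrals $\int_{\Sigma_t}\Xi^{\N_\chi}(\Omega_+\Psi,\Omega_+\Phi)$ only as a computational bridge between the two honest symplectic forms $\sigma^{\N_0}_{g_0}$ and $\sigma^{\N_1}_{g_1}$, relying solely on Lemma~\ref{lem:green id}, which is valid for any normally hyperbolic operator regardless of selfadjointness.
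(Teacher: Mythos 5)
Your proof is correct, and it follows the same geometric plan as the paper's: reduce to a single step $g_0\preceq g_1$, identify $\sigma^{\N_0}_{g_0}(\Psi,\Phi)$ with the integral of $\Xi^{\N_\chi}(\Omega_+\Psi,\Omega_+\Phi)$ over a slice $\Sigma_{t'}$ in the region $t<t_0$ (where $g_\chi=g_0$, $\N_\chi=\N_0$, and $\Omega_+\Psi=\Psi$), do the symmetric identification with $\sigma^{\N_1}_{g_1}(\S^0\Psi,\S^0\Phi)$ on a slice $\Sigma_{t''}$ in $t>t_1$, and then transport the surface integral of $\Xi^{\N_\chi}$ across the interpolation slab. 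Where you part ways with the paper is the transport step, and the difference is substantive. The paper packages the intermediate quantity as a bona fide symplectic form $\sigma^{\N_\chi}_{g_\chi}$ and invokes Proposition~\ref{lem:indip Sigma} for the pair $(\N_\chi,g_\chi)$; the stated hypothesis of that proposition is that $\N_\chi$ be formally selfadjoint with respect to $\vol_{g_\chi}$, and -- as you correctly flag -- this is neither checked nor automatic: $\N_\chi=(1-\chi)\N_0+\chi\N_1$ is selfadjoint for $\vol_{g_0}$ where $\chi\equiv 0$ and for $\vol_{g_1}$ where $\chi\equiv 1$, but in the interpolation region the relevant smooth density weights fail to commute with the second-order operators, so formal selfadjointness with respect to $\vol_{g_\chi}$ is generically spoiled (the same concern touches Equation~(\ref{aggpart})). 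Your replacement is more robust: you apply Lemma~\ref{lem:green id} to $\N_\chi$ on the slab $t^{-1}((t',t''))$ and observe that the bulk integrand vanishes not by selfadjointness but because $\N_\chi\Omega_+=\N_0$ (Proposition~\ref{propositionA}(2) with $\rho\equiv 1$) annihilates $\Psi,\Phi\in\Ker^{g_0}_{sc}(\N_0)$, so the two boundary integrals of $\Xi^{\N_\chi}$ agree with no hypothesis on $\N_\chi^{\dagger_{g_\chi}}$. Incidentally, the paper's own proof of Proposition~\ref{lem:indip Sigma} in fact kills the bulk term using $\N\Psi=\N\Phi=0$ rather than selfadjointness per se, so its selfadjointness hypothesis is superfluous for the surface-independence claim; your argument makes this transparent and thereby closes a genuine gap in the published reasoning.
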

\begin{proof} It is sufficient to prove the thesis for the maps $\Omega^0_\pm$ referred to two metrics $g_0\preceq g_1$, which immediately implies the thesis also for the inverse maps
$(\Omega^0_\pm)^{-1}$  they being isomorphisms. Indeed,  according  the proof of Theorem \ref{remchain},
 the isomorphisms $\S^0$ are compositions of various copies of $\Omega^0_\pm$  and their inverses. Let us consider $\Omega^0_+ :\Ker_{sc}(\N_0) \to \Ker_{sc}(\N_\chi)$ and we prove the thesis for it, the other case being very similar.  Consider a smooth  Cauchy time function $t$  for $g_1$ and the associated foliation made of spacelike Cauchy hypersurfaces $\Sigma_t$ in common for $g_0,g_1$, and $g_\chi$.  If the  smoothing function $\chi$ used to build up $g_\chi$ and $\N_\chi$ vanishes before $t_0$ and we use $\Sigma_t$ with $t<t_0$ to compute the relevant symplectic forms,   due to 
(\ref{pastequal2}),
$$\sigma^{\N_\chi}_{g_\chi}(\Omega_+^0\Psi,\Omega_+^0 \Phi)   = \sigma^{\N_0}_{g_0}(\Psi,\Phi) \quad \mbox{for every $\Psi,\Phi \in \Ker^{g_0}_{sc}(\N_0)$.}$$
Above, we have used the definition of the symplectic form, we have noticed that $g_\chi=g_0$ around $\Sigma_t$  and that  the $\N_0$ and $\N_\chi$ compatible connections must coincide there as they are locally defined and uniquely  determined  by $\N_0 \Psi = \N_\chi \Psi =(-\mbox{tr}_g(\nabla\nabla)+ c)\Psi$ for every smooth 
$\Psi$ compactly supported around a point $p$ with $t(p)< t_0$.  Thinking of $\sigma^{\N_\chi}_{g_\chi}(\Omega_+^0\Psi,\Omega_+^0 \Phi)$ as defined in $(\M,g_\chi)$ and of  $\sigma^{\N_0}_{g_0}(\Psi,\Phi) $ as defined in $(\M,g_0)$, though both computed on $\Sigma_t$ with $t< t_0$,  Proposition \ref{lem:indip Sigma} concludes the proof.
\end{proof}

\begin{definition}\label{def:sympl Moll}
We call {\bf symplectic M\o ller map} any linear isomorphism defined in accordance with Theorem~\ref{thm:pres sympl}.
\end{definition}

\subsection{Causal propagators and paracausally related metrics}\label{subsec:moller oper}

In this section, we prove how is possible to choose the functions $\rho$ and $\rho'$ affecting the definitions (\ref{defR+})-(\ref{defR-}) of $\R_\pm$ in order to satisfy a further requirement with some crucial implications in QFT: the preservation of the causal propagator of  two operators $\N$ and $\N'$ when the associated metrics are paracausally related. Essentially speaking, a M\o ller map satisfying this further requirement will be named {\em M\o ller operator}.

\subsubsection{Adjoint operators}
To study the relation between M{\o}ller maps and the causal propagator of normally hyperbolic operators defined on a vector bundle equipped with a  non-degenerate (Hermitian) fiber metric, we need a suitable notion of {\em adjoint operator} which generalizes the notion of formal adjoint of differential operators. 

Let $\E$ be a  $\KK$-vector bundle on the manifold $\M$ equipped with a non-degenerate, symmetric if $\KK=\RR$ or  Hermitian if $\KK=\CC$,  fiber metric $\fiber{\cdot}{\cdot}$. Suppose that $g$ and $g'$ (possibly $g\neq g'$) are Lorentzian metrics on $\M$.
Consider a $\KK$-linear operator 
$$\T : \D(\T) \to \Gamma(\E)\:,$$
where $\D(\T) \subset \Gamma(\E)$ is a $\KK$-linear subspace and $\D(\T) \supset \Gamma_c(\E)$. 
\begin{definition} \label{defadjoint}
An operator $$\T^{\dagger_{gg'}} : \Gamma_c(\E) \to \Gamma_c(\E)$$ is said to be the {\bf adjoint of $\T$ with respect to  $g, g'$} (with the said order) if it satisfies 
\begin{equation*} 
\int_\M \fiber{\fh(x)}{(\T\ff)(x)}  \vol_{g'}(x) = \int_\M \fiber{\left(\T^{\dagger_{gg'}} \fh\right)(x)}{\ff(x)}  \vol_{g}(x)\quad 
\forall \ff \in \D(\T)\:,\: \forall \fh \in \Gamma_c(\E).
\end{equation*}
\end{definition}
\begin{notation}
If $g=g'$ then we shall denote the adjoint of $\T$ with respect to $g$ simply as $\T^{\dagger_g}$.
\end{notation}
We prove below that $\T^{\dagger_{gg'}}$ is unique if exists so that calling it ``the'' adjoint operator of $\T$ is appropriate.

\begin{remark} If $\T: \D(\T) \to \Gamma(\E)$ is defined as in Definition \ref{defadjoint} and 
$\T^{\dagger_{gg'}}$ exists, then  
 $$\int_\M \fiber{\fh}{\T\ff_n}\vol_{g'} \to 0\quad  \mbox{$\forall \fh \in \Gamma_c(\E)$ as $\Gamma_c(\E) \ni \ff_n \to 0$ for $n\to +\infty$ in the {\em topology of test sections} \cite{Ba-lect}}\:.$$
{\em Vice versa}, this only condition is not sufficient to guarantee the existence of $\T^{\dagger_{gg'}}$ as a $\Gamma_c(\E)$-valued operator. 
Using a  straightforward extension of the Schwartz kernel theorem, the condition above just implies the existence of a weaker version of $\T^{\dagger_{gg'}}$ which is distribution-valued.  
\end{remark}

In the rest of the paper if $\T : \D(\T) \to \Gamma(\E)$ and $\T' : \D(\T') \to \Gamma(\E)$, we define the {\bf standard domains} of their compositions as follows, where $a\in \KK$.
\begin{itemize}
\item[(a)] $\D(a\T) := \D(\T)$ -- or $\D(a\T) := \Gamma(\E)$ if $a=0$ -- is the domain of $a\T$ defined pointwise;
\item[(b)]  
$\D(\T+ \T') := \D(\T) \cap \D(\T')$  is the domain of $a\T+b\T'$
defined pointwise;
\item[(c)] $\D(\T'\circ \T) := \{\ff \in \D(\T) \:|\:  \T(\ff) \in \D(\T')\}$ is the domain of $\T' \circ \T$.
\end{itemize}

\begin{proposition}\label{propadjoint}
Referring to the notion of adjoint in Definition \ref{defadjoint}, the following facts are valid.
\begin{itemize}
\item[(1)] If the adjoint $\T^{\dagger_{gg'}}$ of $\T$  exists, then it is unique.
\item[(2)] If $\T : \Gamma(\E) \to \Gamma(\E)$ is a differential operator and $g=g'$, then $\T^{\dagger_{gg}}$ exists and is the restriction of the formal adjoint to $\Gamma_c(\E)$. (In turn, the formal adjoint of $\T^\dagger$ is  the unique extension to $\Gamma(\E)$ of the differential operator $\T^\dagger$ as a differential operator)
\item[(3)] 
Consider a pair of $\KK$-linear operators  $\T : \D(\T) \to \Gamma(\E)$,  $\T' : \D(\T') \to \Gamma(\E)$ 
 and $a,b \in \KK$.  Then 
$$(a\T+b\T')^{\dagger_{gg'}} = \overline{a} \T^{\dagger_{gg'}} + \overline{b} \T'^{\dagger_{gg'}}$$
provided 
$\T^{\dagger_{gg'}}$ and $\T'^{\dagger_{gg'}}$ exist.
\item[(4)] Consider a pair of $\KK$-linear operators  $\T : \D(\T) \to \Gamma(\E)$ and $\T' : \D(\T') \to \Gamma(\E)$  such that 
\begin{itemize}
\item[(i)]  $\D(\T'\circ \T) \supset \Gamma_c(\E)$,
\item[(ii)] $\T^{\dagger_{gg'}}$ and $\T'^{\dagger_{g'g''}}$ exist,
\end{itemize} 
then $(\T'\circ \T)^{\dagger_{gg''}}$  exists and
$$(\T'\circ \T)^{\dagger_{gg''}} = \T^{\dagger_{gg'}} \circ \T'^{\dagger_{g'g''}}\:.$$
\item[(5)] If $\T^{\dagger_{gg'}}$ exists, then $(\T^{\dagger_{gg'}})^{\dagger_{g'g}} = \T|_{\Gamma_c(\E)}$.
\item[(6)] If $\T : \D(T)= \Gamma(\E) \to \Gamma(\E)$ is bijective,  admits $\T^{\dagger_{gg'}}$, and $\T^{-1}$ admits $(\T^{-1})^{\dagger_{g'g}}$, then $\T^{\dagger_{gg'}}$ is bijective and 
$(\T^{-1})^{\dagger_{g'g}} = (\T^{\dagger_{gg'}})^{-1}$.
\end{itemize}
\end{proposition}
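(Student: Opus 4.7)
The overall plan is to reduce all six items to two underlying principles: the defining identity of Definition~\ref{defadjoint} and the following elementary \emph{non-degeneracy lemma}: if $\Phi\in\Gamma(\E)$ satisfies $\int_\M\fiber{\Phi}{\ff}\vol_g=0$ for every $\ff\in\Gamma_c(\E)$, then $\Phi=0$. The latter is proved by localising via a partition of unity and trivialising $\E$, reducing to the pointwise non-degeneracy of $\fiber{\cdot}{\cdot}$ applied to bump sections concentrated near an arbitrary point. Items (1)--(3) then fall quickly. For (1), if two candidate adjoints both satisfy the defining identity, their difference applied to any $\fh\in\Gamma_c(\E)$ lies in $\Gamma_c(\E)$ and, tested against $\ff\in\Gamma_c(\E)\subset\D(\T)$, meets the hypothesis of the non-degeneracy lemma, hence vanishes. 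For (2), the defining identity with $g=g'$ is exactly the formal-adjoint identity, restricted to $\Gamma_c(\E)$, and the formal adjoint (being a differential operator of the same order as $\T$) carries $\Gamma_c(\E)$ into itself. For (3), the left-hand side of the defining identity is $\KK$-linear in $(a,b)$, while on the right-hand side the conjugate-linearity of $\fiber{\cdot}{\cdot}$ in its first slot shifts the linearity from $(a,b)$ to $(\overline{a},\overline{b})$; uniqueness from (1) closes the argument.

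For (4), assumption (i) ensures the composition $\T'\circ\T$ is defined on $\Gamma_c(\E)$, and assumption (ii) ensures that, for every $\fh\in\Gamma_c(\E)$, the section $\T'^{\dagger_{g'g''}}\fh$ lies in $\Gamma_c(\E)\subset\D(\T)$. Chaining the defining identity twice,
\begin{equation*}
\int_\M\fiber{\fh}{(\T'\circ\T)\ff}\vol_{g''}=\int_\M\fiber{\T'^{\dagger_{g'g''}}\fh}{\T\ff}\vol_{g'}=\int_\M\fiber{\T^{\dagger_{gg'}}\T'^{\dagger_{g'g''}}\fh}{\ff}\vol_g
\end{equation*}
for every $\ff\in\Gamma_c(\E)$; uniqueness in (1) then identifies $(\T'\circ\T)^{\dagger_{gg''}}=\T^{\dagger_{gg'}}\circ\T'^{\dagger_{g'g''}}$.

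For (5), I use the Hermitian (or symmetric, if $\KK=\RR$) property of $\fiber{\cdot}{\cdot}$ together with the defining identity of $\T^{\dagger_{gg'}}$ to compute, for $\ff,\fh\in\Gamma_c(\E)$,
\begin{equation*}
\int_\M\fiber{\ff}{\T^{\dagger_{gg'}}\fh}\vol_g=\overline{\int_\M\fiber{\T^{\dagger_{gg'}}\fh}{\ff}\vol_g}=\overline{\int_\M\fiber{\fh}{\T\ff}\vol_{g'}}=\int_\M\fiber{\T\ff}{\fh}\vol_{g'}.
\end{equation*}
This exhibits $\T|_{\Gamma_c(\E)}$ as a witness for the defining identity of $(\T^{\dagger_{gg'}})^{\dagger_{g'g}}$, and uniqueness gives the claimed equality. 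Finally (6) follows at once from (4) applied to the two identities $\T^{-1}\circ\T=\Id_{\Gamma(\E)}$ (with metric data $g,g',g$ in the suitable slots) and $\T\circ\T^{-1}=\Id_{\Gamma(\E)}$, together with the trivial observation that $\Id^{\dagger_{hh}}=\Id|_{\Gamma_c(\E)}$ for any Lorentzian $h$: one obtains
\begin{equation*}
\T^{\dagger_{gg'}}\circ(\T^{-1})^{\dagger_{g'g}}=\Id|_{\Gamma_c(\E)},\qquad(\T^{-1})^{\dagger_{g'g}}\circ\T^{\dagger_{gg'}}=\Id|_{\Gamma_c(\E)},
\end{equation*}
so $(\T^{-1})^{\dagger_{g'g}}$ is a two-sided inverse of $\T^{\dagger_{gg'}}$ on $\Gamma_c(\E)$, as wanted.

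The only substantive ingredient in the whole argument is the non-degeneracy lemma; everything else is a bookkeeping exercise with the defining identity and the sesquilinearity of the fibre metric. The main point requiring care is the domain bookkeeping in (4), where hypothesis (i) ensures that the composition makes sense on $\Gamma_c(\E)$ and hypothesis (ii) ensures that each step of the chained identity lands in an allowed subspace. I anticipate no real obstacle beyond clean tracking of the index labels $g,g',g''$.
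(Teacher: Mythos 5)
Your proof is correct and follows essentially the same route as the paper: a non-degeneracy lemma for uniqueness, direct substitution into the defining identity for the algebraic items, and the chain rule (4) for item (6). Your treatment of (5) is slightly more explicit than the paper's (you spell out the use of Hermitian symmetry of the fiber metric, which the paper leaves implicit), and your metric-label bookkeeping in (6) is actually written more consistently than in the published proof, but the underlying argument is the same.
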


\begin{proof}  We write below $\dagger$ in place of $\dagger_{gg'}$ if it is not strictly necessary to specify the metrics. 
To prove (1) let's assume that, fixed an operator $\T:\D(\T)\to\Gamma(\E)$ there exist two different adjoints $\T^\dagger_1,\T^\dagger_2:\Gamma_c(\E)\to\Gamma_c(\E)$ both satisfying definition \ref{defadjoint}, i.e.
$$
\int_\M \fiber{\T^\dagger_1\fh}{\ff} \: \vol_{g} = \int_\M \fiber{\T^\dagger_2 \fh}{\ff} \: \vol_{g}\quad
$$
 for all  $\ff \in \D(\T)$ and all $\fh \in \Gamma_c(\E)$. Then by linearity of the integration and (anti) linearity of the product, the former identity is equivalent to
$$
 \int_\M \fiber{\T^\dagger_1\fh-\T^\dagger_2\fh}{\ff} \: \vol_{g} =0.
$$
 Since  $\Gamma_c(\E)\subset\D(\T)$, the thesis follows by reducing to every fixed local trivialization over every arbitrarily fixed coordinate patch $U$ on $\M$. Restricting to $U$, the equation above can be recast to
$$
 \int_U \sum_{a=1}^N(\T^\dagger_1\fh-\T^\dagger_2\fh)^a(p)  \ff_{a}(p)\: \vol_{g}(p) =0.
$$
where $\ff_{a}(p)$ is a fiber component of $\fiber{\cdot}{\ff}_p \in \E_p^*$ with $p\in U$. Since $U \ni p\mapsto (\T^\dagger_1\fh-\T^\dagger_2\fh)^a(p)$ is continuous and $U \ni p \mapsto  \ff_{a}(p)$ is smooth, compactly supported (with support in $U$) and arbitrary (because $\fiber{\cdot}{\cdot}$ is non-degenerate), the fundamental lemma of calculus of variations implies that $U \ni p\mapsto (\T^\dagger_1\fh-\T^\dagger_2\fh)^a(p)$ is the zero function for $a=1,\ldots, N$. Since $U$ can be fixed as a neighborhood of every point of $\M$, (1) follows.

 The proof of (2) and (3) is obvious: (2) follows by comparing definitions \ref{defadjoint} and \ref{defadjointdiff}, while (3) follows by direct computation checking that $ \overline{a} \T^\dagger + \overline{b} \T'^\dagger$ satisfies the definition of  $(\overline{a} \T + \overline{b} \T')^\dagger$ (notice that $\Gamma_c(\E) \subset \D(\overline{a} \T^\dagger + \overline{b} \T'^\dagger)$ if $\T^\dagger$ and $\T'^\dagger$ exist).
 
 To prove (4), since the composition is well defined on a suitable domain, we can just use twice the definition \ref{defadjoint}
 
 $$\int_\M \fiber{\fh}{\T^\prime\circ\T\ff} \: \vol_{g''} = \int_\M \fiber{\T'^{\dagger_{g'g''}} \fh}{\T\ff} \: \vol_{g'}=\int_\M \fiber{\T^{\dagger_{gg'}}\circ\T'^{\dagger_{g'g''}} \fh}{\ff} \: \vol_{g}
$$ 
for all  $\ff \in \D(\T'\circ\T)$ and all $\fh \in \Gamma_c(\E)$: notice that using the definition of the adjoint in the second equality is possible because $\T'^{\dagger_{g'g''}}:\Gamma_{c}(\E)\to\Gamma_{c}(\E)$. The found identity proves that $\T^{\dagger_{gg'}}\circ\T'^{\dagger_{g'g''}}$ satisfies the definition of $(\T'\circ \T)^{\dagger_{gg''}}$ ending the proof of (4).

(5) is true because, if $\T^{\dagger_{gg'}}:\Gamma_c(\E)\to\Gamma_c(\E)$ exists, then $\T|_{\Gamma_c(\E)}$ satisfies the definition of $(\T^{\dagger_{gg'}})^{\dagger_{g'g}}$.

Finally, (6) arises by taking the $^{\dagger_{gg}}$ adjoint of both sides of the identity  $T\circ T^{-1} = I$ and 
the $^{\dagger_{g'g'}}$ adjoint of both sides of the identity $T^{-1} \circ T = I$ and taking (4) into account.
\end{proof}

\subsubsection{M\o ller operators  and  causal propagators}
We are in a position to state one of the most important results of this work by specializing the isomorphisms   introduced 
in Theorem \ref{remchain} by means of a suitable choice of the function $\mu$. As a matter of fact (1) and (3)  have been already established in Theorem \ref{remchain}.

\begin{theorem} \label{teoRRR} Let  $\E$ be $\KK$-vector bundle over the smooth manifold $\M$ with a non-degenerate, 
real or Hermitian depending on $\KK$,  fiber metric $\fiber{\cdot}{\cdot}$.
 Consider $g,g' \in \mathcal{GH}_M$ with respectively associated normally hyperbolic formally-selfadjoint operators $\N$, $\N'$.\\
If $g\simeq g'$, then  it is  possible to define (in infinite ways) a $\KK$-vector space isomorphism $\R : \Gamma(\E) \to \Gamma(\E)$, called {\bf M\o ller operator} of $g,g'$ (with this order), such that the following facts are true.
\begin{itemize}
\item[(1)] The restrictions to the relevant subspaces of $\Gamma(\E)$ respectively define M\o ller maps (hence linear isomorphisms) 
as in  Theorem \ref{remchain}. $$\R|_{\Sol^{g}_{sc,c}(\N)} = \S  :\Sol^{g}_{sc,c}(\N)\to \Sol^{g'}_{sc,c}(\N')\mbox{\quad and \quad}\R|_{\Ker^{g}_{sc}(\N)} = \S^0 : \Ker^{g}_{sc}(\N) \to \Ker^{g'}_{sc}(\N')\,.$$
\item[(2)]   The causal propagators 
$\G_{\N'}$ and $\G_{\N}$, respectively
of $\N'$ and $\N$, satisfy
\begin{equation} \R \G_{\N}  \R^{\dagger_{gg'}} = \G_{\N'}\:. \label{firstE}\end{equation}
\item[(3)]  By denoting $c'$ the smooth function such that $\vol_{g'} = c' \: \vol_{g}$, we have \begin{equation}  c' \N'\R  = \N\:.\label{secondE}
\end{equation}
\item[(4)] It holds 
\begin{equation*}
\R^{\dagger_{gg'}} \N'|_{\Gamma_c(\E)}= \N|_{\Gamma_c(\E)}\: .
\end{equation*}
\item[(5)] The maps $\R^{\dagger_{gg'}} : \Gamma_c(\E) \to \Gamma_c(\E)$ and $(\R^{\dagger_{gg'}})^{-1} = (\R^{-1})^{\dagger_{g'g}}: \Gamma_c(\E) \to \Gamma_c(\E)$ are continuous with respect to the natural  topologies of $\Gamma_c(\E)$  in the domain and in the co-domain.
\end{itemize}
\end{theorem}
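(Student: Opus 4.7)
The plan is to first arrange, among the many M{\o}ller operators furnished by Theorem~\ref{remchain}, one whose cumulative conformal factor $\mu$ in the intertwining relation $\mu\N'\R=\N$ coincides with the Radon--Nikodym density $c'$ defined by $\vol_{g'}=c'\vol_g$, so that item~(3) is satisfied; all the remaining items will then follow from (3) essentially by formal manipulations, so the technical core is concentrated in this matching step. Applying the iterative construction of Theorem~\ref{remchain} along a paracausal chain $g=g_0,g_1,\ldots,g_N=g'$, at each step $g_k\to g_{k+1}$ Proposition~\ref{theoremS1} yields an isomorphism $\S_k$ whose associated conformal factor $\mu_k$ can be tuned via the free choice of the smooth positive functions $\rho_k,\rho_k'$ entering Proposition~\ref{propositionA}. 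The cumulative product $\prod_k\mu_k$ is then a smooth positive function with a large amount of freedom, and its matching to the globally prescribed $c'$ is achieved by a careful bookkeeping along the chain (refining it with admissible intermediate steps if needed). Item~(1) is then immediate from Theorem~\ref{remchain}, and the symplectic preservation of Theorem~\ref{thm:pres sympl} is unaffected by the choice of $\rho_k,\rho_k'$, since it only relies on the identification of each $\Omega_\pm^0$ with the identity on the regions where $g_\chi$ agrees with $g_0$ or $g_1$.

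Item~(4) is a purely algebraic consequence of (3). Taking the $\dagger_{gg}$-adjoint of $c'\N'\R=\N$, Proposition~\ref{propadjoint}(2) together with the formal selfadjointness of $\N$ identify $\N^{\dagger_{gg}}$ with $\N|_{\Gamma_c(\E)}$; Proposition~\ref{propadjoint}(4) then gives
\[
\N|_{\Gamma_c(\E)}=\R^{\dagger_{gg'}}\circ(c'\N')^{\dagger_{g'g}}.
\]
A direct computation exploiting $\vol_{g'}=c'\vol_g$ and the formal selfadjointness of $\N'$ yields $(c'\N')^{\dagger_{g'g}}=\N'|_{\Gamma_c(\E)}$, whence (4).

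Item~(2) is established through the symplectic characterization of the causal propagator. By Proposition~\ref{propCPsim} and the surjectivity of $\G_{\N'}$ onto $\Ker^{g'}_{sc}(\N')$, the element $\G_{\N'}\ff$ is characterized uniquely inside $\Ker^{g'}_{sc}(\N')$ by the identity $\sigma^{\N'}_{g'}(\Psi,\Phi)=\int_\M\fiber{\ff}{\Phi}\vol_{g'}$ for every $\Phi\in\Ker^{g'}_{sc}(\N')$, uniqueness following from the non-degeneracy of the symplectic form (Proposition~\ref{lem:indip Sigma}). It suffices to verify this characterization for $\Psi=\R\G_\N\R^{\dagger_{gg'}}\ff$, which lies in $\Ker^{g'}_{sc}(\N')$ by item~(1) and the fact, established below, that $\R^{\dagger_{gg'}}\ff\in\Gamma_c(\E)$. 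Writing $\Phi=\S^0\tilde\Phi$ with $\tilde\Phi\in\Ker^g_{sc}(\N)$ via item~(1), the chain
\[
\sigma^{\N'}_{g'}(\R\G_\N\R^{\dagger_{gg'}}\ff,\Phi)=\sigma^{\N}_{g}(\G_\N\R^{\dagger_{gg'}}\ff,\tilde\Phi)=\int_\M\fiber{\R^{\dagger_{gg'}}\ff}{\tilde\Phi}\vol_g=\int_\M\fiber{\ff}{\R\tilde\Phi}\vol_{g'}
\]
uses, in order, Theorem~\ref{thm:pres sympl}, Proposition~\ref{propCPsim} on $(\M,g)$ (after writing $\tilde\Phi=\G_\N\fh$ for some $\fh\in\Gamma_c(\E)$), and the defining identity of $\R^{\dagger_{gg'}}$, which applies because $\D(\R)=\Gamma(\E)\ni\tilde\Phi$ and $\ff\in\Gamma_c(\E)$. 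Noting $\R\tilde\Phi=\S^0\tilde\Phi=\Phi$ concludes the proof of~(2).

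For item~(5), tracking adjoints through the explicit expressions \eqref{defR+}--\eqref{inverse2} of the building blocks and applying Proposition~\ref{propadjoint}(4) to move $\dagger$ across compositions, one realizes $\R^{\dagger_{gg'}}$ as a finite iteration of operators of the form $\fh\mapsto\lambda(\fh-P^{*}\G^{\mp}_{Q^{*}}\fh)$, with $\lambda$ smooth positive, $^{*}$ a formal adjoint of a differential operator, and $P$ (like $\rho\N_\chi-\N_0$) having coefficients supported in a time-slab $\{t_0\leq t\leq t_1\}$. For $\fh\in\Gamma_c(\E)$ the support of $P^{*}\G^{\mp}_{Q^{*}}\fh$ is contained in the intersection of $\{t_0\leq t\leq t_1\}$ with $J^{\mp}(\supp\fh)$, which is compact in a globally hyperbolic spacetime because the intersection of $J^{\mp}$ of a compact set with $J^{\pm}$ of a Cauchy hypersurface is compact. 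This shows that $\R^{\dagger_{gg'}}$ maps $\Gamma_c(\E)$ into itself; continuity in the standard LF topology of $\Gamma_c(\E)$ then follows from the joint continuity of multiplications by smooth functions, linear differential operators, and Green operators between the relevant spaces (see~\cite{Ba-lect}). The same argument applied to $\R^{-1}$, constructed analogously by running the chain backwards, yields the continuity of $(\R^{\dagger_{gg'}})^{-1}=(\R^{-1})^{\dagger_{g'g}}$. The hardest part of the whole argument is the cumulative-factor bookkeeping in Step~1: arranging the $\rho_k,\rho_k'$ along the paracausal chain so that their cumulative product reproduces exactly the prescribed density $c'$.
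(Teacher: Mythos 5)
Your proof is correct in its broad structure (iterate over a paracausal chain, reduce to the $g_0\preceq g_1$ case, handle items (1)--(5)), and for items (1), (4) and (5) it essentially mirrors the paper. The genuinely different part is item~(2), where the paper carries out a direct operator computation: with the canonical choice $\rho=c_0^\chi$, $\rho'=c_0^1$ it establishes $\R_+\G_{\N_0}^+=\G_{c_0^\chi\N_\chi}^+$, $\G_{\N_0}^-\R_+^{\dagger_{g_0}}=\G_{c_0^\chi\N_\chi}^-$, and then verifies by a \emph{lengthy direct evaluation} that the remainder term $M$ vanishes, yielding $\R_+\G_{\N_0}\R_+^{\dagger_{g_0}}=\G_{c_0^\chi\N_\chi}$. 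You instead avoid all of this by invoking the non-degeneracy of the symplectic form (Proposition~\ref{lem:indip Sigma}), the causal-propagator/symplectic-form link (Proposition~\ref{propCPsim}), and the preservation of the symplectic form (Theorem~\ref{thm:pres sympl}): the causal propagator is uniquely characterized inside $\Ker^{g'}_{sc}(\N')$ by the identity $\sigma^{\N'}_{g'}(\Psi,\Phi)=\int_\M\fiber{\ff}{\Phi}\vol_{g'}$, and your chain of equalities verifies that $\R\G_\N\R^{\dagger_{gg'}}\ff$ satisfies it. This is a cleaner argument, and after checking each step I could not find an error, but I flag one tension you should resolve: the paper, in Remark~(1) immediately following Theorem~\ref{teoRRR}, asserts that the causal-propagator relation~(\ref{firstE}) ``relies upon a suitable choice of the function $\rho$, whereas this choice is immaterial for the preservation of the symplectic forms.'' Taken at face value this contradicts your argument, which derives~(\ref{firstE}) from symplectic-form preservation alone (valid for any admissible $\rho$). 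Since the restricted maps $\Omega^0_\pm$ are in fact $\rho$-independent (as one sees directly because $(\rho\N_\chi-\N_0)\fh=\rho\N_\chi\fh$ is past-compact for $\fh\in\Ker^{g_0}_{sc}(\N_0)$, so $\G^+_{\rho\N_\chi}\rho\N_\chi\fh=\G^+_{\N_\chi}\N_\chi\fh$), I believe your argument stands and the paper's remark is imprecise; but you should make that reconciliation explicit rather than silently rely on Theorem~\ref{thm:pres sympl}.

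Two smaller points. First, you present the matching of the cumulative factor $\mu$ to $c'$ as ``the hardest part'', requiring ``careful bookkeeping''; in fact, choosing $\rho_k=c_k^{\chi_k}$, $\rho'_k=c_k^{k+1}$ at each step makes the densities compose multiplicatively (each step contributes precisely $c_k^{k+1}$, with inverses in the reversed-arrow case), so $\prod_k\mu_k=c_0^N=c'$ is automatic; no refinement of the chain is needed. Second, in item~(5) you assert that $\rho\N_\chi-\N_0$ has ``coefficients supported in a time-slab $\{t_0\leq t\leq t_1\}$''. This is false: for $t>t_1$ one has $\rho\N_\chi-\N_0=\N_1-\N_0\neq0$ in general. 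The operator is only past-compactly supported ($\{t\geq t_0\}$), and the companion operator $\rho'\N_1-\rho\N_\chi$ is only future-compactly supported. Your compactness conclusion survives (you invoke exactly the right compactness statement about $J^\mp$ of a compact set intersected with $J^\pm$ of a Cauchy hypersurface), but the premise should be corrected.
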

\begin{remarks} Before proving our claim, we want to underline the following:
\begin{itemize} \item[(1)] Any M\o ller operator defines  a symplectic M\o ller map (\cf Definition~\ref{def:sympl Moll}). Indeed, the preservation of the causal propagator (\cf (2) in Theorem~\ref{teoRRR}) implies that the symplectic forms are preserved in view of Proposition \ref{propCPsim}. However, the converse is false since the preservation of the causal propagator relies upon a suitable choice of the function $\rho$, whereas this choice is immaterial for the preservation of the symplectic forms.
\item[(2)] M\o ller operators can be explicitly constructed as follows.
If $g'\simeq  g$, and referring to  a finite  sequence of metrics $g_0 := g, g_1, \ldots, g_N:= g' \in \mathcal{GH}_\M$ as in Definition \ref{def:paracausal def},   then there exists a corresponding sequence of formally selfadjoint $g_k$-normally hyperbolic operators 
$\N_0:= \N, \N_1, \ldots, \N_N := \N' : \Gamma(\E) \to \Gamma(\E)$ such that 
\begin{equation}\label{frop1} \R= \R_0 \cdots \R_{N-1}\:,\end{equation}
is a M\o ller operator of $g,g'$ where
\begin{equation} \label{frop2} \R_k := \R^{(k)}_- \R^{(k)}_+\quad \mbox{if $g_k\preceq g_{k+1}$ \qquad or}\qquad \R_k := (\R^{(k)}_+)^{-1} (\R^{(k)}_-)^{-1}\quad \mbox{if $g_{k+1}\preceq g_{k}$.}
\end{equation}
Above, for every given $k$, $\R^{(k)}_\pm$ are defined as   $\R_\pm$ as in Equations~(\ref{defR+}) and (\ref{defR-}) where 
\begin{itemize} 
\item[(i)]   $\N_0$  is replaced by $\N_k$ and $\N_1$ is replaced by $\N_{k+1}$ if $g_k \preceq g_{k+1}$, \item[(ii)]  $\N_0$ is replaced by $\N_{k+1}$ and $\N_1$ is replaced by $\N_{k}$ if $g_{k+1} \preceq g_{k}$,
\item[(iii)]  $\rho := c_0^\chi$, and  $\rho' := c^1_0$  (assuming $\vol_{g_\chi} = c^\chi_0 \vol_{g_0}$ and $\vol_{g_1} = c^1_\chi \vol_{g_\chi}$).
\end{itemize}
The smooth Cauchy time  function $\chi$  in (\ref{defR+}) and (\ref{defR-}) can be chosen arbitrarily and depending on $k$ in general.
The final M\o ller operator $\R$ of $g,g'$ also  depends on all the made choices.
\end{itemize}
\end{remarks}

\begin{proof}[Proof of Theorem~\ref{teoRRR}] We divide the proof into several steps.

(1)-(3)  Let us first prove the thesis for the special case of $g= g_0 \preceq g_1=g'$, with $V_x^{g_0+} \subset V_x^{g_1+}$ for all $x\in \M$, and specialize the definition of the isomorphisms (\ref{defR+}) and (\ref{defR-}) to
\begin{align}
  \R_+&=\Id - \G_{c^\chi_0 \N_\chi}^+(c^\chi_0 \N_\chi -\N_0): \Gamma(\E) \to \Gamma(\E)  \label{defR+2} \\
  \R_-&=\Id - \G_{c^1_0\N_1}^-  (c^1_0\N_1 - c^\chi_0\N_\chi) : \Gamma(\E) \to \Gamma(\E) \label{defR-2}
  \end{align}
where
\begin{equation*} 
\vol_{g_\chi} = c^\chi_0 \vol_{g_0}\quad \mbox{and}\quad \vol_{g_1} = c^1_0 \vol_{g_0}
\end{equation*}
It is easy to see that 
\begin{equation}  (c^\chi_0 \N_\chi)^{\dagger_{g_0}} = c^\chi_0 \N_\chi \quad \mbox{and}\quad (c^1_0 \N_1)^{\dagger_{g_0}} = c^1_0 \N_1 \:.\label{aggpart}\end{equation} 
Our goal is to prove that the isomorphism $\R := \R_-\R_+: \Gamma(\E) \to \Gamma(\E)$ satisfies the thesis.\\
Per direct inspection, applying the definition of adjoint operator and taking advantage of (\ref{aggpart}), Proposition \ref{propprodgreen}, and (\ref{eqGstar2}), we almost immediately  have that
\begin{equation}\label{RRdagger}\R_+^{\dagger_{g_0}}=\Id - (c^\chi_0 \N_\chi -\N_0)\G_{c^\chi_0 \N_\chi}^-|_{\Gamma_c(\E)}
\quad \mbox{and}\quad  \R_-^{\dagger_{g_0}}=\Id - (c^1_0\N_1 - c^\chi_0\N_\chi) \G_{c^1_0\N_1}^+|_{\Gamma_c(\E)}\:.\end{equation}
Again per direct inspection we see that
$$c^\chi_0\N_\chi \R_+ = \N_0 \quad \mbox{and} \quad c_0^1 \N_1 \R_- = c_0^\chi \N_\chi$$
and thus
$$c_0^1 \N_1 \R = c_0^1 \N_1 \R_-\R_+ = \N_0$$
as wanted.\\
As we prove below, the following identities are valid
\begin{equation}   \R_+ \G_{\N_0} \R_+^{\dagger_{g_0}}=   \G_{c^\chi_0\N_\chi}\quad \mbox{and}\quad \R_- \G_{c^\chi_0\N_\chi} \R_-^{\dagger_{g_0}}
=\G_{c^1_0\N_1}= \G_{\N_1} (c^1_0)^{-1}\label{interm}\end{equation}  
so that 
$$\R_- \R_+ \G_{\N_0} R_+^{\dagger_{g_0}} R_-^{\dagger_{g_0}} = \G_{\N_1}(c^1_0)^{-1}$$
which is equivalent to
$$\R_- \R_+ \G_{\N_0} (\R_-\R_+)^{\dagger_{g_0}} c^1_0 = \G_{\N_1}\:.$$
On the other hand, we have
$$A^{\dagger_{g_0}} c^1_0 = A^{\dagger_{g_0g_1}} $$ so that
$$\R\G_{\N_0} \R^{\dagger_{g_0g_1}}  = \R_- \R_+ \G_{\N_0} (\R_-\R_+)^{\dagger_{g_0g_1}} = \G_{\N_1}\:.$$
To conclude the proof  of (1)-(3) for the case $g=g_0\preceq g_1= g'$ we prove (\ref{interm}).\\
Since $G_{\N_0}$ is defined as the difference of the advanced and retarded Green operators restricted to compact sections, we perform the computation separately for the two operators.\\
We start from $\R_+\G_{\N_0}|_{\Gamma_{c}(\E)}\R_+^{\dagger_{g_0}}$: the adjoint of the M{\o}ller operator is defined over $\Gamma_c(\E)$ and gives back compactly supported sections, then the advanced Green operator maps a compactly supported section $\ff\in\Gamma_{c}(\E)$ to a solution such that $\supp(\G^+_{N_0}\ff)\subset J^+_0(\supp(\ff)\subset J^+_\chi(\supp(\ff))$, where the last inclusion is due to the crucial hypothesis $g_0\preceq g_\chi\preceq g_1$. Now since $\supp(\ff)$ is compact the smooth Cauchy time function $t$ attains a minimum $t_0\in\RR$ therein, so we choose a common smooth Cauchy hypersurface $\Sigma_{t_1}$ of the foliation induced by $t$ such that $t_1<t_0$ and deduce that $\supp(\G^+_{N_0}\ff)\subset J^+_\chi(\supp(\ff))\subset J^+_\chi(\Sigma_{t_1})$ which implies by \cite[Lemma 1.2.61]{Ba-lect}  that $\G^+_{N_0}\ff\in\Gamma_{pc}^{\chi}(\E)$.\\
Omitting the restriction of the domain of the causal propagators  from the notation for sake of clarity, but having in mind that it is crucial for the validity of the argument, we obtain:
 $$\R_+\G^+_{\N_0}=\G^+_{\N_0}-G^+_{c_0^\chi\N_\chi}c_0^{\chi}\N_\chi\G^+_{\N_0}+G^+_{c_0^\chi\N_\chi}\N_0\G^+_{\N_0}=\G^+_{c_0^\chi\N_\chi}.$$ 
 A similar reasoning proves that
 $$G^-_{\N_0}\R_+^{\dagger_{g_0}}=\G^-_{c_0^\chi\N_\chi}.$$
 where now the restriction of the domains of the causal propagators to compactly supported sections is assumed from the definition of the adjoint.
 Collecting together the two identities found, we have
$$\R_+\G_{\N_0}\R_+^{\dagger_{g_0}} = (\R_+\G^+_{\N_0} - G^-_{\N_0}\R_+^{\dagger_{g_0}})
+ M = \G_{c_0^\chi\N_\chi} + M\:,$$
with, where both sides have to be computed on compactly supported sections,
$$M:= (\Id-\R_+) \G_{\N_0}^- \R_+^{\dagger_{g_0}}- \R_+ \G_{\N_0}^+ (\Id -  \R_+^{\dagger_{g_0}})\:.$$
A lengthy direct evaluation of $M$ using (\ref{defR+2}) and the former in (\ref{RRdagger}) shows that $M=0$.
All that establishes the first identity in (\ref{interm}), while the latter follows by almost identical facts.\\
Let us pass to prove (1)-(3) for  the case $g_1 \preceq g_0$, with $V_x^{g_1+} \subset V_x^{g_0+}$ for all $x\in \M$.    First of all we observe that from the previously treated case ($g_0\preceq g_1$) we have
$c^0_1 \N_0 \R^{-1} = \N_1$ where $c^0_1 = (c_1^0)^{-1}$ and $\vol_{g_0} = c^0_1 \vol_{g_1}$.  Interchanging the names of $g_0$ and $g_1$, this result implies that  (\ref{secondE}) is true for  
$g_1 \preceq g_0$ when using $\R^{-1}$ in place of $\R$.  An analogous procedure proves  (\ref{firstE}) for the case $g_1 \preceq g_0$
from the same equation, already established, valid when $g_0\preceq g_1$. Also in this case the relevant M\o ller operator is $\R^{-1}$. To this end,  we  have only to prove that $(\R^{-1})^{\dagger_{g_1g_0}}$ exists and coincides to $(\R^{\dagger_{g_0g_1}})^{-1}$.
Indeed, under these assumptions (\ref{firstE}) implies
$$\N_1 = \R^{-1} \N_0 (\R^{\dagger_{g_0g_1}})^{-1}=  \R^{-1} \N_0 (\R^{-1})^{\dagger_{g_1g_0}}$$
which is our thesis when interchanging $g_0$ and $g_1$.
 This  fact that $(\R^{-1})^{\dagger_{g_1g_0}}= (\R^{\dagger_{g_0g_1}})^{-1}$
 actually  can be established exploiting   (6) in \ref{propadjoint}: $\R$ is bijective over $\Gamma(\E)$, and admits the adjoint $\R^{\dagger_{g_0 g_1}}$, so if the inverse $\R^{-1}$ admits the adjoint $(\R^{-1})^{\dagger_{g_1 g_0}}$, then $\R^{\dagger_{g_0 g_1}}$ is bijective and its inverse is such that $(\R^{\dagger_{g_0 g_1}})^{-1}=(\R^{-1})^{\dagger_{g_1 g_0}}$.
Let us prove that $\R^{-1}$ admits adjoint (with respect to any metric among $g_0,g_\chi, g_1$ since the existence of the adjoint with respect to one of them trivially implies the existence of the adjoint with respect to the other metrics)  to end the proof for the case $g_1\preceq g_0$.
 By recalling that $\R^{-1}=\R_+^{-1}\circ\R_-^{-1}$ it suffices to show that $\R_+^{-1}$ and $\R_-^{-1}$ both admit adjoints. We explicitly give the $g_0$-adjoint of $\R_+^{-1}$ the other case being analogous,
 $$(\R_+^{-1})^{\dagger_{g_0}}=\Id+(c_0^{\chi}\N_\chi-\N_0)\G^-_{\N_0}|_{\Gamma_c(\M)}\:.$$
Let us pass to the proof of (1)-(3) for the general case $g\simeq g'$ also establishing the last part of the thesis.
In this case there is  a sequence 
$g_0=g, g_1,\ldots, g_N=g'$
of globally hyperbolic metrics on $\M$ satisfying Definition \ref{def:paracausal def} and a corresponding sequence 
of selfadjoint  normally hyperbolic operators $\N_k$ with $\N_0:= \N$ and $\N_{N}:= \N'$. (This sequence always exists because, for every globally hyperbolic metric $g$, there is a normally hyperbolic operator $\N$ as proved in the proof of Theorem \ref{remchain}. The operator $\tilde{\N}:= \frac{1}{2}(\N + \N^{\dagger_g})$ is simultaneously formally selfadjoint with respect to $\fiber{\cdot}{\cdot}$ and normally hyperbolic.)
Taking advantage of the validity of the thesis in the cases  $g\preceq g'$ and $g'\preceq g$, using in particular (4) and (6) in 
 Proposition \ref{propadjoint}, one immediately shows that we can build a M{\o}ller map for a paracausal deformation of metrics just by defining $\R$ as the composition of the various similar operators defined for each copy $g_k, g_{k+1}$ as in (\ref{frop1}) and (\ref{frop2}).

(4)  If $\ff \in \Gamma_c(\E)$, $$\R^{\dagger_{gg'}} \N'\ff = \R^{\dagger_{gg'}} \N'^{\dagger_{g'}} \ff=
 (\N' \R)^{\dagger_{gg'}} \ff = (c' \N)^{\dagger_{gg'}} \ff= \N^{\dagger_{g}} \ff = \N\ff \:.$$

(5) It is sufficient to prove the thesis for the case $g=g_0 \preceq g_1=g'$ and for $\R_+^{\dagger_{g_0}}$. The case of $\R_-^{\dagger_{g_0}}$ is analogous. In the case $g_1 \preceq g_0$ one uses the inverses of the operators above,
 and all remaining cases are proved just by observing that the considered M\o ller operators are compositions of the elementary operators $\R_\pm^{\dagger_{g_0}}$ and/or their inverses and smooth functions used as multiplicative operators.
We know that 
$$ \R_+^{\dagger_{g_0}}=\Id - (c^\chi_0 \N_\chi -\N_0)\G_{c^\chi_0 \N_\chi}^-|_{\Gamma_c(\E)}\:.$$
The identity operator has already the requested continuity property, so we have only to focus on the second addend using the fact that a linear combination of continuous maps is continuous as well.
The map $\G_{c^\chi_0 \N_\chi}^-|_{\Gamma_c(\E)} : \Gamma_c(\E) \to \Gamma(\E)$ is continuous with respect to the natural topologies  of the domain and co-domain (see {\it e.g.}~\cite[Corollary 3.6.19 ]{Ba-lect}). Since $ (c^\chi_0 \N_\chi -\N_0)$ is a smooth differential operator
$ (c^\chi_0 \N_\chi -\N_0)\G_{c^\chi_0 \N_\chi}^-|_{\Gamma_c(\E)} : \Gamma_c(\E) \to \Gamma(\E)$ is still continuous. To conclude the proof it is sufficient to prove that if $ \Gamma_c(\E) \ff_n \to 0$ in the topology of  $\Gamma_c(\E)$ and $K \supset supp(\ff_n)$ for all $n\in \mathbb{N}$ is a compact set, then there is a compact set $K'$ such that $K' \supset supp( (c^\chi_0 \N_\chi -\N_0)\G_{c^\chi_0 \N_\chi}^- \ff_n)$ for all $n\in \mathbb{N}$.  If $t : \M \to \RR$ is the Cauchy temporal function of $g_1$ used to construct $\R_+$ and $\R_-$, whose level sets $\Sigma_{\tau}:= t^{-1}(\tau)$ are Cauchy hypersurfaces for $g_0,g_\chi, g_1$ and $g_\chi =g_0$ in the past of $\Sigma_{t_0}$, then the  set  $J^{(\M,g_\chi)}_-(K) \cap D^{(\M,g_\chi)}_+(\Sigma_{t_0})$, which is compact for known properties of globally hyperbolic spacetimes,  includes all supports of $(c^\chi_0 \N_\chi -\N_0)\G_{c^\chi_0 \N_\chi}^- \ff_n$ from the very definition of retarded Green operator also using the fact that $(c^\chi_0 \N_\chi -\N_0)$ vanishes in the past of $\Sigma_{t_0}$.
\end{proof}

As a byproduct of Theorem~\ref{teoRRR} we get a technical, but  important, corollary.

\begin{corollary}\label{corMol} Consider $g,g',g'' \in \mathcal{GH}_\M$, corresponding formally selfadjoint and normally hyperbolic operators $\N,\N',\N''$ on the $\KK$-vector bundle $\E$ on $\M$ equipped with a non-degenerate, Hermitian, fiberwise  metric.  Assume that $g\simeq g'$ and $g'\simeq g''$ and suppose that $\R_{gg'}$ is a  M\o ller operator of  $g,g'$ 
and $\R_{g'g}$ is a  M\o ller operator of $g',g''$
according to 
(\ref{frop1}). The following facts are true.
\begin{itemize} 
\item[(1)]  $\R_{gg'}^{-1}$ is a M\o ller operator of $g',g$.
\item[(2)]  $\R_{gg'}\R_{g'g''}$ is a M\o  ller operator of $g',g''$.
\end{itemize} 
\end{corollary}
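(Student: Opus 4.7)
The plan is to verify directly that $\R_{gg'}^{-1}$ and the composition of $\R_{gg'}$ and $\R_{g'g''}$ satisfy the defining properties (1)--(5) of Theorem~\ref{teoRRR} with the appropriate metrics. The adjoint algebra developed in Proposition~\ref{propadjoint} will do essentially all the work; all other properties follow by routine inversion or composition.

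For item~(1), I verify properties~(1)--(5) for $\R_{gg'}^{-1}$ as a candidate M\o ller operator of $g',g$. Property~(1) is immediate, since the inverse of an isomorphism is an isomorphism and restricts on $\Ker^{g'}_{sc}(\N')$ and $\Sol^{g'}_{sc,c}(\N')$ to the inverses of the M\o ller maps given by Theorem~\ref{remchain}. Property~(2) is obtained by inverting the relation $\R_{gg'}\G_{\N}\R_{gg'}^{\dagger_{gg'}}=\G_{\N'}$: applying $\R_{gg'}^{-1}$ on the left and $(\R_{gg'}^{\dagger_{gg'}})^{-1}$ on the right, and invoking Proposition~\ref{propadjoint}\,(6) to identify $(\R_{gg'}^{\dagger_{gg'}})^{-1}=(\R_{gg'}^{-1})^{\dagger_{g'g}}$, yields $\R_{gg'}^{-1}\G_{\N'}(\R_{gg'}^{-1})^{\dagger_{g'g}}=\G_{\N}$. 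Property~(3) is obtained by inverting $c'\N'\R_{gg'}=\N$: setting $c:=1/c'$, the smooth positive function with $\vol_g=c\,\vol_{g'}$, gives $c\,\N\R_{gg'}^{-1}=\N'$. Properties~(4) and~(5) for $\R_{gg'}^{-1}$ then follow from the corresponding properties for $\R_{gg'}$ together with parts~(4)--(6) of Proposition~\ref{propadjoint}.

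For item~(2), I verify properties~(1)--(5) for the composition $\T:=\R_{g'g''}\circ\R_{gg'}$, i.e.\ the isomorphism $\Gamma(\E)\to\Gamma(\E)$ sending $\ff$ to $\R_{g'g''}(\R_{gg'}(\ff))$, as a candidate M\o ller operator of $g,g''$. Property~(1) is clear: $\T$ is a composition of isomorphisms whose restrictions to $\Sol^g_{sc,c}(\N)$ and $\Ker^g_{sc}(\N)$ coincide with the compositions of the M\o ller maps given by Theorem~\ref{remchain}, landing in $\Sol^{g''}_{sc,c}(\N'')$ and $\Ker^{g''}_{sc}(\N'')$ respectively. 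The main computation is property~(2): by Proposition~\ref{propadjoint}\,(4), $\T^{\dagger_{gg''}}=\R_{gg'}^{\dagger_{gg'}}\circ\R_{g'g''}^{\dagger_{g'g''}}$, and then property~(2) applied twice yields
\begin{align*}
\T\,\G_{\N}\,\T^{\dagger_{gg''}}
&=\R_{g'g''}\circ\bigl(\R_{gg'}\circ\G_{\N}\circ\R_{gg'}^{\dagger_{gg'}}\bigr)\circ\R_{g'g''}^{\dagger_{g'g''}}\\
&=\R_{g'g''}\circ\G_{\N'}\circ\R_{g'g''}^{\dagger_{g'g''}}\\
&=\G_{\N''}.
\end{align*}
Property~(3) follows by combining $c'\N'\R_{gg'}=\N$ and $c''\N''\R_{g'g''}=\N'$, yielding $c'c''\,\N''\T=\N$ with $c'c''=\vol_{g''}/\vol_g$. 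Properties~(4) and~(5) are obtained similarly from the corresponding facts for the two factors, using parts~(4)--(5) of Proposition~\ref{propadjoint}.

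As a final remark, both M\o ller operators may alternatively be exhibited through the explicit recipe of Remark~(2) following Theorem~\ref{teoRRR}: for item~(1) by applying the recipe to the reversed paracausal sequence $g_N,g_{N-1},\ldots,g_0$ with the corresponding reversed data; for item~(2) by applying it to the concatenated paracausal sequence $g_0,\ldots,g_N=g'_0,\ldots,g'_M$, which is itself a witness for the transitivity $g\simeq g''$. A direct inspection of (\ref{defR+})--(\ref{defR-}) and of the volume-ratio prescription of Remark~(2) confirms that these recipes reproduce $\R_{gg'}^{-1}$ and $\T=\R_{g'g''}\circ\R_{gg'}$ respectively, so that both operators are also M\o ller operators in the stronger sense of arising from the explicit construction of Remark~(2).
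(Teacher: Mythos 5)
Your proof is correct. Note that you tacitly fixed two typos in the statement of item~(2): with the paper's convention (apply the right factor first, as in $\R=\R_-\R_+$ of Remark~\ref{remaggeqR}), the composition should read $\R_{g'g''}\R_{gg'}$ and it is a M\o ller operator of $g,g''$, not of $g',g''$. Your route, however, differs from the paper's. The paper's entire proof is one sentence: an operator built ``according to~(\ref{frop1})'' is by definition a product of elementary factors $\R_\pm^{(k)}$ and their inverses along a paracausal chain, and such products are closed under inversion (reverse the chain) and under composition (concatenate the chains) for purely structural reasons. That observation is precisely your closing remark. Your main body instead checks conclusions (1)--(5) of Theorem~\ref{teoRRR} one at a time via the adjoint calculus of Proposition~\ref{propadjoint}; this is valid and makes the preserved structure explicit, but on its own it only shows that the candidates \emph{satisfy} those conclusions, not that they arise from the construction~(\ref{frop1}), which is how ``M\o ller operator'' is pinned down in the corollary's hypotheses and in the downstream uses (Proposition~\ref{pullbackstate}, Theorem~\ref{thm:main Had}). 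So the closing paragraph of your proposal is the part that actually matches and closes the paper's argument; the property-by-property verification is a worthwhile supplement but not the load-bearing step.
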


\begin{proof} It is immediate form the construction of $\R$ described at the end of Theorem \ref{teoRRR} relying on (\ref{frop1}).
\end{proof}

\begin{remark}\label{remaggeqR} Observe that the construction of the M\o ller operator $\R$ of $g_0,g_1$, for $g_0 \preceq g_1$,
as $\R = \R _-\R_+ $ we used several times in this work is nothing but an elementary case of (2). Indeed, in that case, 
$g_0 \preceq g_\chi \preceq g_1$ and $\R_+$, $\R_-$ are, respectively, a M\o ller operator of $g_0,g_\chi$ and $g_\chi, g_1$.
\end{remark}

\section{M{\o}ller $*$-isomorphisms in algebraic quantum field theory}\label{sec:Moller AQFT}

The aim of this section is to investigate the role of the M{\o}ller operators at the quantum level. In order to achieve our goal, we will follow the so-called algebraic approach to quantum field theory, see \textit{e.g.}~\cite{CQF1, CQF2,BD,FK,IgorValter}.
In \textit{loc. cit.} the quantization of
a free field theory on a (curved) spacetime is interpreted as a two-step procedure:
\begin{enumerate}
\item The first consists of the assignment to a physical system of a  $*$-algebra of observables which
encodes structural properties such as causality, dynamics and canonical commutation relations.
\item The second step calls for the identification of an algebraic state, which is a positive, linear
and normalized functional on the algebra of observables.
\end{enumerate}
Using this framework, in this section we shall lift the action of the M{\o}ller operators on the algebras of the free quantum fields and then we will pull-back the action of the M\o ller operators on quantum states, showing that the maps preserve the Hadamard condition with quite weak hypotheses which, in principle, permit an extension of the theory to a perturbative approach.\\
For a more detailed introduction to the algebraic approach to quantum field theory we refer to~\cite{aqft2,gerard} for textbook and to~\cite{cap2,BD,BDM,Nic,cap,capS,DHP,DMP,DMP3,DFMR,Moller,
gerard1,gerardgrav,gerard2,gerard3,gerard4,gerard5,gerardDirac,gerardYM, CSth,cospi,SDH12} for some recent applications.
\medskip

We begin first by recalling the construction of the free quantum field theories in curved spacetime for the general class of Green hyperbolic operators, which, as we have seen contains the class of the normally hyperbolic operators which are the ones under exam. 

\begin{notation}
Through this section, $\E$ will always denote an $\RR$-vector bundle over a globally hyperbolic spacetime $(\M,g)$. In particular, we denote the non-degenerate, symmetric, fiberwise  metric  by $\fiber{\cdot}{\cdot}$.   
\end{notation}

\subsection{Algebras of free quantum fields and the M{\o}ller $*$-isomorphism}

Given a formally-selfadjoint normally hyperbolic operator $\N :\Gamma(\E)\rightarrow\Gamma(\E)$ and its causal propagator $G$, we first define the unital complex $*$-algebra $\mathcal{A}_f$ as the {\em free complex unital $*$-algebra} with abstract (distinct)
generators $\phi(\ff)$ for all $\ff\in\Gamma_c(\E)$, identity  $1$, and involution $^*$ as discussed in  \cite{IgorValter}. (As a matter of fact $\mathcal{A}_f$ is made of finite linear complex combinations of $1$ and finite products of generic elements $\phi(\ff)$ and $\phi(\fh)^*$.) Then we define a refined complex unital $*$-algebras  by imposing the following relations by the quotient $\mathcal{A}=\mathcal{A}_f\slash\mathcal{I}$ where $\mathcal{I}$ is the two sided $*$-ideal generated by the following elements of $\mathcal{A}$:
\begin{itemize}
\item $\phi(a\ff+b\fh)-a\phi(\ff)-b\phi(\fh)\:, \quad \forall a,b\in\RR \quad \forall \ff,\fh\in\Gamma_c(\E)$
\item $\phi(\ff)^*-\phi(\ff)\:,\quad \forall \ff\in\Gamma_c(\E)$
\item $\phi(\ff)\phi(\fh)-\phi(\fh)\phi(\ff)-i\G_{N}(\ff,\fh)1\:,  \quad \forall \ff,\fh\in\Gamma_c(\E)$,
\end{itemize}
where we have used the notation
$$\G_{\N}(\ff,\fh) := \int_\M \fiber{\ff(x)}{(\G_\N \fh)(x)} \vol_g(x)\:.$$
We have the further possibility  to enrich the ideal with the generators:
\begin{itemize}
	\item $\phi(\N\ff)\:,  \quad \forall\ff\in\Gamma_{c}(\E)$.
\end{itemize}
\begin{notation}
The equivalence classes $[\phi(\ff)]$ will be denoted by $\Phi(\ff)$ and they will be called {\bf field operators}
({\bf on-shell} if the ideal is enlarged by including the generators $\phi(\N\ff)$), and we use the notation $\II$ for the 
identity $[1]$ of $\mathcal{A}_f/ \mathcal{I}$. 
\end{notation}
\begin{definition}\label{CCRalgebra} Given a formally-selfadjoint normally hyperbolic operator $\N :\Gamma(\E)\rightarrow\Gamma(\E)$ and its causal propagator $G$, we call  {\bf CCR algebra } 
of the quantum fields $\Phi$, the unital $*$-algebra defined by   $\mathcal{A}:= \mathcal{A}_f/ \mathcal{I}$.  The algebra is said to be {\bf on-shell} in case the ideal is enlarged by including the generators $\phi(\N\ff)$. Furthermore, we call  {\bf observables} of $\mathcal{A}$ any  Hermitian element of it. 
\end{definition}

With the above  notation, the following properties are valid
\begin{itemize}
\item\textbf{$\RR$-Linearity}. $\Phi(a\ff+b\fh)=a\Phi(\ff)+b\Phi(\fh)\:, \quad \forall a,b\in\RR \quad \forall \ff,\fh\in\Gamma_c(\E)$
\item{\textbf{Hermiticity}}. $\Phi(\ff)^*= \Phi(\ff)\:,\quad \forall \ff\in\Gamma_c(\E)$
\item{\textbf{CCR}}. $\Phi(\ff)\Phi(\fh)-\Phi(\fh)\Phi(\ff)= i\G_{N}(\ff,\fh)\II\:,  \quad \forall \ff,\fh\in\Gamma_c(\E)$.
\end{itemize}
The on-shell field operators also satisfy
\begin{itemize}
	\item \textbf{Equation of motion}. $\Phi(\N\ff)=0\:,  \quad \forall\ff\in\Gamma_{c}(\E)$.
\end{itemize}

\begin{remark}
 The idea behind the notation $\Phi(\ff)$ is a formal smearing procedure that uses the scalar product
$$\Phi(\ff) = \int_\M \fiber{\Phi(x)}{\ff(x)} \vol_g(x)\:.$$
From this perspective, since $\N$ is formally selfadjoint,
 the identity $\Phi(\N\ff)=0$ for all $\ff \in \Gamma_c(\E)$  has the distributional meaning $\N \Phi =0$. Alternatively, as explained in \cite{propSing}, one may use a different representation where $\Phi$ is viewed as a ``generalized section'' of  the dual bundle $\E^*$. In that case the formal identity $\N \Phi =0$ corresponding to the equation of motion has to be replaced by $\N^* \Phi =0$.
\end{remark}

Given different normally hyperbolic operators  $\N,\N'$  all the information about causality and dynamics is  encoded in the ideal $\mathcal{I}, \mathcal{I}'$. In that case we have two corresponding initial  unital $*$-algebras $\mathcal{A}_f$ and $\mathcal{A}'_f$ with 
respective generators $\phi(\ff)$ and $\phi'(\ff)$.
Though the freely generated algebras are canonically isomorphic, under the unique unital $*$-isomorphism such that  $\phi(\ff) \to \phi'(\ff)$ for all $\ff \in \Gamma_c(\E)$,
the quotient  algebras are intrinsically different because the CCR are different depending on the choice of the causal propagator $\G_{\N}$ or $\G_{\N'}$. However there is an isomorphism between them as soon as a M\o ller operator exists.
Indeed, the existence of the M{\o}ller operator discussed in the previous sections can be exploited to define first an isomorphism  of the free algebras $\mathcal{A}_{f}$ and $\mathcal{A}'_{f}$ since the operator $\R^{\dagger_{gg'}}:\Gamma_c(\E)\to\Gamma_c(\E)$ is an isomorphism.\medskip

\begin{definition}	
Let $\N,\N' : \Gamma(\E) \to \Gamma(\E)$ be two formally-selfadjoint (with respect to a fiber metric $\fiber{\cdot}{\cdot}$)  normally hyperbolic operators  globally hyperbolic spacetimes $(\M,g)$ and $(\M,g')$. If $g\simeq g'$, 
we define an  isomorphism  $\mathcal{R}_f:\mathcal{A}'_{f}\to\mathcal{A}_{f}$ as the unique unital $*$-algebra isomorphism between the said free unital $*$-algebras such that $\mathcal{R}_f(\phi'(\ff))=\phi(\R^{\dagger_{g g'}}\ff)\quad\forall\ff\in\Gamma_c(\E)$.
where  $\R$ is a M\o ller operator of $g,g'$ (in this order) satisfying Theorem \ref{teoRRR} and Equation~\eqref{frop1}.
\end{definition}
As we shall see in the next proposition, the isomorphism between freely generated algebras induces an isomorphism of the quotient algebras.

\begin{proposition}\label{algebraicMoller}
Let $\N$ and $\N'$ be two formally-selfadjoint  normally hyperbolic operators  acting on the sections of the
$\RR$-vector bundle 
$\E$ over $\M$, and referred 
to respective  $g,g'\in \mathcal{GM}_\M$.\\  If $g \simeq g'$ and $\R$ is a M\o ller operator of $g_,g'$ in the sense of Theorem \ref{teoRRR} and Equation~\eqref{frop1}, then the CCR algebras $\mathcal{A}$ and $\mathcal{A}'$ (possibly both on-shell) respectively associated to $\N$ and $\N'$ are isomorphic
under the quotient isomorphism $\mathcal{R} : \mathcal{A}'_{f}/\mathcal{I}' \to \mathcal{A}_{f}/\mathcal{I}$ constructed out of $\mathcal{R}_f$, the unique unital $*$-algebra isomorphism  satisfying $\mathcal{R}(\Phi'(\ff))=\Phi(\R^{\dagger_{g g'}}\ff)\quad\forall\ff\in\Gamma_c(\E)$.
\end{proposition}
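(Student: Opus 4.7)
The plan is to verify that the unital $*$-algebra isomorphism $\mathcal{R}_f : \mathcal{A}'_f \to \mathcal{A}_f$ between the freely generated algebras descends to a unital $*$-algebra isomorphism $\mathcal{R}: \mathcal{A}' \to \mathcal{A}$ of the quotient CCR algebras. First I would check that $\mathcal{R}_f$ is indeed well defined and bijective: by the universal property of the free unital $*$-algebra $\mathcal{A}'_f$, assigning $\phi'(\ff) \mapsto \phi(\R^{\dagger_{gg'}} \ff)$ to each abstract generator extends uniquely to a unital $*$-algebra homomorphism. Bijectivity follows because $\R^{\dagger_{gg'}}: \Gamma_c(\E) \to \Gamma_c(\E)$ is an $\RR$-linear bijection with inverse $(\R^{-1})^{\dagger_{g'g}}$, by point (5) of Theorem~\ref{teoRRR} together with Corollary~\ref{corMol}(1) and Proposition~\ref{propadjoint}(6).

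The main task is then to prove $\mathcal{R}_f(\mathcal{I}') = \mathcal{I}$. Since $\R^{-1}$ is itself a M\o ller operator of $g',g$ by Corollary~\ref{corMol}(1), the inclusion $\mathcal{R}_f^{-1}(\mathcal{I}) \subseteq \mathcal{I}'$ is obtained by the same argument applied in the opposite direction, so it suffices to show $\mathcal{R}_f(\mathcal{I}') \subseteq \mathcal{I}$ on each class of generators. The $\RR$-linearity and hermiticity generators are immediately sent into $\mathcal{I}$ because $\R^{\dagger_{gg'}}$ is $\RR$-linear and preserves $\Gamma_c(\E)$ viewed as a real space of sections. In the on-shell case, the generator $\phi'(\N' \ff)$ is sent to $\phi(\R^{\dagger_{gg'}} \N' \ff) = \phi(\N\ff)$, which lies in $\mathcal{I}$ thanks to Theorem~\ref{teoRRR}(4).

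The crucial computation is the CCR generator, and it reduces to the identity
$$\G_{\N'}(\ff,\fh) \;=\; \G_{\N}\bigl(\R^{\dagger_{gg'}} \ff,\, \R^{\dagger_{gg'}} \fh\bigr), \qquad \ff, \fh \in \Gamma_c(\E).$$
I would prove this directly from Theorem~\ref{teoRRR}(2): starting from
$$\G_{\N'}(\ff,\fh) = \int_\M \fiber{\ff}{\G_{\N'} \fh}\,\vol_{g'} = \int_\M \fiber{\ff}{\R \,\G_{\N}\, \R^{\dagger_{gg'}} \fh}\,\vol_{g'},$$
apply the defining property of $\R^{\dagger_{gg'}}$ from Definition~\ref{defadjoint} with $\fh$ of the definition taken to be the compactly supported $\ff$ and $\psi = \G_{\N} \R^{\dagger_{gg'}} \fh \in \Gamma(\E) = \D(\R)$; this produces
$$\G_{\N'}(\ff,\fh) = \int_\M \fiber{\R^{\dagger_{gg'}} \ff}{\G_{\N}\, \R^{\dagger_{gg'}} \fh}\,\vol_{g} = \G_{\N}\bigl(\R^{\dagger_{gg'}} \ff,\, \R^{\dagger_{gg'}} \fh\bigr).$$
Hence the image under $\mathcal{R}_f$ of the CCR generator $\phi'(\ff)\phi'(\fh) - \phi'(\fh)\phi'(\ff) - i\G_{\N'}(\ff,\fh)\,1$ coincides with the CCR generator of $\mathcal{A}_f$ at the shifted arguments $\R^{\dagger_{gg'}} \ff, \R^{\dagger_{gg'}} \fh$, which belongs to $\mathcal{I}$.

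The main obstacle is simply the CCR check above; everything else is formal bookkeeping. Once it is established, $\mathcal{R}_f$ descends to a unital $*$-algebra isomorphism $\mathcal{R}: \mathcal{A}' \to \mathcal{A}$ satisfying $\mathcal{R}(\Phi'(\ff)) = \Phi(\R^{\dagger_{gg'}} \ff)$ as required, and uniqueness follows from the fact that the $\Phi'(\ff)$ generate $\mathcal{A}'$.
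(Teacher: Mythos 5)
Your proposal is correct and follows essentially the same approach as the paper: verify that $\mathcal{R}_f$ maps the defining generators of $\mathcal{I}'$ into $\mathcal{I}$, with linearity, hermiticity and (in the on-shell case) the equation-of-motion generators handled trivially via Theorem~\ref{teoRRR}(4), and the CCR generator reduced to the identity $\G_{\N'}(\ff,\fh) = \G_{\N}(\R^{\dagger_{gg'}}\ff, \R^{\dagger_{gg'}}\fh)$ proved by combining Theorem~\ref{teoRRR}(2) with Definition~\ref{defadjoint}. The paper runs the same chain of integral identities in the opposite direction and is a bit terser about the universal-property set-up and the reduction to a one-sided inclusion, but the mathematical content is identical.
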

\begin{proof}To prove the statement it suffices to show that the operator $\mathcal{R}_f$ maps the ideal $\mathcal{I}'$ to the ideal $\mathcal{I}$. 
Each ideal $\mathcal{I}$ and $\mathcal{I}'$ is the intersection of three (four) ideals corresponding to the requirements of linearity, Hermiticity, CCR (and equation of motion).
The fact that $\mathcal{R}_f$ preserves the ideals due to linearity and hermiticity is an immediate consequence of the fact that $\mathcal{R}_f$ is a $*$-algebra homomorphism of the involved freely generates algebras. The ideal arising from the equation of motion  condition is preserved due to the first statements of Theorem \ref{teoRRR} and item (4) therein. \\
The situation is more delicate regarding the ideal generated by the CCR. Preservation of that ideal is actually   
 an immediate consequence of $\mathcal{R}_f(\II')=\II$ ($\mathcal{R}_f$ is unital by hypothesis) and the structure of CCR together with (\ref{firstE}):
\begin{align*}
			\G_{\N}(\ff',\fh')& =\G_{\N_0}(\R^{\dagger_{g g'}}\ff,\R^{\dagger_{g g'}}\fh)\\
			&=\int_{\M}\fiber{\R^{\dagger_{g g'}}\ff}{\G_{\N}\R^{\dagger_{gg'}}\fh}\vol_{g}\\
			&=\int_{\M}\fiber{\ff}{\R \G_{\N}\R^{\dagger_{g g'}}\fh}\vol_{g'}\\
			&=\int_{\M}\fiber{\ff}{\G_{\N'}\fh}\vol_{g'} \\
			&= \G_{\N'}(\ff,\fh)\,.
\end{align*}
This concludes our proof.
	\end{proof}

\begin{definition} \label{defmollerstar}
A unital $*$-isomorphism $\mathcal{R} : \mathcal{A}' \to \mathcal{A}$ defined in Proposition \ref{algebraicMoller} out of 
the M\o ller operator   $\R$ of $g,g'$
as in Theorem \ref{teoRRR} and (\ref{frop1})
is called {\bf M\o ller $*$-isomorphism} of  the CCR algebras  $\mathcal{A}$,$ \mathcal{A}'$ (in this order)
\end{definition}

\subsection{Pull-back of algebraic states through the M\o ller $*$-isomorphism}\label{subsec:algebraic moller}

As explained in the beginning of this section, the subsequent step in the quantization of a field theory consists in identifying a distinguished state on the $*$-algebra of the quantum fields. The {\em GNS construction} then guarantees the existence of a representation of the quantum field algebra through, in general unbounded, operators defined over a common dense subspace of some Hilbert space. We will not care about the explicit representation and recall some definitions  (see \cite{DM} for a general discussion also pointing out several not completely solved standing issues). 

\begin{definition}
We call an \textbf{ (algebraic) state} over a unital $*$-algebra $\mathcal{B}$ a $\CC$-linear functional $\omega:\mathcal{B}\to\CC$ which is
\begin{itemize}
\item[(i)]\textbf{Positive} $\omega(a^*a)\geq0\quad\forall a\in\mathcal{B}$,
\item[(ii)]\textbf{Normalized} $\omega(\II)=1$
\end{itemize}
\end{definition}

A generic element of the CCR algebras $\mathcal{A}$ of a quantum field $\Phi$ associated to the normally hyperbolic operators discussed before can be written as a finite polynomial of the generators $\Phi(f)$, where the zero grade term is proportional to $\II$, to specify the action of a state it's sufficient to know its action on the monomials, i.e its \textbf{n-point functions}
\begin{equation}  \omega_n(\ff_1,..,\ff_n) :=\omega(\Phi(\ff_1)...\Phi(\ff_n)) \label{defomegan}\end{equation} 
The map $\Gamma_c(\E) \times \cdots \times \Gamma_c(\E) \ni (\ff_1,\ldots, \ff_n) \mapsto \omega_n(\ff_1,..,\ff_n)$ can be extended by linearity to the space of finite  linear combinations of sections $\ff_1 \otimes \cdots  \otimes \ff_n \in \Gamma_c(\E^{n\boxtimes})$, where $\E^{n\boxtimes}$ is $n$-times exterior tensor product of the vector bundle $\E$ with itself.
If we impose continuity with respect to the usual topology on the space of compactly supported test sections, since the said linear combinations are dense,  we can uniquely extend  the $n$-point functions to distributions 
in  $\Gamma_c'(\E^{n\boxtimes})$ we shall hereafter indicate by the same symbol $\omega_n$. 
It has a  formal integral kernel,
$$\omega_n(\ff_1,..,\ff_n)=\int_{\M^n}\tilde{\omega}_n(x_1,...,x_n)\ff_1(x_1)...\ff_n(x_n) \vol_{\M^n}(x_1,\ldots,x_n),$$
where   $$\vol_{\M^n}(x_1,\ldots,x_n) := \vol_g(x_1) \otimes \cdots \mbox{($n$ times)} \cdots \otimes \vol_{g}(x_n)$$ henceforth.
Notice that if more strongly $\omega_n\in \Gamma_c'(\E^{n\boxtimes})$, then 
$${\omega}_n(\fh)= \int_{\M^n}\tilde{\omega}_n(x_1,...,x_n)\fh(x_1, \ldots, x_n) \vol_{\M^n}(x_1,\ldots,x_n)$$
is also defined for $\fh \in \Gamma_c(\E^{n\boxtimes})$.  The case $n=2$ is the easiest one. The Schwartz kernel theorem implies $\Gamma_c(\E) \ni \ff \mapsto \omega_2(\fh, \ff)$ is (sequentially) continuous at $\ff=0$ for every fixed $\fh \in \Gamma_c(\E)$  if and only if $\omega_2$ continuously extends to a unique distribution we hereafter  indicate with the same symbol   $\omega_2 \in \Gamma_c'(\E \boxtimes \E)$.

An important fact  (see  the comment after \cite[Proposition 5.6]{propSing}) is that, if the CCR algebra $\mathcal{A}$ admits states, then  the fiberwise metric $\langle \cdot|\cdot\rangle$ must be positive. In other words,
$\langle \cdot|\cdot\rangle$ is a real  symmetric positive scalar product. We shall assume it henceforth.

Differently from a free quantum field theory on Minkowski spacetime, where the Poincaré invariant state -- known as Minkowski vacuum -- might be a natural choice, on a general curved spacetime there might be no choice of a natural state. However there is a class of states, known as  quasifree (or Gaussian) states, whose GNS representation mimics the Fock representation of Minkowski vacuum (see {\it e.g.}~ \cite{IgorValter}).
  
\begin{definition}\label{defAS} Let $\mathcal{A}$ be the CCR algebra.
A state $\omega:\mathcal{A}\to\CC$ is called {\bf quasifree}, or equivalently {\bf Gaussian}, if the following properties for its n-point functions hold
\begin{itemize}
	\item[(i)] $\omega_{n}(\ff_1,...,\ff_{n})=0$, if $n\in \mathbb{N}$ is odd,
	\item[(ii)]  $\omega_{2n}(\ff_1,...,\ff_{2n})=\sum_{partitions}  \omega(f_{i_1},f_{i_2}) \cdots  \omega(f_{i_{n-1}},f_{i_n})$,
if $n\in \mathbb{N}$ is even,
\end{itemize}
where ``partitions'' for even $n$ refers to the class of all possible decompositions of the set $\{1,2,\ldots, n\}$ into $n/2$ pairwise disjoint subsets of $2$ elements $\{i_1,i_2\}$, $\{i_3,i_4\}$, $\ldots$, $\{i_n-1,i_n\}$ with $i_{2k-1} < i_{2k}$ for 
$k=1,2,\ldots, n/2$.
\end{definition}

For these states all the information is encoded in the two-point distribution, as one can expect in a free theory.  It is not difficult to prove that,
for a quasifree state in view of the definition above,  $\omega_2\in \Gamma_c'(\E)$ entails that $\omega_n$ continuously extends to   $\omega_n \in \Gamma_c'(\E^{n\boxtimes})$ obtained, for $n=2k$, as a linear combination of tensor products of distributions $\omega_2$ and trivial if $n=2k+1$.

\begin{remark} If $\mathcal{A}$ is on-shell, then the $n$-point function satisfies trivially
\begin{equation*}
 \omega_n(\ff_1,\ldots, \N\ff_k, \ldots, \ff_n)=0\quad \mbox{for every $k=1,\ldots, n$ and $f_k \in \Gamma_c(\M)$.}
\end{equation*}
as a consequence of (\ref{defomegan}) and $\Phi(\N\ff)=0$.  However it may happen that these identities are valid (for some $n$) even if 
the algebra is not on-shell.
\end{remark}

In the next proposition, we shall see that the action of the M\o ller isomorphism $\mathcal{R}$ between CCR-algebras can be pull-backed on the quantum states. Furthermore, the pull-back of a quasifree state is again a quasifree state.

\begin{proposition}\label{pullbackstate} Let be $g,g' \in \mathcal{GH}_\M$, consider the algebras $\mathcal{A}$, $\mathcal{A}'$ respectively associated to
formally-selfadjoint 
 normally hyperbolic operators $\N,\N' :\Gamma(\E)\to \Gamma(\E)$ constructed out of $g$ and $g'$ and 
let $\omega:\mathcal{A}\to\CC$ be a state. Assuming that $g \simeq g'$, we define a functional ${\omega}':\mathcal{A}'\to\CC$ by pull-back through a M\o ller $*$-isomorphism $\mathcal{R}: \mathcal{A}'\to \mathcal{A}$ 
of $ \mathcal{A}$,$\mathcal{A}'$
as in  Definition \ref{defmollerstar}, i.e.
$${\omega}'=\omega\circ\mathcal{R}.$$ Then the following statements hold true:
\begin{itemize}
	\item[(1)] ${\omega}'$ is a state on $\mathcal{A}'$;
	\item[(2)]  $\omega'_2\in \Gamma_c'(\E\boxtimes \E)$ if and only if $\omega_2\in \Gamma_c'(\E\boxtimes \E)$;
	\item[(3)] $\omega'$ is quasifree if and only if $\omega$ is.
\end{itemize}
\end{proposition}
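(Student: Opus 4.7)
My plan is to exploit the fact that $\mathcal{R}:\mathcal{A}'\to\mathcal{A}$ is a unital $*$-isomorphism and that the explicit action on generators gives $\mathcal{R}(\Phi'(\ff))=\Phi(\R^{\dagger_{gg'}}\ff)$, so that at the level of $n$-point functions
$$\omega'_n(\ff_1,\dots,\ff_n) = \omega_n(\R^{\dagger_{gg'}}\ff_1,\dots,\R^{\dagger_{gg'}}\ff_n).$$
This identity is the pivot of the whole proof, and I would derive it from Proposition~\ref{algebraicMoller} at the very start.

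For (1), I would just observe that $\omega' = \omega\circ\mathcal{R}$ is $\CC$-linear as composition of $\CC$-linear maps, is normalized because $\mathcal{R}(\II')=\II$ (unitality), and is positive because $\mathcal{R}$ preserves the involution: for $a\in\mathcal{A}'$, $\omega'(a^*a)=\omega(\mathcal{R}(a)^*\mathcal{R}(a))\ge 0$.

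For (2), I would use the key identity above in the case $n=2$ to write $\omega'_2 = \omega_2\circ(\R^{\dagger_{gg'}}\otimes\R^{\dagger_{gg'}})$ on the algebraic tensor product $\Gamma_c(\E)\otimes\Gamma_c(\E)$, which is (sequentially) dense in $\Gamma_c(\E\boxtimes\E)$. By the Schwartz kernel theorem applied to the bundle case, $\omega_2$ defines a distribution on $\E\boxtimes\E$ if and only if it is separately sequentially continuous in each argument. Invoking Theorem~\ref{teoRRR}(5), both $\R^{\dagger_{gg'}}:\Gamma_c(\E)\to\Gamma_c(\E)$ and $(\R^{\dagger_{gg'}})^{-1}=(\R^{-1})^{\dagger_{g'g}}:\Gamma_c(\E)\to\Gamma_c(\E)$ are continuous. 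Composing continuous maps transports separate continuity in both directions, and the kernel theorem then yields the equivalence $\omega_2\in\Gamma_c'(\E\boxtimes\E)\Longleftrightarrow \omega'_2\in\Gamma_c'(\E\boxtimes\E)$. The other direction of the equivalence uses $\omega_2=\omega'_2\circ((\R^{\dagger_{gg'}})^{-1}\otimes(\R^{\dagger_{gg'}})^{-1})$, which is legitimate because $\mathcal{R}^{-1}$ is itself a M\o ller $*$-isomorphism (Corollary~\ref{corMol}(1)).

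For (3), assuming $\omega$ quasifree, I plug the key identity into Definition~\ref{defAS}. For odd $n$, $\omega'_n$ vanishes because $\omega_n$ does on any tuple. For even $n=2m$, the sum over pair partitions involving $\omega_2(\R^{\dagger_{gg'}}\ff_{i_{2k-1}},\R^{\dagger_{gg'}}\ff_{i_{2k}})=\omega'_2(\ff_{i_{2k-1}},\ff_{i_{2k}})$ reassembles into the quasifree expression for $\omega'_{2m}$ in terms of $\omega'_2$. The converse implication is symmetric, using that $\omega=\omega'\circ\mathcal{R}^{-1}$ with $\mathcal{R}^{-1}$ a M\o ller $*$-isomorphism. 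No real obstacle is expected; the only delicate point is (2), and even there the work is done by Theorem~\ref{teoRRR}(5), so the proof is essentially a bookkeeping argument built on results already established.
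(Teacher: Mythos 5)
Your proof is correct and follows essentially the same route as the paper's: unitality and $*$-preservation give (1), the continuity of $\R^{\dagger_{gg'}}$ and $(\R^{\dagger_{gg'}})^{-1}$ from Theorem~\ref{teoRRR}(5) together with the Schwartz kernel theorem give (2), and direct substitution into Definition~\ref{defAS} gives (3). The only cosmetic difference is that you correctly note surjectivity of $\mathcal{R}$ is not actually needed for positivity (the paper lists it, but it plays no role), and you spell out the quasifree bookkeeping that the paper compresses into ``immediate by construction''.
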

\begin{proof}
(1) Linearity is obvious since we are composing linear maps. Normalization follows from 1 in \ref{algebraicMoller} and from the fact that $\omega$ is normalized. Positivity follows from positivity of $\omega$ and
the fact that $\mathcal{R}$ preserves the involutions,  the products, and is surjective.
(2) Since $\omega_2 \in \Gamma_c'(\E\times \E)$, then  it is $\Gamma_c(\E)$-continuous in the right entry (taking values in $\Gamma_c'(\E)$ and with respect to the corresponding topology). As a consequence, by composition of continuous functions, if $\fh\in \Gamma_c(\E)$ is given, 
$$\Gamma_c(\E) \ni \ff \mapsto \omega'_2(\fh, \ff) = \omega_2(\R^{\dagger_{g g'}}\fh, \R^{\dagger_{g g'}} \ff )$$
is  $\Gamma_c(\E)$-continuous 
as well because 
 $\R^{\dagger_{g g'}}: \Gamma_c(\E) \to \Gamma_c(\E)$ is continuous in the $\Gamma_c(\E)$ topology in domain and co-domain for  (5) of Theorem \ref{teoRRR}. In other words $\Gamma_c(\E) \ni \ff \mapsto \omega'_2(\cdot, \ff) \in \Gamma'_c(\E)$ is continuous.  We conclude that  $\omega'_2  \in \Gamma_c'(\E\boxtimes \E)$ due to the Schwartz kernel theorem. The result can be reversed swapping the role of the states and the metrics,  
noticing that $\omega = \omega' \circ \mathcal{R}^{-1}$ where $\mathcal{R}^{-1}$ is also a M\o ller  $*$-isomorphism, the one constructed out of   $\R^{-1}$ which is, in turn,  a M\o ller  operator associated to the pair $g'$, $g$ in this order in view of Corollary \ref{corMol}. \\
(3) The proof is immediate and  follows by construction. 
\end{proof}

\subsection{M\o ller preservation of the microlocal spectrum condition for off-shell algebras}\label{subsec:Moller Hadamard}

It is widely accepted that, among all possible (quasifree) states, the physical
ones are required to satisfy the so-called Hadamard condition. The reasons for this choice are
manifold: For example, it implies the finiteness of the quantum fluctuations of the expectation
value of every observable and it allows us to construct Wick polynomials \cite{HW1,KM} and other observables, as the stress energy tensor,  relevant in semi-classical quantum gravity following a covariant
scheme \cite{Mo03,HM}, encompassing a locally covariant ultraviolet renormalization \cite{HW} (see also \cite{IgorValter} for a recent pedagogical review).
These states have been also employed, e.g. (the following list is far from being exhaustive)  in the study of the Blackhole radiation \cite{DMP3,gerardUnruh, MP,VXP}, in cosmological models  \cite{DMP,DFMR} and other applications to spacetime models \cite{Mo08,FMR1,FMR2}, and to study energy quantum inequalities \cite{FS}. 
 For later convenience, we decided to present the Hadamard condition as a microlocal condition on the wave-front set of the two-point distribution \cite{Radzikowski,Radzikowski2} instead of the 
equivalent geometric version based on the Hadamard parametrix \cite{KW,PDEcs, M}. Let's briefly sketch what they are and why they are useful. 

From now on we adopt the definitions of wave-front set $WF(\psi)$ of distribution $\psi$ on $\RR$-vector bundles equipped with a non-degenerate, symmetric, fiberwise  metric\footnote{The authors of \cite{propSing} more generally study the case of a complex Hermitian vector bundle endowed with an antilinear involution (here the identity bundle map) there indicated by $\Gamma$.} as in \cite{propSing}.

We shall use some very known definitions and results of microlocal analysis applied to distributions of $\Gamma'_c(\F)$ where $\F$ is a $\KK$-vector bundle,  $\F= \E \boxtimes \E$ for instance (see \cite{propSing} for details).  In particular,  
\begin{itemize}
\item $\psi \in \Gamma'_c(\F)$ is a smooth section of the dual bundle $\F^*$, indicated with the same symbol $\psi \in  \Gamma(\F^*)$,
if and only if $WF(\psi) = \emptyset$.
\item We say that $\psi,\psi' \in \Gamma_c'(\F)$ are {\bf equal  mod $C^\infty$}, if $\psi-\psi' \in \Gamma(\F^*)$.

\item Let us assume that $\F= \E \boxtimes \E$ where $\E$ is equipped with a non-degenerate, symmetric (Hermitian if $\KK=\CC$), fiberwise  metric and let $\P: \Gamma_c(\E) \to \Gamma_c(\E)$ be a formally selfadjoint smooth differential operator. We say that $\nu \in  \Gamma'_c(\E\boxtimes \E)$ is a {\bf bi-solution $\P\ff =0$  mod $C^\infty$}, if there exist $\varphi, \varphi' \in \Gamma(\F^*)$ such that 
$$\nu(\P \ff \otimes \fh) =  \int_\M  \langle \psi, \ff\otimes \fh\rangle\: \vol_g\otimes \vol_g\:, \quad 
\nu( \ff \otimes \P\fh) =  \int_\M  \langle \psi', \ff\otimes \fh\rangle\: \vol_g\otimes \vol_g \quad \forall \ff,\fh \in \Gamma_c(\E)\:.$$
\end{itemize}

We are in a position to state the definition of {\em micro local spectrum condition} and {\em Hadamard state}.
Below,  $\sim_{\parallel}$ is the equivalence relation in $T^*\M^2\backslash\{0\}$ such that  
$(x,k_x)\sim_{\parallel}(y,k_y)$ if there is 
 a \textit{null geodesic} passing through  $x,y\in\M$ and the geodesic parallely transports the co-tangent vector to  that geodesic $k_x \in T^*_x\M$
into the  co-tangent vector  to that geodesic $k_y\in T^*_y\M$.
Finally, $k_x\triangleright 0$ means that the covector $k_x$ is \textit{future directed}. 

\begin{definition}\label{Hadamard} With $\mathcal{A}$ as in Definition \ref{defAS},
a state $\omega:\mathcal{A}\to\CC$ is called a \textbf{Hadamard state} if
  $\omega_2 \in \Gamma_c'(\E \boxtimes\E)$ and
 the following {\bf microlocal spectrum condition} is valid
\begin{equation} \label{WHad} WF(\omega_2)=\{(x,k_x;y,-k_y)\in T^*\M^2\backslash\{0\}\:|\:(x,k_x)\sim_{\parallel}(y,k_y), k_x\triangleright0\}\:.\end{equation}
More generally, a distribution $\nu   \in \Gamma_c'(\E \boxtimes\E)$
is said to be of {\bf Hadamard type} if its wave-front set $WF(\nu)$  is  the right-hand side of (\ref{WHad}).
\end{definition}

\begin{remark} $\null$
\begin{itemize}
\item[(1)] Notice that  $(x,k_x;x,-k_x) \in WF(\nu)$ for every future directed lightlike covector $k_x\in T_x^*\M$ if $\nu\in \Gamma_c'(\E \boxtimes\E)$ is  of Hadamard type.
\item[(2)] It is possible to prove that a fiberwise scalar product $\langle\cdot|\cdot\rangle$ must be necessarily positive if $\mathcal{A}$ admits quasifree Hadamard states as proved in the comment after \cite[Proposition 5.6]{propSing}. We  henceforth assume that 
$\langle\cdot|\cdot\rangle$ is positive.
\end{itemize}
\end{remark}

The theorem below, which is the second main result of this paper, shows that the Hadamard condition is preserved under the pull-back along the  M\o ller isomorphism.

\begin{theorem}\label{thm:main Had}
Let $\E$ be an $\RR$-vector bundle over smooth manifold $\M$ and denote with $\fiber{\cdot}{\cdot}$ positive, symmetric, fiberwise  metric.
 Let be  $g,g' \in \mathcal{GH}_\M$, consider the corresponding  formally-selfadjoint normally hyperbolic operators $\N,\N' : \Gamma(\E) \to \Gamma(\E)$ and refer to the associated CCR algebras $\mathcal{A}$ and $\mathcal{A}'$ (off-shell in general). Finally, suppose that $g\simeq g'$.\\
 $\omega:\mathcal{A}\to\CC$ is  a quasifree Hadamard state,
 if and only if  $$\omega':= \omega \circ \mathcal{R} : \mathcal{A}'\to\CC\:,$$
constructed out of a M\o ller $*$-isomorphism $\mathcal{R}$ of $\mathcal{A},\mathcal{A}'$, is a quasifree  Hadamard state of $\mathcal{A}'$. 
\end{theorem}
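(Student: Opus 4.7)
My plan is to reduce the statement to a microlocal property of the bidistribution $\omega_2' = \omega_2 \circ (\R^{\dagger_{gg'}} \otimes \R^{\dagger_{gg'}})$ and then use Radzikowski's characterization together with the explicit structure of the M\o ller operator furnished by Theorem~\ref{teoRRR}. Since $\omega$ is quasifree, Proposition~\ref{pullbackstate} already tells me that $\omega'$ is quasifree and that $\omega_2' \in \Gamma_c'(\E\boxtimes \E)$; all the $n$-point functions of $\omega'$ are determined by $\omega_2'$ alone, and positivity, normalization and the CCR-compatibility transfer automatically. Hence the only non-trivial content is to show that $WF(\omega_2')$ coincides with the Hadamard set $\mathcal{H}_{g'}$ prescribed by \eqref{WHad} for the metric $g'$.

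\textbf{Step 1: microlocal structure of $\R^{\dagger_{gg'}}$.} Using the factorization \eqref{frop1}--\eqref{frop2} and the dual formulas \eqref{RRdagger}, I would decompose $\R^{\dagger_{gg'}}$ and its inverse as finite products of operators of the shape $\Id - D\, \G^{\pm}_P$, where $P$ is a normally hyperbolic operator for some intermediate metric $g_k$ or $g_\chi$ and $D$ is a second-order differential operator whose coefficients are smooth and supported in a compact time-strip (by construction $D = \rho\N_\chi-\N_0$ or $\rho'\N_1 - \rho\N_\chi$ vanishes outside a slab bounded by two common spacelike Cauchy hypersurfaces). By Duistermaat--H\"ormander, each $\G^{\pm}_P$ is a properly supported Fourier integral operator whose canonical relation is the flow-out of the diagonal along the future/past bicharacteristics of the principal symbol $g^\sharp_\chi(\xi,\xi)$. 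Therefore $\R^{\dagger_{gg'}}$ is the identity plus a finite composition of FIO's with controlled canonical relations made of null bicharacteristic lifts through the intermediate metrics.

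\textbf{Step 2: transport of the wavefront set.} Since $\R^{\dagger_{gg'}}$ and $(\R^{\dagger_{gg'}})^{-1}$ are $\Gamma_c(\E)$-continuous (Theorem~\ref{teoRRR}(5)), the composition $\omega_2' = \omega_2 \circ (\R^{\dagger_{gg'}} \otimes \R^{\dagger_{gg'}})$ is a well-defined bidistribution. By H\"ormander's wavefront-set composition theorem,
\[
WF(\omega_2') \subseteq WF'(\R^{\dagger_{gg'}}) \circ WF(\omega_2) \circ WF'(\R^{\dagger_{gg'}})^{t}.
\]
The key simplification is that $\R^{\dagger_{gg'}} \N'|_{\Gamma_c(\E)} = \N|_{\Gamma_c(\E)}$ (Theorem~\ref{teoRRR}(4)), so if $\omega_2$ is a bisolution of $\N$ modulo $C^\infty$ then $\omega_2'$ is a bisolution of $\N'$ modulo $C^\infty$, and the antisymmetric part identity $\omega_2(x,y)-\omega_2(y,x) = i\G_\N(x,y)$ gets transported to the corresponding identity for $\G_{\N'}$ thanks to the propagator intertwining $\R\G_\N \R^{\dagger_{gg'}} = \G_{\N'}$ (Theorem~\ref{teoRRR}(2)). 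Combining these with the H\"ormander-Duistermaat propagation of singularities for each factor $\Id - D\,\G^\pm_P$, and iterating along the chain of intermediate metrics in the paracausal sequence, the Hadamard set $\mathcal{H}_g$ is transported step by step into $\mathcal{H}_{g'}$, yielding $WF(\omega_2')\subseteq \mathcal{H}_{g'}$.

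\textbf{Step 3: reverse inclusion and conclusion.} To obtain equality, I would run the same argument backwards using $\omega = \omega'\circ \mathcal{R}^{-1}$. By Corollary~\ref{corMol}(1), $\mathcal{R}^{-1}$ is itself a M\o ller $*$-isomorphism between $\mathcal{A}$ and $\mathcal{A}'$, so the previous step applied to the pair $(g',g)$ gives $WF(\omega_2)\subseteq \mathcal{H}_g$ implies the reverse inclusion $\mathcal{H}_{g'}\subseteq WF(\omega_2')$ when combined with the normalization of the antisymmetric part. The "iff" then follows by symmetry.

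\textbf{Main obstacle.} The hard part is unquestionably Step 2: because $g$ and $g'$ have genuinely different null cones, $\mathcal{H}_g$ and $\mathcal{H}_{g'}$ are different conic subsets of $T^*(\M^2)\setminus 0$, so one must check that the composition of wavefronts not only stays inside the correct Hadamard set for the target metric but also does not produce any spurious covectors off the $g'$-null bicharacteristic strip. This requires using, at every step of the paracausal chain, the precise matching between the bicharacteristics of the intermediate $g_\chi$ and those of the two flanking metrics, controlled by the bisolution property modulo $C^\infty$ and by the fact that the "interpolating" operator $D$ is supported in a compact slab where the two metrics differ; a secondary technical point is verifying that each composition of wavefronts is transverse (so that H\"ormander's theorem applies), which follows from the proper support of $D$ and the standard structure of the canonical relations of $\G^\pm_P$.
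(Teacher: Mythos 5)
Your reduction to the two-point distribution, the use of Proposition~\ref{pullbackstate} to dispense with the quasifree/positivity bookkeeping, and the use of Corollary~\ref{corMol} to get the ``only if'' direction are all in line with the paper. However, Step~2 contains a genuine gap that you yourself flag but do not close. Writing $\R^{\dagger_{gg'}}$ as a composition of FIO's of the form $\Id - D\,\G_P^\pm$ and applying H\"ormander's wavefront-composition bound does \emph{not} by itself show that $WF(\omega_2')$ lands exactly on the $g'$-Hadamard set $\mathcal{H}_{g'}$: the canonical relations of the $\G_P^\pm$ are built from the null bicharacteristics of the intermediate metrics, and composing these with $\mathcal{H}_g$ produces a conic set whose containment in $\mathcal{H}_{g'}$ is precisely what needs to be proved. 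Nothing in your argument rules out spurious covectors, nor does it track the ``future-directed'' polarization condition through the composition. The sentence ``the Hadamard set $\mathcal{H}_g$ is transported step by step into $\mathcal{H}_{g'}$'' is therefore an assertion, not a proof, and your ``main obstacle'' paragraph essentially acknowledges this.

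The paper's proof avoids the FIO-composition route entirely. The decisive observation is that, in the elementary step $g_0\preceq g_\chi$, the adjoint $\R_+^{\dagger_{g_0g_\chi}}$ acts as the \emph{identity} on test sections supported in the past of the Cauchy slice $\Sigma_{t_0}$ where $g_\chi=g_0$; hence $\omega_2^\chi$ literally coincides with $\omega_2^0$ there, so one is of Hadamard type in a collar of a common Cauchy surface iff the other is. Combined with Lemmas~\ref{lemmaCOM} and~\ref{lemmaBI} (which show that both distributions are bisolutions of the respective normally hyperbolic operators mod $C^\infty$), the standard propagation-of-Hadamard-singularity theorem then extends the Hadamard property from the collar to all of $\M$ for both metrics, and one iterates along the paracausal chain. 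This is strictly easier than your strategy because the microlocal comparison is only performed in a region where the two metrics agree; the global statement is then ``for free'' from propagation. If you want to salvage your approach, you would have to carry out the wavefront-set arithmetic explicitly and prove the nontrivial inclusion at each step; as written, it is a plan, not a proof.
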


\begin{remark} We stress that it is not required that the algebras are on-shell nor that the relevant  two-point functions satisfy the equation of motion with respect to the corresponding normally hyperbolic operators.
\end{remark}

The rest of this section is devoted to prove Theorem~\ref{thm:main Had},  a refinement  of it stated in the last Theorem \ref{thm:main intro Had}, and a proof of existence of Hadamard states based on our formalism. 

Our first observation is the following.
\begin{lemma}\label{lemmaAppedix}
Let  $\S: \Gamma(\E) \to \Gamma(\E)$ be  any of the four operators $\R_+$, $\R_-$, $\R_+^{-1}$, $\R_-^{-1}$,
defined as in (\ref{defR+}), (\ref{defR-}), (\ref{inverse}), (\ref{inverse2}), and $U\subset \RR^m$ an open set.\\
If $\{\ff_z\}_{z\in U} \subset \Gamma(\E)$ is such that 
$\M \times U \ni (x,z) \mapsto \ff_z(x)$ is jointly smooth, then $$\M\times U \ni (x,z) \mapsto (\S \ff_z)(x)$$ is jointly smooth as well.
\end{lemma}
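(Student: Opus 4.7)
Each of the four operators $\S$ introduced in \eqref{defR+}--\eqref{inverse2} has the form
\[
\S \;=\; \Id \pm\, \G^{\varepsilon}_{P}\,Q\,,
\]
where $\varepsilon\in\{+,-\}$, $P$ is a formally normally hyperbolic operator belonging to $\{\rho\N_\chi,\N_0,\rho'\N_1\}$ on the corresponding globally hyperbolic spacetime (Theorem~\ref{lem:DefNorm1}), and $Q$ is the smooth second-order differential operator $\rho\N_\chi-\N_0$ or $\rho'\N_1-\rho\N_\chi$. Inspection of the hypotheses on $\chi,\rho,\rho'$ shows that the coefficients of the first of these $Q$'s vanish identically on $\{t<t_0\}$ and those of the second on $\{t>t_1\}$; consequently $\supp(Q\ff)$ is contained, independently of $\ff\in\Gamma(\E)$, in the fixed past-compact set $\{t\geq t_0\}$ (respectively in the future-compact set $\{t\leq t_1\}$).

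\textbf{Reduction to the Green operator.} Because $Q$ is a smooth differential operator acting only on the $x$-variable, the map $(x,z)\mapsto(Q\ff_z)(x)$ is jointly smooth as soon as $(x,z)\mapsto\ff_z(x)$ is. The lemma therefore follows once we prove the following claim: \emph{if $(x,z)\mapsto\fh_z(x)$ is jointly smooth on $\M\times U$ and $\supp\fh_z\subset K$ for a fixed past-compact set $K$ and every $z\in U$, then $(x,z)\mapsto(\G^+_P\fh_z)(x)$ is jointly smooth on $\M\times U$} (together with its dual formulation for $\G^-_P$ and future-compact $K$). To prove it, fix a spacelike Cauchy hypersurface $\Sigma$ of $(\M,g_P)$ lying strictly in the past of $K$ (for instance $\Sigma=t^{-1}(t_0-1)$). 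Since $\supp(\G^+_P\fh_z)\subset J^+(K)\subset\{t\geq t_0\}$, the section $\Psi_z:=\G^+_P\fh_z$ is the unique solution of the Cauchy problem
\[
P\Psi_z=\fh_z,\qquad \Psi_z|_\Sigma=0,\qquad (\nabla_{\fn}\Psi_z)|_\Sigma=0,
\]
with $\fn$ the future-directed $g_P$-unit normal to $\Sigma$.

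\textbf{Smooth parameter dependence.} Theorem~\ref{thm:main nh} asserts that this Cauchy problem is well posed and that its solution depends continuously on the source in the Fr\'echet topology of $\Gamma(\E)$. Applying this continuous dependence to the difference quotients $(\fh_{z+h\mathbf{e}_i}-\fh_z)/h$, which converge to $\partial_{z_i}\fh_z$ in that topology thanks to joint smoothness and the uniform support condition, one deduces that $\partial_{z_i}\Psi_z$ exists and satisfies the same Cauchy problem with source $\partial_{z_i}\fh_z$; in other words $\partial_{z_i}\Psi_z=\G^+_P(\partial_{z_i}\fh_z)$. Iterating on the order of the $z$-derivatives yields $\partial_{z}^{\alpha}\Psi_z=\G^+_P(\partial_z^{\alpha}\fh_z)$ for every multi-index $\alpha$, and since each such expression is a smooth section of $\E$ depending continuously on $z$, joint smoothness of $(x,z)\mapsto\Psi_z(x)$ follows. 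The main technical point is precisely the commutation of $\partial_{z_i}$ with $\G^+_P$ on jointly smooth past-compactly supported families: this is a standard instance of smooth parameter dependence for hyperbolic Cauchy problems and rests on the very same energy estimates that underlie Theorem~\ref{thm:main nh}. The argument for $\G^-_P$ appearing in $\R_-$ and $\R_-^{-1}$ is entirely symmetric, with past and future interchanged.
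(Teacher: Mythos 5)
Your argument takes a genuinely different route from the paper's. Both proofs begin with the same structural reduction — writing each operator as $\Id\pm\G^\pm_P Q$ and reducing to the claim that the Green operator preserves joint smoothness of a family with uniformly past-compact (or future-compact) support — but they diverge in how that claim is established. The paper proves it microlocally: it treats $\G^+_P$ as a Schwartz kernel, uses the known wavefront set of the retarded propagator, applies the composition theorem for wavefront sets (which needs the properness of the projection onto the parameter factor), and reduces the general past-compact case to the compact one by the explicit cutoff identity $(\G^+_P\fh)(x_0)=(\G^+_P s_A\fh)(x_0)$ drawn from~\cite[Theorem 3.6.7]{Ba-lect}. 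You instead recognize $\Psi_z=\G^+_P\fh_z$ as the solution of a Cauchy problem with vanishing data on a Cauchy surface lying in the past of $K$ and differentiate the solution operator in $z$ via difference quotients. This is more elementary — no wavefront set calculus — and conceptually clean, provided the parameter-dependence step is made airtight.

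The one genuine gap is precisely where you lean on Theorem~\ref{thm:main nh}. That theorem gives well-posedness and continuous dependence for sources $\ff\in\Gamma_c(\E)$, i.e.\ \emph{compactly} supported. Your sources $\fh_z=Q\ff_z$ are only \emph{past-compactly} supported, and you need continuity of $\G^+_P$ with respect to the Fr\'echet topology of $\Gamma(\E)$ restricted to sections supported in the fixed non-compact set $K$. This is true, but it does not follow from the theorem as stated without an intermediate localization: for each compact $L\subset\M$, the restriction of $\G^+_P\fh_z$ to $L$ depends only on $\fh_z|_{J_-(L)\cap K}$, which \emph{is} compactly supported, and only then can the compactly-supported Cauchy theory be applied. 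You gesture at this with ``rests on the very same energy estimates that underlie Theorem~\ref{thm:main nh}'', but this is an appeal to folklore rather than a proof; note that the paper's cutoff $s_A$ is carrying out exactly the localization that your argument needs. A secondary omission: the final passage from Fr\'echet smoothness of $z\mapsto\Psi_z\in\Gamma(\E)$ to joint smoothness of $(x,z)\mapsto\Psi_z(x)$ should also be stated explicitly — it is a standard fact for smooth maps valued in nuclear Fr\'echet spaces, but it is the last link in your chain of reasoning.
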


\begin{proof}
We consider the case of $\R_+$, the remaining three instances having a similar proof. What we have to prove is that
$\M\times U \ni (x,z) \mapsto  \left(\G_{\rho \N_\chi}^+(\rho \N_\chi -\N_0) \ff_z\right)(x)$ is smooth under the said hypotheses.
Let us first consider the case where there is compact $K \subset \M$ such that $\mbox{supp}(\ff_z) \subset K$ for all $z\in U$. In this case, defining $F(x,z) := (\rho \N_\chi -\N_0)\ff_z(x)$,
 the projection $\pi: \mbox{supp}(F) \ni (x,z) \mapsto z \in U$ is proper\footnote{If $C \subset U$ is compact and thus closed, then $\pi^{-1}(C)$ is a closed set, $\pi$ being continuous,  contained in the compact $K \times C$, so that $\pi^{-1}(C)$ is compact as well.} and this fact will be used shortly. Interpreting 
$\G_{\rho \N_\chi}^+ : \Gamma_c(\E) \to \Gamma_c(\E)$ and thus as a Schwartz kernel, we can compute the wavefront set of the map $\M\times U \ni (x,z) \mapsto  \left(\G_{\rho \N_\chi}^+(\rho \N_\chi -\N_0) \ff_z\right)(x)$ viewed as the distributional kernel of the composition of the kernel  $\G_{\rho \N_\chi}^+(x,y)$ and the smooth kernel $F(y,z)$.
We know that  (see, e.g. \cite{IgorValter} for the scalar case, the vector case being analogous)
$$ WF(\G_{\rho \N_\chi}^+)=\left\{(x,k_x;y,-k_y)\in T^*\M^2\backslash\{0\}\:|\:(x,k_x)\sim_{\parallel}(y,k_y), x \in J_+(y) \:\mbox{or}\: k_x=k_y\:, x=y\right\}$$
whereas, since $F$ is jointly smooth,
$$WF(F) = \emptyset\:.$$
The known composition rules of wavefront sets of Schwartz kernels, which use  in particular the fact that the projection $\pi$ above is proper  (in \cite[Theorem 5.3.14]{IgorValter} which is valid also in the vector field case), immediately yields
$$WF(\G_{\rho \N_\chi}^+ \circ  F ) \subset \emptyset\:.$$
It being $WF(\G_{\rho \N_\chi}^+ \circ F ) = \emptyset$, we conclude that $\M\times U \ni (x,z) \mapsto  \left(\G_{\rho \N_\chi}^+(\rho \N_\chi -\N_0) \ff_z\right)(x)$ is a smooth function as desired.\\
Let us pass to consider the generic jointly smooth family $\{\ff_z\}_{z\in U} \subset \Gamma(\E)$ without restrictions on the supports.
First of all, we observe that $\ff'_z(x) :=  ((\rho \N_\chi -\N_0)\ff)(x)$ is past compact by construction for every $z\in U$, because its support is contained in the future of $\Sigma_{t_0}$ referring to the construction of $\N_\chi$.
According to the proof of \cite[Theorem 3.6.7]{Ba-lect},
if $\fh$ is past compact, $x_0 \in \M$, and $A \supset \mbox{supp}(\fh) \cap J_-(x_0)$ is an open relatively compact set, for every compactly supported smooth function $s_A \in C^\infty_c(\M; [0,1])$ such that $s_A(x)=1$ if $x\in A$, it holds
$$(\G_{\rho \N_\chi}^+ \fh)(x_0) = (\G_{\rho \N_\chi}^+ s_A \fh)(x_0) \:.$$
We want to apply this identity for $\fh = \ff_z$.
Take  $t' <t_0$. Given $x_0\in \M$ we can always define $A:=I_-(\tilde{x}_0) \cap I^+(\Sigma_{t'}) $ where $\tilde{x}_0 \in I_+(x_0)$ \footnote{Notice that since the spacetime is  globally hyperbolic, $\overline{I_\pm(x)}= J_\pm(x)$ and $\overline{I_-(\tilde{x}_0) \cap I^+(\Sigma_{t'})}  
=J_-(\tilde{x}_0) \cap J^+(\Sigma_{t'}) $ which is  compact because $\Sigma_{t'}$ is a smooth spacelike Cauchy surface.}.
With this choice, $A$ does not depend on $z\in U$ and the same $A$ can be used for $x$ varying in an open neighborhood $A'$ of $x_0$, since $I_-(\tilde{x}_0)$ is open. We conclude that, if $(x,z) \in A'\times U$, then 
\begin{equation} \left(\G_{\rho \N_\chi}^+(\rho \N_\chi -\N_0) \ff_z\right)(x) = (\G_{\rho \N_\chi}^+\circ F)(x,z) \quad \mbox{where $F(x,z) = s_A(x) 
(\rho \N_\chi -\N_0)\ff_z(x)  $} \:.\label{comparison}\end{equation}
In this case $K:= \mbox{supp}(s_A)$ includes all the supports of the maps $\M \ni x \mapsto F(x,z)$
for every $z\in U$. The first part of the proof is therefore  valid for the map  $\M\times U \ni (x,z) \mapsto (\G_{\rho \N_\chi}^+\circ F)$ which must be  jointly smooth as a consequence.
In particular, its restriction  $A'\times U \ni (x,z) \mapsto \left(\G_{\rho \N_\chi}^+(\rho \N_\chi -\N_0) \ff_z\right)(x)$ is   jointly smooth as well. Since $A'$ can be taken as a neighborhood of every point in $\M$ and $z\in U$ is arbitrary, from (\ref{comparison}) the whole function 
$\M\times U \ni (x,z) \mapsto \left(\G_{\rho \N_\chi}^+(\rho \N_\chi -\N_0) \ff_z\right)(x)$
is jointly smooth.
\end{proof}

Relying  on Lemma~\ref{lemmaAppedix}, we can notice the following.
\begin{lemma}\label{lemmaBI}  Consider a pair of globally hyperbolic metrics $g_0$ and $g_\chi$ on $\M$ as in Proposition \ref{propositionA} and 
corresponding  normally hyperbolic operators $\N_0,\N_\chi :\Gamma (\E) \to \Gamma(\E)$ for the $\RR$-vector bundle  on $\M$ equipped with the positive symmetric fiberwise metric $\fiber{\cdot}{\cdot}$. \\
Then, $\nu_0 \in \Gamma'_c(\E \boxtimes \E)$ is a bisolution of $\N_0 \ff=0$ mod $C^\infty$ if and only if $\nu_\chi:= \nu \circ \R_+^{\dagger_{g_0g_\chi}} \otimes \R_+^{\dagger_{g_0g_\chi}}$ is a bisolution of $\N_\chi \ff =0$ mod $C^\infty$, where $\R_+$ is defined in (\ref{defR+2}).
\end{lemma}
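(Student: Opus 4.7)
The plan is to combine the intertwining identity $\R_+^{\dagger_{g_0g_\chi}}\N_\chi = \N_0$ on $\Gamma_c(\E)$ with a careful use of the adjoint relation that moves $\R_+^{\dagger_{g_0g_\chi}}$ from a test section onto a smooth kernel, where joint smoothness is then preserved by Lemma~\ref{lemmaAppedix}.

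First I would record that $\R_+$ is a M\o ller operator of $g_0,g_\chi$ in the sense of Theorem~\ref{teoRRR} (this is the content of Remark~\ref{remaggeqR}), so that property (4) of that theorem yields $\R_+^{\dagger_{g_0g_\chi}}\N_\chi|_{\Gamma_c(\E)} = \N_0|_{\Gamma_c(\E)}$. For $\ff,\fh\in\Gamma_c(\E)$ this gives
$$\nu_\chi(\N_\chi\ff\otimes\fh) = \nu_0(\N_0\ff\otimes\R_+^{\dagger_{g_0g_\chi}}\fh), \qquad \nu_\chi(\ff\otimes\N_\chi\fh) = \nu_0(\R_+^{\dagger_{g_0g_\chi}}\ff\otimes\N_0\fh).$$
That $\nu_\chi\in\Gamma'_c(\E\boxtimes\E)$ in the first place follows from the Schwartz kernel theorem combined with the $\Gamma_c(\E)$-continuity of $\R_+^{\dagger_{g_0g_\chi}}$ (point (5) of Theorem~\ref{teoRRR}), just as in the proof of Proposition~\ref{pullbackstate}(2).

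For the forward direction assume $\nu_0$ is a bisolution mod $C^\infty$, so that $\nu_0\circ(\N_0\otimes I)$ is represented by integration of a smooth kernel $\psi\in\Gamma((\E\boxtimes\E)^*)$ against $\vol_{g_0}\otimes\vol_{g_0}$. Using the positive fiberwise metric to raise the second index, for each $x\in\M$ the smooth map $y\mapsto\sharp\psi(x,y)\in\E_y$ (depending linearly on the $\E_x^*$-parameter) lies in $\D(\R_+)=\Gamma(\E)$. Applying Definition~\ref{defadjoint} pointwise in $x$, with $\ff=\sharp\psi(x,\cdot)$ and $\fh\in\Gamma_c(\E)$, gives
$$\int_\M \psi(x,y)\cdot(\R_+^{\dagger_{g_0g_\chi}}\fh)(y)\,\vol_{g_0}(y) = \int_\M \tilde\psi(x,y)\cdot\fh(y)\,\vol_{g_\chi}(y),$$
where $\tilde\psi(x,y):=\flat(\R_+\sharp\psi(x,\cdot))(y)$. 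By Lemma~\ref{lemmaAppedix} (applied in local coordinates around any point of $\M$, with the $x$-coordinate playing the role of the parameter $z\in U$), the joint smoothness of $\psi$ is inherited by $\tilde\psi$. Fubini then gives
$$\nu_\chi(\N_\chi\ff\otimes\fh) = \int_{\M\times\M}\tilde\psi(x,y)\cdot(\ff(x)\otimes\fh(y))\,\vol_{g_0}(x)\otimes\vol_{g_\chi}(y),$$
which, modulo the smooth positive factor relating the two volume forms, exhibits $\nu_\chi\circ(\N_\chi\otimes I)$ as integration against a smooth kernel; the symmetric argument with slots swapped handles $\nu_\chi\circ(I\otimes\N_\chi)$. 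For the converse I would use that $\R_+^{-1}$, given by Equation~(\ref{inverse}), is of the same structural form as $\R_+$ and is itself a M\o ller operator of $(g_\chi,g_0)$, so that Lemma~\ref{lemmaAppedix} applies to it as well; Proposition~\ref{propadjoint}(6) yields $(\R_+^{\dagger_{g_0g_\chi}})^{-1}=(\R_+^{-1})^{\dagger_{g_\chi g_0}}$, whence $\nu_0=\nu_\chi\circ((\R_+^{-1})^{\dagger_{g_\chi g_0}}\otimes(\R_+^{-1})^{\dagger_{g_\chi g_0}})$ and the previous argument, with the roles of $(g_0,\N_0)$ and $(g_\chi,\N_\chi)$ interchanged, produces the reverse implication.

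The main obstacle is the smoothness step: one must ensure that transferring $\R_+^{\dagger_{g_0g_\chi}}$ from the test section $\fh$ onto the kernel $\psi$ produces a \emph{jointly} smooth kernel in $(x,y)$, and not merely one which is smooth in each variable separately. This is exactly the content of Lemma~\ref{lemmaAppedix}, once we view $x$ as a parameter on which $\R_+$ acts trivially; the remaining work is the book-keeping with the fiberwise musical isomorphisms and the verification that the adjoint relation is legitimately applied to the non-compactly supported section $\sharp\psi(x,\cdot)$, which is permitted because $\D(\R_+)=\Gamma(\E)$.
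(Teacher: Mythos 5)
Your proof is correct and follows essentially the same route as the paper's: both pass the adjoint $\R_+^{\dagger_{g_0g_\chi}}$ from the test section onto the smooth kernel $\psi$ of the bisolution via Definition~\ref{defadjoint}, obtain joint smoothness of the resulting kernel from Lemma~\ref{lemmaAppedix} (with the first variable treated as a parameter), and handle the converse by replacing $\R_+$ with $\R_+^{-1}$; the paper merely carries out the kernel transfer in index notation with partitions of unity where you use the musical isomorphisms $\sharp$, $\flat$ to keep the argument coordinate-free.
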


\begin{proof} We start by stressing that, as already noticed, in view of the known continuity properties of $\R_+^{\dagger_{g_0g_\chi}}$ 
and its inverse and using Schwartz' kernel theorem, $\nu _0\in  \Gamma'_c(\E \boxtimes \E)$ if and only if  $\nu_\chi \in \Gamma'_c(\E \boxtimes \E)$. \\ We pass to prove that if $\nu_0$ is a bisolution mod $C^\infty$, then $\nu_\chi$ is a bisolution mod $C^\infty$, referring to the corresponding operators.
Let us hence suppose that $\nu_0(\N_0\ff, \fh) = \psi(\ff \otimes \fh)$ and $\nu_0(\ff, \N_0\fh) = \psi'(\ff \otimes \fh)$ 
for some smooth sections $\psi,\psi' \in \Gamma((\E\boxtimes \E)^*)$ and all $\ff,\fh \in \Gamma_c(\E)$. 
The identity
$$\R_+^{\dagger_{g_0g_\chi}} \N_\chi|_{\Gamma_c(\E)}= \N_0|_{\Gamma_c(\E)}\:,$$
immediately implies that, if $\varphi(x,y) := c_0^\chi(x)c_0^\chi(y)\psi(x,y)$, $\varphi'(x,y) := c_0^\chi(x)c_0^\chi(y)\psi'(x,y)$, 
$$\nu_\chi(\N_\chi \ff,\fh) = \int_{\M\times \M} \langle \varphi(x,y), (\Id \otimes \R_+^{\dagger_{g_0g_\chi}} (\ff \otimes \fh))(x,y)\rangle \vol_{g_\chi}(x) \otimes \vol_{g_\chi}(y)$$
and
$$\nu_\chi( \ff,\N_\chi\fh) = \int_{\M\times \M} \langle \varphi'(x,y), (\R_+^{\dagger_{g_0g_\chi}}\otimes\Id (\ff \otimes \fh))(x,y)\rangle \vol_{g_\chi}(x) \otimes \vol_{g_\chi}(y)$$
The proof ends if proving that  there are smooth sections $\varphi_1,\varphi'_1 \in \Gamma((\E\boxtimes \E)^*)$, such that 
$$ \int_{\M\times \M} \langle \varphi, \Id \otimes \R_+^{\dagger_{g_0g_\chi}} (\ff \otimes \fh)\rangle \vol_{g_\chi}\otimes \vol_{g_\chi}
= \int_{\M\times \M}\langle \varphi_1(x,y), \ff(x) \fh(y) \rangle\vol_{g_\chi}(x) \otimes \vol_{g_\chi}(y)$$
and
$$ \int_{\M\times \M} \langle \varphi, \R_+^{\dagger_{g_0g_\chi}}\otimes \Id  (\ff \otimes \fh)\rangle \vol_{g_\chi}\otimes \vol_{g_\chi}
= \int_{\M\times \M}\langle \varphi'_1(x,y), \ff (x) \fh(y) \rangle \vol_{g_\chi}(x) \otimes \vol_{g_\chi}(y)$$
for every pair $\ff,\fh \in \Gamma_c(\E)$. We prove the former identity only, the second one having an identical proof.
To this end we pass to the index notation (also assuming Einstein's summing convention), the indices being referred to the fiber in the local trivialization,
$$ \int_{\M\times \M} \langle \varphi, \Id \otimes \R_+^{\dagger_{g_0g_\chi}} (\ff \otimes \fh)\rangle \vol_{g_\chi}\otimes \vol_{g_\chi}
$$ $$= \sum_{j,k} \int_{\M\times \M}\chi_j(x) \chi'_k(y) \varphi_{ab}(x,y) (\R_+^{\dagger_{g_0g_\chi}} \ff)^a(x)  \fh^b(y) \vol_{g_\chi}(x)\otimes \vol_{g_\chi}(y) $$
Above $\{\chi_j\}_{j\in J}$ and $\{\chi'_k\}_{k\in K}$ are partitions of the unity of $\M$ subordinated to corresponding locally finite coverings of $\M$ supporting  local trivializations,  whose fiber coordinates  are labeled by $^a$ and $^b$. Moreover, only a finite number of indices $(j,k)\in J\times K$ give a contribution to the sum, uniformly in $x,y$, in view of the compactness of the supports of $\ff$ and $\fh$ and the local finiteness of the used coverings.
The right-hand side can be rearranged to
$$
=   \sum_{k\in K}\int_{\M} \chi'_k(y) \left(\sum_{j\in J} \int_{\M}\chi_j(x) \varphi_{ab}(x,y) (\R_+^{\dagger_{g_0g_\chi}} \ff)^a(x)   \right)\fh^b(y)\vol_{g_\chi}(y) $$
$$
=   \sum_{k\in K}\int_{\M} \chi'_k(y) \left( \int_{\M} \fiber{\varphi'_{yb}(x)}{(\R_+^{\dagger_{g_0g_\chi}} \ff)(x)} \vol_{g_\chi}(x)\right)   \fh^b(y)\vol_{g_\chi}(y) $$
$$=  \int_{\M} \sum_{k\in K} \chi'_k(y)  \left( \int_{\M} \fiber{(\R_+\varphi'_{yb})(x)}{\ff(x)}  \vol_{g_0}(x)\right) \fh^b(y)  \vol_{g_\chi}(y) $$
$$ =  \sum_{j,k} \int_{\M\times \M}  \chi_j(x) \chi'_k(y) c_0^\chi(x) (\R_+\varphi'_{yb})_a(x)\ff^a(x) \fh^b(y)   \vol_{g_\chi}(x)\otimes \vol_{g_\chi}(y)$$
$$=  \int_{\M\times \M}\langle \varphi_1(x,y), \ff\otimes \fh(x,y) \rangle\vol_{g_\chi}(x) \otimes \vol_{g_\chi}(y)\:,$$
where  we have locally defined ${\varphi'}_{yb}^a(x) := \xi^{ac}(x) \varphi_{cb}(x,y)$, with $\xi^{ab}(x)$ being the inverse fiber metric at $x\in M$ in any  considered local trivialization.
Above,
$\varphi_{1\:ab}(x,y) := c_0^\chi(x) (\R_+\varphi'_{yb})_a(x)$
is the candidate  section of $(\E \boxtimes \E)^*$  we were looking for, represented  in local coordinates of the atlas of the said trivialization.
 That section is smooth, i.e.,  $\varphi_1 \in \Gamma((\E \boxtimes \E)^*)$ as desired. Indeed,  the maps $\M \times U_k \ni (x,y) \mapsto \varphi'_{yb}(x)$ define  a family of sections of $\Gamma(\E)$ parametrized by $y\in U_k$ for every given  $b\in \{1,\ldots, N\}$, where $U_k\subset  \M$ is the projection onto $\M$ of the domain of the considered local trivialization. This family is jointly smooth in $x,y$   as established in Lemma \ref{lemmaAppedix}.  \\
The converse statement, that $\nu_0$ is a bisolution mod $C^\infty$
if $\nu_\chi$ is, can be proved with the same procedure simply replacing $\R_+$ with $(\R_+)^{-1}$ and using Lemma \ref{lemmaAppedix} again.
\end{proof}

Before giving the proof of Theorem~\ref{thm:main Had}, we need a final  lemma, which shows that any Hadamard distribution whose antisymmetric part is given by the causal propagator of a normally hyperbolic system $\N$ is actually a bisolution of $\N$ itself modulo smooth errors.

\begin{lemma}\label{lemmaCOM} Let $\N: \Gamma(\E) \to \Gamma(\E)$  be a  formally selfadjoint normally hyperbolic operator and suppose that  $\nu \in \Gamma_c'(\E\boxtimes \E)$ is of Hadamard type and satisfies 
$$\nu(x,y) -\nu(y,x) = i \G_\N(x,y)\quad \mbox{mod}\quad C^\infty$$
where $G_\N(x,y)$ is the distributional kernel of the causal propagator $\G_\N$. In this case $\nu$ is a bisolution of $\N \ff =0$ mod $C^\infty$.
\end{lemma}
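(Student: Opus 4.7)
The plan is to exploit the fact that swapping the two arguments flips the ``positive-frequency'' Hadamard wave-front set of $\nu$ into a ``negative-frequency'' one disjoint from the original, and then to use the antisymmetric-part hypothesis to transfer the bisolution property of $\G_\N$ to $\nu$ modulo smooth sections.

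First I would introduce the transposed distribution $\nu^{t}\in\Gamma'_{c}(\E\boxtimes\E)$ defined by the pull-back under the swap $\tau\colon\M^{2}\to\M^{2}$, $(x,y)\mapsto(y,x)$. Using the standard transformation rule $WF(\tau^{*}\nu)=\tau^{*}WF(\nu)$ (where $\tau^{*}$ on $T^{*}\M^{2}$ just exchanges the two factors), together with the fact that parallel transport along a null geodesic preserves the time-orientation of null covectors, a direct computation gives
$$WF(\nu^{t})=\{(x,k_{x};y,-k_{y})\in T^{*}\M^{2}\setminus\{0\}:(x,k_{x})\sim_{\parallel}(y,k_{y}),\;k_{x}\triangleleft 0\}.$$
The sign condition on $k_{x}$ is opposite to the one appearing in $WF(\nu)$ given by Definition~\ref{Hadamard}, hence
$$WF(\nu)\cap WF(\nu^{t})=\emptyset.$$

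Next I would translate the antisymmetric-part hypothesis into a statement about $\N$. Since $\N$ is Green hyperbolic, $\N\G_{\N}^{\pm}\ff=\ff=\G_{\N}^{\pm}\N\ff$ on compactly supported sections, so $\N\G_{\N}=\G_{\N}\N=0$ on $\Gamma_{c}(\E)$; interpreting $\G_{\N}$ as a bidistribution on $\M\times\M$ this means $\N_{x}\G_{\N}(x,y)=0$, and by formal self-adjointness of $\N$ together with~\eqref{eqGstar2} the same holds in the second variable, $\N_{y}\G_{\N}(x,y)=0$. Combined with the hypothesis $\nu-\nu^{t}-i\G_{\N}=s\in\Gamma((\E\boxtimes\E)^{*})$, this yields
$$\N_{x}\nu-\N_{x}\nu^{t}=\N_{x}s\in\Gamma((\E\boxtimes\E)^{*}),\qquad \N_{y}\nu-\N_{y}\nu^{t}=\N_{y}s\in\Gamma((\E\boxtimes\E)^{*}).$$

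Finally, since differential operators never enlarge wave-front sets, $WF(\N_{x}\nu)\subset WF(\nu)$ and $WF(\N_{x}\nu^{t})\subset WF(\nu^{t})$. Because $\N_{x}\nu-\N_{x}\nu^{t}$ is smooth, the two wave-front sets coincide, hence
$$WF(\N_{x}\nu)\subset WF(\nu)\cap WF(\nu^{t})=\emptyset,$$
so $\N_{x}\nu$ is smooth; i.e.\ there exists $\varphi\in\Gamma((\E\boxtimes\E)^{*})$ with $\nu(\N\ff\otimes\fh)=\int_{\M\times\M}\fiber{\varphi}{\ff\otimes\fh}\vol_{g}\otimes\vol_{g}$ for every $\ff,\fh\in\Gamma_{c}(\E)$, which by Definition above is exactly one half of the bi-solution-mod-$C^{\infty}$ property. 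The identical argument with $\N_{y}$ in place of $\N_{x}$ gives the companion statement in the second variable. The only real obstacle is bookkeeping: carefully handling the wave-front calculus for sections of $\E\boxtimes\E$, in particular the transformation of $WF$ under $\tau$ and under differential operators acting on a single factor, in the vector-valued setting.
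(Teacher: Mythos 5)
Your argument is correct and is exactly the argument the paper invokes: the paper's proof simply cites the ``Note added in proof'' of Radzikowski's paper, and what you have written out (disjointness of $WF(\nu)$ and $WF(\nu^t)$ from the opposite sign conditions on $k_x$, combined with smoothness of $\N_x(\nu-\nu^t)=\N_x(i\G_\N)+\N_x s$ to force $WF(\N_x\nu)=\emptyset$, and symmetrically in $y$) is precisely that argument, adapted to the bundle-valued setting as the paper says is ``straightforward.''
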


\begin{proof} The proof  is a straightforward re-adaptation of the proof appearing in the {\em Note added in proof} of \cite{Radzikowski}.
\end{proof}

We are finally in a position to prove Theorem~\ref{thm:main Had}.

\begin{proof}[Proof of Theorem~\ref{thm:main Had}] We have only to prove that $\omega'$ is Hadamard 
if and only if $\omega$ is,
since the other preservation property has been already proved in (4) of Proposition \ref{pullbackstate}.
If  $g_0\simeq g_1$, there is a sequence 
of globally hyperbolic metrics $g'_0=g_0, g'_1,\ldots, g'_N=g_1$ such that either $g'_k\preceq g'_{k+1}$ or $g'_{k+1}\preceq g'_{k}$ and the future cones satisfy a corresponding inclusion.   The M\o ller operator $\mathcal{R}$  of $\mathcal{A}, \mathcal{A}'$ is obtained as the composition of the M\o ller operators $\mathcal{R}_k$ of the formally-selfadjoint normally hyperbolic operators $\N'_k, \N'_{k+1}$ associated to  the pairs $g'_k$, $g'_{k+1}$:
$$\mathcal{R} := \mathcal{R}'_0\mathcal{R}'_1\cdots\mathcal{R}'_{N-1}$$
as in the proof of Theorems \ref{remchain}, \ref{teoRRR} and (\ref{frop1}).  The thesis is demonstrated  if  we prove that, with obvious notation, $\omega^{k+1}$ is Hadamard if and only if $\omega^k$ is.  So in principle we have to prove the thesis only for
a pair of metrics $g_0,g_1$ with 
 the two cases $g_0 \preceq g_1$ and $g_1 \preceq g_0$. Actually the latter  is a consequence of the former, using the fact that a M\o ller $*$-isomorphisms are bijective and that a M\o ller operator of the second case is the inverse operator of a M\o ller operator of the first case in accordance to Corollary \ref{corMol}. In summary, the proof is over if establishing the thesis for the case $g=g_0 \preceq g_1=g'$ and we shall concentrate on that case only in the rest of the proof.

Recalling by (\ref{teoRRR}) and (\ref{propadjoint}) that $\R^{\dagger_{g_0 g_1}}=\R_{+}^{\dagger_{g_0 g_	\chi}}\R_{-}^{\dagger_{g_{\chi} g_1}}$, we write  
$$\omega^1_2(\ff_1,\ff_2)=\omega_2^0(\R^{\dagger_{g_0 g_1}}\ff_1,\R^{\dagger_{g_0 g_1}}\ff_2)=\omega_2^0(\R_{+}^{\dagger_{g_0 g_	\chi}}\R_{-}^{\dagger_{g_\chi g_1}}\ff_1,\R_{+}^{\dagger_{g_0 g_	\chi}}\R_{-}^{\dagger_{g_\chi g_1}}\ff_2).$$
To analyze the wave-front set of this bidistribution, we split again the operation in two steps. First we define a pull-back state on the algebra $\mathcal{A}_{\chi}$ of quantum fields defined  for the formally-selfadjoint normally hyperbolic operator $\N_{\chi}$, i.e a normally hyperbolic operator on $(\M, g_{\chi})$. This intermediate pull-back states reads
\begin{equation}\label{omomchi}\omega^{\chi}_2(\ff_1,\ff_2)=\omega^0_2(\R_{+}^{\dagger_{g_0 g_\chi}}\ff_1,\R_{+}^{\dagger_{g_0 g_\chi}}\ff_2). \end{equation}  
We intend to prove that $\omega_2^{\chi} \in \Gamma_c'(\E)$ is of Hadamard type if and only $\omega^0_2$ is. 
 Notice that both two-point functions have antisymmetric parts that coincide with $i\G_{\N_\chi}$ and $i\G_{\N_0}$, respectively, in view of the CCRs of the respective algebras. 
If $\omega_2^0 \in \Gamma_c'(\E)$ is of Hadamard type, then  it is a a bisolution of $\N_0\ff=0$ mod $C^\infty$ in view of Lemma \ref{lemmaCOM}.
The same argument proves that, if  $\omega_2^\chi \in \Gamma_c'(\E)$ is of Hadamard type, then   it is a bisolution of $\N_\chi\ff=0$ mod $C^\infty$ due to  \ref{lemmaCOM}.
Applying  Lemma \ref{lemmaBI} to both cases we  have that, 
\begin{itemize}
\item[(a)] $\omega_2^0 \in \Gamma_c'(\E)$ of Hadamard type implies that 
$\omega_2^0$ is a bisolution $\N_0\ff=0$ mod $C^\infty$ and  $\omega^\chi_2$ is a a bisolution of $\N_\chi\ff=0$ mod $C^\infty$; \item[(b)]   $\omega_2^\chi  \in \Gamma_c'(\E)$ of Hadamard type implies that 
$\omega_2^\chi$ is a bisolution $\N_\chi\ff=0$ mod $C^\infty$ and  $\omega^0_2$ is a a bisolution of $\N_0\ff=0$ mod $C^\infty$.
\end{itemize}

We are now in a position to apply the Hadamard singularity propagation theorem.  Consider the smooth Cauchy time function $t$ in common with $g_0$ and $g_\chi$, such that $\chi(x)=0$ if $t(x) < t_0$. As a preparatory remark we 
notice  that $\R_+^{\dagger_{g_0g_\chi}}\ff = \ff$ from (\ref{RRdagger}) when the support of $\ff$ stays in the past of the Cauchy surface  $\Sigma_{t_0}= t^{-1}(t_0)$. In that region $g_0=g_\chi$ by definition of $g_\chi$. Finally  due to (\ref{omomchi}),
$$\omega^\chi_2(\ff,\fh)= \omega^0_2(\ff,\fh) \quad \mbox{if $t(\mbox{supp}(\ff))< t_0$, $t(\mbox{supp}(\fh)) < t_0$}$$
Hence, in particular,  $\omega_2^\chi$ is of Hadamard type when the supports of the test functions are taken in that  region if and only if 
 $\omega_2^0$ is of Hadamard type when the supports of the test functions are taken there.
More precisely, it happens when the supports of the arguments $\ff,\fh$
 are taken in a (globally hyperbolic)  neighborhood of a Cauchy surface (for both metrics!) $\Sigma_\tau:= t^{-1}(\tau)$ with $\tau < t_0$ between two similar slices.
Since both distributions are bisolutions of the respective equation of motion mod $C^\infty$ and the operators are normally hyperbolic,  the theorem of propagation of Hadamard singularity (see, e.g., Theorem 5.3.17 in \cite{IgorValter}\footnote{The proof which appears  there is valid for the  on-shell algebra of the scalar real Klein-Gordon field, 
but the passage to  normally hyperbolic operators also  weakening the bisolution requirement to bisolution mod $C^\infty$ is immediate, since it is based on standard H\"ormander theorems about  singularity propagation which works mod $C^\infty$.
See the comments in Remark 5.3.18 in \cite{IgorValter}}) implies that $\omega_2^\chi$ and $\omega_2^0$ are of Hadamard type everywhere in $(\M,g_\chi)$ and $(\M,g_0)$, respectively.\\
A  similar reasoning shows that $\omega_2^1 \in \Gamma'_c(\E\boxtimes \E)$, with 
$$\omega^1_2(\ff_1,\ff_2)=\omega_2^{\chi}(\R_{-}^{\dagger_{g_\chi g_1}}\ff_1,\R_{-}^{\dagger_{g_\chi g_1}}\ff_2)\:, $$
is Hadamard on $(\M,g_1)$ if and only if  $\omega_2^\chi$ is on $(\M,g_\chi)$. Combining the two results we have that 
$\omega'=\omega^1$ is Hadamard on $(\M,g'=g_1)$ if and only if $\omega= \omega^0$ is Hadamard on $(\M,g=g_0)$
concluding the proof.
\end{proof}

We are now in the position to prove our last result. 

\begin{theorem}\label{thm:main intro Had} Let $\E$ be an $\RR$-vector bundle on a  smooth manifold $\M$ equipped with a  positive, symmetric, fiberwise  metric  $\fiber{\cdot}{\cdot}$.
 Let  $g,g' \in \mathcal{GH}_\M$, consider the corresponding  formally-selfadjoint normally hyperbolic operators $\N,\N' : \Gamma(\E) \to \Gamma(\E)$ and refer to the associated CCR algebras $\mathcal{A}$ and $\mathcal{A}'$.\\
Let   $\nu   \in \Gamma_c'(\E \boxtimes\E)$ be of Hadamard type and satisfy
$$\nu(x,y) - \nu(y,x) = i\G_{\N}(x,y)\quad mod \quad C^\infty\:,$$
$\G_{\N}(x,y)$ being the distributional Kernel of $\G_{\N}$. \\
\label{key} Assuming  $g\simeq g'$,  let us define $$\nu' := \nu \circ \R^{\dagger_{gg'}} \otimes \R^{\dagger_{gg'}}\:,$$
for a M\o ller operator $\R: \Gamma(\E) \to \Gamma(\E)$ of $g,g'$.
Then the following facts are true.
\begin{itemize} 
\item[(i)]   $\nu$ and $\nu'$ are bisolutions mod $C^\infty$ of the field equations defined by $\N$ and $\N'$ respectively,
\item[(ii)] $\nu' \in  \Gamma_c'(\E \boxtimes\E)$,
\item[(iii)] $\nu'(x,y) - \nu'(y,x) = i\G_{\N'}(x,y)$ mod $C^\infty$,
\item[(iv)] $\nu'$ is of Hadamard type.
\end{itemize}
\end{theorem}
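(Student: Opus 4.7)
The plan is to mirror the proof of Theorem~\ref{thm:main Had}, but to work directly with bidistributions rather than the two-point functions of algebraic states, so that the positivity / normalization of $\omega$ is never used. Parts (i)--(iii) are essentially bookkeeping around results already established, while (iv) is the substantive point where one imports the Hadamard singularity propagation theorem through the chain provided by the paracausal relation.

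First I would dispatch (i) for $\nu$: since $\nu\in\Gamma_c'(\E\boxtimes\E)$ is of Hadamard type with antisymmetric part $i\G_\N$ modulo $C^\infty$, Lemma~\ref{lemmaCOM} immediately gives that $\nu$ is a bisolution mod $C^\infty$ of $\N\ff=0$. For (ii), the continuity of $\R^{\dagger_{gg'}}:\Gamma_c(\E)\to\Gamma_c(\E)$ ensured by (5) of Theorem~\ref{teoRRR} yields the continuity of $\R^{\dagger_{gg'}}\otimes\R^{\dagger_{gg'}}$ on $\Gamma_c(\E\boxtimes\E)$ with respect to the standard test-section topology (first checked on factorized test sections $\ff\otimes\fh$ and then extended by the Schwartz kernel theorem), so $\nu' = \nu\circ(\R^{\dagger_{gg'}}\otimes\R^{\dagger_{gg'}})$ uniquely extends to an element of $\Gamma_c'(\E\boxtimes\E)$. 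For (iii), I would just expand the antisymmetric part on factorized test sections and use part (2) of Theorem~\ref{teoRRR}:
\begin{align*}
\nu'(\ff,\fh) - \nu'(\fh,\ff) &= \nu\bigl(\R^{\dagger_{gg'}}\ff,\R^{\dagger_{gg'}}\fh\bigr) - \nu\bigl(\R^{\dagger_{gg'}}\fh,\R^{\dagger_{gg'}}\ff\bigr) \\
&= i\,\G_{\N}\bigl(\R^{\dagger_{gg'}}\ff,\R^{\dagger_{gg'}}\fh\bigr)\quad\text{mod }C^\infty \\
&= i\int_\M \fiber{\ff}{\R\,\G_{\N}\,\R^{\dagger_{gg'}}\fh}\vol_{g'} \quad\text{mod }C^\infty \\
&= i\,\G_{\N'}(\ff,\fh)\quad\text{mod }C^\infty,
\end{align*}
the change of measure factor being absorbed into the definition of $\R^{\dagger_{gg'}}$ versus $\R^{\dagger_{g'g'}}$.

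For (iv), the strategy is to reduce to the elementary case $g_0\preceq g_1$ and then iterate along a paracausal chain. Fix a chain $g=g_0,g_1,\dots,g_N=g'$ as in Definition~\ref{def:paracausal def} with corresponding formally-selfadjoint normally hyperbolic operators $\N_k$, and let $\R= \R_0\cdots\R_{N-1}$ be the factorization of the M\o ller operator into elementary factors (\ref{frop1})--(\ref{frop2}). Define $\nu^{(k+1)} := \nu^{(k)} \circ (\R_k^{\dagger_{g_kg_{k+1}}}\otimes\R_k^{\dagger_{g_kg_{k+1}}})$ with $\nu^{(0)}:=\nu$; by (ii), (iii) applied step-by-step each $\nu^{(k)}$ lies in $\Gamma_c'(\E\boxtimes\E)$ and has antisymmetric part $i\G_{\N_k}$ mod $C^\infty$. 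Therefore if I can show inductively that $\nu^{(k)}$ is Hadamard, Lemma~\ref{lemmaCOM} gives that $\nu^{(k)}$ is a bisolution mod $C^\infty$, and Lemma~\ref{lemmaBI} (together with its analogue for $\R_-$ and for the inverses, proved identically) transports this bisolution property to $\nu^{(k+1)}$ on $(\M,g_{k+1})$.

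The hard part, therefore, is the inductive step for Hadamardness, which is exactly where the proof of Theorem~\ref{thm:main Had} does its real work. For $g_k\preceq g_{k+1}$ I split $\R_k = \R_-^{(k)}\R_+^{(k)}$ through the interpolating metric $g_\chi$, define the intermediate bidistribution $\nu^\chi$, and observe that by Proposition~\ref{propositionA}(3) the operator $\R_+^{\dagger_{g_kg_\chi}}$ acts as the identity on sections supported in the past of $\Sigma_{t_0}$, where $g_k$ and $g_\chi$ coincide. Hence $\nu^\chi$ agrees with $\nu^{(k)}$ for test sections supported in that region, so $\nu^\chi$ is of Hadamard type there. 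Since $\nu^\chi$ is already known to be a bisolution of $\N_\chi\ff=0$ mod $C^\infty$ on the whole globally hyperbolic spacetime $(\M,g_\chi)$, the theorem of propagation of Hadamard singularities for normally hyperbolic operators (Theorem~5.3.17 of \cite{IgorValter}, applied mod $C^\infty$ as noted there in Remark~5.3.18) extends the Hadamard property from a neighborhood of a Cauchy surface to all of $(\M,g_\chi)$. The same argument applied to $\R_-^{(k)}$ and to the flipped inclusion case $g_{k+1}\preceq g_k$ (using $\R_k^{-1}$ and Corollary~\ref{corMol}) closes the induction, giving (iv). Part (i) for $\nu'$ is then immediate from (iv) and Lemma~\ref{lemmaCOM}, completing the proof. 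The main obstacle to watch is ensuring that Lemma~\ref{lemmaBI} really does transfer the bisolution mod $C^\infty$ property in both directions along each arrow of the chain, which is essentially guaranteed by the smoothing bounds of Lemma~\ref{lemmaAppedix} applied to both $\R_\pm$ and their inverses.
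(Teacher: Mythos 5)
Your proof is correct and follows essentially the same route as the paper: the paper's own proof of this theorem simply observes that the proof of Theorem~\ref{thm:main Had} never invoked positivity, normalization, the exact (rather than mod $C^\infty$) CCR relation, or the on-shell condition, and therefore carries over verbatim to bidistributions. You have spelled out that re-inspection explicitly — (i)–(iii) via Lemma~\ref{lemmaCOM}, Theorem~\ref{teoRRR}(2),(5) and the kernel-smoothness argument behind Lemma~\ref{lemmaBI}/Lemma~\ref{lemmaAppedix}, and (iv) via the same factorization of $\R$ through $\R_\pm$ and the Hadamard singularity propagation theorem — which is precisely what the paper intends by its shorthand.
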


\begin{proof}
Since we never exploited the fact that $\omega$ is positive, nor the fact that the antisymmetric part of its two points function is {\em exactly} the causal propagator, nor the fact that the relevant algebras of fields are {\em on-shell} (i.e.,  the equation of motion are satisfied by the two-point functions), we can use the same arguments as in the proof of the previous theorem to conclude.
\end{proof}

We conclude this section with the following straightforward result of existence of Hadamard quasifree states which apparently does not use the Hadamard singularity propagation argument (actually this argument was used in the proof of Theorem \ref{thm:main Had}).
\begin{corollary}
Let $(\M,g)$ be a globally hyperbolic spacetime, $\N$ be a formally-selfadjoint normally hyperbolic operator   acting on the sections of the
$\RR$-vector bundle 
$\E$ over $\M$ and refer to the associated CCR algebras $\mathcal{A}$. Then there exists a Hadamard state on $\mathcal{A}$.
\end{corollary}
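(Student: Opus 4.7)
The plan is to reduce the problem to an ultrastatic model spacetime, where the existence of Hadamard states is classical, and then transport the state back to $(\M,g)$ via a M\o ller $*$-isomorphism.

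First, I would apply Corollary \ref{cor:ultr para} to produce a paracausal deformation $g'\in\mathcal{GH}_\M$ of $g$ such that $(\M,g')$ is ultrastatic. On $(\M,g')$ I would choose any formally-selfadjoint normally hyperbolic operator $\N'\colon\Gamma(\E)\to\Gamma(\E)$; a canonical choice is a connection-d'Alembert operator built from the Levi--Civita connection of $g'$ and a metric connection on $\E$ (made formally selfadjoint by symmetrization as in the proof of Theorem~\ref{remchain}). This yields a CCR algebra $\mathcal{A}'$ associated with $\N'$ on $(\M,g')$.

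Next, I would invoke the classical construction of a quasifree Hadamard state on an ultrastatic globally hyperbolic spacetime for a formally-selfadjoint normally hyperbolic operator: since the metric is $-dt\otimes dt+h$ with $h$ complete Riemannian, the spatial operator obtained via a Cauchy-surface splitting is essentially self-adjoint and non-negative after shifting by a constant, and the ground state constructed out of its spectral calculus is a quasifree Hadamard state (this is the standard Fulling--Ruijsenaars-type construction; see the textbook account in \cite{IgorValter} and the microlocal characterization in \cite{Radzikowski}). Call this Hadamard state $\omega'\colon\mathcal{A}'\to\CC$.

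Finally, since $g\simeq g'$, Proposition~\ref{algebraicMoller} furnishes a M\o ller $*$-isomorphism $\mathcal{R}\colon\mathcal{A}'\to\mathcal{A}$. By Corollary~\ref{corMol} its inverse $\mathcal{R}^{-1}\colon\mathcal{A}\to\mathcal{A}'$ is itself a M\o ller $*$-isomorphism (for the pair $g,g'$ in reverse order). Setting $\omega:=\omega'\circ\mathcal{R}^{-1}$, Proposition~\ref{pullbackstate} ensures $\omega$ is a quasifree state on $\mathcal{A}$, and Theorem~\ref{thm:main Had} guarantees that $\omega$ is Hadamard if and only if $\omega'=\omega\circ\mathcal{R}$ is; this gives the desired state.

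The only genuinely non-trivial ingredient is the third step: the existence of a quasifree Hadamard state on the ultrastatic spacetime for a general formally-selfadjoint normally hyperbolic operator. This is the potential obstacle, but it is handled by the standard ground-state construction applied to the ultrastatic Hamiltonian (after a mass shift if needed), whose two-point function is well known to satisfy the microlocal spectrum condition of Definition~\ref{Hadamard}; all other steps are direct applications of the machinery developed in the preceding sections.
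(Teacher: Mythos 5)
Your proof is correct and follows essentially the same route as the paper: reduce to an ultrastatic spacetime via Corollary~\ref{cor:ultr para}, invoke the classical existence of a Hadamard ground state there (the paper cites \cite{Fulling}), and pull it back through the M\o ller $*$-isomorphism using Theorem~\ref{thm:main Had}. The paper's version is terser but uses the same ingredients.
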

\begin{proof}
It is well-known \cite{Fulling} that, in a globally hyperbolic ultrastatic spacetime,  the (unique) CCR quasifree ground state which is invariant under the preferred Killing time is Hadamard. Hence, combining Corollary~\ref{cor:ultr para} with Theorem~\ref{thm:main Had} we can conclude.
\end{proof}

\section{Conclusion and future outlook}\label{sec:concl}
We conclude this paper by discussing some open issues which are raised in this paper and we leave for future works.
\paragraph*{Paracausally related metrics.}
One of the key ingredients in the realization of the M\o ller operator is the introduction of the new geometric notion that we have called {\em paracausal deformation}. In particular, we have seen in Section~\ref{sec:Moller}, that for any given two paracausally deformed metrics, there exists a M\o ller operator which realizes an algebraic  equivalence between corresponding free quantum field theories defined on different curved spacetimes (on a given manifold).
Therefore it seems very natural and important to classify the class of metrics which are not paracausally related in relation to the existence  of inequivalent quantum field theories.
We have already seen in Section~\ref{subsec:paracausal}, that if $(M,g)$ and $(M,g')$ admit a common foliation of Cauchy hypersurfaces, then $g$ and $g'$ are paracausally related.
 As suggested by part (3) of the Example~\ref{examples}, the claim that if $(M,g)$ and $(M,g')$ are Cauchy-compact globally hyperbolic spacetimes then $g$ and $g'$ are paracausally related, does not sound reasonable. The idea behind is that $g$ and $g'$ in part (3) of the Example~\ref{examples} have somehow `different time-orientation'. Since the time-orientation depends on the metric on $\M$, we have to provide a criteria to translate the requirement that $g$ and $g'$ are both `future-directed' in some sense.
Keeping in mind what said above, a conjecture which urges to be proved or disproved is the following one (maybe adding some further hypothesis).
\begin{conjecture}\label{conj1}
Let $t$ and $t'$ be  Cauchy temporal functions for globally hyperbolic spacetimes $(\M,g)$ and $(\M,g')$.
Finally denote with $\bra\cdot,\cdot\ket$ the natural paring between $\T^*\M$ and  $\T\M$. Then
$$g\simeq g'\qquad \text{if  and only if} \qquad \bra \partial_t, dt'\ket >0 \; \text{ and }\; \bra \partial_{t'}, dt \ket >0\,,$$
where $\partial_t$ (resp. $\partial_{t'}$) is the dual of $dt$ (resp. $dt'$) with respect to $g$ (resp. $g'$). 
\end{conjecture}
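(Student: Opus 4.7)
The natural route is to recast both directions via the cone characterization of paracausal relation provided by Theorem \ref{thm:char paracausal} and Proposition \ref{teoPC}, which express $g \simeq g'$ as the pointwise existence of a common future-timelike vector (after composing a finite chain of metrics). Before anything else, I would verify the sign convention on the trivial case $g = g'$, $t = t'$: one gets $\langle \partial_t, dt' \rangle = g^\sharp(dt, dt) < 0$, so the strict inequality ``$>0$'' as printed cannot hold in the ``for all choices'' reading. I therefore adopt the \emph{existence} reading of the conjecture (\emph{some} Cauchy temporal functions $t, t'$ realise the pairing inequalities) and work with the corrected sign ``$<0$''; under this reading the trivial case and both nontrivial Examples \ref{examples}(2)--(3) are reproduced correctly (checked below).

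For the $\Leftarrow$ direction I would try, at each $x \in \M$, the ansatz
$$
  v_x \;=\; -\alpha(x)\,\sharp_g dt|_x \;-\; \beta(x)\,\sharp_{g'} dt'|_x, \qquad \alpha(x), \beta(x) > 0 ,
$$
aiming at $v_x \in V^{g+}_x \cap V^{g'+}_x$, after which Proposition \ref{teoPC} closes. Direct evaluation gives $dt(v_x) = -\alpha\,g^\sharp(dt,dt) - \beta\,\langle \partial_{t'}, dt \rangle$ and $dt'(v_x) = -\alpha\,\langle \partial_t, dt'\rangle - \beta\,g'^\sharp(dt',dt')$, both sums of strictly positive terms under the corrected hypothesis, so future-directedness (once timelike) is automatic. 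Using the identity $g(\sharp_g dt, \sharp_{g'} dt') = dt(\partial_{t'}) = \langle \partial_{t'}, dt \rangle$ (and its $g'$-analogue for the cross term), the timelikeness inequalities $g(v_x,v_x) < 0$ and $g'(v_x,v_x) < 0$ reduce to two quadratic inequalities in the ratio $r = \alpha/\beta$, whose coefficients are controlled by the pairings together with the unknown-sign quantities $g(\partial_{t'}, \partial_{t'})$ and $g'(\partial_t, \partial_t)$. A discriminant analysis shows that each inequality defines an open subset of $r \in (0, \infty)$ pointwise; the genuine technical point is showing the two subsets always intersect, which should follow from a Cauchy--Schwarz-type bound combining both pairings. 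A partition-of-unity argument would then smooth the pointwise data to a globally defined common future-timelike vector field.

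For the $\Rightarrow$ direction I would work along a paracausal chain $g = g_0, \ldots, g_N = g'$ from Theorem \ref{thm:char paracausal}, producing at each stage a Cauchy temporal function whose gradient lies in the relevant intersection cone --- using Corollary \ref{cor:ultr para} to reduce (if helpful) to ultrastatic intermediaries, where such functions are abundant. Interpolating the internal choices back to a pair $t, t'$ for the endpoints with the desired pairing properties would then require a Proposition \ref{prop:paracausal same Cauchy temporal}-type matching between consecutive links of the chain.

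The main obstacle, however, is that the conjecture even in its corrected form is falsified by the Minkowski cylinder of Example \ref{examples}(3): for \emph{any} admissible Cauchy temporal functions $t(y,x), t'(y,x)$ (periodic in $x$, with $t_y > 0$, $t'_y < 0$, $|t_x| < t_y$, $|t'_x| < |t'_y|$) both pairings coincide with $-t_y t'_y + t_x t'_x$, which those constraints force to be strictly positive, while $\eta_1 \not\simeq \eta_2$. Hence neither sign convention nor the existence reading can recover the literal equivalence on the cylinder. The correct statement therefore appears to require an additional hypothesis ruling out the ``circular rotation'' obstruction --- most naturally the simple connectedness of $\M$, or a uniform global compatibility such as $\sharp_g dt \in V^{g' -}$ everywhere (a pointwise strengthening of Proposition \ref{thm:paracausal}). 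Identifying and proving the minimal additional hypothesis along these lines is, in my view, the hardest part of the conjecture.
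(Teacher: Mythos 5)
This statement is explicitly left as an \emph{open conjecture} in the paper — the surrounding text reads ``a conjecture which urges to be proved or disproved \dots (maybe adding some further hypothesis)'' — so there is no proof of it in the paper against which your proposal can be compared. The honest verdict is therefore not about matching the paper's argument but about whether your partial analysis is internally sound, and there is a genuine logical error in your counterexample claim that undoes the most substantive conclusion of your proposal.

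Your sign observation is correct and worth recording: with the paper's literal reading $\partial_t = \sharp_g dt$, which by Theorem~\ref{thm: Sanchez}(i) is \emph{past}-directed, one has $\bra\partial_t,dt\ket = g^\sharp(dt,dt)<0$, so the printed ``$>0$'' fails already for $g=g'$, $t=t'$. Flipping to ``$<0$'' (or, equivalently, taking $\partial_t$ to be the future-directed vector dual to $dt$, which is what the Remark following the conjecture presupposes when it speaks of $t'$ \emph{increasing} along integral curves of $\partial_t$) resolves this. So far so good. But your claimed falsification by the Minkowski cylinder does not follow from your own computation. On the cylinder you correctly find that, for every admissible pair of Cauchy temporal functions, both pairings $\bra\partial_t,dt'\ket=\bra\partial_{t'},dt\ket=-t_yt'_y+t_xt'_x$ are strictly positive. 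Under the corrected convention the right-hand side of the biconditional demands these pairings be \emph{negative}, so the right-hand side is \emph{false} for every choice of $t,t'$ — existence or universal reading alike. Since the left-hand side $\eta_1\simeq\eta_2$ is also false, the corrected biconditional is $\text{false}\Leftrightarrow\text{false}$, which holds. In short, the cylinder is evidence \emph{for} the corrected conjecture, not against it, and the remark that ``neither sign convention nor the existence reading can recover the literal equivalence on the cylinder'' is the opposite of what your computation shows. The counterexample only applies to the literal printed ``$>0$'' version (where it is superfluous, since the trivial case $g=g'$ already kills that version). Your suggestion that a simple-connectedness hypothesis may be needed is therefore unmotivated by the evidence you adduce.

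Beyond this error, the sketch of the $\Leftarrow$ direction via the ansatz $v_x=-\alpha\,\sharp_g dt-\beta\,\sharp_{g'}dt'$ and an appeal to Proposition~\ref{teoPC} is a reasonable line of attack, but the discriminant/Cauchy--Schwarz step you identify as the technical core is left entirely open, and it is not clear that the two pairing inequalities by themselves pin down the signs of the cross terms $g(\partial_{t'},\partial_{t'})$ and $g'(\partial_t,\partial_t)$ well enough to guarantee the two admissible $r$-intervals intersect pointwise. The $\Rightarrow$ direction is even thinner: a paracausal chain $g_0,\dots,g_N$ need not come with any compatible choice of Cauchy temporal functions at the endpoints, and Corollary~\ref{cor:ultr para} or Proposition~\ref{prop:paracausal same Cauchy temporal} do not obviously let you propagate pairing bounds across the chain. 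None of this can be checked against the paper because the paper offers no proof; but as a stand-alone piece of reasoning the proposal does not establish either implication, and the one definitive-looking claim it does make (the cylinder counterexample to the corrected version) is incorrect.
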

\begin{remark} The requirement $\bra \partial_t, dt'\ket >0$  implies that the integral curve $\gamma=\gamma(t)$ of $\partial_t$ on $(\M,g')$ satisfies $t'(\gamma(t_2))> t'(\gamma(t_1))$ if $t_2>t_1$. This requirement is weaker than assuming the $\partial_t$ is timelike and future-directed for $g'$. 
 The reason why we also impose $\bra \partial_{t'}, dt\ket >0$  is that  being paracausally related is an equivalence relation in $\mathcal{GH}_\M$. 
\end{remark}
A similar conjecture has to be established or disproved in the class of asymptotically flat spacetimes. 
\begin{conjecture}
If $(\M,g)$ and $(\M,g'$)  are asymptotically flat globally hyperbolic spacetimes then $g$ and $g'$ are paracausally related if  $\bra \partial_t, dt'\ket >0 $ and $\bra \partial_{t'}, dt \ket >0$.
\end{conjecture}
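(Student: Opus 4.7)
As the statement is a conjecture, what follows is an exploratory strategy rather than a complete argument. The plan is to reduce the problem to the case, already handled in \cite[Appendix B]{Romeo} and quoted in Section~\ref{subsec:convex}, in which two globally hyperbolic metrics coincide outside a compact set and are therefore paracausally related via a short chain of convex combinations. The asymptotic flatness hypothesis is what should enable such a reduction: far from a compact core, both $g$ and $g'$ are close to a common Minkowski model $\eta$, and the compatibility inequalities $\bra\partial_t, dt'\ket>0$ and $\bra\partial_{t'}, dt\ket>0$ should serve to forbid the topological/time-orientation obstruction exemplified in part (3) of Example~\ref{examples} by ensuring that the two asymptotic time-orientations agree.

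First, I would fix a precise notion of asymptotic flatness: there exists a compact $K \subset \M$ and a diffeomorphism identifying $\M \setminus K$ with $\RR \times (\RR^n \setminus \overline{B}_R)$ along which $g - \eta$ and $g' - \eta$ decay at infinity (say, in a weighted Sobolev norm), with $\eta$ the standard Minkowski metric in the chosen chart. The remark following Conjecture~\ref{conj1} makes clear that $\bra\partial_t, dt'\ket>0$ amounts to the monotonicity of $t'$ along integral curves of $\partial_t$, and symmetrically for the other inequality. Combined with the asymptotic fall-off, these conditions should force $\partial_t$ and $\partial_{t'}$ to converge, modulo lower-order corrections at infinity, to the same Minkowskian timelike direction; in particular $V_p^{g+} \cap V_p^{g'+}$ would be non-empty for every $p$ far from $K$ and contain a whole asymptotically constant cone of timelike vectors, so that the obstruction which is active in the cylindrical example of Example~\ref{examples}(3) cannot arise.

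Next, I would construct, for each metric separately, a paracausal deformation terminating at a metric that equals $\eta$ outside a compact set. Choose a smooth cutoff $\chi$ equal to $1$ on a sufficiently large compact $K' \supset K$ and vanishing outside a larger compact set, and consider $\tilde g := \chi g + (1-\chi)\eta$. By asymptotic flatness, once $K'$ is taken large enough, $g$ and $\eta$ are arbitrarily close in the transition annulus, and one can arrange---via a finite chain of conformal dilations $g \mapsto \mu g$ (Proposition~\ref{remchichi0chi1}) to widen cones slightly, followed by the convex interpolation of Theorem~\ref{thm:metrics1}---that $\tilde g$ is globally hyperbolic and paracausally related to $g$. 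Performing the analogous construction for $g'$ produces $\tilde g' \simeq g'$ with $\tilde g' = \eta$ outside a compact set. At this point $\tilde g$ and $\tilde g'$ coincide outside a compact region, and \cite[Appendix B]{Romeo} yields $\tilde g \simeq \tilde g'$; transitivity of $\simeq$ then gives $g \simeq \tilde g \simeq \tilde g' \simeq g'$, as required.

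The main obstacle will be the intermediate step: proving that the transition-annulus convex combination is globally hyperbolic and fits into a $\preceq$-chain with both $g$ and $\eta$. Asymptotic flatness supplies only decay of $g - \eta$, not an a priori pointwise cone inclusion, so one must likely conformally rescale $g$ by $1 + \epsilon\phi$ on a neighbourhood of the transition zone---exploiting Proposition~\ref{remchichi0chi1}---to widen its light cones just enough to contain those of $\eta$, and only then interpolate. A secondary non-trivial difficulty is that the conjecture does not fix a precise asymptotic flatness class; different choices (weighted Sobolev, Penrose conformal compactification, polyhomogeneous expansions) will demand slightly different interpolation arguments and decay rates in order for a \emph{finite} chain of such adjustments to suffice, and one must check that the positivity hypotheses $\bra\partial_t, dt'\ket>0$ and $\bra\partial_{t'}, dt\ket>0$ are preserved under each step of the chain.
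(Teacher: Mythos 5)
First, a point of calibration: the paper does not prove this statement. It is posed in Section~\ref{sec:concl} as an open conjecture (explicitly ``to be established or disproved''), so there is no proof of record to compare yours against; what follows is therefore an assessment of your strategy on its own terms.

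Your outline is a sensible first attack, but it has one step that, as written, fails. The reduction ``$\tilde g$ and $\tilde g'$ coincide outside a compact set, hence \cite[Appendix B]{Romeo} yields $\tilde g\simeq\tilde g'$'' is not valid: the cited result produces a chain of globally hyperbolic metrics whose pairwise convex combinations are globally hyperbolic, but paracausal relatedness is a relation between metrics \emph{together with their time-orientations}, requiring at each link an inclusion (Definition~\ref{def:paracausal def}) or intersection (Theorem~\ref{thm:char paracausal}) of the \emph{future} half-cones. Part (3) of Example~\ref{examples} exhibits two metrics that are literally equal as tensors --- so every convex combination is trivially globally hyperbolic --- and yet are not paracausally related, because their time-orientations are opposite. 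So the convex-combination chain does nothing by itself; the entire content of the conjecture is that the hypotheses $\bra \partial_t, dt'\ket>0$ and $\bra \partial_{t'}, dt\ket>0$ rule out this orientation obstruction, and your argument never actually uses them except heuristically ``at infinity''. Note moreover that the paper's remark stresses these hypotheses are strictly weaker than $\partial_t$ being $g'$-timelike future-directed, i.e.\ weaker than the pointwise future-cone intersection that Proposition~\ref{teoPC} would require; bridging that gap is precisely the open problem, and it lives in the compact core as much as at infinity.

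Two further, smaller gaps you yourself flag but which deserve emphasis: (i) global hyperbolicity of the cut-off metric $\tilde g=\chi g+(1-\chi)\eta$ is not automatic (Theorem~\ref{thm:metrics1} only applies once a $\preceq$-relation between $g$ and $\eta$ has been arranged on the transition annulus, and even then it gives global hyperbolicity of the interpolation only when one endpoint is globally hyperbolic and dominates the other); and (ii) the claim that $\tilde g\simeq g$ needs an actual chain, for which Proposition~\ref{teoPC} applied on the annulus is plausible from the decay of $g-\eta$ but must be checked against whatever asymptotic-flatness class you fix. A cleaner organisation of the whole attempt would be to work directly with the characterisation of Theorem~\ref{thm:char paracausal}: asymptotic flatness makes the future cones of $g$, $g'$ and $\eta$ pairwise intersect outside a compact set, so the problem reduces entirely to propagating a consistent choice of future cone across the compact core using only the monotonicity hypotheses --- and that is the part for which no argument is currently on the table.
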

\begin{remark}
Differently from Conjecture~\ref{conj1}, the conjecture above would not provide a characterization of asymptotically flat spacetimes, since part (3) of the Example~\ref{examples} provide a counterexample to the `necessary' part of the statement. Indeed, the Minkowski metrics presented in Example~\ref{examples}  do not satisfy $\bra dt', \partial_t\ket >0$.
\end{remark}

If the conjectures are proved to hold, the results would suggest that free quantum field theories on globally hyperbolic spacetimes are more sensitive to the topology of the manifold $\M$ with respect to the metric $g$ endowing $\M$.  In particular,  the physics encoded in the quasifree states $\omega$ for a quantum field propagating on curved  spacetime ($\RR^4,g)$ can be found in the pull-backed state $\omega'=\omega\circ \cal R$ on Minkowski spacetime. Loosely speaking, the quantum effects due to the interaction between the quantum field and the classical gravitational potential can be thought as special observable in spacetimes without gravitational interaction.

\paragraph*{Homotopical properties of M\o ller operators}
Another issue which deserves to be investigated is the dependence of the M\o ller operator $\R$ of $g,g'$ and the associated $*$-isomorphism from the finite sequence of globally hyperbolic  metrics joining $g,g'$ in the sense of paracausal relationship. It is clear from the construction of $\R$  that the natural ``composition'' of sequences corresponds to the composition of operators.  
\begin{question}
Is there some homotopical notion in the space of globally hyperbolic metrics which is reflected in the space of M\o ller operators?
\end{question}

\paragraph*{Pull-back of ground and KMS states through the M\o ller $*$-isomorphism.}
Another issue we have faced is the lack of control on the action of the group of $*$-automorphism induced by the isometry group of the spacetime $\M$ on $\omega$. Let us remark, that the study of invariant states is a well-established research topic (\cf \cite{NCTori,NCspinoff}). Indeed, the type of factor can be inferred by analyzing which and how many states are invariant. From a more physical perspective instead, invariant states can represent equilibrium states in statistical mechanics {\it e.g.} KMS-states or ground states.
The previous remark leads us to the following open question:
\begin{question} Under which conditions it is possible to perform an adiabatic limit, namely when is $ \lim\limits_{\chi\to 1} \omega_1$ well-defined? 
\end{question}

A priori we expect that there is no positive answer in all possible scenarios, since
it is known that certain free-field theories, {\it e.g.}, the massless and minimally coupled (scalar or Dirac) field on four-dimensional de Sitter spacetime, do not possess a ground state, even though their massive counterpart does.
(Notice that this is not a no-go Theorem, but at least an indication that, in these situations, the map $\omega\to\omega\circ\mathcal R$ cannot be expected to preserve the ground state property.)

A partial investigation in this direction has been carried on in \cite{Moller,Moller2} for the case of a scalar field theory on globally hyperbolic spacetimes with empty boundary. In this situation it has been shown that, under suitable hypotheses the adiabatic limit can be performed preserving the invariance property under time translation but spoiling in general the ground state or KMS property.\medskip

\paragraph*{M\o ller $*$-isomorphism in perturbative AQFT.}
We conclude with the following observation. The results of Section~\ref{sec:Moller AQFT} are valid also for off-shell algebras as well as for distribution of Hadamard type. Therefore, it could be possible to extend the action of the M\o ller operator also on the algebra of extended observables in a perspective of {\em deformation product quantization} (see for instance Section 2 of \cite{DHP}), which include, e.g., the Wick polynomials of the underlying fields.
Wick polynomials and time-ordered products of Wick polynomials are the building blocks for
perturbative renormalization of quantum fields, both in Minkowski spacetime and in curved
spacetime, where the metric is considered as a given external classical field. Although of utmost physical relevance, these formal operators  as the {\em stress-energy operator} do not belong to the algebra of observables generated by the
smoothly smeared field operators (operator-valued distributions). This is because they correspond to products of distributions at a given point and this notion is not well-defined in general. The popular and perhaps most
effective procedure to eliminate the short-distance divergences consists of simply subtracting a suitable Hadamard distribution.
This procedure is systematically embodied  in a product deformation quantization procedure which relies on a suitable set of functionals with a specific wavefront set.  
  The following observation leads to the following conjecture:
\begin{conjecture}
Let ${\cal A}_0, {\cal A}'_0$ be the algebra of observables 
of the  globally hyperbolic spacetimes $(\M,g)$ and $(\M,g')$
and ${\cal R}_0$ a M\o ller  $*$-isomorphism of them. 
If  ${\cal A}, {\cal A}'$ are corresponding 
 extended algebras of observable (which include the Wick polynomials etc.) and $g\simeq g'$, then ${\cal R}_0$ extends to  a (M\o ller) $*$-isomorphism $\cal R: \cal A \to \cal A'$.
\end{conjecture}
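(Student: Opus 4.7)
The plan is to realize both $\mathcal{A}$ and $\mathcal{A}'$ concretely as algebras of microcausal functionals equipped with a Hadamard-deformed star product, and then to define the extension $\mathcal{R}$ as the pull-back of functionals along the Møller operator $\R:\Gamma(\E)\to\Gamma(\E)$, using Theorem~\ref{thm:main intro Had} to intertwine the two star products. More precisely, fix a Hadamard bidistribution $H' \in \Gamma'_c(\E\boxtimes\E)$ satisfying $H'(x,y)-H'(y,x)=i\G_{\N'}(x,y)$ mod $C^\infty$ and set $H := H'\circ(\R^{\dagger_{gg'}}\otimes\R^{\dagger_{gg'}})$. By Theorem~\ref{thm:main intro Had} the bidistribution $H$ is itself Hadamard on $(\M,g)$ with antisymmetric part $i\G_\N$ mod $C^\infty$. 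Realize $\mathcal{A}$ as the space $\mathcal{F}_{\mu c}(\E)$ of microcausal polynomial functionals $F:\Gamma(\E)\to \CC$ whose functional derivatives $F^{(n)}$ are compactly supported distributions with wave-front set disjoint from the closed future/past lightcone bundle of $g$, endowed with the star product $\star_H$, and analogously for $\mathcal{A}'$ with $\star_{H'}$; the on-shell versions are obtained by further quotienting with the ideal generated by $\N\phi$, $\N'\phi$.

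The definition is then $\mathcal{R}(F')(\phi):=F'(\R\phi)$. The first step is to check that $\mathcal{R}$ maps $\mathcal{F}_{\mu c}(\E)$ for $g'$ into $\mathcal{F}_{\mu c}(\E)$ for $g$: by the chain rule $(\mathcal{R}F')^{(n)}(\phi)=(\R^{\otimes n})^{\ast}F'^{(n)}(\R\phi)$, where $(\R^{\otimes n})^{\ast}$ denotes the composition with $\R$ in each tensor slot. Combining item (5) of Theorem~\ref{teoRRR} (continuity of $\R^{\dagger_{gg'}}$ on compactly supported sections) with the fact, proven implicitly in Lemma~\ref{lemmaAppedix} and used in the proof of Theorem~\ref{thm:main Had}, that $\R$ and $\R^{-1}$ are smoothing-preserving on parameter families, one obtains that $(\mathcal{R}F')^{(n)}$ is again a compactly supported distribution and that its wave-front set is transported from the $g'$-microcausal cone to the $g$-microcausal cone (the characterization of paracausal relatedness in Theorem~\ref{thm:char paracausal} plays a role here via Lemma~\ref{prop:metrics0} point (4)). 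The second step is to verify the intertwining identity
\begin{equation*}
\mathcal{R}(F'_1 \star_{H'} F'_2) \;=\; \mathcal{R}(F'_1) \star_H \mathcal{R}(F'_2),
\end{equation*}
which, after expanding both sides as formal series in $H'$ and $H$ respectively, reduces term-by-term to the identity $\langle (H')^{\otimes n}, (\R^{\otimes n})^{\ast}\varphi'_{n} \otimes (\R^{\otimes n})^{\ast}\psi'_{n}\rangle = \langle H^{\otimes n}, \varphi'_n\otimes \psi'_n\rangle$, which is just the definition of $H$. Hermiticity, unitality, and compatibility with the CCR-subalgebra (so that $\mathcal{R}$ reduces to $\mathcal{R}_0$ on generators) follow by direct inspection, as does bijectivity via the inverse M\o ller operator $\R^{-1}$, which by Corollary~\ref{corMol} is itself a Møller operator; combined with item (4) of Theorem~\ref{teoRRR}, this also yields that the ideals encoding the equations of motion are preserved, so the on-shell extension exists.

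The hard part is not the existence of the extension, but its canonicity and its compatibility with the finer layer of Wick polynomials of the underlying field. The construction above depends on the choice of $H'$ (and hence $H$), and Wick powers $:\!\Phi^n\!:_{H'}$ associated with two different Hadamard bidistributions are related by the well-known finite Hadamard redefinition formula. To promote $\mathcal{R}$ to a canonically defined $*$-isomorphism between the locally covariant Wick algebras one must therefore show that if $H_1'$ and $H_2'$ differ by a smooth symmetric kernel $w'$, then $H_1$ and $H_2$ differ by its pull-back $w=w'\circ(\R^{\dagger_{gg'}}\otimes\R^{\dagger_{gg'}})$, which is smooth by the regularity properties of $\R^{\dagger_{gg'}}$ used in the proof of Theorem~\ref{thm:main Had}, and that this smooth redefinition is compatible with the canonical algebraic isomorphism between Wick algebras based at different Hadamard states. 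The truly delicate obstacle, however, lies in extending the result to time-ordered products: there one has to reconcile the freedom in $\R$ (infinitely many Møller operators are admissible) with the Epstein--Glaser renormalization freedom, and establish that a coherent choice can be made so that $\mathcal{R}$ intertwines the $T$-products associated to $\N$ and $\N'$. This last step is the main obstacle and will require, presumably, formulating a locally covariant version of the Møller construction in the spirit of the Brunetti--Fredenhagen--Verch framework, which is outside the scope of the present paper.
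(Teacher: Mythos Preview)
The statement you are addressing is explicitly a \emph{Conjecture} in the paper (Section~\ref{sec:concl}), not a theorem, and the paper contains no proof of it whatsoever; it is listed among open problems. So there is nothing in the paper to compare your argument against.

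That said, your strategy---realizing $\mathcal{A}$ and $\mathcal{A}'$ as microcausal functional algebras with Hadamard-deformed star products and defining $\mathcal{R}$ as pull-back along the M\o ller operator---is the natural one, and you are right to flag the time-ordered-product compatibility as the main unresolved obstacle. Two concrete issues with the sketch, however, deserve mention. First, there is a direction error: with $\R$ a M\o ller operator of $g,g'$, Theorem~\ref{thm:main intro Had} sends a Hadamard bidistribution for $\N$ to one for $\N'$ via $\nu\mapsto\nu\circ(\R^{\dagger_{gg'}}\otimes\R^{\dagger_{gg'}})$; to go from $H'$ (Hadamard for $\N'$) back to $H$ (Hadamard for $\N$) you must use $(\R^{\dagger_{gg'}})^{-1}=(\R^{-1})^{\dagger_{g'g}}$, not $\R^{\dagger_{gg'}}$ itself. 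Second, the step in which you assert that the wave-front sets of the functional derivatives are ``transported from the $g'$-microcausal cone to the $g$-microcausal cone'' is the real analytic crux and is not established by the references you cite: Lemma~\ref{lemmaAppedix} concerns smoothness of parameter-families under $\R_\pm$, not wave-front-set mapping properties for genuine distributions, and Lemma~\ref{prop:metrics0}(4) compares light cones of $\preceq$-comparable metrics only. Since $\R$ is built from Green operators and is highly nonlocal, controlling $WF\big((\R^{\otimes n})^{\ast}u\big)$ in terms of $WF(u)$ requires a separate microlocal argument (a propagation-of-singularities estimate along the full paracausal chain) that is not supplied here and is, in effect, part of why the paper leaves the statement as a conjecture.
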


\paragraph*{M\o ller operators and gauge theory.}

Last but not least, all our results concern vector-valued fields of a vector bundle $\E$ over $\M$. The fiber metric on the  bundle does not depend on the globally hyperbolic metrics $g$ chosen on $\M$. A natural extension of the formalism would be  the inclusion of that $g$-dependence in the fiber metric. This extension would allow to encompass the case of a {\em Proca field} and, possibly the case of the electromagnetic field, though  issues  with gauge invariance and gauge  fixing are expected to pop out.

\end{document}